 \definecolor{gris}{gray}{0.85}
\newcommand{\rightdoublearrow}{%
  \rightarrow\mkern-10mu\protect\joinrel\rightarrow}
\newcommand{\rightdoublearrowfill@}
  {\arrowfill@\relbar\relbar\rightdoublearrow}
\newcommand{\mapstodoublearrowfill@}
  {\arrowfill@{\mapstochar\relbar}\relbar\rightdoublearrow}
\newcommand{\xrightdoublearrow}[2][]
  {\ext@arrow 3{15}59\rightdoublearrowfill@{#1}{#2}}
\newcommand{\xmapstodoublearrow}[2][]
  {\ext@arrow 3{15}59\mapstodoublearrowfill@{#1}{#2}}
\title{Composing security protocols: from confidentiality to privacy
}
\author{Myrto Arapinis\inst{1} \and Vincent Cheval\inst{2} \and St\'ephanie Delaune\inst{3}}
\institute{
School of Informatics, University of Edinburgh, UK
\and
LORIA, CNRS, France
\and
LSV, CNRS \& ENS Cachan,  France
}
\newcommand{\seq}[1]{\overline{#1}}
\newcommand{\Channel}{\mathcal{C}{h}} 
\newcommand{\N}{\mathcal{N}}
\newcommand{\M}{\mu}
\newcommand{\X}{\mathcal{X}}
\newcommand{\T}{\mathcal{T}}
\newcommand{\E} {\mathsf{E}}
\newcommand{\Sigmazero}{\Sigma_0}
\newcommand{\Ezero}{\E_0}
\newcommand{\dom}{\operatorname{dom}}
\newcommand{\defi}{\mathsf{def}}
\newcommand{\st}{\mathit{st}}   
\newcommand{\mydownarrow}{\mathord{\downarrow}}
\newcommand{\pair}[2]{\langle#1,#2\rangle}
\newcommand{\proj}{\mathsf{proj}}
\newcommand{\senc}{\mathsf{senc}}
\newcommand{\aenc}{\mathsf{aenc}}
\newcommand{\sdec}{\mathsf{sdec}}
\newcommand{\adec}{\mathsf{adec}}
\newcommand{\sencDiffie}{\mathsf{senc}_{DH}}
\newcommand{\aencDiffie}{\mathsf{aenc}_{DH}}
\newcommand{\sdecDiffie}{\mathsf{sdec}_{DH}}
\newcommand{\adecDiffie}{\mathsf{adec}_{DH}}
\newcommand{\sign}{\mathsf{sign}}
\newcommand{\pk}{\mathsf{pk}}
\newcommand{\pkDiffie}{\mathsf{pk}_{DH}}
\newcommand{\checksign}{\mathsf{check}}
\newcommand{\h}{\mathsf{h}}
\newcommand{\mac}{\mathsf{mac}}
\newcommand{\vk}{\mathsf{vk}}
\newcommand{\Let}{\texttt{let}\xspace}
\newcommand{\ffun}{\mathsf{f}}
\newcommand{\gfun}{\mathsf{g}}
\newcommand{\Diffie}{\mathsf{DH}}
\newcommand{\fct}{\mathit{Fct}}
\newcommand{\racine}{\mathsf{root}}
\newcommand{\racinebis}{\mathsf{tagroot}}
\newcommand{\Or}{\mathcal{O}}
\newcommand{\unfolding}[2]{\mathsf{Unf}_{#2}(#1)}
\newcommand{\fv} {\mathit{fv}}
\newcommand{\bv} {\mathit{bv}}
\newcommand{\fn} {\mathit{fn}}
\newcommand{\bn} {\mathit{bn}}
\newcommand{\triple}[3]{(#1 ; #2 ; #3)}
\newcommand{\quadruple}[4]{(#1 ; #2; #3; #4)}
\newcommand{\Ec}{\mathcal{E}}
\newcommand{\p}{\mathcal{P}}
\newcommand{\q}{\mathcal{Q}}
\newcommand{\In} {\texttt{in}}
\newcommand{\Out}{\texttt{out}}
\newcommand{\myIf}{\texttt{if}\xspace}
\newcommand{\myElse}{\texttt{else}\xspace}
\newcommand{\myThen}{\texttt{then}\xspace}
\newcommand{\new}{\texttt{new}\xspace}
\newcommand{\tr}{\mathsf{tr}}
\newcommand{\bad}{\mathit{bad}}
\newcommand{\TAGG}[1]{[\! [ #1 ]\! ]}
\newcommand{\Tag}{\mathsf{tag}}
\newcommand{\unTag}{\mathsf{untag}}
\newcommand{\TAG}[2]{[#1]_{{#2}}}
\newcommand{\TestTag}[2]{\mathsf{test}_{#2}(#1)}
\newcommand{\Flawed}[1]{\mathsf{Flawed}(#1)}
\newcommand{\FlawedColor}[2]{\mathsf{Flawed}^{#1}(#2)}
\newcommand{\true}{\mathsf{true}}
\newcommand{\key}{\mathit{key}}
\newcommand{\id}{\mathit{id}}
\newcommand{\sk}{\mathit{sk}}
\newcommand{\diff}{\mathsf{diff}}
\newcommand{\fst}{\mathsf{fst}}
\newcommand{\snd}{\mathsf{snd}}
\newcommand{\bi}{\mathsf{bi}}
\def\rightarrowfillstar@{\arrowfill@\relbar\relbar{\rightarrow\smash{^*}}}
\newcommand{\xrightarrowstar}[2][]{\ext@arrow
  0{13}{15}8\rightarrowfillstar@{#1}{#2}}
\newcommand{\lrstep}{\@ifstar{\xrightarrowstar}{\xrightarrow}}
\newcommand{\LRstep}[1]{{\xRightarrow{#1}}}
\newcommand{\statequiv}{\sim}
\newcommand{\trace}{\textsf{trace}}
\newcommand{\refer}{\vartriangleright}
\newcommand{\vcol}{\mathit{col}}
\newcommand{\KPrAA}{\mathit{sk_P}}
\newcommand{\KPrDS}{\mathit{sk_{DS}}}
\newcommand{\KPrSN}{\mathit{sk_{SN}}}
\newcommand{\DG}{\mathit{dg}}
\newcommand{\SOD}{\mathit{sod}}
\newcommand{\PAuth}{\textit{PA}}
\newcommand{\AAuth}{\textit{AA}}
\newcommand{\BAC}{\textit{BAC}}
\newcommand{\AKA}{\textit{AKA}}
\newcommand{\sSMS}{\textit{sSMS}}
\newcommand{\ksenc}{\mathit{xkenc}}
\newcommand{\ksmac}{\mathit{xkmac}}
\newcommand{\KE}{\mathit{ke}}
\newcommand{\KM}{\mathit{km}}
\newcommand{\KIMSI}{\mathit{k_{IMSI}}}
\newcommand{\getC}{\mathsf{get\_C}}
\newcommand{\New}{\mathsf{new}}
\newcommand{\If}{\texttt{if}}
\newcommand{\Else}{\texttt{else}}
\newcommand{\Then}{\texttt{then}}
\newcommand{\yes}{\mathsf{yes}}
\newcommand{\no}{\mathsf{no}}
\newcommand{\ok}{\mathsf{ok}}
\newcommand{\ko}{\mathsf{ko}}
\newcommand{\ndisjunction}[4]{\mathit{Ass}_{#3 := #2}^{#4}(#1)}
\newcommand{\ass}{\mathsf{ass}}
\newcommand{\imsi}{\mathit{IMSI}}
\newcommand{\sms}{\mathit{sms}}
\newcommand{\para}{\mathsf{par}}
\newcommand{\sequ}{\mathsf{seq}}
\begin{document}

\maketitle

\begin{abstract}
  Security protocols are 
  used in many of our daily-life applications, and our privacy largely
  depends on their design.  Formal verification techniques have proved
  their usefulness to analyse these protocols, but they become so
  complex that modular techniques have to be developed.
  We propose several results 
  to safely compose security protocols.  We consider arbitrary
  primitives modeled using an equational theory, and a rich process
  algebra close to the applied pi calculus.

  Relying on these composition results, we are able to derive some
  security properties on a protocol from the security analysis
  performed on each of its sub-protocols individually. We consider
  parallel composition and the case of key-exchange protocols. Our
  results apply to deal with confidentiality but also privacy-type
  properties (\emph{e.g.}  anonymity, unlinkability) expressed using a
  notion of equivalence.  We illustrate the usefulness of our
  composition results on protocols from the 3G phone application and electronic passport.
\end{abstract}

\section{Introduction}
\label{sec:intro}

Privacy means that one can control when, where, and how information
about oneself is used and by whom, and it is actually an important
issue in many modern applications. For instance, nowadays, it is
possible to wave an electronic ticket,
a building access card, a government-issued ID, or even a smartphone
in front of a reader to go through a gate, or to pay for some
purchase.  Unfortunately, as often reported by the media, this technology also makes it possible for
anyone to capture some of our personal information. 
%
To secure the applications mentioned above and to protect our
  privacy, some specific cryptographic protocols are deployed.  For
  instance, the 3G telecommunication application allows one to send
  SMS encrypted with a key that is established with the \AKA\xspace
  protocol~\cite{TS33102}.  The aim of this design is to provide some
  security guarantees: \emph{e.g.} the SMS exchanged between phones
  should remain confidential from third parties.



Because security protocols are notoriously difficult to design and
analyse, formal verification techniques are 
important. They
have become mature and have known several 
successes. 
For instance, a flaw has been discovered in the Single-Sign-On
protocol used 
by Google Apps~\cite{DBLP:conf/ccs/ArmandoCCCT08}, and several
verification tools are nowadays
available (\emph{e.g.} ProVerif~\cite{BlanchetAF08}, the AVANTSSAR
platform~\cite{sysdesc-TACAS2012}).  These tools perform well in
practice, at least for standard security properties (\emph{e.g.}
secrecy, authentication).
Regarding privacy properties, the techniques and tools are more
recent.  Most of the verification techniques are only able to analyse
a bounded number of sessions and consider a quite restrictive class of
protocols (\emph{e.g.} fixed set of primitives and/or no conditional
branching~\cite{Tiu-csf10}).  A slightly different approach
consists in 
analysing a stronger notion of equivalence, namely
\emph{diff-equivalence}.  In particular, ProVerif implements a
semi-decision procedure for checking
diff-equivalence~\cite{BlanchetAF08}.


Security protocols used in practice are more and more complex and it
is difficult to analyse them altogether.  For example, the UMTS
standard~\cite{TS33102} specifies tens of sub-protocols running
concurrently in 3G phone systems.
%
While one may hope to verify each protocol in isolation, it is however
unrealistic to expect that the whole application will be checked
relying on a unique automatic tool.
%
Existing tools have their own specificities that prevent them
to be used in some cases.
Furthermore, most of the techniques do not scale up well on large
systems, and sometimes the ultimate solution is to rely on a manual
proof. It is therefore important that the protocol under study is as
small as possible.


\paragraph*{Related work.}
There are many results studying the composition of security protocols
in the symbolic model~\cite{GuttmanT00,CD-fmsd08,CC-csf10}, as well as
in the computational
model~\cite{Canetti-FOCS04,KuestersTuengerthal-CCS-2011} in which the
so-called UC (universal composability) framework has been first
developed before being adapted in the symbolic
setting~\cite{symbolic-uc}.  This result belongs to the first
approach.
Most of the existing composition results are concerned with
trace-based security properties, and in most cases only with secrecy
(stated as a reachability property),
\emph{e.g.}~\cite{GuttmanT00,CD-fmsd08,CC-csf10,Gross-csf11}. They are
quite restricted in terms of the class of protocols that can be
composed, \emph{e.g.} a fixed set of cryptographic primitives and/or
no else branch. Lastly, they often only consider parallel
composition. Some notable exceptions are the results presented
in~\cite{MV-esorics09,Gross-csf11,CC-csf10}.
This paper is clearly inspired 
from the approach developed  in~\cite{CC-csf10}.

Regarding privacy-type properties, very few composition results
exist. In a previous work~\cite{ACD-csf12}, we consider parallel
composition only. More precisely, we identify sufficient conditions under which protocols can ``safely'' be
executed in parallel as long as they have been proved secure in
isolation. This composition theorem is quite general from the point of
view of the cryptographic primitives allowed. We consider 
arbitrary primitives that can be modelled by a set of equations, and
protocols may share some standard primitives provided  they are
tagged differently. 
We choose to reuse this quite general setting in this work. 


\paragraph*{Our contributions.}
Our main goal is to analyse privacy-type properties in a modular way.
These security properties  are usually expressed as equivalences
between processes. 
Roughly, two processes $P$ and $Q$ are equivalent ($P \approx Q$) if, 
however they behave, the messages observed by the attacker are
indistinguishable. 
Actually, it is well-known that: \\[1mm]
\null\hfill$\mbox{if }P_1 \approx P_2 \mbox{ and } Q_1 \approx Q_2 \mbox{ then }
P_1 \mid P_2 \approx Q_1 \mid Q_2.$\hfill\null

\smallskip{}

%
However, this parallel
composition result works because the processes that are composed are
disjoint (\emph{e.g.} they share no key).  Moreover, here we
want to go beyond parallel composition.  In particular, we want to
capture the case where a protocol uses a sub-protocol to establish
some keys.

To achieve our goal, we first enrich the applied pi calculus with an
assignment construction. This will allow us to share some data
(\emph{e.g.} session keys) between
sub-protocols. Our calculus is presented in Section~\ref{sec:model}.
In Section~\ref{sec:confidentiality-par}, we present a first composition result to deal
with confidentiality properties. The purpose of
this section is to review the difficulties that arise when composing
security protocols even in a simple setting.
In Section~\ref{sec:confidentiality-seq}, we go beyond parallel composition, and we consider
the case of key-exchange protocols. 
We present in
Section~\ref{sec:difficulties-equiv} some additional  difficulties that arise when we
want to consider privacy-type properties expressed using trace
equivalence. In Section~\ref{sec:privacy}, we present our composition results for
privacy-type properties. We consider parallel composition as well as
the case of key-exchange protocols.

Actually, all these composition results are derived from
a generic composition result which is quite technical and presented
only  in Appendix~\ref{sec:app-compo-main}.
This result allows one to map a
trace of the composed protocol into a trace of a disjoint case
(protocol where the sub-protocols do not share any data), {and
  conversely}. It can be seen as an extension of the result presented
in~\cite{CC-csf10} where only a mapping from the shared case to the
disjoint case is provided (and not the converse).  Moreover, we
consider a richer process algebra than the one used
in~\cite{CC-csf10}. In particular, we are able to deal with protocols
with else branches and to compose protocols that both rely on
asymmetric primitives (\emph{i.e.} asymmetric encryption and
signature).

In Section~\ref{sec:casestudies}, we illustrate the usefulness of our composition results 
on  protocols from the 3G phone application, as well as on 
protocols from the e-passport application.
 We show how to derive some
security guarantees from the analysis performed
on each sub-protocol in isolation.

 \section{Models for security protocols}
\label{sec:model}

Our calculus is close to the applied pi
calculus~\cite{AbadiFournet2001}.
We consider an assignment operation to make explicit the data that are
shared among different processes.

\subsection{Messages}
\label{subsec:messages}

As usual in this kind of models, messages are modelled using an
abstract term algebra.  We assume an infinite set of \emph{names}~$\N$
of \emph{base type} (used for representing keys, nonces,
\ldots) and a set~$\Channel$ of names of \emph{channel
  type}. 
We also 
consider a set of \emph{variables} $\X$, and a signature
$\Sigma$ consisting of a finite set of \emph{function symbols}.  We
rely on a sort system for terms. The details of the sort system are
unimportant, as long as the base type differs from the channel type,
and
we suppose that function symbols only operate on and return terms of
base type. 

\emph{Terms} are defined as names, variables, and function symbols
applied to other terms.
The set of terms built from $\mathsf{N}\subseteq \N \cup \Channel$,
and $\mathsf{X}\subseteq \X$ by applying function symbols in $\Sigma$
(respecting sorts and arities) is denoted by $\T (\Sigma, \mathsf{N}
\cup \mathsf{X})$.
We write $\fv(u)$ (resp. $\fn(u)$) for the set of variables
(resp. names) occurring in a term~$u$. A term~$u$ is \emph{ground} if it
does not contain any variable, \emph{i.e.} $\fv(u) = \emptyset$.

The algebraic properties of cryptographic primitives are specified by
the means of an \emph{equational theory} which is defined by a finite
set~$\E$ of equations $u = v$ with $u,v \in \T(\Sigma,\X)$,
\emph{i.e.} $u,v$ do not contain names.  We denote by~$=_\E$ the
smallest equivalence relation on terms, that contains~$\E$ and that is
closed under application of function symbols and substitutions of
terms for variables.

\begin{example} 
\label{ex:theory} 
Consider the
  signature~$\Sigma_{\Diffie} = \{\aenc, \adec, \pk, \gfun, \ffun,
  \langle \; \rangle, \proj_1, \proj_2\}$.
  The function symbols~$\adec$, $\aenc$ of arity~2 represent
  asymmetric decryption and encryption. We denote by $\pk(sk)$ the
  public key associated to the private key~$sk$. The two 
  function symbols~$\ffun$ of arity~2, and~$\gfun$ of arity~1 are used
  to model the Diffie-Hellman primitives, whereas the three remaining symbols
  are used to model pairs. The equational theory
  $\E_\Diffie$ is defined by: \\[1mm]
\null\hfill
$\E_\Diffie =
\left\{ \;\;\;
\begin{array}{ccc}
\proj_1(\langle x, y\rangle) = x & \;\;\; & \adec(\aenc(x,\pk(y)),y)=x\\
\proj_2(\langle x, y\rangle) =  y & \;\;\;\;&
  \ffun(\gfun(x),y) = \ffun(\gfun(y),x)
\end{array}
\right.$\hfill\null

\smallskip{}

\noindent  Let $u_0 = \aenc(\langle n_A, \gfun(r_A) \rangle,\pk(sk_B))$.
  We have that:\\[1mm]
\null\hfill
$\ffun(\proj_2(\adec(u_0,sk_B)),r_B) =_{\E_\Diffie} \ffun(\gfun(r_A), r_B)
  =_{\E_\Diffie} \ffun(\gfun(r_B),r_A).$\hfill\null
\end{example}

\subsection{Processes}
\label{subsec:processes}


As in the applied pi calculus, we consider \emph{plain processes} as
well as \emph{extended processes} that represent processes having
already evolved by \emph{e.g.}  disclosing some terms to the
environment.
\noindent \emph{Plain processes} are defined by the following grammar:\\[1mm]
\null\hfill
$\begin{array}{llcl@{\qquad\qquad}lcl}
  P,Q := & 0 &&\mbox{null}
  & P \mid Q &&\mbox{parallel}\\[0.3mm]
  & \new\; n. P &&\mbox{restriction}
  & ! P && \mbox{replication}\\[0.3mm]
  & [x := v]. P && \mbox{assignment}
  & \mbox{\texttt{if} {$\varphi$} \texttt{then} $P$ \texttt{else} $Q$}
  && \mbox{conditional}\\[0.3mm]
  & \In (c,x).P  &&\mbox{input}
  & \Out(c,v).Q && \mbox{output} 
\end{array}$\hfill\null

\smallskip{}

\noindent where $c$ is a name of channel type, {$\varphi$ is a
  conjunction of tests of the form $u_1 = u_2$} where $u_1, u_2$ are
terms of base type, $x$ is a variable of base type, $v$ is a term of
base type, and $n$ is a name of any type.  Note that the terms that
occur in $\varphi$ and $v$ may contain variables.
We consider an assignment operation that instantiates~$x$ with a
term~$v$. Note that we consider private channels but we do not allow
channel passing.
%
%
For the sake of clarity, we often omit the null
process, and 
when there is no ``\texttt{else}'',
it means ``\texttt{else}\, $0$''.

Names and variables have scopes, which are delimited by restrictions,
inputs, {and assignment operations.}  We write $\fv(P)$, $\bv(P)$,
$\fn(P)$ and $\bn(P)$ for the sets of \emph{free} and \emph{bound
  variables}, and \emph{free} and \emph{bound names} of a plain
process~$P$. 

\begin{example}
  \label{ex:process}
Let $P_{\Diffie} = \new\, sk_A. \new \, sk_B. (P_A \mid P_B)$ a
  process that models a Diffie-Hellman key exchange protocol:
  \begin{itemize}
  \item $P_A \stackrel{\mathsf{def}}{=} \new \,r_A . \new \, n_A. \Out(c,
    \aenc(\langle n_A, \gfun(r_A) \rangle,\pk(sk_B))). \In(c,y_A).$\\
\null\hfill $\texttt{if } \proj_1(\adec(y_A, sk_A)) = n_A \texttt{ then }[x_A :=
    \ffun(\proj_2(\adec(y_A,sk_A)),r_A)]. 0$
  \item $P_B \stackrel{\mathsf{def}}{=} \new \,r_B . \In(c,y_B). \Out(c,
    \aenc(\langle \proj_1(\adec(y_B,sk_B)),\gfun(r_B)\rangle,\pk(sk_A))).$\\
\null \hspace{6cm} $[x_B :=
    \ffun(\proj_2(\adec(y_B,sk_B)),r_B)].0$
  \end{itemize}
  The process $P_A$ generates two fresh random numbers~$r_A$ and
  $n_A$, sends a message
 on the channel~$c$, and waits for a
  message containing the nonce $n_A$ in order to compute his own view of the key that will be
  stored in~$x_A$.  The process~$P_B$ proceeds in a similar way and
  stores
the computed value in~$x_B$.
\end{example}

\emph{Extended processes} add a set of restricted names~$\Ec$ (the
names that are \emph{a priori} unknown to the attacker), a sequence of
messages $\Phi$ (corresponding to the messages that have been sent so
far on public channels) and a substitution~$\sigma$ which is used to
store the messages that have been received as well as those that have
been stored in assignment variables.

\begin{definition}
  An \emph{extended process} is a tuple
  $\quadruple{\Ec}{\p}{\Phi}{\sigma}$ with:
  \begin{itemize}
  \item $\Ec$ a set of names that represents the names that are
    restricted in~$\p$, $\Phi$ and~$\sigma$;

  \item $\p$ a multiset of \emph{plain processes} {where null processes
    are removed} and with $\fv(\p)
    \subseteq \dom(\sigma)$;

  \item $\Phi = \{w_1 \refer u_1, \ldots, w_n \refer u_n\}$ where
    $u_1, \ldots, u_n$ are ground terms, $w_1, \ldots, w_n$ are
    variables; 

  \item $\sigma =\{x_1 \mapsto v_1, \ldots, x_m \mapsto v_m\}$ where
    $v_1, \ldots, v_m$ are ground terms, $x_1, \ldots, x_m$ are
    variables.

  \end{itemize}
\end{definition}




For the sake of simplicity,
we assume that extended processes are \emph{name and variable
  distinct}, \emph{i.e.} a name (resp. variable) is either free or bound, and in the
latter case, it is at most bound once.
%
Moreover, we often write~$\triple{\Ec}{P}{\Phi}$ instead of
$\quadruple{\Ec}{P}{\Phi}{\emptyset}$.

 \begin{figure*}[h]
\noindent $
\begin{array}{rclr}
  \quadruple{\Ec}{\{\mbox{\texttt{if} $\varphi$ \texttt{then} $Q_1$ \texttt{else} $Q_2$}\}\uplus\p}{\Phi}{\sigma} 
  &\lrstep{\tau} & \quadruple{\Ec}{Q_1\uplus\p}{\Phi}{\sigma}  &  \mbox{(\sc Then)}\\
  \multicolumn{4}{r}{\mbox{if $u\sigma =_\E v\sigma$ for each $u = v \in \varphi$}}  \\[1mm]
  \quadruple{\Ec}{\{\mbox{\texttt{if} $\varphi$ \texttt{then} $Q_1$ \texttt{else} $Q_2$}\}\uplus\p}{\Phi}{\sigma}
  & \lrstep{\tau} & \quadruple{\Ec}{Q_2\uplus\p}{\Phi}{\sigma} & \mbox{(\sc Else)}\\
  \multicolumn{4}{r}{\mbox{if $u\sigma \neq_\E v\sigma$ for some $u = v \in \varphi$}}  \\[1mm]
  \quadruple{\Ec}{\{\Out(c,u) . Q_1; \In(c,x).Q_2\}\uplus\p}{\Phi}{\sigma}
  & \lrstep{\tau} &
  \multicolumn{2}{l}{\quadruple{\Ec}{Q_1 \uplus Q_2 
      \uplus\p}{\Phi}{\sigma \cup \{x \mapsto {u\sigma}\}}  
    \mbox{(\sc Comm)} }\\[1mm]
  \quadruple{\Ec}{\{[x := v] . Q\}\uplus\p}{\Phi}{\sigma}
  & \lrstep{\tau} &
  \quadruple{\Ec}{Q \uplus\p}{\Phi}{\sigma \cup \{x \mapsto {v\sigma}\}} 
  & \mbox{(\sc Assgn)}
\end{array}
$
\[
\begin{array}{rclr}
  \quadruple{\Ec}{\{\In(c,z).Q\}\uplus\p}{\Phi}{\sigma} 
  & \lrstep{\In(c,M)} & 
  \quadruple{\Ec}{Q \uplus\p}{\Phi}{\sigma \cup\{z \mapsto u\}} &
  \hspace{1.5cm} \mbox{(\sc In)}
  \\ \multicolumn{4}{r}{\mbox{if ${c} \not\in \Ec$, $M\Phi =u$, $\fv(M) \subseteq \dom(\Phi)$ and
      $\fn(M) \cap \Ec = \emptyset$}}
  \\[1mm]
  \quadruple{\Ec}{\{\Out(c,u).Q\}\uplus\p}{\Phi}{\sigma}
  & \lrstep{\nu w_i.\Out(c,w_i)} & \quadruple{\Ec}{Q\uplus\p}{\Phi\cup\{w_i\refer u\sigma\}}{\sigma}&\;\;\;\;
  \mbox{(\sc Out-T)}
  \\
  \multicolumn{4}{r}{\mbox{if ${c} \not\in \Ec$, $u$ is a term of base
      type, and $w_i$ is a variable such that $i = | \Phi| +
      1$}}
\end{array}
\]
%
%
\[
\begin{array}{rclr}
  \hspace{2.4cm}\quadruple{\Ec}{\{\new\; n.Q\} \uplus \p}{\Phi}{\sigma} & \lrstep{\tau} &
  \quadruple{\Ec \cup \{ n \}}{Q\uplus \p }{\Phi}{\sigma}&\hspace{1.2cm} \;\;\;
  \mbox{(\sc New)}
  \\[1mm]
  \quadruple{\Ec}{\{!Q\} \uplus \p}{\Phi}{\sigma} & \lrstep{\tau} &
  \quadruple{\Ec}{\{!Q; Q\rho\} \uplus \p }{\Phi}{\sigma}& \;\;\; \hfill \mbox{(\sc
    Repl)}\\
  \multicolumn{4}{r}{\mbox{$\rho$ is used to rename  $\bv(Q)/ \bn(Q)$ with fresh variables/names}}
  \\[1mm]
  \quadruple{\Ec}{\{P_1 \mid P_2\} \uplus \p}{\Phi}{\sigma} & \lrstep{\tau} &
  \quadruple{\Ec}{\{P_1,P_2\} \uplus \p}{\Phi}{\sigma} & \hfill \mbox{(\sc Par)}
\end{array}
\]



where $n$ is a name, $c$ is a name of channel type, $u$, $v$ are
terms of base type, and $x, z$ are variables of base type. 
\caption{Semantics of extended processes}
\label{fig:semantics}
\end{figure*}

\smallskip{}

The semantics is given by a set of labelled rules that allows one to
reason about processes that interact with their environment (see
Figure~\ref{fig:semantics}).
This defines the relation $\lrstep{\;\ell\;}$ where~$\ell$ is either
an input, an output, or a silent action~$\tau$. The
relation~$\lrstep{\; \tr \;}$ where $\tr$ denotes a sequence of labels
is defined in the usual way whereas the relation $\LRstep{\; \tr' \;}$
on processes is defined by: $A \,\LRstep{\;\tr' \;}\, B$ if, and only
if, there exists a sequence $\tr$ such that $A \lrstep{\; \tr \;} B$
and~$\tr'$ is obtained by erasing all occurrences of the silent
action~$\tau$ in $\tr$.

\begin{example}
 \label{ex:semantics}
Continuing Example~\ref{ex:theory} and Example~\ref{ex:process}, we
consider the sequence $\Phi_{\Diffie} \stackrel{\mathsf{def}}{=}
\{w_1\refer \pk(sk_A), w_2 \refer \pk(sk_B)\}$, and
$A_{\Diffie} \stackrel{\mathsf{def}}{=} \triple{\{sk_A,
  sk_B\}}{P_A \mid P_B}{\Phi_{\Diffie}}$.
We have that:
\[
\begin{array}{rcl} A_{\Diffie} &\LRstep{\nu
    w_3. \Out(c,w_3). \In(c,w_3).\nu w_4. \Out(c,w_4). \In(c,w_4)} & \quadruple{\Ec}{\emptyset}{\Phi_{\Diffie} \uplus \Phi}{\sigma\cup\sigma'} \\
 \end{array}
\]
\noindent where
${\Phi =_{\E_\Diffie} \{w_3 \refer u_0,w_4 \refer \aenc(\langle n_A,\gfun(r_B)\rangle,pk_A)\}}$, $\Ec = \{sk_A,sk_B,r_A,r_B,n_A\}$, ${\sigma =_{\E_\Diffie} \{y_A \mapsto \aenc(\langle n_A,\gfun(r_B)\rangle,pk_A), \;y_B \mapsto \aenc(\langle n_A,\gfun(r_A) \rangle,pk_B) \}}$, and lastly $\sigma' =_{\E_\Diffie} \{x_A \mapsto \ffun(\gfun(r_B),r_A), \;x_B \mapsto \ffun(\gfun(r_A),r_B)\}$. We used $pk_A$ (resp. $pk_B$) as a shorthand for $\pk(sk_A)$ (resp. $\pk(sk_B)$).
\end{example}

\subsection{Process equivalences}
\label{subsec:properties}

We are particularly interested in
security properties expressed using a notion of equivalence such
as those studied in \emph{e.g.}~\cite{AMRR-CCS12,kostas-csf10}.  For
instance, the notion of \emph{strong unlinkability} 
can be formalized using an
equivalence between two situations: one where each user can execute
the protocol 
multiple times, and one where each user can execute the protocol at
most once.



We consider here the notion of \emph{trace equivalence}.  Intuitively,
two protocols~$P$ and~$Q$ are in trace equivalence, denoted $P \approx
Q$, if whatever the messages they received (built upon previously sent
messages), the resulting sequences of messages sent on public
channels 
are indistinguishable from the point of view of an outsider.
%
%
Given an extended process~$A$, we define its set of
traces as follows: \\[1mm]
\null\hfill
$\trace(A) = \{(\tr,\new \; \Ec.\Phi) ~|~ A
\,\LRstep{\;\tr\;}\, \quadruple{\Ec}{\p}{\Phi}{\sigma} \mbox{ for some
  process $\quadruple{\Ec}{\p}{\Phi}{\sigma}$}\}.$\hfill\null

\smallskip{}

The sequence of messages $\Phi$ together with the set of restricted
names $\Ec$ (those unknown to the attacker) is called the \emph{frame}.

\begin{definition}
We say that a term $u$ is \emph{deducible} (modulo $\E$) from a frame
$\phi = \new \, \Ec.\Phi$, denoted $\new\, \Ec. \Phi \vdash u$, when
 there exists a term $M$ (called a \emph{recipe}) such that $\fn(M)
 \cap \Ec = \emptyset$, $\fv(M) \subseteq \dom(\Phi)$, and $M\Phi
 =_\E u$.
 \end{definition}

%
%

Two frames are indistinguishable when the attacker cannot detect the
difference between the two situations they represent.


\begin{definition}
  \label{def:statequiv}
  Two frames $\phi_1$ and $\phi_2$ with $\phi_i = \new \, \Ec. \Phi_i$
  ($i \in \{1,2\}$) are \emph{statically equivalent}, denoted by
  $\phi_1 \statequiv \phi_2$, when $\dom(\Phi_1) = \dom(\Phi_2)$, and
  for all terms $M, N$ with  $\fn(\{M,N\})
  \cap \Ec = \emptyset$ {and $\fv(\{M,N\}) \subseteq \dom(\Phi_1)$}, we have that:\\[1mm]
  \null\hfill $M\Phi_1=_\E N\Phi_1, \mbox{ if and only if, } M\Phi_2
  =_\E N\Phi_2$.  \hfill\null
\end{definition}




\begin{example}
  Consider $\Phi_1 = \{w_1 \refer \gfun(r_A), w_2 \refer \gfun(r_B),
  w_3 \refer \ffun(\gfun(r_A),r_B)\}$, and $\Phi_2 = \{w_1 \refer
  \gfun(r_A), w_2 \refer \gfun(r_B), w_3 \refer k\}$.  Let $\Ec =
  \{r_A, r_B, k\}$. We have that $\new \, \Ec. \Phi_1 \statequiv
  \new\, \Ec. \Phi_2$ (considering the equational
  theory~$\E_{\Diffie}$). This equivalence shows that the term
  $\ffun(\gfun(r_A),r_B)$ (the Diffie-Hellman key) is
  indistinguishable from a random key. This indistinguishability
  property holds even if the messages $\gfun(r_A)$ and $\gfun(r_B)$
  have been observed by the attacker.
\end{example}




Two processes are trace equivalent if, whatever the messages they
{sent and received}, their frames are in static equivalence.
\begin{definition}
  \label{def : trace eq}
  Let $A$ and $B$ be two extended processes, ${A \sqsubseteq B}$ if
  for every $(\tr,\phi) \in \trace(A)$, there exists $(\tr', \phi')
  \in \trace(B)$ such that ${\tr = \tr'}$ and~${\phi \statequiv
    \phi'}$.
  We say that~$A$ and~$B$ are \emph{trace equivalent}, denoted by $A
  \approx B$, if ${A \sqsubseteq B}$ and~${B \sqsubseteq A}$.
\end{definition}


This notion of equivalence allows us to express many interesting
privacy-type properties \emph{e.g.} vote-privacy, strong versions of
anonymity and/or unlinkability.

\section{Composition result: a simple setting}
\label{sec:confidentiality-par}

It is well-known that even if two protocols are secure in isolation,
it is not possible to compose them in arbitrary ways still preserving
their security.  This has already been observed for different kinds of
compositions (\emph{e.g.} parallel~\cite{GuttmanT00},
sequential~\cite{CC-csf10}) and when studying standard security
properties~\cite{CD-fmsd08} (\emph{e.g.}  secrecy, authentication) and even
privacy-type properties~\cite{ACD-csf12}.
 In this section, we introduce some well-known hypotheses that are needed to
 safely compose security protocols.


\subsection{Sharing primitives}
A protocol can be used as an oracle by another protocol to decrypt a
message, and then compromise the security of the whole application. To
avoid this kind of interactions, most of the composition results
assume that protocols do not share any primitive or allow a list of
standard primitives (\emph{e.g.} signature, encryption) to be shared
as long as they are tagged in different ways. In this paper, we
 adopt the latter hypothesis and
consider the fixed common signature:\\[1mm]
\null\hfill
$\Sigma_0 = \{\sdec,\allowbreak
\senc,\allowbreak \adec,\allowbreak \aenc,\allowbreak \pk,\allowbreak
\langle, \rangle,\allowbreak \proj_1,\allowbreak \proj_2,\allowbreak
\sign,\allowbreak \checksign,\allowbreak \vk,\allowbreak
\h\}$\hfill\null

\smallskip{}

\noindent equipped with the equational theory $\Ezero$, defined by the following
equations:\\[1mm]
\null\hfill
$\begin{array}{rclcrcl}
  \sdec(\senc(x,y),y) &=& x&\;\;\;\;\;&  \checksign(\sign(x,y), \vk(y))&=& x \\
  \adec(\aenc(x,\pk(y)),y) &=& x &&
  \proj_i(\langle x_1,x_2 \rangle ) &=&
  x_i \mbox{ with $i \in \{1,2\}$)}
\end{array}$
\hfill\null

\smallskip{}

\noindent This allows us to model symmetric/asymmetric encryption,
concatenation, signatures, and hash functions.  We consider a type
\emph{seed} which is a subsort of the base type that only contains names.
We denote by $\pk(\mathit{sk})$ (resp.~$\vk(\mathit{sk})$) the public
key (resp.  the verification key) associated to the private
key~$\mathit{sk}$ which has to be a name of type \emph{seed}.
%
We allow protocols to both rely on $\Sigma_0$ provided that each
application of $\aenc$, $\senc$, $\sign$, and $\h$ is tagged (using
disjoint sets of tags for the two protocols), and adequate tests are
performed when receiving a message to ensure that the tags are
correct. Actually, we consider the same tagging mechanism as the one
we have
introduced in~\cite{ACD-csf12} (see Appendix~\ref{sec:app-tagging} for
more details). Here, we simply illustrate this tagging mechanism on
our running example.
Note that this tagging mechanism has no effect when protocols do
not rely on $\Sigma_0$ 
(\emph{i.e.} when
 the protocols we want to compose do not share any primitive). 

\begin{example}
  \label{ex:tag example Diffe}
 In order to compose the protocol
introduced in Example~\ref{ex:process} with
  another one that also relies on the primitive~$\aenc$, we may want to
  consider a tagged version of this protocol. For this, we introduce
  two function symbols $\Tag_1/\unTag_1$, and the equation
  $\unTag_1(\Tag_1(x)) = x$ to model the interaction between these two
  symbols.
  The tagged version (using~$\Tag_1$) of~$P_B$ is given below (with $u = \unTag_1(\adec(y_B,\sk_B))$):\\[1mm]
\null\hfill
$\left\{
    \begin{array}{l}
      \new \,r_B .\In(c,y_B).\\
      \If\; \Tag_1(\unTag_1(\adec(y_B,sk_B))) = \adec(y_B,sk_B)\; \Then\\
      \If\; u = \langle \proj_1(u), \proj_2(u) \rangle\; \Then\; \\
      \Out(c,\aenc(\Tag_1(\langle \proj_1(u),\gfun(r_B)\rangle),\pk(sk_A))).{[}x_B := \ffun(\proj_2(u),r_B)].0\\
    \end{array}
  \right.
$
\hfill\null

\smallskip{}

The first test allows one to check  that $y_B$ is an encryption tagged with $\Tag_1$
and the second one is used to ensure that the content of this
encryption is a pair as expected. Then, the process outputs the encrypted message tagged with $\Tag_1$.
The tagged version (using $\Tag_1$) of $P_A$ can be
  obtained in a similar way, and we obtain the tagged
  version of the whole process by putting the resulting two processes in parallel.
\end{example}

\subsection{Revealing shared keys}

Consider two protocols, one whose security relies on the secrecy of a
shared key whereas the other protocol reveals it. Such a
situation will compromise the security of the whole application. It is
therefore important to ensure that shared keys are not revealed.  To
formalise this hypothesis, and to express the sharing of long-term
keys, we introduce the notion of \emph{composition context}.
 This will help us describe under which long-term keys the
 composition has to be done.

\smallskip{}

A \emph{composition context}~$C$
is defined by  the grammar:\\[1mm]
\null\hfill
$C := \_ \; \mid\; \new \ n.\ C \; \mid \; ! C\;\;\;\;\;\;\;\;\;
\mbox{where~$n$ is a name of base type.}$\hfill\null

\begin{definition}
  \label{def:reveal-main}
  Let $C$ be a composition context, 
  $A$ be an extended process of the form $\triple{\Ec}{C[P]}{\Phi}$,
  $key \in \{n, \pk(n), \vk(n) ~|~ n \mbox{ occurs in } C\}$, and $c$,
  $s$ two fresh names.  We say that
  \emph{$A$ reveals $key$} when\\[1mm]
\null\hfill
$\triple{\Ec \cup \{s\}}{C[P \mid \In(c,x). \,\texttt{if
    } x = key \;\texttt{then}\, \Out(c,s)]}{\Phi}\;\LRstep{\;\tr\;}\;
  \quadruple{\Ec'}{\p'}{\Phi'}{\sigma'}$\hfill\null

\smallskip{}

  \noindent for some $\Ec'$, $\p'$, $\Phi'$, and $\sigma'$ such that
  $\new\; \Ec'. \Phi' \vdash s$.
\end{definition}

\subsection{A first composition result}

Before stating our first result regarding parallel
composition for confidentiality properties, we gather the required
hypotheses in the following definition.

\begin{definition}
  \label{def:composability-main}
  Let $C$ be a composition context and $\Ec_0$ be a finite set of
  names of base type. Let~$P$ and $Q$ be two plain processes together
  with their frames~$\Phi$ and~$\Psi$.
  We say that $P/\Phi$ and $Q/\Psi$ are \emph{composable} under
  $\Ec_0$ and~$C$ when $\fv(P) = \fv(Q) = \emptyset$, $\dom(\Phi) \cap
  \dom(\Psi) =\emptyset$, and
  \begin{enumerate}
  \item $P$ (resp.~$Q$) is built over $\Sigma_\alpha \cup \Sigmazero$
    (resp. $\Sigma_\beta \cup \Sigmazero$), whereas $\Phi$
    (resp. $\Psi$) is built over $\Sigma_\alpha \cup \{\pk,\vk,
    \langle \; \rangle\}$
    (resp. $\Sigma_\beta \cup \{\pk,\vk, \langle \; \rangle\}$), $\Sigma_\alpha \cap
    \Sigma_\beta = \emptyset$, and $P$ (resp.~$Q$) is tagged;
  \item $\Ec_0 \cap (\fn(C[P]) \cup \fn(\Phi)) \cap (\fn(C[Q]) \cup
    \fn(\Psi)) = \emptyset$; and
  \item $\triple{\Ec_0}{C[P]}{{\Phi}}$
    (resp.~$\triple{\Ec_0}{C[Q]}{{\Psi}}$) does not reveal any key in\\[1mm]
\null\hfill
$\{n, \pk(n), \vk(n) ~|~ n \mbox{ occurs in }{\fn(P)
      \mathord{\cap} \fn(Q) \mathord{\cap} \bn(C)}\}.$\hfill\null
  \end{enumerate}
\end{definition}

 Condition~$1$ is about sharing primitives, whereas
Conditions $2$ and $3$ ensure that keys are shared via the composition
context $C$ only (not via $\Ec_0$), and are not revealed by each
protocol individually.

We are now able to state the following theorem which is in the same vein
as those obtained previously in \emph{e.g.} 
~\cite{GuttmanT00,CD-fmsd08}.
However, the setting we  consider here is more general. In particular,
we consider arbitrary primitives, processes with else branches, and
private channels.

\begin{theorem}
  \label{theo:compo-par-reachability-main}
 Let $C$ be a composition context, $\Ec_0$ be a finite set of
  names of base type, and $s$ be a name that occurs in $C$. 
Let ${P}$ and ${Q}$ be two plain processes
  together with their frames 
  $\Phi$ and $\Psi$, and assume that $P/\Phi$ and $Q/\Psi$ are
  composable under $\Ec_0$ and $C$.
  If $\triple{\Ec_0}{C[P]}{{\Phi}}$ and $\triple{\Ec_0}{C[Q]}{{\Psi}}$
  do not reveal $s$ then $\triple{\Ec_0}{C[P \mid Q]}{\Phi \uplus
    \Psi}$ does not reveal $s$.
\end{theorem}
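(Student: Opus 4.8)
The plan is to prove the contrapositive: assuming $\triple{\Ec_0}{C[P \mid Q]}{\Phi \uplus \Psi}$ reveals $s$, I will show that one of $\triple{\Ec_0}{C[P]}{\Phi}$ or $\triple{\Ec_0}{C[Q]}{\Psi}$ already reveals $s$. By Definition~\ref{def:reveal-main}, there is a trace of the augmented process $\triple{\Ec_0 \cup \{s'\}}{C[(P \mid Q) \mid \In(c,x).\,\texttt{if } x = s \,\texttt{then}\, \Out(c,s')]}{\Phi \uplus \Psi}$ that deduces the fresh secret $s'$. The central idea is that, because $P/\Phi$ and $Q/\Psi$ are composable under $\Ec_0$ and $C$ (Condition~1: disjoint tagged primitives over $\Sigma_\alpha$ versus $\Sigma_\beta$; Conditions~2 and~3: shared long-term keys only flow through $C$, and neither side reveals them), the messages produced by the $P$-part of the execution and those produced by the $Q$-part live in ``syntactically separated'' worlds: an attacker recipe that mixes the two cannot produce anything that a recipe staying within one world could not already produce, up to the shared key material that is provably secret.

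\textbf{Key steps.}
First, I would appeal to the generic composition result announced in Appendix~\ref{sec:app-compo-main}: it maps a trace of the composed (shared) protocol into a trace of the disjoint case, where $P$ and $Q$ run with freshly renamed copies of the shared names from $C$ instead of the truly shared ones. So from the revealing trace of $C[(P\mid Q)\mid \ldots]$ I obtain a corresponding trace of the disjoint process, in which $P$ uses keys $\seq{n}$ and $Q$ uses independent copies $\seq{n}'$. Second, in the disjoint case the interaction between the $P$-side and the $Q$-side is only via public channels and the common signature $\Sigma_0$ — and by Condition~1 every application of $\aenc,\senc,\sign,\h$ is tagged with disjoint tag sets, with the appropriate tag-checks performed on reception, so no message output by $Q$ can be meaningfully consumed by a decryption/verification inside $P$ (it would fail the tag test), and vice versa. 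This lets me argue that the sub-trace deducing $s'$ can be ``projected'' onto one side: either $s$ appears in a frame entry produced purely by $P$-actions, in which case $\triple{\Ec_0}{C[P]}{\Phi}$ reveals $s$, or it appears on the $Q$-side, giving the symmetric conclusion. Third, I need to push this conclusion back from the disjoint case to the shared case using the converse direction of the generic result (the mapping from disjoint to shared), so that the revealing witness in, say, $\triple{\Ec_0}{C[P]}{\Phi}$ with renamed keys translates into a genuine revealing witness with the real shared keys — here Condition~3 (neither side reveals the shared keys) is what guarantees the renaming does not accidentally make something deducible that was not, so the translation is sound.

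\textbf{Main obstacle.}
The hard part will be the projection argument in the disjoint case — precisely, showing that a recipe $M$ built by the attacker over the union frame $\Phi\uplus\Psi$ (plus the $P$- and $Q$-generated messages) that equals $s$ modulo $\E_\alpha \cup \E_\beta \cup \Ezero$ can be replaced by a recipe using only the $P$-part (or only the $Q$-part) of the frame. This requires a careful analysis of normal forms / proof trees of equality modulo the combined equational theory, exploiting disjointness of $\Sigma_\alpha$ and $\Sigma_\beta$ and the tagging discipline on $\Sigmazero$ to rule out any ``cross-over'' rewriting step that genuinely uses material from both sides. Handling the $\Sigmazero$ part shared between the two protocols (where the only separation is the tagging, not the signature itself) is the delicate case, and is exactly where the detailed tagging lemmas from~\cite{ACD-csf12} (recalled in Appendix~\ref{sec:app-tagging}) will have to be invoked. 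Everything else — the bookkeeping on restricted names $\Ec_0$ via Condition~2, the mechanical check that the added tester process $\In(c,x).\texttt{if } x=s\,\texttt{then}\,\Out(c,s')$ interacts only through the public channel $c$ — is routine once the separation lemma is in place.
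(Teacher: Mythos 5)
Your overall skeleton --- reduce the shared case to the disjoint case via the generic result of Appendix~\ref{sec:app-compo-main} and conclude there --- is the same as the paper's. The genuine problem is in your Step~2. The paper does \emph{not} prove the disjoint case by a recipe-projection argument exploiting the tagging discipline: it simply invokes the standard fact that secrecy is preserved under parallel composition of processes that share \emph{no data at all}, i.e.\ that $D = \triple{\Ec_0}{C[P] \mid C[Q]}{\Phi\uplus\Psi}$ does not reveal $s$ because neither component does. That fact is established by a \emph{simulation} argument (an attacker attacking $C[P]$ alone can itself run the code of $C[Q]$, choosing its own fresh names in place of $C[Q]$'s restricted names, since the two components share nothing), and it needs no tagging whatsoever --- tagging is only needed for the shared/disjoint trace correspondence, which you already invoke separately. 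Your projection argument, as stated, does not go through: in a trace of the composition the frame entries output by the $P$-side are \emph{not} ``produced purely by $P$-actions'', because the attacker may have fed $Q$-derived material into $P$'s inputs, and that material then reappears inside $P$'s outputs; so there is no well-defined dichotomy ``$s$ is obtained on the $P$-side or on the $Q$-side'' of a fixed union frame, and a recipe for the witness cannot in general be rewritten into one over the $P$-part of the frame without first replaying the whole trace with the $Q$-side simulated away. You have located the main difficulty in the wrong place and proposed a tool (the tagging separation lemmas) that addresses a different step.

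Two smaller points. Your Step~3 is unnecessary and reflects a misreading of what the disjoint case is: $D$ is literally $C[P] \mid C[Q]$ up to $\alpha$-renaming of the bound names of $C$, so a revelation of $s$ by one of its components directly contradicts a hypothesis --- there is nothing to ``translate back'' using the converse direction of the generic result (that converse direction is needed for the equivalence-based theorems, not here). Finally, when transferring the revealing trace from the shared case to $D$, static equivalence of the resulting frames is not quite enough by itself: you need that the recipe deducing the fresh witness in the shared frame still deduces it in the disjoint frame, which is what the $\delta$-correspondence of Theorem~\ref{theo:main-main} (via Lemma~\ref{lem:samerecipesymmetric}) actually provides; and you must check that the traces considered are compatible with the chosen abstraction, which for pure parallel composition holds because the assignment variables introduced by the encoding all receive pairwise distinct names.
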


As most of the proofs of similar composition results, we show this
result going
back to the \emph{disjoint case}. 
Indeed, it is well-known that
parallel composition works well
when protocols do not share any data (the so-called
\emph{disjoint case}). We show that all the conditions are satisfied
to apply our generic result (presented in
Appendix~\ref{sec:app-compo-main}) that allows one to go back to the disjoint case. 
Thus, we obtain that the disjoint case $D =
\triple{\Ec_0}{C[P] \mid C[Q]}{\Phi \uplus \Psi}$ exhibits the same traces
as those exhibited by the shared case $S = \triple{\Ec_0}{C[P \mid Q]}{\Phi \uplus
  \Psi}$ (more formally we have that~$D$ and~$S$ are in trace equivalence), and this allows us to conclude.


\section{The case of key-exchange protocols}
\label{sec:confidentiality-seq}

Our goal is to go beyond parallel composition, and to further consider
the particular case of key-exchange protocols.  Assume that $P = \new
\, \tilde n . (P_1 \mid P_2)$ is a protocol that establishes a key
between two parties. The goal of $P$ is to establish a shared session
key between $P_1$ and $P_2$. Assume that $P_1$ stores the key in the
variable $x_1$, while~$P_2$ stores it in the variable~$x_2$, and then
consider a protocol~$Q$ that uses the values stored in $x_1/x_2$ as a
fresh key to secure communications.
%

\subsection{What is a good key exchange protocol?}
\label{subset: good key exchange protocol}

In this setting, sharing between~$P$ and~$Q$ is achieved through the
composition context as well as through assignment variables $x_1$
and~$x_2$.  The idea is to abstract these values with fresh names when
we analyse $Q$ in isolation. However, {in order to abstract them in
  the right way, we need to know their values (or at least whether
  they are equal or not). This is the purpose of the property stated
  below.

\newcommand{\good}{\mathsf{good}}
\begin{definition}
  \label{def:good}
  Let $C$ be a composition context and $\Ec_0$ be a finite set of
  names.  Let $P_1[\_]$ (resp. $P_2[\_]$) be a plain process with an
  hole in the scope of an assignment of the form $[x_1 := t_1]$
  (resp. $[x_2 := t_2]$), and $\Phi$ be a frame.

 We say that
  $P_1/P_2/\Phi$ is a \emph{good} key-exchange protocol under $\Ec_0$
  and~$C$ when $\triple{\Ec_0}{P_{\good}}{\Phi}$ does not reveal
  $\bad$ where $P_\good$ is defined as follows:\\[1mm]
\null\hfill
$\begin{array}{l}
    P_\good =  \new\, \bad. \new\, d. \big(C[\new\, id. ( P_1[\Out(d,\langle x_1, id\rangle)] \mid P_2[\Out(d,\langle x_2,id\rangle)])] \\[0.5mm]
    \;\mid  \In(d,x). \In(d,y). \myIf \  \proj_1(x) = \proj_1(y)  \wedge \proj_2(x) \neq \proj_2(y)\, \myThen \,\Out(c,\bad)\\[0.5mm]
    \;\mid \In(d,x). \In(d,y). \myIf\ \proj_1(x) \neq \proj_1(y)  \wedge \proj_2(x) = \proj_2(y)\, \myThen \,\Out(c,\bad)\\[0.5mm]
    \;\mid \In(d,x). \In(c,z). \myIf\, z \in \{\proj_1(x), \pk(\proj_1(x)), \vk(\proj_1(x))\}\, \myThen\, \Out(c,\bad)\big)\\[0.5mm]
  \end{array}
$\hfill\null

\smallskip{}

  \noindent where $\bad$ is a fresh name of base type, and $c,d$ are
  fresh names of channel type.
\end{definition}

The expressions $u \neq v$ and $u \in \{v_1, \ldots, v_n\}$ used above
are convenient notations that can be rigorously expressed using nested
conditionals. Roughly, the property expresses
that $x_1$ and $x_2$ are assigned to the same value if, and only if,
they are joined together, \emph{i.e.} they share the same $id$. In
particular, two instances of the role $P_1$ (resp. $P_2$) cannot
assign their variable with the same value: a fresh key is established
at each session.  The property also ensures that the data shared
through $x_1 / x_2$ are not revealed.

\begin{example}
We have that $P_A/P_B/\Phi_\Diffie$ described in
  Example~\ref{ex:process}, as well as its
tagged version (see
Example~\ref{ex:tag example Diffe}) 
are \emph{good} key-exchange protocols under $\Ec_0 = \{\sk_A,
\sk_B\}$ and $C = \_$. This corresponds to a scenario where we
consider only a single execution of the protocol (no replication).
\end{example}



Actually, the property mentioned above is quite strong, and never
satisfied when the context~$C$ under study ends with a replication,
\emph{i.e.} when $C$ is of the form $C'[!\_]$.
To cope with this situation, we
consider an other version of this property. 
 When $C$ is of the form
$C'[!\_]$, we define $P_\good$ as
follows (where~$r_1$ and~$r_2$ are two additional fresh names of base type):\\[1mm]
 \null\hfill
$\begin{array}{l}
  \new\, \bad, d, r_1, r_2.\big(C'[\new\, id. !( P_1[\Out(d,\langle x_1, id,r_1\rangle)] \mid P_2[\Out(d,\langle x_2,id,r_2\rangle)])] \\[0.5mm]
    \;\mid  \In(d,x). \In(d,y). \myIf \  \proj_1(x) = \proj_1(y)  \wedge \proj_2(x) \neq \proj_2(y)\, \myThen \,\Out(c,\bad)\\[0.5mm]
    \;\mid \In(d,x). \In(d,y). \myIf\ \proj_1(x) = \proj_1(y)  \wedge \proj_3(x) = \proj_3(y)\, \myThen \,\Out(c,\bad)\\[0.5mm]
    \;\mid \In(d,x). \In(c,z). \myIf\, z \in \{\proj_1(x), \pk(\proj_1(x)), \vk(\proj_1(x))\}\, \myThen\, \Out(c,\bad)\big)\\[0.5mm]
  \end{array}$\hfill\null

\smallskip{}

\noindent Note that the $\id$ is now
generated before the last replication, and thus is not uniquely
associated to an instance of $P_1/P_2$. Instead several instances of
$P_1/P_2$ may now share the same $\id$ as soon as they are
identical. This gives us more flexibility.
The triplet $\langle u_1, u_2,u_3\rangle$ and
  the operator $\proj_3(u)$ used above are  convenient notations that
  can be expressed using pairs. This new version forces distinct
  values in the assignment variables for each instance of~$P_1$
  (resp.~$P_2$) through the 3rd line. However, we do
  not fix in advance which particular instance of~$P_1$ and~$P_2$ should be
  matched, as in the first version.

\begin{example}
We  have that $P_A/P_B/\Phi_\Diffie$ as well as its tagged version
are good key-exchange protocols
under $\Ec_0 = \{\sk_A, \sk_B\}$ and $C = !\, \_$. 
\end{example}


\subsection{Do we need to tag pairs?}

When analysing $Q$ in isolation, the values stored in the assignment
variables $x_1/x_2$ are abstracted by fresh names.  Since $P$ and $Q$
share the common signature~$\Sigmazero$, we need an additional
hypothesis to ensure that in any execution, the values assigned to the
variables $x_1/x_2$ are not of the form $\langle u_1,u_2\rangle$,
$\pk(u)$, or $\vk(u)$.  These symbols are those of the common
signature that are not tagged, thus abstracting them by  fresh names
in $Q$ would not be safe.  This has already been highlighted
in~\cite{CC-csf10}.  They however left as future work the definition
of the needed hypothesis and simply assume that each operator of the
common signature has to be tagged. Here, we formally express the
required hypothesis.

\begin{definition}
  An extended process $A$ satisfies the \emph{abstractability
    property} if for any $\quadruple{\Ec}{\p}{\Phi}{\sigma}$ such that
  $A \LRstep{\;\tr\;} \quadruple{\Ec}{\p}{\Phi}{\sigma}$, for any $x
  \in \dom(\sigma)$ which corresponds to an assignment variable, for
  any $u_1, u_2$, we have that $x\sigma \neq_\E \langle u_1,
  u_2\rangle$, $x\sigma \neq_\E \pk(u_1)$, and $x\sigma \neq_\E
  \vk(u_1)$.
\end{definition}

Note also that, in~\cite{CC-csf10}, the common signature is restricted
to symmetric encryption and pairing only. They do not consider
asymmetric encryption, and signature.  Thus, our composition result
generalizes theirs considering both a richer common signature, and a
lighter tagging scheme (we do not tag pairs).


\subsection{Composition result}

We retrieve the following result which is actually a
generalization of two theorems established in~\cite{CC-csf10} and
stated for specific composition contexts.

\begin{restatable}{theorem}{theoreachseq}
  \label{theo:main-compo-seq-confidentiality}
  Let $C$ be a composition context, $\Ec_0$ be a finite set of
  names of base type, and $s$ be a name that occurs in $C$. 
Let $P_1[\_]$ (resp. $P_2[\_]$) be a plain
  process without replication and with an hole in the scope of an
  assignment of the form $[x_1 := t_1]$ (resp. ${[x_2 := t_2]}$).  Let
  $Q_1$ (resp. $Q_2$) be a plain process such that $\fv(Q_1)
  \subseteq \{x_1\}$ (resp. $\fv(Q_2) \subseteq \{x_2\}$), and $\Phi$
  and $\Psi$ be two frames. Let $P = P_1[0]\mid P_2[0]$ and $Q = \new\
  k. [x_1:= k]. [x_2 := k]. (Q_1 \mid Q_2)$ for some fresh name $k$,
  and assume that:
  \begin{enumerate}
    \setlength{\itemsep}{0.5mm}
  \item $P/\Phi$ and $Q/\Psi$ are composable under $\Ec_0$ and $C$;
  \item $\triple{\Ec_0}{C[Q]}{\Psi}$ does not reveal $k$, $\pk(k)$,
    $\vk(k)$;
  \item $\triple{\Ec_0}{C[P]}{\Phi}$ satisfies the abstractability
    property; and
  \item $P_1/P_2/\Phi$ is a good key-exchange protocol under $\Ec_0$
    and $C$.
  \end{enumerate}
 If $\triple{\Ec_0}{C[P]}{\Phi}$ and
  $\triple{\Ec_0}{C[Q]}{\Psi}$ do not reveal~$s$ then
  ${\triple{\Ec_0}{C[P_1[Q_1] | P_2[Q_2]]}{\Phi \uplus \Psi}}$ does
  not reveal~$s$.
\end{restatable}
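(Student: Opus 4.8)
The plan is to reduce the shared composed process to a disjoint case and invoke the generic composition result of Appendix~\ref{sec:app-compo-main}, exactly as was done for Theorem~\ref{theo:compo-par-reachability-main}, but now handling the extra sharing through the assignment variables $x_1/x_2$. First I would set up the disjoint witness: let $D = \triple{\Ec_0}{C[P_1[0] \mid P_2[0]] \mid C[Q]}{\Phi \uplus \Psi}$ and $S = \triple{\Ec_0}{C[P_1[Q_1] \mid P_2[Q_2]]}{\Phi \uplus \Psi}$, and also the version $S' = \triple{\Ec_0}{C[P_1[ {[x_1 := t_1]}.Q_1] \mid P_2[{[x_2 := t_2]}.Q_2]]}{\ldots}$ in which the continuations $Q_1,Q_2$ read the key through the assignment already present in $P_1/P_2$ rather than through the freshly generated $k$ of $Q$. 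The heart of the argument is to show $D$ and $S$ exhibit the same reachability behaviour with respect to the secret $s$, by feeding them into the generic result; the hypotheses 1--4 are precisely what is needed to check its preconditions.

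The key steps, in order: (i) verify that Condition~1 (composability) gives the syntactic disjointness of signatures and the tagging hypothesis, and that Conditions~2 and~3 of Definition~\ref{def:composability-main} ensure the only sharing between the $P$-part and the $Q$-part is through $C$ and through $x_1/x_2$; (ii) use the \emph{abstractability property} (hypothesis~3) to argue that in any execution of $S$ the values assigned to $x_1,x_2$ are never of the form $\langle u_1,u_2\rangle$, $\pk(u)$ or $\vk(u)$, so replacing them by fresh names in the $Q$-side is sound with respect to the common untagged symbols of $\Sigmazero$ --- this is what licenses abstracting $t_1,t_2$ by the fresh $k$ of $Q$ when we analyse $Q$ in isolation; (iii) use the \emph{good key-exchange} property (hypothesis~4) to show that whenever $P_1$ and $P_2$ reach their assignments in a way that lets $Q_1$ and $Q_2$ communicate, the two assigned values coincide (so the single fresh $k$ in $Q$ is a faithful abstraction), that distinct sessions produce distinct keys, and that these values are never deducible --- so no oracle attack through $x_1/x_2$ is possible; (iv) invoke the generic result of Appendix~\ref{sec:app-compo-main} to obtain a trace-equivalence-style correspondence mapping every trace of $S$ to a trace of $D$ and conversely; (v) conclude: if the instrumented process $\triple{\Ec_0}{C[P_1[Q_1]\mid P_2[Q_2]]}{\Phi\uplus\Psi}$ together with the test $\In(c,x).\myIf\ x = s\ \myThen\ \Out(c,s')$ revealed $s'$, the trace mapping would yield a trace of the disjoint process $\triple{\Ec_0}{C[P]}{\Phi} $ or $\triple{\Ec_0}{C[Q]}{\Psi}$ (with the same test) revealing $s'$, contradicting the hypothesis that neither reveals $s$.

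The main obstacle is step~(iii)--(iv): making precise the way the generic result consumes the good key-exchange property to justify identifying the per-session assigned value $t_i$ with the fresh name $k$, and in particular handling the two possible shapes of $P_\good$ (with or without the trailing replication in $C$) uniformly --- one has to track, along a trace of $S$, which instance of $P_1$ is matched with which instance of $P_2$ (via the $id$, resp. the $id$ together with the per-instance randomness $r_1,r_2$), and show the matching is well defined and injective, so that the disjoint-case renaming is consistent. The remaining subtlety is that $Q$ in the statement generates a \emph{single} fresh $k$ shared by $Q_1$ and $Q_2$, whereas in $S$ each session of $P_1/P_2$ produces its own key; reconciling this requires the freshness-at-each-session guarantee of Definition~\ref{def:good}, and it is where hypotheses~2 and~4 are combined. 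Once the correspondence is established, steps~(i), (ii) and~(v) are routine bookkeeping analogous to the proof of Theorem~\ref{theo:compo-par-reachability-main}.
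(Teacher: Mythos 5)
Your plan matches the paper's proof: both reduce the shared process to a disjoint case, use hypotheses 1--4 exactly as you describe to discharge the preconditions of the generic result of Appendix~\ref{sec:app-compo-main} (non-revelation of the assignment values via the good key-exchange property, compatibility of traces --- with the index-permutation trick when $C$ ends in a replication --- and abstractability for the root conditions), and then transport the witnessing trace to the disjoint side to derive a contradiction. The one refinement worth noting is that the generic result is not applied to $C[P]\mid C[Q]$ directly but to a structurally identical ``sequential disjoint case'' $\tilde{C}[P_1[\tilde{Q_1}]\mid P_2[\tilde{Q_2}]]$ obtained by duplicating the names of $C$ used inside $Q_1,Q_2$; the parallel disjoint case you write down serves only to establish, via ordinary disjoint composition of secrecy, that this structurally matching variant does not reveal $s$.
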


Basically, we prove this result relying on our generic composition
result.
In~\cite{CC-csf10}, they do not require $P$ to be good but only ask
for secrecy of the shared key. In particular they do not express any
freshness or agreement property about the established key. Actually,
when considering a simple composition context without replication,
freshness is trivial (since there is only one session). Moreover, in
their setting, agreement is not important since they do not have else
branches. The analysis of $Q$ considering that both parties have
agreed on the key corresponds to the worst scenario. Note that this is
not true anymore in presence of else branches. The following example
shows that as soon as else branches are allowed, as it is the case in
the present work, agreement becomes important.

\begin{example}
Consider a simple situation where:
\begin{itemize}
\item $P_1[0] = \new\, k_1. [x_1 := k_1]. 0$ and $P_2[0] = \new\, k_2.[x_2 :=
k_2].0$;
\item  $Q_1 = \texttt{if } x_1 = x_2 \texttt{ then } \Out(c,\ok) \texttt{ else
} \Out(c,s)$ and $Q_2 = 0$.
\end{itemize}
Let  $\Ec_0 = \emptyset$, and $C =
\new\; s. \_\,$. We consider the processes $P = P_1[0] \mid P_2[0]$,  and $Q = \new\, k. [x_1 :=
k].[x_2:= k].(Q_1 
\mid Q_2)$ and we assume that the frames $\Phi$ and  $\Psi$ are empty.
We clearly have that $\triple{\Ec_0}{C[P]}{\Phi}$ and
$\triple{\Ec_0}{C[Q]}{\Psi}$ do not reveal $s$ whereas
$\triple{\Ec_0}{C[P_1[Q_1] \mid P_2[Q_2]}{\Phi \uplus \Psi}$ does.
The only hypothesis of Theorem~\ref{theo:main-compo-seq-confidentiality} that is violated is the fact
that $P_1/P_2/\Phi$ is not a good key-exchange protocol due to a lack
of agreement on the key which is generated ($\bad$ can be emitted
thanks to the 3rd line of the process $P_{good}$ given in Definition~\ref{def:good}).
\end{example}

Now, regarding their second theorem corresponding to a context of the
form $\new\, s. \, !\_\,$, as before agreement is not mandatory but
freshness of the key established by the protocol $P$ is crucial.
As illustrated by the following example, 
this hypothesis is missing in the theorem stated in~\cite{CC-csf10}
(Theorem~3).

\begin{example}
\label{ex:counter-example-csf10}
Consider
$A = \triple{\{k_P\}}{\new\, s.! ( [x_1:= k_P].0 \mid
  [x_2:=k_P].0)}{\emptyset}$, as well as
$B = \triple{\{k_P\}}{\new\, s.\,  ! Q}{\emptyset}$ where 
$Q = \new\, k.[x_1 :=k].[x_2:=k].(Q_1 \mid Q_2)$ with\\[1mm]
\null\hfill $Q_1 = \Out(c,\senc(\senc(s,k),k))$; and
$Q_2 = \In(c,x).\Out(c,\sdec(x,k))$.\hfill\null

\smallskip{}

\noindent 
Note that neither $A$ nor $B$ reveals $s$. In particular,
the process $Q_1$ emits the secret $s$ encrypted twice with a fresh
key $k$, but 
$Q_2$ only  allows us to remove one level of encryption with
$k$.
Now, if we plug the key-exchange protocol given above with
no guarantee of freshness (the same key  is established at each
session), the
resulting process, \emph{i.e.} $\triple{\Ec_0}{C[P_1[Q_1] \mid
  P_2[Q_2]]}{\emptyset}$ does reveal $s$.
\end{example}

Note that this example is not a counter example of our
Theorem~\ref{theo:main-compo-seq-confidentiality}: $P_1/P_2/\emptyset$ is not a good
key-exchange protocol according to our definition.

\section{Dealing with equivalence-based properties}
\label{sec:difficulties-equiv}

Our ultimate goal is to analyse privacy-type properties in a modular
way. In~\cite{ACD-csf12}, we propose several composition results
w.r.t. privacy-type properties, but 
for parallel composition only. Here, we want to go
beyond parallel composition, and consider the case of key-exchange
protocols.

\subsection{A problematic example}
\label{subsec:problematic-example}
Even in a quite simple setting
(the shared keys are not revealed, protocols do not share any
primitives), such a sequential composition result does not hold.
Let $C = \new\ k . ! \, \new \ k_1 . ! \, \new \ k_2. \ \_$ be a
composition context, $\yes/\no$, $\ok/\ko$ be public constants, $u =
\senc(\langle k_1,k_2 \rangle, k)$, and consider the following
processes:\\[1mm]
  \null\hfill $
\begin{array}{rl}
    Q(z_1,z_2) \,=\, &\Out(c,u). \In(x). \If\, x = u \,\Then\;  0 \, \Else \\
    \hspace{1cm} & \If \, \proj_1(\sdec(x,k)) = k_1 \, \Then \,
    \Out(c,z_1)
    \, \Else \,\Out(c,z_2)\\[2mm]
    P[\_] \,=\, &\Out(c,u). \big(\_ \mid \In(c,x). \If\, x = u \,\Then\;  0 \, \Else \\
    & \If \, \proj_1(\sdec(x,k))
    = k_1 \, \Then \, \Out(c,\ok) \, \Else \,\Out(c,\ko) \big) \\
  \end{array}$
\hfill\null

\smallskip{}

Of course, we have that $C[P[0]] \approx C[P[0]]$. Actually, we have
also that ${C[Q(\yes,\no)] \approx C[Q(\no,\yes)]}$.  This latter
equivalence is non-trivial. Intuitively, when $C[Q(\yes,\no)]$
unfolds its outermost~! and then performs an output, then
$C[Q(\no,\yes)]$ has to mimic this
step by unfolding its innermost~! and by performing the only available
output. This will allow it to react in the
same way as $C[Q(\yes,\no)]$ in case encrypted messages are used to
fill some input actions.
  
Since the two processes $P[0]$ and $Q(\yes,\no)$ (resp. $Q(\no,\yes)$)
are almost ``disjoint'', we could expect the equivalence
$C[P[Q(\yes,\no)]] \approx C[P[Q(\no,\yes)]]$ to hold. Actually,
this equivalence does \emph{not} hold.  The presence of the process
$P$ gives to the attacker some additional distinguishing power. In
particular, through the outputs $\ok/\ko$ outputted by $P$, the
attacker will learn which ! has been unfolded.
  This result holds even if we rename function symbols so that
  protocols $P$ and $Q$ do not share any primitives. 
  In the example above, the problem is that the two equivalences we
  want to compose hold for different reasons, \emph{i.e.} by unfolding
  the replications in a different and incompatible way.  Thus, when
  the composed process $C[P[Q(\yes,\no)]]$ reaches a point where
  $Q(\yes,\no)$ can be executed, on the other side, the process
  $Q(\no,\yes)$ is ready to be executed but the instance that is
  available is not the one that was used when establishing the
  equivalence $C[Q(\yes,\no)] \approx C[Q(\no,\yes)]$.  Therefore, in
  order to establish equivalence-based properties in a modular way, we
  rely on a stronger notion of equivalence, namely
  \emph{diff-equivalence}, that will ensure that the two {``small''
    equivalences} are satisfied in a compatible way.

Note that this
problem does not arise when considering reachability properties and/or
parallel composition.
In particular, we have that:\\[2mm]
\null\hfill
$C[P[0] \mid Q(\yes,\no)] \approx C[P[0] \mid Q(\no,\yes)].$\hfill\null




\subsection{Biprocesses and diff-equivalence}
 
We consider pairs of
processes, called \emph{biprocesses}, that have the same structure and
differ only in the terms and tests that they contain. Following the
approach of~\cite{BlanchetAF08}, we introduce a special symbol $\diff$
of arity 2 in our signature.  The idea being to use this $\diff$
operator to indicate when the terms manipulated by the processes are
different.
Given a biprocess~$B$, we define two processes $\fst(B)$ and $\snd(B)$
as follows: $\fst(B)$ is obtained by replacing each occurrence of
$\diff(M,M')$ (resp. $\diff(\varphi,\varphi')$) with $M$
(resp. $\varphi$), and similarly $\snd(B)$ is obtained by replacing
each occurrence of $\diff(M,M')$ (resp. $\diff(\varphi,\varphi')$)
with $M'$ (resp. $\varphi'$).


 \smallskip{}

The semantics of biprocesses (detailed in
Appendix~\ref{sec:biprocesses}) is defined as expected via a
relation that expresses when and how a biprocess
may evolve.  A biprocess reduces if, and only if, both sides of the
biprocess reduce in the same way: a communication succeeds on both
sides, a conditional has to be evaluated in the same way in both sides
too. When the two sides of the biprocess reduce in different ways, the
biprocess blocks. For instance, the {\Else} rule is as
follows:\\[2mm]
\null\hfill$
\begin{array}{c}
  \quadruple{\Ec}{\{\mbox{\texttt{if} $\diff(\varphi_L, \varphi_R)$
      \texttt{then} $Q_1$ \texttt{else} $Q_2$}\}\uplus\p}{\Phi}{\sigma}\lrstep{\;\tau\;}_\bi  \quadruple{\Ec}{Q_2\uplus\p}{\Phi}{\sigma} \\\mbox{if $u\sigma \neq_\E
    v\sigma$ for some $u = v \in \varphi_L$, and $u'\sigma \neq_\E
    v'\sigma$ for some $u' = v' \in \varphi_R$}\\
\end{array}$\hfill\null

\smallskip{}

The relation $\LRstep{\;\tr\;}_\bi$ on biprocesses is defined as for
processes.  This leads us to the following notion of \emph{diff-equivalence}.


\begin{definition}
  An extended biprocess $B_0$ satisfies \emph{diff-equivalence} if for
  every biprocess
  $B= \quadruple{\Ec}{\p}{\Phi}{\sigma}$ such that $B_0
  \,\LRstep{\;\tr\;}_\bi\, B$ for some trace $\tr$, we have that
   \begin{enumerate}
   \item
  $\new\ \Ec. \fst(\Phi) \statequiv \new\ \Ec. \snd(\Phi)$
  
   \item
  if $\fst(B) \lrstep{\ell} A_L$
  then there exists 
  $B'$ such that $B \lrstep{\ell}_\bi B'$ and $\fst(B') = A_L$ (and
  similarly for~$\snd$).
   \end{enumerate}
\end{definition}


Note that, considering diff-equivalence instead of trace equivalence,
the example given in Section~\ref{subsec:problematic-example} is not a counter-example anymore.  Indeed,
 the biprocess $B = Q_0(\diff(\yes,\no), \diff(\no,\yes))$,
  does not satisfy diff-equivalence.

\smallskip{}

The notions introduced so far on processes are extended as expected on
biprocesses: the property has to hold on both $\fst(B)$ and
$\snd(B)$. Sometimes, we also 
say that the biprocess~$B$ is in trace equivalence instead of writing~${\fst(B)
  \approx \snd(B)}$.

%
%

%
As expected, this notion of diff-equivalence is actually stronger than the usual notion of trace equivalence.

 \begin{lemma}
A biprocess  $B$ that satisfies diff-equivalence  is in
trace equivalence.
 \end{lemma}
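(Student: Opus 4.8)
The plan is to unfold the definition of diff-equivalence for a biprocess $B$ and show directly that it implies $\fst(B) \sqsubseteq \snd(B)$ and, symmetrically, $\snd(B) \sqsubseteq \fst(B)$, which together give $\fst(B) \approx \snd(B)$. The key observation is that condition~(2) of the definition of diff-equivalence lets one replay, step by step, any execution of $\fst(B)$ as an execution of the biprocess $B$ whose first projection matches, and condition~(1) guarantees that the resulting frames are statically equivalent at every point along the way.

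First I would fix $(\tr, \phi) \in \trace(\fst(B))$, so that $\fst(B) \LRstep{\;\tr\;} A_L$ with $\phi = \new\,\Ec.\,\fst(\Phi)$ for the frame component of $A_L$. I would then proceed by induction on the length of $\tr$ (more precisely, on the length of the underlying $\tau$-decorated sequence realising $\LRstep{\;\tr\;}$), maintaining the invariant that there is a biprocess $B'$ with $B \LRstep{\;\tr\;}_\bi B'$ and $\fst(B') = A_L'$ for the corresponding prefix. The base case is immediate. For the inductive step, given $\fst(B') \lrstep{\ell} A_L$, condition~(2) of diff-equivalence applied to $B'$ (which is reachable from $B_0 := B$ by the induction hypothesis) yields $B''$ with $B' \lrstep{\ell}_\bi B''$ and $\fst(B'') = A_L$; one then lifts a single $\lrstep{\ell}$ step to a $\LRstep{\;\ell\;}_\bi$ step in the obvious way by gathering the surrounding $\tau$-steps, again using condition~(2) to turn each $\tau$-move of $\fst$ into a biprocess $\tau$-move. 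At the end we obtain $B \LRstep{\;\tr\;}_\bi B'$ with $\fst(B') = A_L$; writing $B' = \quadruple{\Ec}{\p}{\Phi}{\sigma}$, the trace $(\tr, \new\,\Ec.\,\snd(\Phi))$ lies in $\trace(\snd(B))$ since $\snd(B) \LRstep{\;\tr\;} \snd(B')$. Finally, condition~(1) of diff-equivalence applied to $B'$ gives $\new\,\Ec.\,\fst(\Phi) \statequiv \new\,\Ec.\,\snd(\Phi)$, i.e. $\phi \statequiv \phi'$ with $\phi' = \new\,\Ec.\,\snd(\Phi)$. This establishes $\fst(B) \sqsubseteq \snd(B)$; the argument for $\snd(B) \sqsubseteq \fst(B)$ is symmetric, using the ``and similarly for $\snd$'' clause of condition~(2), and together they give trace equivalence.

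The only delicate point — and the step I expect to require the most care — is the bookkeeping when passing from the single-step relation $\lrstep{\ell}_\bi$ (on which conditions~(1)--(2) of diff-equivalence are phrased, via the individual rules of the biprocess semantics) to the $\LRstep{\;\tr\;}_\bi$ relation that erases $\tau$'s: one has to check that condition~(2), stated for a single visible or silent action, indeed composes to let the whole trace be replayed, and that the induction is set up so that condition~(1) is available at exactly the reached biprocess $B'$. This is routine but must be done honestly, since it is the place where ``diff-equivalence is a step-wise simulation'' is actually converted into the trace-based statement ``$\fst(B) \approx \snd(B)$''. None of it involves non-trivial computation.
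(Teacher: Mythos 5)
Your proof is correct, and it is the standard argument for this fact (the paper states the lemma without proof, as an expected consequence of the definitions). The one point you use implicitly — that a biprocess step $B \lrstep{\ell}_\bi B'$ projects to steps $\fst(B) \lrstep{\ell} \fst(B')$ and $\snd(B) \lrstep{\ell} \snd(B')$ with the same label, so that the replayed biprocess execution yields a trace of $\snd(B)$ with exactly the label sequence $\tr$ — holds by construction of the biprocess semantics (both sides reduce in the same way, and the recipe $M$ in an input label contains no $\diff$), and together with your induction and condition~(1) it gives $\fst(B) \sqsubseteq \snd(B)$ and, symmetrically, $\snd(B) \sqsubseteq \fst(B)$.
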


\section{Composition results for diff-equivalence}
\label{sec:privacy}

We first consider the case of parallel composition.
This result is in the spirit of the one
established in~\cite{ACD-csf12}.
However, in order to combine this composition result with 
the one in the case of key-exchange protocol
(Theorem~\ref{theo:main-compo-seq-equiv}), 
we also adapt it to diff-equivalence.


\begin{theorem}
  \label{theo:compo-par-diff-equivalence-main}
  Let $C$ be a composition context and $\Ec_0$ be a finite set of
  names of base type. Let ${P}$ and ${Q}$ be two plain biprocesses
  together with their frames 
  $\Phi$ and~$\Psi$, and assume that $P/\Phi$ and $Q/\Psi$ are
  composable under $\Ec_0$ and $C$.

  If $\triple{\Ec_0}{C[P]}{{\Phi}}$ and $\triple{\Ec_0}{C[Q]}{{\Psi}}$
  satisfy diff-equivalence (resp. trace equivalence) then the
  biprocess $\triple{\Ec_0}{C[P \mid Q]}{\Phi \uplus \Psi}$ satisfies
  diff-equivalence (resp. trace equivalence).
\end{theorem}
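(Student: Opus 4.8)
The plan is to reduce to the \emph{disjoint case}, exactly as for Theorem~\ref{theo:compo-par-reachability-main}, and to rely on the generic composition result of Appendix~\ref{sec:app-compo-main}. Let $S = \triple{\Ec_0}{C[P \mid Q]}{\Phi \uplus \Psi}$ be the shared biprocess under study and let $D = \triple{\Ec_0}{C[P] \mid C[Q]}{\Phi \uplus \Psi}$ be the associated disjoint biprocess, where the bound names of the second copy of~$C$ are $\alpha$-renamed so that $D$ is name-and-variable distinct. The generic result provides, under the composability hypotheses, a tight correspondence between the (bi)traces of~$S$ and of~$D$: every reduction of~$S$ is mirrored by a reduction of~$D$ along the same labels whose $\fst$- and $\snd$-frames are statically equivalent to those of~$S$, and conversely. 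This is designed so that $S$ satisfies diff-equivalence if, and only if,~$D$ does (and likewise for trace equivalence); hence it suffices to prove diff-equivalence of the disjoint biprocess~$D$.

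First I would check that the composability assumption of the theorem is precisely what the generic result requires. Condition~1 of Definition~\ref{def:composability-main} (disjoint dedicated signatures, and the common primitives of~$\Sigmazero$ tagged with disjoint sets of tags) prevents one protocol from being used as a decryption or signature oracle by the other; Conditions~2 and~3 guarantee that the only data shared between the two protocols flows through the composition context~$C$, and that the long-term keys occurring in~$C$ are never disclosed by either protocol. These conditions are structural, so they hold for a biprocess exactly when they hold for both its $\fst$ and $\snd$ projections, and the generic result applies to both sides simultaneously. This verification is the same as in the proof of Theorem~\ref{theo:compo-par-reachability-main}.

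It then remains to establish that $D$, a parallel composition of the two \emph{genuinely disjoint} biprocesses $\triple{\Ec_0}{C[P]}{\Phi}$ and $\triple{\Ec_0}{C[Q]}{\Psi}$ (disjoint restricted names, disjoint frame domains, disjoint dedicated signatures), satisfies diff-equivalence, knowing that each of them does. This is the disjoint-case parallel composition lemma, in the spirit of the classical fact recalled in the introduction and of the composition result of~\cite{ACD-csf12}, here adapted to diff-equivalence. I would prove it by induction on the length of a reduction of~$D$: tracking the origin of each plain (bi)process in the multiset, every step is internal to the $C[P]$-component, internal to the $C[Q]$-component, or an internal communication on a public channel between the two; in the first two cases it is matched with a bistep via the diff-equivalence of the relevant component, in the last case (and for the purely structural $\tau$-steps) directly. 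The static-equivalence clause of diff-equivalence for the combined frame, on the $\fst$ side versus on the $\snd$ side, reduces to that of each component via the standard lemma that static equivalence is preserved under disjoint union of frames over disjoint restricted names. The trace-equivalence variant is obtained by the same reduction, closing up with the classical disjoint parallel composition of trace-equivalent processes instead.

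I expect the main obstacle to be the step-matching clause of diff-equivalence in this last lemma. A subtlety already present in the trace-equivalence composition of~\cite{ACD-csf12} is that an input of the $C[P]$-component may be fed a recipe mentioning output variables of the $C[Q]$-component, so one cannot literally replay an independent run of each side; this bookkeeping is exactly what the generic result internalises, exploiting the fact that tagging makes messages produced by one protocol unusable by the other except as inert sub-terms. The second place where diff-equivalence of the components (rather than mere trace equivalence) is indispensable is in ensuring that a conditional inside, say, the $C[P]$-component evaluates identically on the $\fst$ and on the $\snd$ side of~$D$ --- which is precisely what the step-matching clause for $\triple{\Ec_0}{C[P]}{\Phi}$ delivers. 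Apart from these two points the argument is a routine induction.
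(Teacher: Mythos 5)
Your proposal follows essentially the same route as the paper: both pass through the disjoint biprocess $D = \triple{\Ec_0}{C[P] \mid C[Q]}{\Phi \uplus \Psi}$, obtain its diff-equivalence (resp.\ trace equivalence) from that of the two components by classical disjoint parallel composition, and then invoke the generic result of Appendix~\ref{sec:app-compo-main} to transfer the property to $S = \triple{\Ec_0}{C[P \mid Q]}{\Phi \uplus \Psi}$ (the paper does this by forming the biprocesses pairing $\fst(S)$ with $\fst(D)$ and $\snd(S)$ with $\snd(D)$ and concluding by transitivity, which matches your ``tight correspondence'' step). The only difference is one of emphasis: you elaborate the disjoint-case induction that the paper takes for granted, while the paper's appendix details the bookkeeping you leave implicit (unfolding replications, routing shared names through assignment variables, and checking compatibility and non-deducibility of the shared keys before applying Theorem~\ref{theo:main-main}).
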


\begin{proof}\emph{(sketch)}
As for the proof for Theorem~\ref{theo:compo-par-reachability-main}, 
parallel composition works well when processes do not share any
data. Hence, we easily deduce that $D = \triple{\Ec_0}{C[P] \mid
  C[Q]}{\Phi \uplus \Psi}$ 
satisfies the diff-equivalence (resp. trace equivalence). Then, our
generic composition result allows one to compare the behaviours of
the biprocess $D$ to those of 
the biprocess $S = \triple{\Ec_0}{C[P \mid Q]}{\Phi \uplus
  \Psi}$. More precisely, this allows us to establish that $\fst(D)$ and $\fst(S)$ are in
diff-equivalence (as well as $\snd(D)$ and $\snd(S)$), and then we
conclude relying on the transitivity of the equivalence. \qed
 \end{proof}


Now, regarding sequential composition and the particular case of
key-exchange protocols, we obtain the following composition result.

\begin{restatable}{theorem}{theoequivseq}
  \label{theo:main-compo-seq-equiv}
  Let $C$ be a composition context and $\Ec_0$ be a finite set of
  names of base type. Let $P_1[\_]$ (resp. $P_2[\_]$) be a plain
  biprocess without replication and with an hole in the scope of an
  assignment of the form $[x_1 := t_1]$ (resp. ${[x_2 := t_2]}$).  Let
  $Q_1$ (resp. $Q_2$) be a plain biprocess such that $\fv(Q_1)
  \subseteq \{x_1\}$ (resp. $\fv(Q_2) \subseteq \{x_2\}$), and $\Phi$
  and $\Psi$ be two frames. Let $P = P_1[0]\mid P_2[0]$ and $Q = \new\
  k. [x_1:= k]. [x_2 := k]. (Q_1 \mid Q_2)$ for some fresh name $k$,
  and assume that:
  \begin{enumerate}
    \setlength{\itemsep}{0.5mm}
  \item $P/\Phi$ and $Q/\Psi$ are composable under $\Ec_0$ and $C$;
  \item $\triple{\Ec_0}{C[Q]}{\Psi}$ does not reveal $k$, $\pk(k)$,
    $\vk(k)$;
  \item $\triple{\Ec_0}{C[P]}{\Phi}$ satisfies the abstractability
    property; and
  \item $P_1/P_2/\Phi$ is a good key-exchange protocol under $\Ec_0$
    and $C$.
  \end{enumerate}

\noindent Let $P^+ \hspace{-0.1cm} = \hspace{-0.1cm} P_1[\Out(d,x_1)] \mid P_2[\Out(d,x_2)] \mid
\In(d,x).\In(d,y). \myIf\, x = y \,\myThen\; 0 \;\myElse\; 0$. 
 If the biprocesses $\triple{\Ec_0}{\new\; d. C[P^+]}{\Phi}$ and $\triple{\Ec_0}{C[Q]}{\Psi}$ satisfy diff-equivalence then
$\triple{\Ec_0}{C[{P_1[Q_1] \mid P_2[Q_2]}]}{{\Phi} \uplus {\Psi}}$
  satisfies diff-equivalence.
\end{restatable}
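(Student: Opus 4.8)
The plan is to derive Theorem~\ref{theo:main-compo-seq-equiv} from the generic composition result of Appendix~\ref{sec:app-compo-main}, following the same blueprint as the proof of the confidentiality version (Theorem~\ref{theo:main-compo-seq-confidentiality}) but now tracking diff-equivalence instead of a reachability predicate on both projections. First I would set up the \emph{disjoint case}. Let $D = \triple{\Ec_0}{C[P^+] \mid C[Q]}{\Phi \uplus \Psi}$ (with $d$ restricted appropriately), where the sub-protocols $P_i$ and $Q_i$ no longer share the assignment variables $x_1/x_2$: in $C[Q]$ the key $k$ is freshly generated, and the values fed to $Q_1/Q_2$ come from that $k$ rather than from the output of $P$. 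By hypothesis, both $\triple{\Ec_0}{\new\,d.\,C[P^+]}{\Phi}$ and $\triple{\Ec_0}{C[Q]}{\Psi}$ satisfy diff-equivalence, and since diff-equivalence is preserved under parallel composition of processes over disjoint signatures sharing no names (Condition~1 of composability gives the disjointness of $\Sigma_\alpha$, $\Sigma_\beta$ and of the tag sets; Conditions~2 and the ``does not reveal'' clauses give disjointness of the names actually shared outside $C$), the biprocess $D$ satisfies diff-equivalence. This is the parallel-composition argument already used for Theorem~\ref{theo:compo-par-diff-equivalence-main}.

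Next I would invoke the generic composition result to move between $D$ and the \emph{shared case} $S = \triple{\Ec_0}{C[P_1[Q_1] \mid P_2[Q_2]]}{\Phi \uplus \Psi}$. The role of the added component $P^+$ — namely $P_1[\Out(d,x_1)] \mid P_2[\Out(d,x_2)] \mid \In(d,x).\In(d,y).\myIf\ x=y\ \myThen\ 0\ \myElse\ 0$ — is precisely to record, inside the disjoint experiment, the equality pattern of the values that $P$ would have handed to $Q$: the conditional $\myIf\ x=y$ makes the observable behaviour of $D$ sensitive to whether the two assigned keys coincide, exactly as it is in $S$. One then checks that all the hypotheses of the generic result are met: Condition~4 (good key-exchange) guarantees that the values assigned to $x_1/x_2$ in any reachable configuration of $C[P]$ agree exactly when the two roles are ``joined'' and are pairwise distinct otherwise, so abstracting them by fresh names is faithful; Condition~3 (abstractability) guarantees these values are never of the shape $\langle u_1,u_2\rangle$, $\pk(u)$, or $\vk(u)$, so the untagged symbols of $\Sigmazero$ cause no leakage when the abstraction is performed; Condition~2 guarantees $k,\pk(k),\vk(k)$ stay secret in $C[Q]$, which is what lets the fresh key $k$ in the disjoint case stand in for the established key in the shared case; and Condition~1 supplies the tagging/signature separation the generic mapping relies on. The generic result then yields a trace-to-trace correspondence (in \emph{both} directions) between $S$ and $D$ that is compatible with the biprocess structure, so that $\fst(S)$ and $\fst(D)$ are in diff-equivalence, and likewise $\snd(S)$ and $\snd(D)$. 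Since diff-equivalence (being, on each side, essentially an equality-of-reachable-behaviours-plus-static-equivalence property) is transitive, we conclude that $S$ satisfies diff-equivalence, which is the statement.

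The main obstacle, as I see it, is making the generic mapping respect the biprocess semantics rather than just the ordinary one. For the confidentiality theorem it suffices that the mapping preserves traces and the deducibility of $s$; here we need it to preserve, for every reachable biprocess configuration, (i) the static equivalence between the two projections of the frame and (ii) the ``$\fst$/$\snd$ can simulate each other'' clause of diff-equivalence. Concretely, the delicate point is that when a conditional inside $P_1$, $P_2$, $Q_1$, or $Q_2$ contains a $\diff$, both projections must branch the same way in the composed process exactly when they do in the disjoint one — and this is where the ``good key-exchange'' property has to be applied \emph{uniformly} to $\fst(P)$ and $\snd(P)$, and the abstractability property likewise to both projections (which is why the theorem phrases hypotheses~2–4 on the biprocesses). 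I would therefore spend most of the effort verifying that the equality pattern of the assigned keys, and their non-$\Sigmazero$-shape, are identical in $\fst$ and $\snd$, so that the single bijection produced by the generic result simultaneously witnesses diff-equivalence on the nose rather than only trace equivalence on each projection. The rest — unfolding replications in $C$, handling private channels, and the bookkeeping of the fresh names $d$, $\bad$, $r_1$, $r_2$ introduced in Definition~\ref{def:good} — is routine given the generic machinery.
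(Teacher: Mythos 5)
Your plan matches the paper's proof: both reduce to the disjoint case (whose diff-equivalence follows from the two hypotheses by disjoint parallel composition, after unfolding replications and routing the shared data through assignment variables), apply the two-directional generic mapping of Appendix~\ref{sec:app-compo-main} to each projection of the shared biprocess, and chain the resulting equivalences by transitivity, with hypotheses 2--4 supplying exactly the non-deducibility, abstractability and compatibility conditions that the generic result requires. You also correctly identify the role of $\triple{\Ec_0}{\new\, d. C[P^+]}{\Phi}$ --- forcing the equality pattern of the assigned keys, i.e.\ the matching of $P_1/P_2$ instances under replication, to coincide on the two sides of the biprocess --- which is precisely where the paper's proof concentrates its effort (via the compatibility lemmas and index permutations).
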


Note that we require $\triple{\Ec_0}{\new\; d. C[P^+]}{\Phi}$ to be
in diff-equivalence (and not simply $\triple{\Ec_0}{C[P]}{\Phi}$).
Actually, when the
composition context $C$ under study is not of the form $C'[! \_]$, and
under the hypothesis 
that $P_1/P_2/\Phi$ is a good key-exchange protocol under $\Ec_0$ and $C$, we have
that these two requirements coincide.
However, the stronger hypothesis is important to
conclude when $C$ is of the form $C'[!\_]$. Indeed, in this case,  we do not know in advance what are the instances of~$P_1$
and~$P_2$ that will be ``matched''. This is not a problem but to
conclude about the diff-equivalence of the whole process (\emph{i.e.} $\triple{\Ec_0}{C[{P_1[Q_1] \mid P_2[Q_2]}]}{{\Phi} \uplus {\Psi}}$), we need to
ensure that such a matching is the same on both sides of the
equivalence.
Note that to conclude about trace equivalence only, this additional requirement
is actually not necessary.

 \section{Case studies}
\label{sec:casestudies}


As mentioned in the introduction, many applications 
rely on several protocols running in
composition (parallel, sequential, or nested). In this section, we
show that our results can help in the analysis of this sort of complex
system.  ProVerif models of our case studies
are made available online at:
\begin{center}
\url{http://www.loria.fr/~chevalvi/other/compo/}.
\end{center}


\subsection{3G mobile phones}

We look at confidentiality and privacy guarantees provided by the
\AKA\ protocol and the Submit SMS procedure (\sSMS) when run in
composition as specified by the 3GPP consortium in~\cite{TS33102}.

\paragraph*{Protocols description.}
The \sSMS\xspace protocol allows a mobile station (MS) to send an SMS
to another MS through a serving network (SN).
The confidentiality of the sent SMS relies on a session
key~$\mathit{ck}$ established through the execution of the \AKA\xspace
protocol between the MS and the SN.
The \AKA\xspace protocol achieves mutual authentication between a MS
and a SN, and allows them to establish a shared session key
$\mathit{ck}$.  The \AKA\xspace protocol consists in the exchange of
two messages: the \emph{authentication request} and the
\emph{authentication response}.  The \AKA\xspace protocol as deployed
in real 3G telecommunication systems presents a linkability
attack~\cite{AMRR-CCS12}, and thus we consider here its fixed version
as described in~\cite{AMRR-CCS12}.
At the end of a successful execution of this protocol, both parties
should agree on a fresh ciphering key $\mathit{ck}$.
This situation can be modelled in our calculus as follows: 
\[
\begin{array}{l}
\new\ \KPrSN. \ !\new\ \imsi.\ \new\ \KIMSI.\ ! \new\ sqn.\ \new\
\sms. \\
\hspace{5cm} \ (\AKA^{SN}[\sSMS^{SN}] \mid \AKA^{MS}[\sSMS^{MS}])
\end{array}
\]

\noindent where $\KPrSN$ represents the private key of the network;
while $\imsi$ and $\KIMSI$ represent respectively the long-term
identity and the symmetric key of the MS. The name $sqn$ models the
sequence number on which SN and MS are synchronised. The two
subprocesses $\AKA^{MS}$ and
$\sSMS^{MS}$ 
(\emph{resp.} $\AKA^{SN}$, and $\sSMS^{SN}$) model one session of the
MS's (\emph{resp.} SN's) side of the \AKA, and \sSMS\ protocols
respectively.  Each MS, identified by its identity $\imsi$ and its key
$\KIMSI$, can run multiple times the \AKA\ protocol followed by the
\sSMS\ protocol.

\paragraph*{Security analysis.}
We explain how some confidentiality and privacy properties 
of the \AKA\ protocol and the
\sSMS\ procedure can be derived relying on our composition results.
We do not need to tag the protocols under study to perform our
analysis since they do not share any primitive but the pairing
operator.  Note that the \AKA\ protocol can \emph{not} be modelled in
the calculus given in~\cite{CC-csf10} due to the need of non-trivial
else branches.

\smallskip{}

\noindent \emph{Strong unlinkability} requires that an
observer does not see the difference between the two following
scenarios: {\it (i)} a same mobile phone sends several SMSs; or {\it
  (ii)} multiple mobile phones send at most one SMS each. To model
this requirement, we consider the composition context\footnote{We use
  $\Let\ x=M\ \In\ P$ to denote the process $P\{M/x\}$.}:\\[1mm]
\null\hfill $\begin{array}{rl} C_U[\_]\ \stackrel{\defi}{=}
  & 
  !\new\ \imsi_1.\ \new\ \KIMSI_1.\   ! \new\ \imsi_2.\ \new\ \KIMSI_2.\  \\  
  & \ \Let\ \imsi = \diff[\imsi_1, \imsi_2]\ \In 
  \ \Let\ \KIMSI = \diff[\KIMSI_1,
  \KIMSI_2]\ \In \ \\ 
  & \ \new\ sqn.\ \new\ sms.\ \_ \\
\end{array}
$\hfill\null\\[1mm]
To check if the considered 3G protocols satisfy strong unlinkability, 
one needs to check if the following biprocess satisfies diff-equivalence ($\Phi_0 =\{w_1 \refer \pk(\KPrSN)\}$):\\[1mm]
\null\hfill$
\begin{array}{c}
  \quad\triple{\KPrSN}{C_U[\AKA^{SN}[\sSMS^{SN}] \mid
    \AKA^{MS}[\sSMS^{MS}]]}{\Phi_0} 
\end{array} 
$\hfill\null\\[1mm]
\noindent Hypotheses (1-4) stated in Theorem~\ref{theo:main-compo-seq-equiv}
are satisfied, and thus this equivalence can be derived from the
following two ``smaller'' diff-equivalences:\\[2mm]
\null\hfill$
\quad\triple{\KPrSN}{\new\ d.\ C_U[\AKA^+]}{\Phi_0} \text{\quad and \quad} \triple{\KPrSN}{C'_U[\sSMS]}{\emptyset}
$\hfill\null\\[-1mm]
\noindent where:
\begin{itemize}
\item  $sSMS \stackrel{\defi}{=} \sSMS^{SN} \mid \sSMS^{MS}$, 
\item  $\AKA^+ \stackrel{\defi}{=}
  \AKA^{SN}[\Out(d,xck_{SN})] \mid \AKA^{MS}[\Out(d,xck_{MS})] \mid \\
\null \hspace{1.7cm}  \In(d, x).\ \In(d, y).\ \myIf\; x = y \;\myThen\; 0 \;\myElse\; 0
$
\item $ C'_U[\_]\ \stackrel{\defi}{=} C_U[\new\ ck.  \Let\
xck_{SN}=ck\ \In \ \Let\ xck_{MS}=ck\ \In\ \_\,]$.
\end{itemize}

\noindent\emph{Weak secrecy} requires that the sent/received SMS is not deducible by
an outsider, and can be modelled using the context\\[1mm]
\null\hfill $C_{WS}[\_]\
\stackrel{\defi}{=} !\new\ \imsi.\ \new\ \KIMSI.\ ! \new\ sqn. \new\
sms. \_$.  \hfill\null

\smallskip{}

Note that the composition context $C_{WS}$ is the same as
$\fst(C_U)$ (up to some renaming), thus Hypotheses (1-4) of
Theorem~\ref{theo:main-compo-seq-confidentiality} also hold and we derive the weak
secrecy property by simply analysing this property on $\AKA$ and
$\sSMS$ in isolation.
\smallskip{}

\noindent\emph{Strong secrecy}
means that an outsider should not be able to distinguish
the situation where $\sms_1$ is sent (resp. received), from the situation
where $\sms_2$ is sent (resp. received), although he might know the content
of $\sms_1$ and $\sms_2$. This can be modelled using the following
composition context:\\[1mm]
$
\begin{array}{c}
  \quad C_{SS}[\_]\ \stackrel{\defi}{=}   !\new\ \imsi.\ \new\ \KIMSI.\ ! \new\ sqn. 
  \ \Let\ \sms = \diff[\sms_1, \sms_2]\
  \In\ \_
\end{array}
$\\[1mm] 
where $\sms_1$ and $\sms_2$ are two free names known to the
attacker. Again, our Theorem~\ref{theo:main-compo-seq-equiv} allows us to
reason about this property in a modular way.

\subsection{E-passport application}

We look at privacy guarantees provided by three protocols of the
e-passport application
when run in composition as specified in~\cite{ICAO-Passport}.

\paragraph*{Protocols description.}

The information stored in the chip of the passport is organised in
data groups ($\DG_1$ to $\DG_{19}$). For example, $\DG_5$ contains a
JPEG copy of the displayed picture, and~$\DG_7$ contains the displayed
signature. The verification key~$\vk(\KPrAA)$ of the passport,
together with its certificate $\sign(\vk(\KPrAA), \KPrDS)$ issued by
the Document Signer authority are stored in~$\DG_{15}$. The
corresponding signing key~$\KPrAA$ is stored in a tamper resistant
memory, and cannot 
be read or copied. For authentication purposes, a hash of all the $\DG$s together with a signature on this hash value issued by the Document Signer authority are stored in a separate file, 
the Security Object Document:\\[1mm]
\null\hfill
$\SOD \stackrel{\defi}{=} \langle \sign(\h(\DG_1, \dots, \DG_{19}),
\KPrDS), \; \h(\DG_1, \dots, \DG_{19}) \rangle.$\hfill\null

\smallskip{}
 
The ICAO standard specifies several protocols through which these
information can be accessed~\cite{ICAO-Passport}. First, the \emph{Basic Access Control} (\BAC)
protocol establishes a key seed $\mathit{kseed}$ from which two sessions keys
$kenc$ and $kmac$ are derived.
The purpose of $kenc$ and
$kmac$ is to prevent  skimming and eavesdropping on the subsequent
communication with the e-passport.
The security of the \BAC\xspace protocol relies on two master keys,
$\KE$ and $\KM$, which are 
optically retrieved from the passport by the reader before executing the \BAC\xspace protocol.
Once the \BAC\xspace protocol has been successfully executed, the
reader gains access to the information stored in the RFID tag through
the \emph{Passive Authentication} (\PAuth)
and the \emph{Active Authentication} (\AAuth) protocols  that can be executed
in any order. 
The \PAuth\xspace protocol is an
authentication mechanism that proves that the content of the RFID chip
is authentic whereas
the \AAuth\xspace protocol 
can be  used to prevent
cloning of the passport chip. 
It relies on the fact that the secret key~$\KPrAA$ of the passport
cannot be read or copied. 
%
This situation  can be modelled 
calculus 
as follows:\\[1mm]
\null\hfill
$
\begin{array}{ll}
  P\ \stackrel{\defi}{=} &  \new\ \KPrDS. \ !\new\ \KE.\ \new\ \KM.\
  \new\ \KPrAA.  \new\ \id.\ \new\ sig.\ \new\ pic.\ \dots \\ 
  & \qquad\qquad \quad\quad! (\BAC^R[\PAuth^R \mid \AAuth^R] \mid \BAC^P[\PAuth^P \mid \AAuth^P])
\end{array}
$\hfill\null

\smallskip{}

\noindent where $\id$, $sig$, $pic$, ... represent the name, the
signature, the displayed picture, \emph{etc} of the e-passport's
owner, \emph{i.e.} the data stored in the $\DG$s ($1$-$14$) and
($16$-$19$). The subprocesses~$\BAC^P$, $\PAuth^P$ and~$\AAuth^P$
(\emph{resp.} $\BAC^R$, $\PAuth^R$ and~$\AAuth^R$) model one session
of the passport's (\emph{resp.} reader's) side of the \BAC,
\PAuth\xspace and \AAuth\xspace protocols respectively. The
name~$\KPrDS$ models the signing key of the Document Signing authority
used in all passports. Each passport (identified by its master
keys~$\KE$ and~$\KM$, its signing key~$\KPrAA$, the owner's name,
picture, signature, ...) can run multiple times the \BAC\xspace
protocol followed by the \PAuth\xspace and \AAuth\xspace protocols.

\paragraph*{Security analysis.}
We explain below how \emph{strong anonymity} of these three protocols
executed together can be derived from the analysis performed on each
protocol in isolation. 
%
In~\cite{ACD-csf12}, as sequential composition could not be handled,
the analysis of the e-passports application 
had to exclude the execution of the \BAC\xspace protocol. 
Instead, it was assumed that the key  $\mathit{kenc}$ (resp. $\mathit{kmac}$) is 
``magically'' pre-shared between the passport and the reader. 
Thanks to our Theorem~\ref{theo:main-compo-seq-equiv}, 
we are now able to complete the analysis of the e-passport application.



To express strong anonymity, 
we need on one hand to consider a system in which the particular
e-passport with publicly known $id_1$, $sig_1$, $pic_1$, \emph{etc.}
is being executed, while on the other hand it is a different
e-passport with publicly known $id_2$, $sig_2$, $pic_2$, \emph{etc.}
which is being executed. 
We consider the following
context:\\[1mm]
\null\hfill
$\begin{array}{ll}
  C_A[\_]\ \stackrel{\defi}{=} 
   !\new\ \KE.\ \new\ \KM.\ \new\ \KPrAA.  
 \Let\ id = \diff[id_1, id_2]\ \In \ 
 \ \dots\ ! \ \_
\end{array}$
\hfill\null

\smallskip{}

This composition context differs in the e-passport being executed on
the left-hand process and on the right-hand process. In other words,
the system satisfies anonymity if an observer cannot distinguish the
situation where the e-passport with publicly known $id_1$, $sig_1$,
$pic_1$, \emph{etc.} is being executed, from the situation where it is
another e-passport which is being executed. 
%
To check if the tagged version of the e-passport application (we
assume here that $\BAC$, $\PAuth$, and $\AAuth$ are tagged in
different ways)
preserves strong anonymity, one thus needs to check if the following
biprocess satisfies diff-equivalence {(with $\Phi_0 = \{w_1 \refer \vk(\KPrDS)\}$):}\\[1mm]
\null\hfill
$\triple{\KPrDS}{C_A[{\BAC^R[\PAuth^R\mid \AAuth^R] \mid \BAC^P[\PAuth^P\mid \AAuth^P]}]}{\Phi_0}
$\hfill\null

\smallskip{}

We can instead check whether \BAC, \PAuth\xspace and \AAuth\xspace
satisfy anonymity in isolation, 
\emph{i.e.} if the following three diff-equivalences hold:\\[1mm]
\null\hfill$
\begin{array}{clccl}
\multirow{2}{*}{$\triple{\KPrDS}{\new\ d.\ C_A[\BAC^+]}{\emptyset}$}& \multirow{2}{*}{$(\alpha)$} 
  &\;\;\;\;&\triple{\KPrDS}{C'_A[{\PAuth^R \mid \PAuth^P}]}{\Phi_0}  & (\beta) \\
  &&&\triple{\KPrDS}{C'_A[{\AAuth^R \mid \AAuth^P}]}{\emptyset}  & (\gamma)
\end{array}
$\hfill\null\\
\noindent where 
\begin{itemize}
\item $\BAC^+ \stackrel{\defi}{=}
  \begin{array}[t]{l}
   \phantom{\mid } \BAC^R[\Out(d, (xkenc_R, xkmac_R))] \\
\mid \BAC^P[\Out(d, (xkenc_P, xkmac_P))] \\
    \mid \In(d, x).\ \In(d, y).\ \myIf\, x = y \,\myThen\; 0 \;\myElse\; 0;
  \end{array}
  $
\item  $C'_A[\_]   \stackrel{\defi}{=}  C_A[C''_A[\_]]$; and
\item $C''_A[\_]  \stackrel{\defi}{=}  \New\ kenc. \New\ kmac.\
  \Let\ (xkenc_R, xkmac_R) = (kenc, kmac) \ \In \\ 
\null\hspace{4.6cm} \Let\ (xkenc_P, xkmac_P) = (kenc, kmac) \ \In\ \_$.
\end{itemize}

\noindent Then, applying
Theorem~\ref{theo:compo-par-diff-equivalence-main} to
$(\beta)$ and $(\gamma)$ we derive that the following biprocess satisfies
diff-equivalence:\\[1mm]
\null\hfill
$
\begin{array}{cl}
  \triple{\KPrDS}{C'_A[{\PAuth^R \mid \AAuth^R \mid \PAuth^P\mid \AAuth^P}]}{\Phi_0} & (\delta) \\
\end{array}
$\hfill\null

\smallskip{}

\noindent and applying Theorem~\ref{theo:main-compo-seq-equiv} to $(\alpha)$
and $(\delta)$, we derive the required diff-equivalence:\\[1mm]
\null\hfill
$
\triple{\KPrDS}{C_A[{\BAC^R[\PAuth^R\mid \AAuth^R] \mid \BAC^P[\PAuth^P\mid \AAuth^P]}]}{\Phi_0}
$
\hfill\null
\smallskip{}

Note that we can do so because Hypotheses (1-4) stated in Theorem~\ref{theo:main-compo-seq-equiv} are satisfied, and in particular because $\BAC^R/\BAC^P/\emptyset$ is a good key-exchange protocol under $\{\KPrDS\}$ and $C_A$.

 \section{Conclusion}
\label{sec:conclusion}

We investigate composition results for reachability properties as well as privacy-type properties expressed using a notion of equivalence. 
Relying on a generic composition result
that allows one to strongly relate any trace of the composed protocol to a trace of the so-called disjoint case, we
 derive parallel composition results, as well as a sequential composition results (the case of key-exchange protocols under various composition contexts).

All these results work in a quite general setting, \emph{e.g.} processes may have non trivial else branches, we consider arbitrary primitives expressed using an equational theory, and processes may even share some standard primitives as long as they are tagged in different ways. We illustrate the usefulness of our results through the mobile phone and e-passport applications.

We believe that our generic result could be used to derive further composition results. 
We may want for instance to relax the notion of being a \emph{good protocol} at the price of studying 
a less ideal scenario when analysing the protocol $Q$ in isolation. 
We may also want to consider situations
 where sub-protocols  sharing some data are arbitrarily interleaved.
Moreover, even if we consider arbitrary primitives, sub-protocols can only share some standard primitives provided that they are tagged. It would be nice to relax these conditions. 
This would allow one to compose  protocols (and not their tagged versions) or to compose protocols that both rely on primitives for which no tagging scheme actually exists (\emph{e.g.} exclusive-or).

\noindent \paragraph*{Acknowledgement}
The research leading to these results has received funding from the European 
   Research Council under the European Union's Seventh Framework Programme 
   (FP7/2007-2013) / ERC grant agreement $n^{\circ}$ 258865, project
   ProSecure, as well as the ANR project JCJC VIP n$^o$ 11 JS02 006 01. 

\bibliographystyle{abbrv}
\bibliography{main}

\begin{thebibliography}{10}

\bibitem{ICAO-Passport}
{PKI} for machine readable travel documents offering {ICC} read-only access.
\newblock Technical report, International Civil Aviation Organization, 2004.

\bibitem{TS33102}
3GPP.
\newblock Technical specification group services and system aspects; {3G}
  security; security architecture (release 9).
\newblock Technical report, 3rd Generation Partnership Project, 2010.
\newblock 3GPP TS 33.102 V9.3.0.

\bibitem{AbadiFournet2001}
M.~Abadi and C.~Fournet.
\newblock Mobile values, new names, and secure communication.
\newblock In {\em Proc.\ 28th Symposium on Principles of Programming Languages
  ({POPL}'01)}. ACM Press, 2001.

\bibitem{ACD-csf12}
M.~Arapinis, V.~Cheval, and S.~Delaune.
\newblock Verifying privacy-type properties in a modular way.
\newblock In {\em {P}roc.\ 25th {IEEE} {C}omputer {S}ecurity {F}oundations
  {S}ymposium ({CSF}'12)}, pages 95--109. {IEEE} Computer Society Press, 2012.

\bibitem{AMRR-CCS12}
M.~Arapinis, L.~I. Mancini, E.~Ritter, M.~Ryan, N.~Golde, K.~Redon, and
  R.~Borgaonkar.
\newblock New privacy issues in mobile telephony: fix and verification.
\newblock In {\em ACM Conference on Computer and Communications Security},
  pages 205--216, 2012.

\bibitem{DBLP:conf/ccs/ArmandoCCCT08}
A.~Armando, R.~Carbone, L.~Compagna, J.~Cu{\'e}llar, and M.~L. Tobarra.
\newblock Formal analysis of {SAML 2.0} web browser single sign-on: breaking
  the {SAML}-based single sign-on for google apps.
\newblock In {\em Proc.\ 6th ACM Workshop on Formal Methods in Security
  Engineering (FMSE 2008)}, pages 1--10. ACM Press, 2008.

\bibitem{sysdesc-TACAS2012}
A.~Armando et~al.
\newblock {The AVANTSSAR Platform for the Automated Validation of Trust and
  Security of Service-Oriented Architectures}.
\newblock In {\em {Proc.\, 18th International Conference on Tools and
  Algorithms for the Construction and Analysis of Systems {TACAS'12}}}, volume
  7214 of {\em LNCS}. Springer, 2012.

\bibitem{Canetti-FOCS04}
B.~Barak, R.~Canetti, J.~Nielsen, and R.~Pass.
\newblock Universally composable protocols with relaxed set-up assumptions.
\newblock In {\em Proc. 45th Symposium on Foundations of Computer Science
  (FOCS'04)}, pages 186--195. IEEE Computer Society Press, 2004.

\bibitem{BlanchetAF08}
B.~Blanchet, M.~Abadi, and C.~Fournet.
\newblock Automated verification of selected equivalences for security
  protocols.
\newblock {\em Journal of Logic and Algebraic Programming}, 75(1):3--51, 2008.

\bibitem{symbolic-uc}
F.~B\"{o}hl and D.~Unruh.
\newblock Symbolic universal composability.
\newblock In {\em CSF 2013}. IEEE, 2013.
\newblock Preprint on IACR ePrint 2013/062.

\bibitem{kostas-csf10}
M.~Bruso, K.~Chatzikokolakis, and J.~den Hartog.
\newblock Formal verification of privacy for {RFID} systems.
\newblock In {\em Proc. 23rd Computer Security Foundations Symposium (CSF'10)}.
  IEEE Comp. Soc. Press, 2010.

\bibitem{cheval-phd2012}
V.~Cheval.
\newblock {\em Automatic verification of cryptographic protocols: privacy-type
  properties}.
\newblock Th{\`e}se de doctorat, Laboratoire Sp{\'e}cification et
  V{\'e}rification, ENS Cachan, France, Dec. 2012.

\bibitem{CRicalp05}
Y.~Chevalier and M.~Rusinowitch.
\newblock Combining intruder theories.
\newblock In {\em Proc. 32nd International Colloquium on Automata, Languages
  and Programming ({ICALP}'05)}, volume 3580 of {\em LNCS}, pages 639--651.
  Springer, 2005.

\bibitem{CC-csf10}
{\c{S}}.~Ciob{\^a}c{\u{a}} and V.~Cortier.
\newblock Protocol composition for arbitrary primitives.
\newblock In {\em {P}roc. of the 23rd {IEEE} {C}omputer {S}ecurity
  {F}oundations {S}ymposium ({CSF}'10)}, pages 322--336. {IEEE} Computer
  Society Press, 2010.

\bibitem{CD-fmsd08}
V.~Cortier and S.~Delaune.
\newblock Safely composing security protocols.
\newblock {\em Formal Methods in System Design}, 34(1):1--36, Feb. 2009.

\bibitem{CD-jar10}
V.~Cortier and S.~Delaune.
\newblock Decidability and combination results for two notions of knowledge in
  security protocols.
\newblock {\em Journal of Automated Reasoning}, 2012.
\newblock To appear.

\bibitem{Gross-csf11}
T.~Gro{\ss} and S.~M{\"o}dersheim.
\newblock Vertical protocol composition.
\newblock In {\em Proc.\ 24th IEEE Computer Security Foundations Symposium,
  ({CSF}'11)}, pages 235--250. IEEE Computer Society, 2011.

\bibitem{GuttmanT00}
J.~D. Guttman and F.~J. Thayer.
\newblock Protocol independence through disjoint encryption.
\newblock In {\em Proc.\ 13th Computer Security Foundations Workshop
  (CSFW'00)}. IEEE Comp. Soc. Press, 2000.

\bibitem{KuestersTuengerthal-CCS-2011}
R.~K{\"u}sters and M.~Tuengerthal.
\newblock {Composition Theorems Without Pre-Established Session Identifiers}.
\newblock In {\em Proc. 18th Conference on Computer and Communications Security
  (CCS~2011)}, pages 41--50. ACM Press, 2011.

\bibitem{MV-esorics09}
S.~M{\"o}dersheim and L.~Vigan{\`o}.
\newblock Secure pseudonymous channels.
\newblock In {\em Proc. 14th European Symposium on Research in Computer
  Security (ESORICS'09)}, volume 5789 of {\em LNCS}, pages 337--354. Springer,
  2009.

\bibitem{Tiu-csf10}
A.~Tiu and J.~E. Dawson.
\newblock Automating open bisimulation checking for the spi calculus.
\newblock In {\em Proc.\ 23rd Computer Security Foundations Symposium
  ({CSF'10})}. IEEE Comp. Soc. Press, 2010.

\end{thebibliography}

\appendix

\section{Case study: 3G mobile phones}
\label{sec:app-phones}

In this section, we
look at the confidentiality and
 privacy guarantees provided by the
Authentication and Key Agreement protocol (\AKA) and the Submit SMS
procedure (\sSMS), when run in composition as specified by the 3GPP
consortium in~\cite{TS33102}. 


\smallskip{}

\noindent {\bf The \AKA\xspace protocol}  achieves mutual
authentication between a Mobile Station (MS) and the Serving Network (SN), and
allows them to establish shared session keys to be used to secure
subsequent communications. We consider here its fixed version as
described in~\cite{AMRR-CCS12}
which relies on a public key infrastructure.
 In particular, in case of
failure, \emph{i.e.} 
$\varphi_{\mathsf{test}}$ is not satisfied, the answer $\mathit{RES}$ 
is encrypted using the public key of the SN,
\emph{i.e.} $\pk(\sk_{SN})$.

\begin{figure}[ht]
\begin{center}
 \vspace{-0.5cm}
    \setmsckeyword{}
    \drawframe{no}
    \begin{msc}{}
      \setlength{\instfootheight}{0\mscunit}
      \setlength{\instheadheight}{.5\mscunit}
      \setlength{\instwidth}{0\mscunit}
      \setlength{\regionbarwidth}{0\mscunit}
      \setlength{\instdist}{1cm}
      \setlength{\bottomfootdist}{0\mscunit}
      \declinst{MS}{
        \begin{tabular}[c]{c}
          Mobile Station - MS \\ 
          \colorbox{gris}{\scriptsize{$\begin{array}{c}K_{IMSI}, IMSI, \\SQN_{MS}, \pk(\sk_{SN})\end{array}$}} 
        \end{tabular}
      }{} 
      \dummyinst{x1}
      \dummyinst{x2}
      \dummyinst{x3}
      \declinst{SN}{
        \begin{tabular}[c]{c}
          Serving Network - SN \\ 
          \colorbox{gris}{\scriptsize{$\begin{array}{c}K_{IMSI}, IMSI, \\SQN_{N}, \sk_{SN}\end{array}$}} 
        \end{tabular}
      }{}

      \nextlevel[-1]
      \action*{\parbox{4.7cm}{\scriptsize{$
          \begin{array}[c]{l}
           \hspace{-2mm} \mathsf{new}\ RAND \\
           \hspace{-2mm} AK\leftarrow \mathsf{f5}(K_{IMSI},RAND) \\
           \hspace{-2mm} MAC\leftarrow \mathsf{f1}(K_{IMSI},\langle SQN_{N}, RAND\rangle) \\
           \hspace{-2mm} AUTN\leftarrow \langle SQN_{N} \oplus AK, MAC\rangle
          \end{array}
          $}}}{SN}
      \nextlevel[4]

      \mess{\scriptsize{$\textsc{Auth\_Req, } RAND,\ AUTN$}}{SN}{MS}
      \nextlevel[0.5]

      \action*{\parbox{4.8cm}{\scriptsize{$
          \begin{array}[c]{l}
          \hspace{-2mm}  \mathsf{if}\ \varphi_\mathsf{test} \ \mathsf{then}\ RES\leftarrow \mathsf{f2}(K_{IMSI},RAND) \\
          \hspace{12mm} xck\leftarrow \mathsf{f3}({K_{IMSI}},RAND) \\
          \hspace{6mm}  \mathsf{else}\ RES \leftarrow \aenc( ..., \pk(\sk_{SN})) \\
          \end{array}
          $}}}{MS}
    \nextlevel[2.6]
      \nextlevel[0.7]
      
      \mess{\scriptsize{$\textsc{Auth\_Resp, } RES$}}{MS}{SN}
      \nextlevel[0.5]

      \action*{\parbox{4.1cm}{\scriptsize{$
          \begin{array}[c]{l}
          \hspace{-1mm}   \mathsf{if}\  RES = \mathsf{f2}(K_{IMSI},RAND) \\
          \hspace{-1mm}   \mathsf{then}
             \begin{array}[t]{l}
          \hspace{-1mm} xck_{SN}\leftarrow \mathsf{f3}({K_{IMSI}},RAND) \\
     	   \end{array} \\
          \end{array}
          $}}}{SN}
      \nextlevel[1]
      
      \setlength{\instwidth}{0\mscunit}
    \end{msc}

  \end{center}
  \vspace{-0.25cm}
 \caption{The \AKA\xspace protocol (variant proposed in~\cite{AMRR-CCS12})}
  \vspace{-0.25cm}
  \label{fig:aka-fix-simplified}
\end{figure}

The functions $\mathsf{f1}-\mathsf{f5}$, used to compute the authentication parameters, are one-way keyed cryptographic functions, and $\oplus$ denotes the exclusive-or operator. 
%
$\mathit{AUTN}$ contains a MAC of the concatenation of the
random number with a sequence number $\mathit{SQN_N}$ generated by the
network using an individual counter for each subscriber. 
The sequence number $\mathit{SQN_N}$ allows the mobile station to verify the freshness of the authentication request to defend against replay attacks.
%
%
The mobile station computes the ciphering key $\mathit{ck}$ 
and stores it in~$\mathit{xck_{MS}}$. 
It also computes the authentication response $\mathit{RES}$ and sends
it to the network. The network authenticates the mobile station by
verifying whether the received response is equal to the expected
one. If so, the network also computes its version of the key $\mathit{ck}$  
and stores it in~$\mathit{xck_{SN}}$. 


\smallskip{}

\noindent{\bf The \sSMS\xspace protocol} 
allows a MS to send an SMS to another MS through the Network. 
The confidentiality of the sent SMS relies on the session
key~$\mathit{ck}$ established through the execution of the \AKA\xspace 
protocol between the MS and the network. 
\begin{center}
   \vspace{-0.5cm}
  \setmsckeyword{}
  \drawframe{no}
  \begin{msc}{}
    \setlength{\instfootheight}{0\mscunit}
    \setlength{\instheadheight}{0\mscunit}
    \setlength{\instwidth}{0\mscunit}
    \setlength{\regionbarwidth}{0\mscunit}
    \setlength{\instdist}{1.2cm}
    \setlength{\bottomfootdist}{0\mscunit}
    \declinst{MS}{
      \begin{tabular}[c]{c}
        Mobile Station - MS \\ 
        \colorbox{gris}{\small{$xck_{MS}$}} \\ \\
      \end{tabular}
    }{} 
    \dummyinst{x1}
    \dummyinst{x2}
    \dummyinst{x3}
    \declinst{SN}{
      \begin{tabular}[c]{c}
        Serving Network - SN \\ 
        \colorbox{gris}{\small{$xck_{SN}$}} \\ \\
      \end{tabular}
    }{}
    
    \nextlevel[-1.9]
    \action*{{\scriptsize{$
          \begin{array}[c]{c}
            \mathsf{new}\ SMS\\
          \end{array}
          $}}}{MS}
    \nextlevel[2.2]
    
    \mess{\scriptsize{$\senc(\langle \textsc{Submit}, \textsc{To}, SMS, T\rangle,xck_{MS})$}}{MS}{SN}
    
    \nextlevel[1.1]
    \mess{\scriptsize{$\senc(\langle \textsc{Ack}, T'\rangle, xck_{SN})$}}{SN}{MS}
    
    \setlength{\instwidth}{0\mscunit}
  \end{msc}
  
\end{center}

It is always the MS that initiates the \sSMS\xspace procedure. It does
so by encrypting 
the content of
the SMS it wants to submit, together with the number of the
destination MS and a timestamp $T$, with the session key $\mathit{ck}$ previously
established. The message also contains a constant \textsc{Submit}. The
Network acknowledges the receipt of this message with a 
message that includes a constant $\textsc{Ack}$ and a timestamp $T'$, encrypted with~$\mathit{ck}$.


\paragraph*{Security analysis.}

The \sSMS\xspace procedure uses a ciphering session key $CK$ established through the execution of the \AKA\xspace protocol for the confidentiality of the sent and received SMSs. We can thus use Theorem~\ref{theo:main-compo-seq-confidentiality} and Theorem~\ref{theo:main-compo-seq-equiv}
to reason in a modular way about the  confidentiality and privacy guarantees
provided by these two protocols. 


\medskip{}

\noindent {\bf Strong unlinkability} requires that an outside observer does not
see the difference between the two following scenarios:
{\it (i)} a same mobile phone sends several SMSs; or {\it (ii)}
multiple mobile phones send at most one SMS each. To model this
requirement, we consider the 
composition context\footnote{We use 
$\Let\ x=M\ \In\ P$ to denote the process
  $P\{M/x\}$.}:
\[
\begin{array}{rl}
  C_U[\_]\ \stackrel{\defi}{=} & 
!\new\ \imsi_1.\ \new\ \KIMSI_1.\   ! \new\ \imsi_2.\ \new\ \KIMSI_2.\  \\  
                               & \ \Let\ \imsi = \diff[\imsi_1, \imsi_2]\ \In 
                                \ \Let\ \KIMSI = \diff[\KIMSI_1, \KIMSI_2]\ \In \\ 
                               & \ \new\ sqn.\ \new\ sms.\ \_ \\
\end{array}
\]

In the left-hand process, the identity of the phone in the filling
process is $\imsi_1$ and the long-term key is $\KIMSI_1$, allowing the
same phone to execute multiple times the \AKA\xspace protocol followed
by the \sSMS\xspace protocol. In the right-hand process, the values
that are used in the filling process are $\imsi_2$ and $\KIMSI_2$,
restricting 
the execution of the considered protocols to at most one time.
To check if the considered 3G protocols 
satisfy
strong unlinkability, one needs 
to check if the following biprocess satisfies diff-equivalence:
\[
\begin{array}{c}
\triple{\KPrSN}{C_U[\AKA^{SN}[\sSMS^{SN}] \mid
  \AKA^{MS}[\sSMS^{MS}]]}{\Phi_0} \;\;\;
\mbox{where $\Phi_0 =\{w_1 \refer \pk(\KPrSN)\}$.} 
\end{array} 
\]

\noindent Actually, thanks Theorem~\ref{theo:main-compo-seq-equiv}, this equivalence can be derived from the following two smaller diff-equivalences:
$$
\quad\triple{\KPrSN}{\new\ d.\ C_U[\AKA^+]}{\Phi_0} \text{\quad and \quad} \triple{\KPrSN}{C'_U[\sSMS]}{\emptyset}
$$
where $sSMS \stackrel{\defi}{=} \sSMS^{SN} \mid \sSMS^{MS}$,
$$
\AKA^+ \stackrel{\defi}{=}
\begin{array}[t]{l}
  \AKA^{SN}[\Out(d,xck_{SN})] \mid \AKA^{MS}[\Out(d,xck_{MS})] \mid \\
  \In(d, x).\ \In(d, y).\ \myIf\, x = y \,\myThen\; 0 \;\myElse\; 0
\end{array}
$$
and $ C'_U[\_]\ \stackrel{\defi}{=} C_U[\new\ ck.  \Let\
xck_{SN}=ck\ \In \ \Let\ xck_{MS}=ck\ \In\ \_\,]$.
\smallskip{}

Indeed, let $P = \AKA^{SN}[0] \mid \AKA^{MS}[0]$ and $Q = 
\new\ ck. [xck_{SN} := ck].[xck_{MS} := ck]. (\sSMS^{SN} \mid
\sSMS^{MS})$, and assume that $\Psi$ is the empty frame.
Considering the \AKA\xspace and \sSMS\xspace protocols, we can
check that $P/\Phi_0$ and $Q/\Psi$ are composable under $\Ec_0  =
\{\KPrSN\}$ and $C_U$ (according to
Definition~\ref{def:composability-main}). Note that $\fn(P) \cap
\fn(Q) \cap \bn(C_U) = \emptyset$, and thus the last condition
trivially holds. 
Furthermore, using ProVerif we can show that the remaining properties
are also satisfied, and that the two ``small'' equivalences also
hold. 

\medskip{}

\noindent{\bf Weak secrecy} requires  that the
sent/received SMS is not deducible by an outsider, and
can be modelled using the 
context
$$C_{WS}[\_]\ \stackrel{\defi}{=}   !\new\ \imsi.\ \new\ \KIMSI.\ ! \new\
  sqn. \new\ sms. \_.$$

To check if the considered 3G protocols satisfy weak secrecy of sent/received
SMSs
\emph{w.r.t.} 
some initial intruder knowledge, \emph{e.g.} $\Phi_0 = \{w_1 \refer
\pk(\KPrSN)\}$, 
one needs to check if the following process does not reveal $sms$\\
\null\hfill
$\triple{{\KPrSN}}{C_{WS}[\AKA^{SN}[\sSMS^{SN}] \mid
  \AKA^{MS}[\sSMS^{MS}]]}{\Phi_0}.$\hfill\null

\smallskip{}

However, according to Theorem~\ref{theo:main-compo-seq-confidentiality} we can instead
check whether \AKA\ and \sSMS\ satisfy
confidentiality of SMSs in isolation, \emph{i.e.} whether the
following processes do not reveal $sms$: 
  \[
\begin{array}{ccc}
\triple{\KPrSN}{C_{WS}[\AKA^{SN}[0] \mid \AKA^{MS}[0]]}{\Phi_0}\;
 & (\alpha) \\[2mm]
  \triple{\KPrSN}{C'_{WS}[\sSMS^{SN} \mid \sSMS^{MS}]}{\emptyset} &   \; (\beta)
\end{array}
\]

\noindent where $
  C'_{WS}[\_]\ \stackrel{\defi}{=}  C_{WS}[\new\ ck.
    \Let\ xck_{SN}=ck\ \In 
    \ \Let\ xck_{MS}=ck\ \In\ \_]$.

Note that the composition context $C_{WS}$ is the same as $\fst(C_U)$ (up
to some renaming), thus
Hypotheses (1-4) of Theorem~\ref{theo:main-compo-seq-confidentiality} also hold and
we derive the weak secrecy property by simply analysing this property on 
$\AKA$ and $\sSMS$ in isolation.

We are left with verifying that \AKA\xspace and \sSMS\xspace preserve
 weak secrecy of exchanged SMSs. \AKA\ trivially does, since $sms$ is
 not used in \AKA\xspace.
Using ProVerif we can show that \sSMS\xspace also preserves weak
secrecy of SMSs.

\medskip{}

\noindent{\bf Strong secrecy} 
requires that an outside oberver does not distinguish the situation where  $\sms_1$  is sent
, from the situation where $\sms_2$  is sent
, although he might know the content of $\sms_1$ and $\sms_2$. To
model this requirement we consider the following composition
context. \\
\null\hfill
$  C_{SS}[\_]\ \stackrel{\defi}{=}   !\new\ \imsi.\ \new\ \KIMSI.\ ! \new\ sqn. 
                              \ \Let\ \sms = \diff[\sms_1, \sms_2]\
                              \In\ \_ $\hfill\null

\smallskip{}

\noindent where $\sms_1$ and $\sms_2$ are two free names known to the attacker. This composition context differs on the content of the SMS being sent on the left-hand process and on the right-hand process.
To check if the considered 3G protocols satisfy confidentiality of
sent SMSs \emph{w.r.t.} 
some initial intruder knowledge, \emph{e.g.} $\Phi_0 = \{w_1 \refer
\pk(\KPrSN)\}$, 
one needs to check if the following biprocess satisfies
diff-equivalence \\
\null\hfill
$\triple{{\KPrSN}}{C_{SS}[\AKA^{SN}[\sSMS^{SN}] \mid
  \AKA^{MS}[\sSMS^{MS}]]}{\Phi_0}.$\hfill\null

\smallskip{}

However, according to Theorem~\ref{theo:main-compo-seq-equiv} we can instead
check whether \AKA\ and \sSMS\ satisfy
strong secrecy of SMSs in isolation: \\[1mm]
\null\hfill
$  \triple{\KPrSN}{C_{SS}[\AKA^+]}{\Phi_0}\; (\alpha) \mbox{\;\; and \;\;} \triple{\KPrSN}{C'_{SS}[\sSMS]}{\emptyset} \; (\beta)$\hfill\null

\smallskip{}

\noindent where $\AKA^+$ and $\sSMS$ defined as for unlinkability, and 
$$C'_{SS}[\_]\ \stackrel{\defi}{=}  C_{SS}[\new\ ck. \Let\ xck_{SN}=ck\ \In \ \Let\ xck_{MS}=ck\ \In\ \_].$$

 \smallskip{}
 Indeed, \AKA\xspace/$\Phi_0$ and \sSMS\xspace/$\Psi$ (for
 $\Psi=\emptyset$) are composable under $\Ec_0 = \{\KPrSN\}$ and
 $C_{SS}$. Regarding
 the conditions of Theorem~\ref{theo:main-compo-seq-equiv}:
 \emph{(i)} it is easy to see that \AKA\xspace satisfies the 
 abstractability property: both the MS and the SN compute the key $\mathit{ck}$
 and store it respectively in the assignment variables $xck_{MS}$ and
 $xck_{SN}$ by applying the function $\mathsf{f3}$, which is a one way
 function, to $K_{\imsi}$ and $\mathit{RAND}$; \emph{(ii)} using
 ProVerif we can show that the considered two protocols do not reveal
 $xck_{MS}$ and $xck_{SN}$, and that \AKA\xspace is actually a good
 key-exchange protocol.
 We are left with verifying that \AKA\xspace and \sSMS\xspace preserve
 strong secrecy of exchanged SMSs. \AKA\ trivially does, since the
 left and the right-hand processes are syntactically equal. 
Using ProVerif we can show that \sSMS\xspace also preserves strong
secrecy of SMSs.






\section{Case study: e-passport}
\label{sec:epassports}

As mentioned in the introduction, many applications like electronic passports or mobile phones rely on several protocols running in
composition (parallel, sequential, or nested). In this section, we
show that our results can help in the analysis of this sort of complex
system considering the  e-passport application. 


\subsection{Protocols description}

The information stored in the chip of the passport is organised in data groups ($\DG_1$ to $\DG_{19}$). For example, $\DG_5$ contains a JPEG copy of the displayed picture, and~$\DG_7$ contains the displayed signature. The verification key~$\vk(\KPrAA)$ of the passport, together with its certificate $\sign(\vk(\KPrAA), \KPrDS)$ issued by the Document Signer authority are stored in~$\DG_{15}$. The corresponding signing key~$\KPrAA$ is stored in a tamper resistant memory, and cannot be read or copied. For authentication purposes, a hash of all the $\DG$s together with a signature on this hash value issued by the Document Signer authority are stored in a separate file, 
the Security Object Document:\\[1mm]
\null\hfill
$\SOD \stackrel{\defi}{=} \langle \sign(\h(\DG_1, \dots, \DG_{19}),
\KPrDS), \; \h(\DG_1, \dots, \DG_{19}) \rangle.$\hfill\null

\smallskip{}
 
The ICAO standard specifies several protocols through which these
information can be accessed~\cite{ICAO-Passport}. 
\smallskip{}

\noindent {\bf The Basic Access Control (\BAC) protocol} (see
Figure~\ref{fig:BAC}) establishes a key seed $xkseed$ from which two
sessions keys $\ksenc$ and $\ksmac$ are derived.
The purpose of $ksenc$ and $ksmac$ is to prevent  skimming and eavesdropping on the subsequent communication with the e-passport (see below). The security of the \BAC\xspace protocol relies on two master keys,
$\KE$ and $\KM$, which are optically retrieved from the passport by the reader before executing the \BAC\xspace protocol.

The reader initiates the protocol by sending a challenge to the
passport and the passport replies with a random 64-bit string $n_P$. The reader then creates its own random nonce and some new random key material, both 64-bits. These are encrypted, along with the tag's nonce and sent back to the reader. A MAC is computed using the $\KM$ key and sent along with the message, to ensure the message is received correctly. The tag receives this message, verifies the MAC, decrypts the message and checks that its nonce is correct; this guarantees to the tag that the message from the reader is not a replay of an old message. The tag then generates its own random 64-bits of key material and sends this back to the reader in a similar message, except this time the order of the nonces is reversed, this stops the readers message being replayed directly back to the reader. The reader checks the MAC and its nonce, and both the tag and the reader use the xor of the key material as the seed for a session key, with which to encrypt the rest of the session.

\begin{figure}[ht]
  \vspace{-0.5cm} 
  \hspace{-0.5cm} 
  $$  \setmsckeyword{} 
  \drawframe{no}
  \begin{msc}{}
    \setlength{\instwidth}{0\mscunit}
    \setlength{\instdist}{0.75cm} 
    \dummyinst{d0}
    \declinst{pport}{
      \begin{tabular}[c]{c}
        Passport Tag \\
        \colorbox{gris}{\small{$\KE, \KM$}}
      \end{tabular}}{}
    
    \dummyinst{d1} 
    \dummyinst{d2} 
    \dummyinst{d3} 
    \dummyinst{d4}
    
    \declinst{reader}{
      \begin{tabular}[c]{c}
        Reader \\
        \colorbox{gris}{\small{$\KE, \KM$}}
      \end{tabular}}{}
    
    \nextlevel[-1]
    \mess{\small{$\getC$}}{reader}{pport}
    \nextlevel[0.5]

    \action*{
      \small{$\New\ n_P$}
    }{pport}
    \nextlevel[1.5]

    \mess{\small{$n_P$}}{pport}{reader}
    \nextlevel[0.5]

    \action*{\small{$
        \begin{array}[c]{l}
          \New\ n_R,\ \New\ n_R \\
          xenc \leftarrow \senc(\langle n_R, n_P, k_R\rangle, \KE) \\
          xmac \leftarrow \mac(xmac, \KM) \\
        \end{array}
        $}}{reader} 
    \nextlevel[4]
    
    \mess{\small{$\langle xenc, xmac \rangle$}}{reader}{pport} \nextlevel[0.5]
    
    \action*{\small{$
        \begin{array}[c]{l}
          \mathsf{if}\ \varphi_P \,  \mathsf{then} \; xkseed \leftarrow \ldots \\ 
          \quad \ksenc_P \leftarrow \mathsf{ekg}(xkseed) \\
          \quad \ksmac_P \leftarrow \mathsf{mkg}(xkseed) \\
        \end{array}
        $}}{pport} 
    \nextlevel[4]
    
    \mess{\small{$resp$}}{pport}{reader}
    \nextlevel[1]

    \action*{\small{$
        \begin{array}[c]{l}
          \mathsf{if}\ \varphi_R \, \mathsf{then} \; xkseed \leftarrow\ldots \\
          \quad \ksenc_R \leftarrow \mathsf{ekg}(xkseed) \\
          \quad \ksmac_R \leftarrow \mathsf{mkg}(xkseed) 
        \end{array}
        $}}{reader} 
    \nextlevel[2.5]
  \end{msc}
  $$
\vspace{-1cm}
\caption{The \BAC\xspace  protocol}
\label{fig:BAC}
\end{figure}
Once the \BAC\xspace protocol has been successfully executed, the
reader gains access to the information stored in the RFID tag through
the Passive Authentication (\PAuth)
and the Active Authentication (\AAuth) protocols  that can be executed in any order.
\smallskip{}

\noindent {\bf The \PAuth\xspace protocol} (see Figure~\ref{fig:PA}) is an authentication mechanism that proves that the content of the RFID chip is authentic. Through \PAuth\xspace the reader retrieves the information stored in the $\DG$s and the~$\SOD$. It then verifies that the hash value stored in the~$\SOD$ corresponds to the one signed by the Document Signer authority. It further checks that this hash value is consistent with the received~$\DG$s.

\begin{figure}[ht]
  \[
  \centering
  \setmsckeyword{} 
  \drawframe{no}    
  \begin{msc}{}
    \setlength{\instwidth}{0\mscunit}
    \setlength{\instdist}{0.85cm} 
    \dummyinst{d0}
    \declinst{pport}{
      \begin{tabular}[c]{c}
        Passport Tag \\
        \colorbox{gris}{\small{$\ksenc_P, \ksmac_P, \KPrAA$}}
      \end{tabular}}{}
    
    \dummyinst{d1} 
    \dummyinst{d2} 
    \dummyinst{d3} 
    \dummyinst{d4}
    
    \declinst{reader}{
      \begin{tabular}[c]{c}
        Reader \\
        \colorbox{gris}{\small{$\ksenc_R, \ksmac_R, \vk(\KPrAA)$}}
      \end{tabular}}{}
    
    \nextlevel[-1.25]
    
    
    \action*{\small{$
        \begin{array}[c]{l}
          xenc\leftarrow\senc(\mathsf{read},\ksenc_R) \\
          xmac\leftarrow\mac(xenc,\ksmac_R)
        \end{array}
        $}}{reader} 
    \nextlevel[3]
    
    \mess{\small{$\langle xenc, xmac \rangle$}}{reader}{pport} 
    \nextlevel[0.5]

    \action*{\small{$
        \begin{array}[c]{l}
          yenc\leftarrow \senc(\langle \DG_1, \dots, \DG_{19}, sod\rangle,\ksenc_P) \\
          ymac\leftarrow \mac(yenc,\ksmac_P)
        \end{array}
        $}}{pport} 
    \nextlevel[3]

    \mess{\small{$\langle yenc, ymac \rangle$}}{pport}{reader}
    
  \end{msc}
  \]
  \vspace{-1.4cm}
  \caption{Passive Authentication protocol}
  \label{fig:PA}
\end{figure}

\smallskip{}

\noindent {\bf The \AAuth\xspace protocol} (see Figure~\ref{fig:AA}) is an authentication mechanism that prevents cloning of the passport chip. It relies on the fact that the secret key~$\KPrAA$ of the passport cannot be read or copied. The reader sends a random challenge to the passport, that has to return a signature on this challenge using its private signature key~$\KPrAA$. The reader can then verify using the verification key~$\vk(\KPrAA)$ that the signature was built using the expected passport key.
\begin{figure}[ht]
\[
  \centering
  \setmsckeyword{} 
  \drawframe{no}
  \begin{msc}{}
    \setlength{\instwidth}{0\mscunit}
    \setlength{\instdist}{0.85cm}

    \declinst{pport}{
    \begin{tabular}[c]{c}
      Passport Tag \\
      \colorbox{gris}{\small{$\ksenc_P, \ksmac_P, \KPrAA$}}
    \end{tabular}}{}

    \dummyinst{d1}
    \dummyinst{d2}
    \dummyinst{d3}
    \dummyinst{d4}

    \declinst{reader}{ 
    \begin{tabular}[c]{c}
      Reader \\
      \colorbox{gris}{\small{$\ksenc_R, \ksmac_R, \vk(\KPrAA)$}}
    \end{tabular}}{}

    \nextlevel[-1.25]


    \action*{\small{$
    \begin{array}[c]{l}
      \mathsf{new}\ \mathit{rnd} \\
      xenc\leftarrow\senc(\langle \mathsf{init}, \mathit{rnd}\rangle, \ksenc_R)) \\
      xmac\leftarrow\mac(xenc,\ksmac_R)
    \end{array}
    $}}{reader}
    \nextlevel[4]

    \mess{\small{$\langle xenc, xmac\rangle$}}{reader}{pport}
    \nextlevel[0.5]

    \action*{\small{$
    \begin{array}[c]{l}
      \mathsf{new}\ \mathit{nce} \\
      \mathit{sigma}\leftarrow \sign(\langle \mathit{nce}, \mathit{rnd}
      \rangle, \KPrAA) \\
      yenc\leftarrow \senc(\mathit{sigma},\ksenc_P) \\
      ymac\leftarrow \mac(yenc,\ksmac_P)
    \end{array}
    $}}{pport}
    \nextlevel[4.75]

    \mess{\small{$\langle yenc, ymac\rangle$}}{pport}{reader}
  \end{msc}
\]
\vspace{-1.4cm}
  \caption{Active Authentication protocol}
  \label{fig:AA}
\end{figure}

\subsection{Privacy analysis}
All three protocols \BAC, \PAuth\xspace and \AAuth\xspace, rely on symmetric encryption and message authentication codes. 
Note that the only publicly known verification key is $\vk(\KPrDS)$ and is only used by the \PAuth\ protocol. Thus, we can use our composition results, and in particular tour Theorems~\ref{theo:compo-par-diff-equivalence-main} and~\ref{theo:main-compo-seq-equiv}, to reason in a modular way about the privacy guarantees provided by the tagged version of the e-passport application. 

In~\cite{ACD-csf12}, as sequential composition could not be handled, the analysis of the e-passports application had to exclude the execution of the \BAC\xspace protocol. Instead, it was assumed that the keys $kenc$ and $kmac$ were ``magically'' pre-shared. With our sequential composition result (Theorem~\ref{theo:main-compo-seq-equiv}), we avoid this unsafe abstraction, as we can now consider the execution of the \BAC\xspace protocol for the establishment of these two keys. In this way, we are here able to complete the analysis of the e-passport application.

\smallskip{}

According to the ICAO standard, the reader optically retrieves the
passport's master keys $\KE$ and $\KM$ before executing the
\BAC\xspace protocol to establish the key seed for $kenc$ and
$kmac$. The reader can then decide to execute \PAuth\xspace and/or
\AAuth\xspace in any order. Formally, this corresponds to the
sequential composition of the \BAC\xspace protocol and of the
\PAuth\xspace and \AAuth\xspace 
protocols composed in parallel. This system  can be modelled in our calculus as follows:\\[1mm]
\null\hfill
$
\begin{array}{ll}
  P\ \stackrel{\defi}{=} &  \new\ \KPrDS. \ !\new\ \KE.\ \new\ \KM.\ \new\ \KPrAA.\ \new\ \id.\ \new\ sig.\ \new\ pic.\ \dots \\ 
  & \quad\quad! (\BAC^R[\PAuth^R \mid \AAuth^R] \mid \BAC^P[\PAuth^P \mid \AAuth^P])
\end{array}
$\hfill\null

\smallskip{}

\noindent where $\id$, $sig$, $pic$, ... represent the name, the signature, the displayed picture, \emph{etc} of the e-passport's owner, \emph{i.e.} the data stored in the $\DG$s ($1$-$14$) and ($16$-$19$). The subprocesses~$\BAC^P$, $\PAuth^P$ and~$\AAuth^P$ (\emph{resp.} $\BAC^R$, $\PAuth^R$ and~$\AAuth^R$) model one session of the passport's (\emph{resp.} reader's) side of the \BAC, \PAuth\xspace and \AAuth\xspace protocols respectively. The name~$\KPrDS$ models the signing key of the Document Signing authority used in all passports. Each passport (identified by its master keys~$\KE$ and~$\KM$, its signing key~$\KPrAA$, the owner's name, picture, signature, ...) can run multiple times the \BAC\xspace protocol followed by the \PAuth\xspace and \AAuth\xspace protocols in any order.
\smallskip{}

To express strong anonymity, 
we need on one hand to consider a system in which the particular
e-passport with publicly known $id_1$, $sig_1$, $pic_1$, \emph{etc.}
is being executed, while on the other hand it is a different
e-passport with publicly known $id_2$, $sig_2$, $pic_2$, \emph{etc.}
which is being executed. For this we consider the following
composition context:\\[1mm]
\null\hfill
$\begin{array}{ll}
  C_A[\_]\ \stackrel{\defi}{=} 
& \Let\ id = \diff[id_1, id_2]\ \In \ 
 \ \dots\ !\_
\end{array}$
\hfill\null

\smallskip{}

This composition context differs in the e-passport being executed on
the left-hand process and on the right-hand process. In other words,
the systems satisfies anonymity if an observer cannot distinguish whether the e-passport with publicly known $id_1$, $sig_1$, $pic_1$, \emph{etc.} is being executed, or
another e-passport is being executed 
(with publicly known $id_2$, $sig_2$, $pic_2$, \emph{etc.})
\smallskip{}

To check if the tagged version of the e-passport application (we
assume here that $\BAC$, $\PAuth$, and $\AAuth$ are colored using
three distinct colors, and thus will be tagged in
different ways)
preserves strong anonymity, one thus needs to check if the following biprocess satisfies diff-equivalence:\\[1mm]
\null\hfill
$\triple{\KPrDS}{C_A[\TAGG{\BAC^R[\PAuth^R\mid \AAuth^R] \mid \BAC^P[\PAuth^P\mid \AAuth^P]}]}{\Phi_0}
$\hfill\null

\smallskip{}

We can instead check whether \BAC, \PAuth\xspace and \AAuth\xspace
satisfy anonymity in isolation, 
\emph{i.e.} if the following three diff-equivalences hold:
\[
\begin{array}{cl}
  \triple{\KPrDS}{\new\ d.\ C_A[\TAGG{\BAC^+}]}{\emptyset}& (\alpha) \\
  \triple{\KPrDS}{C'_A[\TAGG{\PAuth^R \mid \PAuth^P}]}{\Phi_0}  & (\beta) \\
  \triple{\KPrDS}{C'_A[\TAGG{\AAuth^R \mid \AAuth^P}]}{\emptyset}  & (\gamma)
\end{array}
\]
where $
\begin{array}[t]{lcl}
  \BAC^+    &  \stackrel{\defi}{=} &
  \begin{array}[t]{l}
    \BAC^R[\Out(d, (xkenc_R, xkmac_R))] \mid \BAC^P[\Out(d, (xkenc_P, xkmac_P))] \\
    \In(d, x).\ \In(d, y).\ \myIf\, x = y \,\myThen\, 0 \,\myElse\, 0
  \end{array} \\
  C'_A[\_]  &  \stackrel{\defi}{=} & C_A[C''_A[\_]] \\
  C''_A[\_] & \stackrel{\defi}{=} & \New\ kenc. \ \New\ kmac \\
  &                     & \Let\ (xkenc_R, xkmac_R) = (kenc, kmac)\ \In \\
  &                     & \Let\ (xkenc_P, xkmac_P) = (kenc, kmac)\ \In\ \_ \\
\end{array}
$

\noindent Then, applying Theorem~\ref{theo:compo-par-diff-equivalence-main} to $(\beta)$ and $(\gamma)$ we derive that the following biprocess satisfies diff-equivalence:\\[1mm]
\null\hfill
$
\begin{array}{cl}
  \triple{\KPrDS}{C'_A[\TAGG{\PAuth^R \mid \AAuth^R \mid \PAuth^P\mid \AAuth^P}]}{\Phi_0} & (\delta) \\
\end{array}
$\hfill\null

\smallskip{}

\noindent and applying Theorem~\ref{theo:main-compo-seq-equiv} to $(\alpha)$ and $(\delta)$, we derive the required diff-equivalence:\\[1mm]
\null\hfill
$
\triple{\KPrDS}{C_A[\TAGG{\BAC^R[\PAuth^R\mid \AAuth^R] \mid \BAC^P[\PAuth^P\mid \AAuth^P]}]}{\Phi_0}
$
\hfill\null
\smallskip{}

Indeed, let  $P = \BAC^R[0] | \BAC^P[0]$;
 and $Q = C''_A[\PAuth \mid \AAuth]$, and
assume that $\Psi$ is the empty frame. 
We can check that $P/\Psi$ and $Q/\Phi_0$ are composable under
$\mathcal{E}_0 = \{\KPrDS\}$ and $C_A$ (according to Definition
6). Note that $\fn(P )\cap \fn(Q) \cap \bn(C_A) = \emptyset$, and thus
the last condition trivially holds. Furthermore, using ProVerif we can
show that properties $(\alpha)$ and $\gamma$ are also
satisfied. Unfortunately, ProVerif does not terminate when given the
script corresponding to equivalence $(\beta)$. Note that ProVerif does
not terminate when given the script corresponding to the hole system
either. At this point our only solution would be to rely on a manual
proof. Our composition results have allowed us to reduce a big equivalence that existing tools cannot handle, to a much smaller one.

\section{Sharing primitives via tagging}
\label{sec:app-tagging}



We recall in this section the tagging scheme as presented
in~\cite{ACD-csf12}.  However, since
we would like to be able to iterate our composition results (in order
to compose \emph{e.g.} three protocols), we consider a fixed set of
colors (not only two), and we allow a process to be colored with many
colors.  Actually, a colored process is a process with a color
assigned to each of its action. This gives us enough flexibility to
allow different kinds of compositions, and to iterate our composition
results.

We consider a family of signatures $\Sigma_1,\ldots,\Sigma_p$ disjoint
from each other and disjoint from $\Sigma_0$. In order to tag a
process, we introduce a new family of signatures $\Sigma^{\Tag}_1,
\ldots, \Sigma^{\Tag}_p$. For each $i \in \{1,\ldots, p\}$, we have
that $\Sigma^{\Tag}_i= \{ \Tag_i, \unTag_i\}$ where $\Tag_i$ and
$\unTag_i$ are two function symbols of arity~$1$ that we will use for
tagging.  The role of the $\Tag_i$ function is to tag its argument
with the tag~$i$.  The role of the $\unTag_i$ function is to remove
the tag. To model this interaction between $\Tag_i$ and $\unTag_i$, we
consider the equational theory: $\E_{\Tag_i} = \{ \unTag_i(\Tag_i(x))
= x\}$.

For our composition result, we will assume that the two protocols we
want to compose only share symbols in $\Sigma_0$. Thus, for this, we
split the set $\{1,\ldots,p\}$ into two disjoint sets~$\alpha$
and~$\beta$.  Given a subset $\gamma \subseteq \{1,\ldots,p\}$, we
denote:
\begin{center}
  $\begin{array}{ccccc} \Sigma_\gamma \stackrel{\defi}{=} \bigcup_{i
      \in \gamma} \Sigma_i && \Sigma^{\Tag}_{\gamma}
    \stackrel{\defi}{=} \bigcup_{i\in \gamma} \Sigma^{\Tag}_{i} &&
    \Sigma^+_{\gamma} \stackrel{\defi}{=} \Sigma_{\gamma} \cup
    \Sigma^{\Tag}_{\gamma}\\
    \E_\gamma \stackrel{\defi}{=} \bigcup_{i \in \gamma} \E_i &&
    \E^{\Tag}_\gamma \stackrel{\defi}{=} \bigcup_{i \in \gamma}
    \E^{\Tag}_i &&\E^+_{\gamma} \stackrel{\defi}{=} \E_{\gamma} \cup
    \E^{\Tag}_{\gamma}
  \end{array}
  $
\end{center}


\begin{definition}
  Let $i \in \{1,\ldots,p\}$, and $u$ be a term built over $\Sigma_i
  \cup \Sigmazero$. The $i$-tagged version of~$u$, denoted
  $\TAG{u}{i}$ is defined as follows:

  \smallskip{}

  \noindent$
  \begin{array}{rcl}
    \TAG{\senc(u,v)}{i} &\stackrel{\defi}{=}&
    \senc(\Tag_i(\TAG{u}{i}),\TAG{v}{i})\\
    \TAG{\aenc(u,v)}{i} &\stackrel{\defi}{=}&
    \aenc(\Tag_i(\TAG{u}{i}),\TAG{v}{i})\\
    \TAG{\sign(u,v)}{i} &\stackrel{\defi}{=}&
    \sign(\Tag_i(\TAG{u}{i}),\TAG{v}{i})\\
    \TAG{\h(u)}{i} &\stackrel{\defi}{=} &\h(\Tag_i(\TAG{u}{i}))
  \end{array}
  \begin{array}{rcl}
    \TAG{\sdec(u,v)}{i} &\stackrel{\defi}{=}& \unTag_i(\sdec(\TAG{u}{i},\TAG{v}{i}))\\
    \TAG{\adec(u,v)}{i} &\stackrel{\defi}{=}&
    \unTag_i(\adec(\TAG{u}{i},\TAG{v}{i}))\\
    \TAG{\checksign(u,v)}{i} &\stackrel{\defi}{=}&
    \unTag_i(\checksign(\TAG{u}{i},\TAG{v}{i}))\\
    \TAG{\ffun(u_1, \ldots, u_n)}{i}  &\stackrel{\defi}{=}&
    \ffun(\TAG{u_1}{i}, \ldots, \TAG{u_n}{i}) \; \mbox{otherwise.}
  \end{array}
  $\hfill\null
\end{definition}

Note that we do not tag the pairing function symbol (this is actually
useless), and we do not tag the~$\pk$ and~$\vk$ function symbols. Note
that tagging $\pk$ and $\vk$ would lead us to consider an unrealistic
modelling for asymmetric keys.  This definition is extended as
expected to formulas~$\varphi$ (those involved in conditionals) by
applying the transformation on each term that occurs in~$\varphi$.

\smallskip{}

\begin{example}
  \label{ex:tag}
  Let $\Sigma_1 = \{\ffun, \gfun\}$, and consider the terms $u =
  \senc(\gfun(r),k)$ and $v =\ffun(\sdec(y, k),r)$ built on $\Sigma_1
  \cup \Sigma_0$.  We have that $\TAG{u}{1} = \senc(\Tag_1(\gfun(r)),
  k)$, and $\TAG{v}{1} = \ffun(\unTag_1(\sdec(y, k)),r)$.
\end{example}

\smallskip{}

We also introduce the following notion that allows us to associate a
color to a term that is not necessarily well-tagged.

\smallskip{}

\begin{definition}
  Let $u$ be a term. We define $\racinebis(u)$, namely the tag of the
  root of $u$ as follows:
  \begin{itemize}
  \item $\racinebis(u) = \bot$ when $u \in \N \cup \X$;
  \item $\racinebis(u) = i$ if $u = \ffun(u_1, \ldots, u_n)$ and
    either $\ffun \in \Sigma_{i} \cup \Sigma^\Tag_{i}$, or $\ffun \in
    \{ \senc, \aenc, \sign, \h\}$ and $u_1 = \Tag_i(u'_1)$ for some
    $u'_1$.
  \item $\racinebis(u) = 0$ otherwise.
  \end{itemize}
\end{definition}

\medskip{}

Before extending the notion of tagging to processes, we have to
express the tests that are performed by an agent when he receives a
message that is supposed to be tagged. This is the purpose of
$\TestTag{u}{i}$ that represents the tests which ensure that every
projection and every untagging performed by an agent during the
computation of~$u$ is successful.

\smallskip{}

\begin{definition}
  Let $i \in \{1,\ldots,p\}$, and $u$ be a term built on $\Sigma^+_i
  \cup \Sigmazero$. We define $\TestTag{u}{i}$ as follows:
  \begin{itemize}
  \item $\TestTag{u}{i} \stackrel{\defi}{=}\TestTag{u_1}{i} \wedge
    \TestTag{u_2}{i} \wedge \Tag_i(\unTag_i(u)) = u$ when $u =
    \gfun(u_1,u_2)$ with $\gfun \in \{\sdec,\adec,\checksign\}$
  \item $\TestTag{u}{i} \stackrel{\defi}{=} \TestTag{u_1}{i} \wedge
    u_1 = \langle \proj_1(u_1), \proj_2(u_1) \rangle$ when $u =
    \proj_j(u_1)$ with $j \in \{1,2\}$
  \item $\TestTag{u}{i} \stackrel{\defi}{=} \true$ when $u$ is a name
    or a variable
  \item $\TestTag{u}{i} \stackrel{\defi}{=} \TestTag{u_1}{i} \wedge
    \ldots \wedge \TestTag{u_n}{i}$ otherwise (with $u = f(u_1, \dots, u_n)$).
  \end{itemize}
\end{definition}
This definition is extended as expected to formulas $\varphi$,
\emph{i.e.} $\TestTag{\varphi}{i} \stackrel{\defi}{=} \bigwedge_{u = v
  \in \varphi} \TestTag{u}{i} \wedge \TestTag{v}{i}$.  \medskip{}

\begin{example}
  \label{ex:test}
  Again, consider $u = \senc(\gfun(r),k)$ and $v =\ffun(\sdec(y,
  k),r)$. We have that:
  \begin{itemize}
  \item $\TestTag{\TAG{u}{1}}{1} = \true$
  \item $\TestTag{\TAG{v}{1}}{1} = \Tag_1(\unTag_1(\sdec(y, k))) =
    \sdec(y,k)$
  \end{itemize}
\end{example}

\medskip{}

We consider colored plain processes meaning that initially the actions
of a plain process will be annotated with a color, \emph{i.e.} an
integer in $\{1,\ldots, p\}$. The actions that need to be annotated
are those that involve some composed terms, \emph{i.e.} inputs,
outputs, conditionals, and assignments.
An action colored by $i \in \{1,\ldots,p\}$ can only contain function
symbol from $\Sigma_i$. Given a set $\gamma \subseteq \{ 1, \ldots,
p\}$, we say than an action is colored with $\gamma$ if this action is
colored by $i \in \{1,\ldots,p\}$.  For colored plain processes, the
transformation $\TAGG{P}$ is defined as follows:
\noindent\[
\begin{array}{c}
  \TAGG{0}   \stackrel{\defi}{=} 0 \hfill 
  \TAGG{!P}   \stackrel{\defi}{=}\ !\TAGG{P}  \hfill
  \TAGG{\new\; k. P}   \stackrel{\defi}{=} \new\; k. \TAGG{P} \hfill
  \TAGG{P \mid Q} \stackrel{\defi}{=} \TAGG{P} \mid \TAGG{Q}
  \\
  \TAGG{\In(u,x)^i.P}   \stackrel{\defi}{=} \In(u,x)^i.\TAGG{P} \hfill
  \TAGG{[x := v]^i.P} \stackrel{\defi}{=} \mbox{(\texttt{if} $\TestTag{\TAG{v}{i}}{i}$ \texttt{then} $[x:=\TAG{v}{i}]^i.\TAGG{P}$)$^i$}\\[2mm]
  \TAGG{\Out(u,v)^i.Q} \stackrel{\defi}{=} \mbox{(\texttt{if} $\TestTag{\TAG{v}{i}}{i}$ \texttt{then} $\Out(u,\TAG{v}{i})^i.\TAGG{Q}$)$^i$}\\[2mm]
  \TAGG{\mbox{(\texttt{if} $\varphi$ \texttt{then} $P$ \texttt{else} $Q$)$^i$}}  \stackrel{\defi}{=} 
  (\mbox{\texttt{if} $\varphi_{\mathsf{test}}$ \texttt{then} } (\mbox{\texttt{if} $\TAG{\varphi}{i}$ \texttt{then} $\TAGG{P}$ \texttt{else} $\TAGG{Q}$}))^i \mbox{ \texttt{else} $0$})^i\\
  \hfill\mbox{ where } \varphi_{\mathsf{test}} = \TestTag{\TAG{\varphi}{i}}{i}\\
\end{array}
\]

Roughly, instead of simply outputting a term~$v$, a process will first
perform some tests to check that the term is correctly tagged and he
will output its $i$-tagged version~$\TAG{v}{i}$. For an assignment, we
will also check that the term is correctly tagged. For a conditional,
the process will first check that the terms involved in the test
$\varphi$ are correctly tagged before checking that the test is
satisfied.  The annotations that occur on a plain process do not
affect its semantics.

\begin{definition}
  Consider a set $\gamma \subseteq \{1,\ldots,p\}$. Consider a plain
  process $P$ built over $\Sigma^+_\gamma \cup \Sigma_0$. We say that
  $P$ is tagged if there exists a colored plain process $Q$ built over
  $\Sigma_\gamma$ such that $P = \TAGG{Q}$.
\end{definition}



\section{Biprocesses}
\label{sec:biprocesses}

The semantics of biprocesses is defined via a relation
$\lrstep{\ell}_\bi$ that expresses when and how a biprocess may evolve. Intuitively,
a biprocess reduces if and only if both sides of the
biprocess reduce in the same way: a communication succeeds on both
sides, a conditional has to be evaluated in the same way in both sides
too. When the two sides of the biprocess reduce in different ways, the
biprocess blocks. 
The semantics of biprocesses is formally described in
Figure~\ref{fig:semantics-biprocesses}.

\begin{figure*}[ht]
\[
\begin{array}{lr}
\quadruple{\Ec}{\{\mbox{\texttt{if} $\diff(\varphi_L, \varphi_R)$ \texttt{then} $Q_1$ \texttt{else} $Q_2$}\}\uplus\p}{\Phi}{\sigma} 
\lrstep{\tau}_\bi  \quadruple{\Ec}{Q_1\uplus\p}{\Phi}{\sigma}
&  \mbox{(\sc Then)}\\
\multicolumn{1}{r}{\mbox{if $u\sigma =_\E v\sigma$ for each $u = v \in \varphi_L \cup \varphi_R$}}
&\\[2mm]
\quadruple{\Ec}{\{\mbox{\texttt{if} $\diff(\varphi_L, \varphi_R)$ \texttt{then} $Q_1$ \texttt{else} $Q_2$}\}\uplus\p}{\Phi}{\sigma}
\lrstep{\tau}_\bi 
\quadruple{\Ec}{Q_2\uplus\p}{\Phi}{\sigma} & \mbox{(\sc Else)}\\
\multicolumn{1}{r}{\mbox{if $u_L\sigma \neq_\E v_L\sigma$ for some $u_L = v_L \in \varphi_L$}}
&\\
\multicolumn{1}{r}{\mbox{and $u_R\sigma \neq_\E v_R\sigma$ for some $u_R = v_R \in \varphi_R$}}
&\\[2mm]
\quadruple{\Ec}{\{\Out(c,u) . Q_1; \In(c,x).Q_2\}\uplus\p}{\Phi}{\sigma} 
\lrstep{\tau}_\bi
\quadruple{\Ec}{Q_1 \uplus Q_2 \uplus\p}{\Phi}{\sigma \cup \{x \mapsto u\sigma\}}  
& \mbox{(\sc Comm)} \\[2mm]
\quadruple{\Ec}{\{[x := v] . Q\}\uplus\p}{\Phi}{\sigma}
\lrstep{\tau}_\bi
\quadruple{\Ec}{Q \uplus\p}{\Phi}{\sigma \cup \{x \mapsto v\sigma\}} 
& \mbox{(\sc Assgn)}\\[2mm] 
\quadruple{\Ec}{\{\In(c,z).Q\}\uplus\p}{\Phi}{\sigma} 
\lrstep{\In({c},M)}_{\bi} 
\quadruple{\Ec}{Q \uplus\p}{\Phi}{\sigma \cup\{z\mapsto u\}} & \hfill\mbox{(\sc In)}
\\
\multicolumn{1}{r}{\mbox{if ${c} \not\in \Ec$, $M\Phi =u$, $\fv(M) \subseteq \dom(\Phi)$ and $\fn(M) \cap \Ec = \emptyset$}}&
\\[2mm]
\quadruple{\Ec}{\{\Out(c,u).Q\}\uplus\p}{\Phi}{\sigma}
\lrstep{\nu w_n.\Out({c},w_n)}_{\bi}
\quadruple{\Ec}{Q\uplus\p}{\Phi\cup\{w_n\refer u\sigma\}}{\sigma}&
\mbox{(\sc Out-T)}
\\
\multicolumn{1}{r}{\mbox{if ${c} \not\in \Ec$, $u$ is a term of base type, and $w_n$ is a variable such that $n = | \Phi| + 1$}}&
\\[2mm]
\quadruple{\Ec}{\{\new\; n.Q\} \uplus \p}{\Phi}{\sigma} \lrstep{\tau}_{\bi}
\quadruple{\Ec \cup \{ n \}}{Q\uplus \p }{\Phi}{\sigma}
&\mbox{(\sc New)}\\[2mm]
\quadruple{\Ec}{\{!Q\} \uplus \p}{\Phi}{\sigma} \lrstep{\tau}_{\bi} 
\quadruple{\Ec}{\{!Q; Q\rho\} \uplus \p }{\Phi}{\sigma}& \mbox{(\sc
  Repl)}\\[2mm]
\multicolumn{1}{r}{\mbox{where $\rho$ is used to rename variables in
    $\bv(Q)$}}&\\
\multicolumn{1}{r}{\mbox{(resp. names in $\bn(Q)$) with fresh variables
    (resp. names).}}&\\[2mm]
\quadruple{\Ec}{\{P_1 \mid P_2\} \uplus \p}{\Phi}{\sigma} \lrstep{\tau}_{\bi}
\quadruple{\Ec}{\{P_1,P_2\} \uplus \p}{\Phi}{\sigma} & \hfill \mbox{(\sc Par)}
\end{array}
\]
where $n$ is a name, $c$ is a name of channel type (here we can have $c
= \diff(c_1,c_2)$), $u$, $v$ are
terms that may contain the $\diff$ operator, and $x, z$ are variables. 
The term $M$ used in the {\sc In}  rule is a term that does not
contain any occurrence of the $\diff$ operator. The attacker has to do
the same computation in both sides.

\caption{Semantics for biprocesses}
\label{fig:semantics-biprocesses}
\end{figure*}

\section{The disjoint case for a trace}
\label{sec:app-compo-main}

Composition usually works well in the so-called disjoint case,
\emph{i.e.} when the protocols under study do not share any
secrets. The goal of this section is to show that we can map any trace
corresponding to an execution of a protocol (with some sharing) to
another trace which corresponds to an execution of a ``disjoint case''
(where protocols do not share any secrets) preserving static
equivalence. We need a strong mapping to ensure that processes evolve
simultaneously, and we rely for this on the notion of
\emph{biprocesses}.

We will see in this section that the composition of processes sharing
some secrets (the so-called shared case) behaves as if they did not
share any secret (the so-called disjoint case), provided that the
shared secrets are never revealed and processes are tagged.

\subsection{Material for combination}
\label{subsec:combMaterial}

To handle the different signatures and equational theories, we
consider the notion of \emph{ordered rewriting}.  It has been shown
that by applying the \emph{unfailing completion procedure} to $\E$
where $\E =\E_1 \uplus \E_2 \uplus \ldots \E_p$ is the union of
disjoint equational theories $(\Sigma_i,\E_i)$ (for all $i,j$, we have
that $\Sigma_i \cap \Sigma_j =\emptyset$), we can derive a (possibly
infinite) set of equations~$\Or$ such that on ground terms:
\begin{enumerate}
\item the relations~$=_\Or$ and~$=_\E$ are equal, 
\item the rewriting system~$\to_\Or$ is
  convergent. 
\end{enumerate}
Since the relation~$\to_\Or$ is convergent on ground terms, we
define~$M\mathord{\downarrow}_\E$ (or briefly $M\mathord{\downarrow}$)
as the unique normal form of the ground term~$M$ for~$\to_\Or$.
{These notations are extended as expected to sets of terms.}

\smallskip

We now introduce our notion of \emph{factors} and state some properties on them w.r.t. the different equational theories. A similar notion is also used in~\cite{CRicalp05}.


\begin{definition}[factors]
\label{def:subterm}
Let $M \in \T(\Sigma, \N \cup \X)$.  The \emph{factors} of~$M$, denoted
$\fct(M)$, are the maximal syntactic subterms of $M$ that are alien to~$M$
\end{definition}


\begin{lemma}
\label{lem:app_CRicalp05}
Let $M$ be a ground term such that all its factors are in normal
form and $\racine(M) \in \Sigma_i$. Then
\begin{itemize}
\item  either $M\mydownarrow \in \fct(M) \cup \{n_{min}\}$,
\item  or $\racine(M\mydownarrow) \in \Sigma_i$ and $\fct(M\mydownarrow) \subseteq
  \fct(M) \cup \{n_{min}\}$.
\end{itemize}
\end{lemma}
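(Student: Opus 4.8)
The plan is to prove this as a standard lemma about ordered rewriting over a union of disjoint equational theories, reasoning by cases on where the normal form $M\mydownarrow$ "lives" relative to the topmost theory $\Sigma_i$. First I would recall the relevant facts about the completed system $\Or$: the rewrite rules of $\Or$ are obtained by unfailing completion from $\E = \biguplus_j \E_j$, each equation $u=v$ in each $\E_j$ uses only symbols of $\Sigma_j$ (no names), and therefore every rule $\ell \to r$ in $\Or$ is \emph{pure}, i.e. built over a single signature $\Sigma_j$ (plus variables). This purity is the crucial structural property: rewriting with a pure rule of color $j$ can only rewrite a (sub)term whose root is in $\Sigma_j$, and it never creates new alien subterms — the alien subterms of the contractum are among the alien subterms of the redex (since $r$'s variables are among $\ell$'s). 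The constant $n_{min}$ appears as the distinguished minimal name used by unfailing completion to orient otherwise-unorientable ground instances (e.g. collapsing a subterm to the minimal name), which is why it shows up in the statement.

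The main argument proceeds by induction on the length of a normalizing rewrite derivation $M = M_0 \to_\Or M_1 \to_\Or \cdots \to_\Or M_k = M\mydownarrow$. Since all factors of $M$ are already in normal form, and $\racine(M) \in \Sigma_i$, any redex contracted in this derivation must either be at the root (a $\Sigma_i$-rule applied to $M$ itself, viewing the factors as opaque alien constants), or strictly inside a factor — but the latter is impossible since factors are in normal form. Hence, as long as the root stays in $\Sigma_i$, every rewrite step is a root step using a $\Sigma_i$-rule, and by purity the set of factors can only shrink (modulo possibly introducing $n_{min}$): $\fct(M_{t+1}) \subseteq \fct(M_t) \cup \{n_{min}\}$. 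Two things can terminate this phase: (a) a root step collapses the whole term to one of its factors or to $n_{min}$ (the rule's right-hand side is a variable instantiated by a factor, or is $n_{min}$) — this gives the first alternative $M\mydownarrow \in \fct(M) \cup \{n_{min}\}$, because once we reach a factor, that factor is in normal form and we are done; or (b) the derivation terminates with the root still in $\Sigma_i$, giving $\racine(M\mydownarrow) \in \Sigma_i$ and, by the accumulated inclusion, $\fct(M\mydownarrow) \subseteq \fct(M) \cup \{n_{min}\}$, the second alternative. One subtlety to check: a root $\Sigma_i$-rule could in principle rewrite to a term whose root is in some \emph{other} $\Sigma_j$ only if the right-hand side's head were in $\Sigma_j$, but that contradicts purity of the rule; so the root can only leave $\Sigma_i$ by collapsing into a variable position, which lands us in case (a).

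The step I expect to be the main obstacle is making precise the interaction with $n_{min}$ and ruling out "spurious" factors being created: I need the lemma, proved during or cited from the combination literature (the reference to \cite{CRicalp05} suggests this), that ordered rewriting with pure rules applied to a term all of whose alien subterms are normalized never produces a contractum with a strictly larger set of alien subterms — in particular, no rewrite step can "expose" a new subterm of a foreign signature except the minimal name $n_{min}$. Establishing this requires a careful look at the shape of the rules produced by unfailing completion (they are instances of equations from a single $\E_j$, hence linear in the relevant sense and variable-preserving on the right), and at how $n_{min}$ enters as the bottom of the reduction ordering. Once that combination lemma is in hand, the case analysis above is routine. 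I would present the proof as: (1) recall purity of $\Or$-rules and the no-new-alien-subterms property; (2) observe factors of $M$ are already normal, so all action is at the root; (3) track the root symbol and factor set through root steps by induction, concluding with the two stated alternatives.
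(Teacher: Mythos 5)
The paper does not actually prove this lemma: immediately after stating Lemma~\ref{lem:app_CRicalp05} and Lemma~\ref{lem:change alien} it writes ``A proof of these lemmas can be found in~\cite{CD-jar10,cheval-phd2012}'', so the result is imported wholesale from the combination literature (the notion of factor and the statement itself are lifted from~\cite{CRicalp05}). Your plan is therefore not competing with an in-paper argument; judged against the standard proof in those references, it reconstructs the right skeleton: purity of the equations produced by unfailing completion of a disjoint union, the observation that since all factors are already normalized every contracted redex lies in the $\Sigma_i$-cap, and a case split between a collapsing step (landing on a factor or $n_{min}$) and a non-collapsing step (root stays in $\Sigma_i$, factor set does not grow).

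Two points deserve care. First, your parenthetical justification ``the alien subterms of the contractum are among the alien subterms of the redex (since $r$'s variables are among $\ell$'s)'' is in tension with the very presence of $n_{min}$ in the statement: unfailing completion keeps \emph{equations}, applied in whichever direction the ground instance decreases, so the side playing the role of $r$ may contain variables not occurring in the side playing the role of $\ell$, and it is precisely the instantiation of such variables by the minimal name that makes $n_{min}$ appear. If rules were genuinely variable-preserving the lemma would hold without $\cup\,\{n_{min}\}$. You do flag this as the main obstacle and propose to import the corresponding combination lemma, which is acceptable and is exactly what the paper does, but the earlier assertion should be weakened rather than stated as fact. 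Second, ``every rewrite step is a root step'' is slightly too strong: a step may occur at a non-root position that is still inside the $\Sigma_i$-cap; the argument survives unchanged (such a step still uses a pure $\Sigma_i$-equation and still cannot create new aliens), but the induction should be phrased over cap positions rather than the root alone.
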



\begin{lemma}
\label{lem:change alien}
Let $t$ be a ground term with $t = C_1[u_1, \ldots, u_n]$ where
$C_1$ is a context built 
on $\Sigma_i$, $i \in \{1, \ldots, p\}$ and the terms $u_1, \ldots, u_n$ are the factors of $t$ in normal form. 
Let $C_2$ be a context built on $\Sigma_i$ (possibly a hole) such that
$t\mydownarrow = C_2[u_{j_1}, \ldots, u_{j_k}]$ with $j_1, \ldots, j_k
\in \{0 \ldots n\}$ 
and $u_0 = n_{min}$ (the existence is given by Lemma~\ref{lem:app_CRicalp05}). 
We have that for all ground terms $v_1, \ldots, v_n$ in normal form and alien to $t$, if 
\begin{center}
for every $q, q'\in \{1 \ldots n\}$ we have  $u_q = u_{q'} \Leftrightarrow v_q = v_{q'}$
\end{center}
then $C_1[v_1, \dots, v_n]\mydownarrow = C_2[v_{j_1}, \ldots, v_{j_k}]$ with $v_0 = n_{min}$.
\end{lemma}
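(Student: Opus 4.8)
The plan is to track a normalizing $\to_\Or$-derivation of $C_1[u_1,\ldots,u_n]$ and replay it, step by step, on $C_1[v_1,\ldots,v_n]$. Since the theories $\E_1,\ldots,\E_p$ have pairwise disjoint signatures, $\to_\Or$ splits as $\bigcup_{j}\to_{\Or_j}$ with each $\Or_j$ built over $\Sigma_j$. First I would abstract the factors: the hypothesis $u_q = u_{q'}\Leftrightarrow v_q = v_{q'}$ lets us write $t = \hat t\,\theta$, where $\hat t$ is a $\Sigma_i$-term over fresh variables (syntactically equal $u_q$'s receiving the same variable, distinct ones distinct variables), $\theta$ is injective on $\mathrm{Var}(\hat t)$ and sends each variable to the corresponding $u_q$, and $C_1[v_1,\ldots,v_n] = \hat t\,\theta'$ for the analogous injective substitution $\theta'$ into the $v_q$'s. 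I would also use that the distinguished name $n_{min}$ is fresh, so it does not occur among the $u_q$.

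The core step, proved by induction on the length $N$ of a derivation $C_1[u_1,\ldots,u_n]=s_0\to_\Or\cdots\to_\Or s_N = t\mydownarrow$, is that there is a parallel derivation $C_1[v_1,\ldots,v_n]=s'_0\to_\Or\cdots\to_\Or s'_N$ together with, for each $\ell$, a $\Sigma_i$-context $D_\ell$ and an index sequence such that $s_\ell = D_\ell[u_{k^\ell_1},\ldots]$ and $s'_\ell = D_\ell[v_{k^\ell_1},\ldots]$, with leaves drawn from $u_0 = n_{min},u_1,\ldots,u_n$ (resp. $v_0 = n_{min},v_1,\ldots,v_n$) and the equality pattern among leaves preserved on both sides; moreover $s'_N$ is again in normal form. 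For the inductive step one observes that, because the leaves of $s_\ell$ are in normal form and rooted outside $\Sigma_i$, any redex of $s_\ell$ sits at a position of the skeleton $D_\ell$ and is contracted by a rule $l\to r$ of $\Or_i$ via a matcher $\rho$ whose variables are matched either to skeleton subterms or to whole leaves. Such a rule applies at the mirror position of $s'_\ell$ because matching is purely syntactic and is preserved under the leaf replacement (extending the leaf equality pattern to arbitrary $\Sigma_i$-subterms by a routine structural induction), because any non-linearity of $l$ is satisfied on one side iff it is on the other, and — the one genuinely delicate point — because the orientation condition $l\rho\succ r\rho$ for the reduction ordering $\succ$ underlying $\to_\Or$ can be taken, as in the combination construction already relied on for Lemma~\ref{lem:app_CRicalp05}, to be insensitive to replacing the alien subterms of $\rho$ by aliens with the same equality pattern; a freshly created $n_{min}$-leaf is introduced identically on both sides and, since $n_{min}$ occurs in none of the leaves, cannot collide with an image leaf. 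The contrapositive at $\ell = N$ gives that $s'_N$ is in normal form, so $C_1[v_1,\ldots,v_n]\mydownarrow = D_N[v_{k^N_1},\ldots]$.

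It then remains to identify $D_N$ with the context $C_2$ supplied by Lemma~\ref{lem:app_CRicalp05}. That lemma gives $t\mydownarrow = C_2[u_{j_1},\ldots,u_{j_k}]$ with $C_2$ a $\Sigma_i$-context and $j_1,\ldots,j_k\in\{0,\ldots,n\}$; since $t\mydownarrow = D_N[u_{k^N_1},\ldots]$ as well and the decomposition of a term into its maximal $\Sigma_i$-cap over non-$\Sigma_i$-rooted subterms is unique, we get $D_N = C_2$ and that the two leaf sequences coincide. Hence $C_1[v_1,\ldots,v_n]\mydownarrow = C_2[v_{j_1},\ldots,v_{j_k}]$ with $v_0 = n_{min}$, which is exactly the claim. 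The main obstacle is the orientation-invariance of $\succ$ flagged above, together with the bookkeeping that keeps the $n_{min}$-leaves handled uniformly; everything else is routine tracking of alien subterms through the lock-step derivation.
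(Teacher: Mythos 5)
The paper itself gives no proof of this lemma: it is imported wholesale from the cited references, so there is no in-paper argument to compare against. Your lock-step replay of a normalizing $\to_\Or$-derivation is the right general shape, and you correctly isolate the only hard point --- but that point is exactly where your text stops being a proof. You assert that the orientation condition $l\rho \succ r\rho$ governing an ordered rewrite step ``can be taken to be insensitive to replacing the alien subterms of $\rho$ by aliens with the same equality pattern''. For an ordering that is total on ground terms this is false, and the failure is not exotic: the paper's own theory $\E_\Diffie$ contains the permutative equation $\ffun(\gfun(x),y) = \ffun(\gfun(y),x)$, which unfailing completion cannot orient, so its ground instances are oriented one by one by $\succ$. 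Take $t = \ffun(\gfun(u_1),u_2)$ with $u_1,u_2$ distinct aliens in normal form and suppose $\ffun(\gfun(u_1),u_2) \succ \ffun(\gfun(u_2),u_1)$; then $t\mydownarrow = \ffun(\gfun(u_2),u_1)$, i.e.\ $C_2$ carries the leaf sequence $(u_2,u_1)$. Now pick distinct aliens $v_1,v_2$ (same equality pattern) for which the comparison goes the other way: $\ffun(\gfun(v_1),v_2)$ is already in normal form and differs syntactically from $C_2[v_2,v_1]=\ffun(\gfun(v_2),v_1)$. Two distinct aliens can swap their relative $\succ$-order under a replacement that preserves only equalities, which flips the orientation of an unorientable instance and changes the syntactic normal form; so the equality pattern alone does not let you replay the derivation step by step.

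Consequently the one sentence doing all the work in your argument needs an actual proof, and it cannot be a generic one: you must either exhibit the specific ordering used in the combination construction and show that the comparison of two $\Sigma_i$-terms whose aliens occupy the same positions is determined by the $\Sigma_i$-skeleton together with the equality pattern of the aliens (ruling out the naive ``any total reduction ordering''), or strengthen the hypothesis on $v_1,\ldots,v_n$ so that the replacement also preserves the relative $\succ$-order of the $u_q$ (and of $n_{min}$), and then verify that every application of the lemma in the paper --- in particular the $\delta_\gamma$ replacements by fresh names of $\Ec_\alpha \uplus \Ec_\beta$, where the fresh names can at least be chosen to respect the order --- satisfies the strengthened hypothesis. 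The remaining ingredients of your sketch (the abstraction of factors by variables, the non-linearity check, the $n_{min}$ bookkeeping, and the identification of your final skeleton $D_N$ with $C_2$ via uniqueness of the maximal $\Sigma_i$-cap) are routine, but only once this orientation issue is genuinely settled.
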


A proof of these lemmas can be found in~\cite{CD-jar10,cheval-phd2012}.

\subsection{Generic composition result}
\label{subsec:theo-main}

We consider two sets $\alpha, \beta$ such that $\alpha \cup \beta =
\{1, \ldots, p\}$ and $\alpha \cap \beta = \emptyset$.
We consider a plain colored process~$P$ built on $\Sigma_\alpha \cup
\Sigma_\beta {\cup \Sigma_0}$ without replication and such that
$\bn(P) = \fv(P) = \emptyset$. This means that~$P$ is a process with
no free variables, and we assume that it contains no name restrictions
(\emph{i.e.} no $\new$ instructions).

\medskip{}

\begin{example}
  We consider the process $P_{\Diffie}$ as given in
  Example~\ref{ex:process} but we replace
\begin{itemize}
\item the $0$ at the end of $P_A$ with $Q_A = \new\,
  s_A. \Out(c,\sencDiffie(s_A,x_A))$, and
\item the $0$ at the end of $P_B$ with $Q_B = \new\,
  s_B. \Out(c,\sencDiffie(s_B,x_B))$.
\end{itemize}
Intuitively, once the Diffie-Hellman key has been established and
stored in~$x_A$ (resp.~$x_B$), each participant will use it to encrypt
a fresh secret, namely $s_A$ or $s_B$, and then send it to the other
participant.

 Note that when
function symbols of $\Sigmazero$ are used by only one of the protocols
to compose, we can either consider them as part of $\Sigmazero$ and so
they will be tagged, or they can be put into distinct signatures
(using renaming as above) and so they will not be tagged. The
composition theorem can be applied both ways.

To avoid confusion between the encryption schemes that processes can
share, \emph{i.e.} the function symbols in $\Sigmazero$, and the
asymmetric encryption used in $P_{\Diffie}$ but not used in $Q_A$
and~$Q_B$, we will rename them by $\aencDiffie, \adecDiffie,
\pkDiffie$.
Thus, we consider
$p = 2$, $\alpha = \{1\}$, $\beta = \{ 2\}$ with
$(\Sigma_\alpha,\E_\alpha) = (\Sigma_{\Diffie}, \E_{\Diffie})$ with
\begin{itemize}
\item $\Sigma_\Diffie = \{\aencDiffie, \adecDiffie, \pkDiffie, \ffun,
\gfun\}$, and 
\item $\E_\Diffie =
\{\adecDiffie(\aencDiffie(x,\pkDiffie(y)),y) = x, \;\;\ffun(\gfun(x),y) =
\ffun(x,\gfun(y))\}$
\end{itemize}
whereas $\Sigma_\beta = \{\sencDiffie,\sdecDiffie\}$ and $\E_\beta =
\{\sdecDiffie(\sencDiffie(x,y),y) = x\}$. This equational theory is
used to model symmetric encryption/decryption, \emph{i.e.} the
primitives used in the processes~$Q_A$ and~$Q_B$. 
\smallskip{}

\noindent Now, we consider $P = P'_A \mid P'_B$ where:
\begin{itemize}
\item $P'_A = 
  \begin{array}[t]{l}
    \Out(c, \aencDiffie(\langle n_A, \gfun(r_A) \rangle,\pkDiffie(sk_B))). \In(c,y_A).\\ 
    \texttt{if } \proj_1(\adec(y_A, sk_A)) = n_A \\
    \texttt{then }[x_A := \ffun(\proj_2(\adecDiffie(y_A,sk_A)),r_A)]. \Out(c,\sencDiffie(s_A,x_A))
  \end{array}$
\item $P'_B = 
      \begin{array}[t]{l}
      \In(c,y_B). \Out(c, \aencDiffie(\langle \proj_1(\adecDiffie(y_B,sk_B)),\gfun(r_B)\rangle,\pkDiffie(sk_A))). \\
      \lbrack x_B := \ffun(\proj_2(\adecDiffie(y_B,sk_B)),r_B)\rbrack. \Out(c,\sencDiffie(s_B,x_B))
    \end{array}$
\end{itemize}

\smallskip{}

Note that $\bn(P) = \fv(P) = \emptyset$.  We choose to color the three
first actions of $P'_A$ (resp. $P'_B$) with $1 \in \alpha$, and the
remaining ones (\emph{i.e.} those that come from $Q_A$ and $Q_B$) with
$2 \in \beta$.
\end{example}

\medskip{}

We denote $\fn^\gamma(P)$ the set of free names of $P$ that occur in
actions colored with $\gamma$, and $\fv^\gamma(P)$ the set of
variables of $P$ that occur in an action colored with $\gamma$, and
that are not bound by an action colored with $\gamma$.
We consider a set $\Ec_0$ of names such that $\fn^\alpha(P) \cap
\fn^\beta(P) \cap \Ec_0 = \emptyset$. This means that each name in
$\Ec_0$ can only occur in one type of actions (those colored~$\alpha$
or those colored $\beta$).  We denote $z^{\alpha}_1, \ldots,
z^{\alpha}_k$ (resp. $z^{\beta}_1, \ldots, z^{\beta}_l$) the variables
occurring in the left-hand side of an {assignment} colored $\alpha$
(resp. $\beta$), \emph{i.e.} the variable $x$ such that the action $[x
:= v]$ occurs in $P$ and is colored $\alpha$ (resp. $\beta$).  We
assume that $\fv^\alpha(P) \subseteq \{z^\beta_1,\ldots,z^\beta_l\}$
and $\fv^\beta(P) \subseteq \{z^\alpha_1,\ldots,z^\alpha_k\}$.

\smallskip{}

These conditions ensure that sharing between the parts of the process
which are colored in different ways is only possible via the
assignment variables. This is not a real limitation but this allows us
to easily keep track of the shared data.

\smallskip{}

\begin{example}
  Continuing our example, we have $\fn^{\alpha}(P) =
  \{r_A,r_B,n_A,sk_A,sk_B\}$ and $\fn^{\beta}(P) = \{s_A,s_B\}$. Regarding
  variables: $\fv^{\alpha}(P) = \emptyset$, whereas $\fv^{\beta}(P) =
  \{x_A, x_B\}$.

  Let $\Ec_0 = \fn^\alpha(P) \cup \fn^{\beta}(P)$.  To follow the same
  notation as those introduced in this section, we may want to rename
  $x_A$ with $z^\alpha_1$ and $x_B$ with $z^\alpha_2$.  Note that
  $\fv^{\beta}(P) \subseteq \{z^\alpha_1,z^\alpha_2\}$.
\end{example}

\smallskip{}

Let $\Ec_\alpha = \{n^\alpha_1, \ldots, n^\alpha_k\}$ and $\Ec_\beta =
\{n^\beta_1, \ldots, n^\beta_l\}$ be two sets of fresh names of base
type such that $\Ec_\alpha \cap \Ec_\beta = \emptyset$.  We
define~$\rho_\alpha$ and~$\rho_\beta$ as follows:
\begin{itemize}
  \setlength{\itemsep}{0mm}
\item $\dom(\rho_\alpha) = \{z^\beta_1, \ldots, z^\beta_l\}$,
  $\dom(\rho_\beta) = \{z^\alpha_1, \ldots, z^\alpha_k\}$;
\item $\rho_\alpha(z^\beta_i) = n^\beta_i$ for each $i \in \{1,
  \ldots, l\}$; and
\item $\rho_\beta(z^\alpha_i) = n^\alpha_i$ for each $i \in \{1,
  \ldots, k\}$.
\end{itemize}

We do not assume that names in $\Ec_\alpha$ (resp. $\Ec_\beta$) are
distinct. For instance, we may have $n^\alpha_j = n^\alpha_{j'}$ for
some $j \neq j'$.

\smallskip{}

Given a colored plain process~$P$, we denote
by~$\delta_{\rho_\alpha,\rho_\beta}(P)$, the process obtained by
applying $\rho_\alpha$ on actions colored $\alpha$, and~$\rho_\beta$
on actions colored $\beta$.
{This transformation maps the shared case to a particular disjoint
  case.}

\smallskip{}


\begin{example}
  Let $\Ec_\alpha = \{k^\alpha\}$ and $\Ec_\beta = \emptyset$, and
  consider the function $\rho_\beta$ defined as follows:
  $\rho_\beta(z^\alpha_1) = \rho_\beta(z^\alpha_2) = k^\alpha$.
  Applying $\delta_{\rho_\alpha,\rho_\beta}$ on $P$ gives us $D_A \mid
  D_B$ where:
  \begin{itemize}
    \setlength{\itemsep}{0.5mm}
  \item $D_A =
    \begin{array}[t]{l}
      \Out(c, \aencDiffie(\langle n_A, \gfun(r_A) \rangle,\pkDiffie(sk_B))). \In(c,y_A).\\ 
      \texttt{if } \proj_1(\adecDiffie(y_A, sk_A)) = n_A \\
      \texttt{then }[x_A := \ffun(\proj_2(\adecDiffie(y_A,sk_A)),r_A)]. \Out(c,\sencDiffie(s_A,k^\alpha))
    \end{array}
    $
  \item $D_B =
    \begin{array}[t]{l}
      \In(c,y_B). \Out(c, \aencDiffie(\langle \proj_1(\adecDiffie(y_B,sk_B)),\gfun(r_B)\rangle,\pkDiffie(sk_A))). \\
      \lbrack x_B := \ffun(\proj_2(\adecDiffie(y_B,sk_B)),r_B)\rbrack. \Out(c,\sencDiffie(s_B,k^\alpha))
    \end{array}
    $
  \end{itemize}
  Note that there is no sharing anymore between the part of the
  process colored $\alpha$ and the part of the process colored
  $\beta$.
\end{example}


Actually, the disjoint case obtained using the transformation
$\delta_{\rho_\alpha,\rho_\beta}$ behaves as the shared case but only
along executions that are \emph{compatible} with the chosen
abstractions, \emph{i.e.} executions that preserve the equalities and
the inequalities among assignment variables as done by the chosen
abstraction.  This notion is formally defined as follows:

\smallskip{}

Let $A$ be any extended process derived from $\triple{\Ec_\alpha
  \uplus \Ec_\beta\uplus \Ec_0}{\TAGG{P}}{\emptyset}$, \emph{i.e.}
such that $\triple{\Ec_\alpha \uplus \Ec_\beta \uplus
  \Ec_0}{\TAGG{P}}{\emptyset} \,\LRstep{\;\tr\;}\, A$.
For $\gamma \in \{ \alpha,\beta\}$, we say that $\rho_\gamma$ is
\emph{compatible} with $A = \quadruple{\Ec}{\p}{\Phi}{\sigma}$ when:
\begin{enumerate}
  \setlength{\itemsep}{0mm}
\item for all $x,y \in \dom(\sigma) \cap \dom(\rho_\gamma)$, we have
  that $x\sigma =_{\E} y\sigma$ if, and only if, $x\rho_\gamma =
  y\rho_\gamma$; and
\item for all $z \in \dom(\rho_\gamma)$, either
  $\racinebis(z\sigma\mydownarrow) = \bot$ or
  $\racinebis(z\sigma\mydownarrow) \not\in \gamma \cup \{0\}$.
\end{enumerate}

We say that $(\rho_\alpha, \rho_\beta)$ is \emph{compatible} with $A$
when both~$\rho_\alpha$ and~$\rho_\beta$ are compatible with $A$.
For $\gamma \in \{\alpha,\beta\}$, we define the 
\emph{extension} of $\rho_\gamma$, denoted $\rho_\gamma^+$, as follows:
\begin{itemize}
\item $\dom(\rho_\gamma^+) = \dom(\rho_\gamma) \cup \{{x\sigma\mydownarrow} ~|~ x \in \dom(\rho_\gamma)
  \}$, and 
\item for any $x \in \dom(\rho_\gamma)$, $\rho_\gamma^+(x) \stackrel{\defi}{=} \rho(x)$ and  $\rho_\gamma^+(x\sigma) \stackrel{\defi}{=} \rho_\gamma(x)$ .
\end{itemize}

\smallskip{}

Before stating our generic composition result, we have also to
formalize the fact that the shared keys are not revealed. Since
sharing is performed via the assignment variables,
we say that \emph{$A_0$ does not reveal the value of its assignments}
w.r.t. $(\rho_\alpha,\rho_\beta)$ if for any extended process $A =
\quadruple{\Ec}{\p}{\Phi}{\sigma}$ derived from $A_0$ and such that
$(\rho_\alpha,\rho_\beta)$ is compatible with~$A$, we have:\\[1mm]
\null\hfill $\new \, \Ec. \Phi \not\vdash k$ for any $k \in K_\alpha
\cup K_\beta$ \hfill\null

\noindent where for all $\gamma \in \{\alpha,\beta\}$, $K_\gamma =
\{t,\pk(t),\vk(t) \mid z \in \dom(\sigma) \cap \dom(\rho_\gamma)$ and
$(t = z\sigma \mbox{ or } t = z\rho_\gamma) \}$.

\medskip{}



\begin{restatable}{theorem}{theoremmain}
  \label{theo:main-main}
  Let $P$ be a plain colored process as described above, and~$B_0$ be
  an extended colored biprocess such that:
  \begin{itemize}
    \setlength{\itemsep}{0mm}
  \item $S_0 = \quadruple{\Ec_\alpha \uplus \Ec_\beta \uplus
      \Ec_0}{\TAGG{P}}{\emptyset}{\emptyset} \stackrel{\defi}{=}
    \fst(B_0)$,
  \item $D_0 = \quadruple{\Ec_\alpha \uplus \Ec_\beta \uplus
      \Ec_0}{P_D}{\emptyset}{\emptyset} \stackrel{\defi}{=} \snd(B_0)
    $, and
  \item $P_D =\delta_{\rho_\alpha,\rho_\beta}(\TAGG{P})$ for some
    $(\rho_\alpha,\rho_\beta)$ compatible with $D_0$, and
  \item $D_0$ does not reveal its assignments
    w.r.t. $(\rho_\alpha,\rho_\beta)$.
  \end{itemize}
 
  \noindent We have that:
  \begin{enumerate}
  \item For any extended process $S =
    \quadruple{\Ec_S}{\p_S}{\Phi_S}{\sigma_S}$ such that $S_0
    \,\LRstep{\;\tr\;}\, S$ with $(\rho_\alpha,\rho_\beta)$ compatible
    with $S$,
    there exists a biprocess~$B$ and an extended process $D =
    \quadruple{\Ec_D}{\p_D}{\allowbreak \Phi_D}{\sigma_D} $ such that
    $B_0 \,\LRstep{\;\tr\;}_\bi \, B$, $\fst(B) = S$, $\snd(B) = D$,
    and $\new\, \Ec_S. \Phi_S \statequiv \new\, \Ec_D. \Phi_D$.
  \item For any extended process $D =
    \quadruple{\Ec_D}{\p_D}{\Phi_D}{\sigma_D}$ such that $D_0
    \,\LRstep{\;\tr\;}\, D$ with $(\rho_\alpha,\rho_\beta)$ compatible
    with $D$,
    there exists a biprocess $B$ and an extended process $S =
    \quadruple{\Ec_S}{\p_S}{\allowbreak \Phi_S}{\sigma_S}$ such that
    $B_0 \, \LRstep{\;\tr\;}_\bi \, B$, $\fst(B) = S$, $\snd(B) = D$,
    and $\new\, \Ec_S. \Phi_S \statequiv \new\, \Ec_D. \Phi_D$.
  \end{enumerate}
\end{restatable}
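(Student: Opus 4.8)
The plan is to prove both items simultaneously by induction on the length of $\tr$, constructing the biprocess $B$ step by step and maintaining a joint invariant on the triple $(S,B,D)$. Two preliminary remarks simplify matters. First, since $P$ has no replication and no name restriction and extended processes are name- and variable-distinct, each assignment fires at most once, so the value attached to an assignment variable is frozen as soon as it is created and $\dom(\sigma)$ only grows; consequently compatibility of $(\rho_\alpha,\rho_\beta)$ with the final configuration already entails compatibility with every configuration along the trace, and I may assume it throughout. Second, the tagging discipline of Appendix~\ref{sec:app-tagging} — the guards $\TestTag{\cdot}{i}$ and the retagging $\TAG{\cdot}{i}$ — guarantees that every term produced by an $i$-coloured action is, after normalisation, correctly $i$-tagged; this lets me treat the two subprotocols as living over the genuinely disjoint signatures $\Sigma_\alpha \cup \Sigma^{\Tag}_\alpha \cup \Sigma_0^{(\alpha)}$ and $\Sigma_\beta \cup \Sigma^{\Tag}_\beta \cup \Sigma_0^{(\beta)}$, so that the combination machinery of Section~\ref{subsec:combMaterial} (ordered rewriting, factors, Lemmas~\ref{lem:app_CRicalp05} and~\ref{lem:change alien}) applies with these two families in the roles of $\Sigma_1$ and $\Sigma_2$.

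The invariant I would carry is: (i) $\fst(B)=S$, $\snd(B)=D$, and the plain-process multiset of $B$ is reached from that of $B_0$ by the traced reduction; (ii) $\sigma_S$ and $\sigma_D$ have the same domain, agree on input variables, and for an assignment variable $z$ of colour $\gamma$ one has $z\sigma_S\mydownarrow$ a name or a $\gamma$-rooted term (compatibility, condition~2) while $z\sigma_D\mydownarrow$ is obtained from $z\sigma_S\mydownarrow$ by the replacement $\rho_\gamma^{+}$ of clause~(iii), which abstracts the cross-colour references, i.e.\ the assignment variables of the other colour together with their current values; (iii) more generally, any term occurring in a $\gamma$-coloured position of $S$ is mapped to the corresponding term of $D$ by the operation ``normalise, then replace each maximal alien factor according to $\rho_\gamma^{+}$'' — exactly the situation governed by Lemma~\ref{lem:change alien}; (iv) $\new\,\Ec_S.\Phi_S \statequiv \new\,\Ec_D.\Phi_D$, and moreover the current assignment values and their $\pk/\vk$ images are deducible from neither frame. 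I would then check this invariant is preserved by every rule of Figure~\ref{fig:semantics-biprocesses}. The rules \textsc{New}, \textsc{Par} (and \textsc{Repl}, which does not apply here) are immediate. For \textsc{Then}/\textsc{Else}: Lemmas~\ref{lem:app_CRicalp05} and~\ref{lem:change alien}, applied factor by factor and using compatibility condition~1 to know that $\rho_\gamma^{+}$ preserves equalities and inequalities among the shared factors, show that for each equation $u=v$ of the test $u\sigma_S\mydownarrow = v\sigma_S\mydownarrow$ iff $u\sigma_D\mydownarrow = v\sigma_D\mydownarrow$, so the conditional branches the same way on both sides and the biprocess rule, which insists on identical branching, fires. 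For \textsc{Assgn}: the freshly stored value is $\gamma$-rooted or a name by compatibility condition~2, so $\rho_\gamma^{+}$ extends coherently and (ii)–(iii) are re-established. For \textsc{In}: the recipe $M$ is common to both sides, $M\Phi_S\mydownarrow$ and $M\Phi_D\mydownarrow$ correspond by (iii)–(iv), and the channel is public, so the step is available on both sides. For \textsc{Out-T}: the output value corresponds by (iii); the real work is re-establishing (iv) after appending the two new frame entries.

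The main obstacle lies in clauses (iii) and (iv). For (iii) the subtlety is that a value stored in a $\beta$-assignment variable may itself contain $\alpha$-rooted subterms which in turn contain $\beta$-rooted subterms, so ``replace the maximal alien factors'' cannot be read naively: one has to organise it as an outside-in recursion on the layered structure that ordered rewriting produces, iterating the factor decomposition of Definition~\ref{def:subterm} and invoking Lemma~\ref{lem:change alien} at each layer to see that normal forms are preserved. For (iv), one must show that appending corresponding outputs keeps the frames statically equivalent; the key is that, by the hypothesis that $D_0$ does not reveal its assignments (and, propagating the invariant, neither does $S_0$), none of the abstracted values — nor their $\pk/\vk$ images — is deducible from either frame along a compatible execution, so one may appeal to the locality of static equivalence for unions of disjoint equational theories (as developed in~\cite{CRicalp05,CD-jar10}): any test distinguishing $\new\,\Ec_S.\Phi_S$ from $\new\,\Ec_D.\Phi_D$ would, up to the tagging correspondence, reduce to a test inside a single component theory comparing two frames that differ only at non-deducible positions, which is impossible. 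Finally, since the biprocess moves drive $\fst$ and $\snd$ in lockstep, a single run of this induction yields both directions of the theorem: for item~1 one threads the given trace through the $\fst$-projection and reads off $D=\snd(B)$; for item~2 one threads it through the $\snd$-projection and reads off $S=\fst(B)$; in both cases the required static equivalence is clause~(iv) of the invariant.
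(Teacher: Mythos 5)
Your overall architecture is the paper's: an induction on the length of the trace maintaining the invariant that the disjoint configuration is the image of the shared one under the abstraction (the paper writes this $D=\delta(S)$ and proves the two directions separately as Propositions~\ref{pro:shared-to-disjoint} and~\ref{pro:disjoint-to-shared}), with conditionals discharged by the disjoint-combination lemmas and the frame correspondence resting on non-deducibility of the abstracted values. However, your case analysis omits the rule \textsc{Comm}, and this is not a benign oversight: an internal communication between an $\alpha$-coloured output and a $\beta$-coloured input is exactly where your invariant~(iii) is in tension with itself. The transmitted value $u$ occurs in an $\alpha$-coloured position, so it is abstracted by $\delta_\alpha$, yet it becomes bound to a variable used in $\beta$-coloured positions, where the invariant requires $\delta_\beta$. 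You therefore need $\delta_\alpha(u\sigma_S\mydownarrow)=\delta_\beta(u\sigma_S\mydownarrow)$, which follows from nothing you state; the paper obtains it by observing that such a channel is necessarily public (only base-type data is shared through assignments), replaying the step as a public output, and then invoking non-deducibility of the shared keys via Lemmas~\ref{lem:eqdeltaAetB} and~\ref{lem:samerecipesymmetric}.

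Second, in the \textsc{In} case you assert that $M\Phi_S\mydownarrow$ and $M\Phi_D\mydownarrow$ ``correspond by (iii)--(iv)''. They do not: (iii) gives a pointwise $\delta$-correspondence of frame entries and (iv) gives static equivalence, but neither implies that $\delta$ commutes with the application of an arbitrary attacker recipe, \emph{i.e.} $M\Phi_D\mydownarrow=\delta_\gamma(M\Phi_S\mydownarrow)$. That statement is the technical core of the whole development (Lemma~\ref{lem:samerecipesymmetric}), proved by a separate induction on a measure of $M$, and it is precisely where the non-deducibility hypothesis is consumed: if the attacker could deduce some $t\in\dom(\rho^+_\gamma)$, the recipe computing it would be mapped to the fresh name $t\rho^+_\gamma$ on one side but to an attacker-reconstructed term on the other, breaking your clauses (ii)--(iv). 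Your appeal to ``locality of static equivalence'' gestures at the right phenomenon but does not supply this lemma, which is moreover a prerequisite for establishing the static equivalence of the frames in the first place (Corollary~\ref{cor:framestatequiv}) rather than a consequence of it.
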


This theorem is proved by induction on the length of the
derivation. For this, a strong correspondence between the
process~$S_0$ (shared case) and $D_0$ (disjoint case) has to be
maintained along the derivation, and the transformation
$\delta_{\rho_\alpha,\rho_\beta}$ has to be extended to allow
replacements also in~$\sigma$ and~$\Phi$. The rest of this section is dedicated to the proof of this theorem.

\begin{example}
  Going back to our running example, and forming a biprocess with $S_0
  = \quadruple{\Ec_0 \cup \{k^\alpha\}}{P'_A \mid
    P'_B}{\emptyset}{\emptyset}$ and $D_0 = \quadruple{\Ec_0 \cup
    \{k^\alpha\}}{D_A \mid D_B}{\emptyset}{\emptyset}$,
  Theorem~\ref{theo:main-main} gives us that these two processes
  behave in the same way when considering executions that are
  compatible with the chosen abstraction $\rho_\beta$, \emph{i.e.}
  executions that instantiate $x_A$ and $x_B$ by the same value.
\end{example}

A similar result as the one stated in Theorem~\ref{theo:main-main} was
proved in~\cite{CC-csf10}. Here, we consider in addition else
branches, and we consider a richer common signature. Moreover, relying
on the notion of biprocess, we show a strong link between the shared
case and the disjoint case, and we prove in addition static
equivalence of the resulting frames.

\subsection{Name replacement}
\label{sec : name replacement}

Now that we have fixed some notations, we have to explain how the replacement
will be applied on the shared process to extract the disjoint case.
Actually a same term will be abstracted differently depending on the
context which is just above it.

\smallskip{}


\begin{definition}
\label{def:rho-extension}
Let $(\rho^+_\alpha, \rho^+_\beta)$ be two functions from terms of
base type to names of base type. 
Let $\delta^{\rho^+_\alpha,\rho^+_\beta}_\gamma$, or shortly $\delta_\gamma$, ($\gamma \in\{\alpha,\beta\}$) be
the functions on terms that is defined as follows:
\begin{center}
$\delta_\gamma(u) =
  u\mydownarrow\rho^+_\gamma$ when 
  $\left\{\begin{array}{l}
    u\mydownarrow \in \dom(\rho^+_\gamma)\\
    \text{and }\racinebis(u) \not\in \gamma \cup \{0\}
  \end{array}\right.$
\end{center}

Otherwise, we have that $\delta_\gamma(u) = u$ when $u$ is a name or a
  variable; and $\delta_\gamma(\ffun(t_1, \ldots, t_k))$ is equal to
\begin{itemize}
\item  $\ffun(\delta_\gamma(t_1),
  \ldots, \delta_\gamma(t_k))$ if
  $\racinebis(\ffun(t_1, \ldots, t_n)) = 0$;
\item   $\ffun(\delta_{\alpha}(t_1),\ldots,\delta_{\alpha}(t_k))$
  if $\racinebis(\ffun(t_1, \ldots, t_n)) \in \alpha$. 
\item   $\ffun(\delta_{\beta}(t_1),\ldots,\delta_{\beta}(t_k))$
  if $\racinebis(\ffun(t_1, \ldots, t_n)) \in \beta$.
\end{itemize}
\end{definition}


\begin{definition}[Factor for $\Sigmazero$]
Let $u$ be a term. We define $\fct_{\Sigmazero}(u)$ the factors of a term~$u$ for $\Sigmazero$ as the maximal syntactic subterms $v$ of $u$ such that $\racinebis(v) \neq 0$.
\end{definition}

%
%

Let $\sigma$ be a substitution. 
We consider a pair $(\rho_\alpha,\rho_\beta)$ as defined in Section~\ref{subsec:theo-main} and compatible with~$\sigma$. We denote $(\rho^+_\alpha,\rho^+_\beta)$ the extension of $(\rho_\alpha,\rho_\beta)$ w.r.t. $\sigma$.
Thanks to compatibility,
$\rho^+_\alpha$ (resp. $\rho^+_\beta$) is injective on $\dom(\rho^+_\alpha) \smallsetminus \{z^\beta_1,\ldots, z^\beta_l\}$ (resp. $\dom(\rho^+_\beta) \smallsetminus \{z^\alpha_1,\ldots, z^\alpha_k\}$). Moreover, we also have that for all $u \in \dom(\rho^+_\alpha) \smallsetminus \{z^\beta_1,\ldots, z^\beta_l\}$ (resp. $\dom(\rho^+_\beta) \smallsetminus \{z^\alpha_1,\ldots, z^\alpha_k\}$), either $\racinebis(u) = \bot$ or $\racinebis(u) \not\in \alpha \cup \{0\}$ (resp. $\racinebis(u) \not\in \beta \cup \{0\}$).


\begin{lemma}
  \label{lem:transfoV}
 Let $t_1$ and $t_2$ be ground terms in normal form such that $(\fn(t_1) \cup \fn(t_2))
  \cap (\Ec_\alpha \uplus \Ec_\beta) = \emptyset$. We have that:
\begin{center}
$t_1 = t_2$ if, and only if,
$\delta_\gamma(t_1) =
\delta_\gamma(t_2)$
\end{center}
where $\gamma \in \{\alpha,\beta\}$.
\end{lemma}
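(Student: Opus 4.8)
The plan is to proceed by induction on the size of $t_1$ (together with $t_2$), analysing the structure of the root symbols of $t_1$ and $t_2$ and the position of their normal forms relative to $\dom(\rho^+_\gamma)$. Fix $\gamma \in \{\alpha,\beta\}$; write $\bar\gamma$ for the complementary set. The key observation to set up first is that, since $t_1$ and $t_2$ are already in normal form and their names avoid $\Ec_\alpha\uplus\Ec_\beta$, the definition of $\delta_\gamma$ applies cleanly: $\delta_\gamma$ replaces a subterm $u$ by $\rho^+_\gamma(u\mydownarrow)$ exactly when $u\mydownarrow\in\dom(\rho^+_\gamma)$ and $\racinebis(u)\notin\gamma\cup\{0\}$, and otherwise recurses through the structure, switching to $\delta_\alpha$ or $\delta_\beta$ according to whether $\racinebis$ of the current node lies in $\alpha$ or $\beta$ (and staying put when it is $0$).

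The forward direction ($t_1=t_2 \Rightarrow \delta_\gamma(t_1)=\delta_\gamma(t_2)$) is immediate since $\delta_\gamma$ is a well-defined function on terms, so the content is entirely in the converse. For that, I would distinguish cases on whether the ``abstraction clause'' of $\delta_\gamma$ fires at the root of $t_1$ (resp. $t_2$). \emph{Case A: it fires for both.} Then $\delta_\gamma(t_i)=\rho^+_\gamma(t_i\mydownarrow)=\rho^+_\gamma(t_i)$ since $t_i$ is already in normal form, and both $t_1,t_2\in\dom(\rho^+_\gamma)$ with $\racinebis(t_i)\notin\gamma\cup\{0\}$. Here I invoke the injectivity statement recalled just before the lemma: $\rho^+_\gamma$ is injective on $\dom(\rho^+_\gamma)\smallsetminus\{z^{\bar\gamma}_1,\ldots\}$, and on the assignment variables $z^{\bar\gamma}_j$ themselves equality of the images $n^{\bar\gamma}_j$ must be reflected back by equality of the variables (this is where one uses that $t_1,t_2$ being honest ground terms in normal form, if one of them is literally a variable $z^{\bar\gamma}_j$ then so is the other — a variable cannot be equal modulo nothing to a non-variable, and distinct assignment variables give distinct names only under compatibility; the subtle point is that $t_i$ could instead be $t_i=z^{\bar\gamma}_j\sigma\mydownarrow$, and compatibility of $\rho_\gamma$ with $\sigma$ guarantees $z\sigma=_\E y\sigma \Leftrightarrow z\rho_\gamma=y\rho_\gamma$, so equal images force $t_1=t_2$). \emph{Case B: it fires for neither.} Then $t_1$ and $t_2$ are names/variables (handled directly) or compound terms $\ffun(\ldots)$, $\ggfun(\ldots)$ with heads and tag-roots that $\delta_\gamma$ merely recurses through; if the two outermost function symbols differ, then $\delta_\gamma(t_1)$ and $\delta_\gamma(t_2)$ still have different outermost symbols (note $\delta$ never changes the head symbol in the recursive clauses, and in Case B it does not introduce a fresh name), so they are unequal — contradicting $\delta_\gamma(t_1)=\delta_\gamma(t_2)$; hence the heads agree, the tag-roots agree, $\delta_\gamma$ dispatches to the \emph{same} $\delta_{\alpha}$ or $\delta_{\beta}$ on both argument lists, and the induction hypothesis applied argumentwise gives $t_1=t_2$. \emph{Case C: it fires for exactly one}, say for $t_1$ but not $t_2$. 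Then $\delta_\gamma(t_1)=\rho^+_\gamma(t_1)\in\Ec_\alpha\uplus\Ec_\beta$ is a name in the fresh pool, whereas $\delta_\gamma(t_2)$ — by the structural/recursive clauses and the hypothesis $\fn(t_2)\cap(\Ec_\alpha\uplus\Ec_\beta)=\emptyset$ — either is a name/variable outside that pool or is a compound term; in every subcase $\delta_\gamma(t_2)\notin\Ec_\alpha\uplus\Ec_\beta$, so $\delta_\gamma(t_1)\neq\delta_\gamma(t_2)$, again contradicting the assumed equality. Hence Case C is vacuous under the hypothesis.

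The main obstacle I anticipate is Case B when the two heads are equal but lie in $\Sigma_0$ with tag-roots in different colour classes: one must be sure the tag-root $\racinebis(t_1)$ really does equal $\racinebis(t_2)$ whenever the outermost symbols coincide — this needs the remark that for $\senc,\aenc,\sign,\h$ the tag-root is determined by whether the first argument is of the form $\Tag_i(\cdot)$, which is itself syntactically visible and so transmitted through $\delta_\gamma$ on the (strictly smaller) first argument; one threads the induction hypothesis through here. A secondary delicate point is making the bookkeeping in Case A airtight: the extension $\rho^+_\gamma$ was defined precisely so that $\rho^+_\gamma(x)=\rho^+_\gamma(x\sigma)=\rho_\gamma(x)$, and one must check that compatibility of $(\rho_\alpha,\rho_\beta)$ with $\sigma$ is exactly what upgrades ``$\rho^+_\gamma$ injective on the non-assignment part'' to ``$\rho^+_\gamma(t_1)=\rho^+_\gamma(t_2) \Rightarrow t_1=t_2$'' for the terms that actually arise, i.e. those of the form $z\sigma\mydownarrow$. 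With those two points pinned down, the induction closes.
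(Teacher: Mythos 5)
Your proposal is correct and follows essentially the same route as the paper's proof: induction on the size of the terms, with a case split on whether the abstraction clause of $\delta_\gamma$ fires at the root (resolved via the injectivity of $\rho^+_\gamma$ derived from compatibility) versus the purely structural case (resolved by matching head symbols and tag-roots and applying the induction hypothesis argumentwise). Your explicit Case C is handled implicitly in the paper by observing that $\delta_\gamma(t_1)\in\Ec_\alpha\uplus\Ec_\beta$ forces the clause to fire for $t_2$ as well, but the content is the same.
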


\begin{proof}
 The right implication is trivial. We consider the left
 implication, and we prove the result by induction on
 $\mathrm{max}(|t_1|, |t_2|)$ when $\gamma = \alpha$.
The other case $\gamma = \beta$ can be handled in  a similar way.
\medskip

\noindent\emph{Base case $\mathrm{max}(|t_1|, |t_2|) = 1$:} In such a case, we
have that $t_1, t_2 \in \N$. We first assume that $\delta_\alpha(t_1)$
(and thus also $\delta_\alpha(t_2)$) is 
in $\Ec_\alpha\uplus \Ec_\beta$.
By hypothesis, we know that
$t_2$ and $t_1$ do not use names in $\Ec_\alpha \uplus \Ec_\beta$. 
Therefore, by definition of
$\delta_\alpha$, we can deduce that $t_1, t_2 \in
\dom(\rho^+_\alpha)$ and $t_1\rho^+_\alpha = t_2\rho^+_\alpha$, and
thus $t_1 = t_2$ thanks to $\rho^+_\alpha$ being injective on $\dom(\rho^+_\alpha) \smallsetminus \{z^\beta_1,\ldots, z^\beta_l\}$.
Now, we assume that $\delta_\alpha(t_1)$ (and thus also
$\delta_\alpha(t_2)$) is not in $\Ec_\alpha \uplus \Ec_\beta$.
In such a case, by
definition of $\delta_\alpha$, we have that $\delta_\alpha(t_1) = t_1$ and
$\delta_\alpha(t_2) = t_2$, and thus $t_1 = t_2$.

\medskip

\noindent\emph{Inductive step $\mathrm{max}(|t_1|, |t_2|) > 1$:} 
Assume w.l.o.g. that $|t_1| >1$. Thus, there exists a symbol function
$\ffun$ and terms $u_1, \ldots u_n$ 
such that $t_1 = \ffun(u_1, \ldots u_n)$. 
We do a case analysis on $t_1$ which is in normal form.

\medskip

\underline{Case $t_1 \in \dom(\rho^+_\alpha)$:} 
In such a case, $\delta_\alpha(t_1) = \delta_\alpha(t_2) = n$ for some
$n \in \Ec_\alpha$. By hypothesis, we know that
$t_2$ and $t_1$ do not use names in $\Ec_\alpha$, and we have that
$t_1\rho^+_\alpha =t_2\rho^+_\alpha$.
Therefore, we necessarily have that $t_1 = t_2$. 

\medskip

\underline{Case $t_1 \not\in  \dom(\rho^+_\alpha)$:} 
We do a new case analysis on $t_1$.

\smallskip{}

\emph{Case $\ffun \in \Sigma^+_i$ for some $i \in \{1,\ldots,p\}$:}
Let $\gamma \in \{\alpha,\beta\}$ such that $i \in \gamma$.
In such
a case, we have that $\delta_\alpha(t_1) = \ffun(\delta_\gamma(u_1), \ldots,
\delta_\gamma(u_n))$. But $\delta_\alpha(t_2) = \delta_\alpha(t_1)$ and by
definition of $\delta_\alpha$, it implies that there exist $v_1,
\ldots, v_n$ such that $t_2 = \ffun(v_1, \ldots, v_n)$ and $\ffun(\delta_\gamma(v_1), \ldots, \allowbreak\delta_\gamma(v_n)) = \delta_\alpha(t_2)$. Thus we have that
$\delta_\gamma(v_j) = \delta_\gamma(u_j)$ for all $j \in \{1, \ldots,
n\}$. Furthermore, since $t_1$ and $t_2$ are in normal form and not using names in $\Ec_\alpha\uplus \Ec_\beta$, we also know that $u_j$ and $v_j$ are in normal form and not using names in $\Ec_\alpha\uplus \Ec_\beta$, for every $j$. Since, we have that $\mathrm{max}(|t_1|,
|t_2|) > \mathrm{max}(|u_j|, |v_j|)$, for any $j$, by our inductive
hypothesis, we can deduce that $u_j = v_j$, for all $j$ and so $t_1 = \ffun(u_1,
\ldots, u_n) = \ffun(v_1, \ldots, v_n) = t_2$.

\emph{Case $t_1 = \ffun(\Tag_i(w_1), w_2)$ with  $i \in \{1,\ldots,p\}$ and $\ffun \in
 \{\senc, \allowbreak\aenc, \sign\}$:} 
Let $\gamma \in \{\alpha,\beta\}$ such that $i \in \gamma$.
In such a case, we know that $\delta_\alpha(t_1) =
\ffun(\Tag_i(\delta_\gamma(w_1)), \delta_\gamma(w_2))$. But we know that
$\delta_\alpha(t_2) = \delta_\alpha(t_1) = \ffun(\Tag_i(\delta_\gamma(w_1)),
\delta_\gamma(w_2))$. Thus thanks to $t_2$ being in normal form and by
definition of $\delta_\alpha$, it implies that there exists $v_1$ and $v_2$ such
that $t_2 = \ffun(\Tag_i(v_1), v_2)$ and so $\delta_\alpha(t_2) =
\ffun(\Tag_i(\delta_\gamma(v_1)), \delta_\gamma(v_2))$. Thus, we have that
$\delta_\gamma(v_1) = \delta_\gamma(u_1)$ and $\delta_\gamma(v_2) =
\delta_\gamma(u_2)$. Moreover, $t_1$ and $t_2$ being in normal form and not
using names in $\Ec_\alpha \uplus \Ec_\beta$, so are $u_j$ and $v_j$ for $j\in\{1, 2\}$, so we can apply
inductive hypothesis and conclude that $v_1 = u_1$ and $v_2 = u_2$ and so $t_1 =
t_2$.

\emph{Case $t_1 = \h(\Tag_i(w_1))$ with $i \in \{1,\ldots, p\}$:} This case is analogous
to the previous one.

\emph{Case $\ffun \in \Sigmazero$ and $\racine(u_1) \neq \Tag_i$, $i = 1\ldots p$:}
By definition of $\delta_\alpha$,
we can deduce that $\delta_\alpha(t_1) = \ffun(\delta_\alpha(u_1), \ldots,
\delta_\alpha(u_n))$. Since $\delta_\alpha(t_1) = \delta_\alpha(t_2)$, we can
deduce that the top symbol of $t_2$ is also $\ffun$ and so there exists $v_1,
\ldots, v_n$ such that $t_2 = \ffun(v_1, \ldots, v_n)$. In the previous cases,
we showed that if $\ffun \in \{\senc, \aenc, \sign, \h\}$ and the top symbol of
$v_1$ is $\Tag_j$ for some $j \in \{1,\ldots,p\}$ then $\delta_\alpha(t_1) = \delta_\alpha(t_2)$
implies that the top symbol of $u_1$ is also $\Tag_{j}$.
Thus, thanks
to our hypothesis, we can deduce that either $\ffun \not\in \{\senc, \aenc,
\sign, \h\}$ or the top symbol of $v_1$ is different from $\Tag_j$ for
some $j \in \{1,\ldots,p\}$.
Hence by definition of $\delta_\alpha$, we can deduce that $\delta_\alpha(t_2) =
\ffun(\delta_\alpha(v_1), \ldots, \delta_\alpha(v_n))$ and so
$\delta_\alpha(v_j) = \delta_\alpha(u_j)$ for all $j \in \{1, \ldots,
n\}$. Moreover, $t_1$ and $t_2$ being in normal form and not using names in
$\Ec_\alpha \uplus \Ec_\beta$, implies that so are $u_j$ and $v_j$ for all $j\in\{1,\dots,
n\}$. We can thus apply our inductive hypothesis and conclude that $u_j = v_j$
for all $j \in \{1, \ldots, n\}$ and so $t_1 = t_2$. 
\end{proof}


\begin{lemma}
  \label{lem:eqdeltaAetB}
  Let $t_1$ and $t_2$ be ground terms in normal form such that 
$(\fn(t_1) \cup \fn(t_2)) \cap (\Ec_\alpha \uplus \Ec_\beta)
=\emptyset$. We have that:
\begin{center}
$\delta_\alpha(t_1) =
\delta_\beta(t_2)$ implies that $t_1 =
t_2$.
\end{center}
\end{lemma}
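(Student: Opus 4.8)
The plan is to prove the statement by induction on $\max(|t_1|,|t_2|)$, following the
same skeleton as the proof of Lemma~\ref{lem:transfoV}, and to use Lemma~\ref{lem:transfoV}
itself to close the cases where the two terms have the same root colour. The one preliminary
observation I would record is a bookkeeping fact about the abstraction maps: whenever
$\delta_\alpha$ performs a replacement of the form $u\mydownarrow\rho^+_\alpha$ it produces a
\emph{name lying in $\Ec_\beta$}, and symmetrically every replacement performed by
$\delta_\beta$ produces a name lying in $\Ec_\alpha$; this is immediate from
$\img(\rho_\alpha)\subseteq\Ec_\beta$, $\img(\rho_\beta)\subseteq\Ec_\alpha$ and the
definition of the extensions $\rho^+_\alpha,\rho^+_\beta$. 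Since $\Ec_\alpha\cap\Ec_\beta=\emptyset$
and $t_1,t_2$ use no name of $\Ec_\alpha\uplus\Ec_\beta$, this lets me discard the
``replacement at the root'' cases: if $\delta_\alpha$ replaces at the root of $t_1$ then
$\delta_\alpha(t_1)$ is a name of $\Ec_\beta$, whereas $\delta_\beta(t_2)$ can be a name only
if either $t_2$ is a name, in which case $\delta_\beta(t_2)\in\{t_2\}\cup\Ec_\alpha$, or
$\delta_\beta$ replaces at the root of $t_2$, in which case $\delta_\beta(t_2)\in\Ec_\alpha$;
in all of these subcases $\delta_\beta(t_2)\notin\Ec_\beta$, a contradiction, and symmetrically
for a root replacement of $\delta_\beta$ on $t_2$.

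So I may assume no replacement occurs at the root of either map. If $t_1$ is a name then
$\delta_\alpha(t_1)=t_1$, hence $\delta_\beta(t_2)=t_1$ is a name avoiding $\Ec_\alpha\uplus\Ec_\beta$;
by the case analysis above such a $\delta_\beta(t_2)$ forces $t_2$ to be a name with
$\delta_\beta(t_2)=t_2$, so $t_1=t_2$. The case where $t_2$ is a name is symmetric. Otherwise
$t_1=\ffun(u_1,\dots,u_n)$ and $t_2=\gfun(v_1,\dots,v_m)$ are both compound. Here I would first
check that, when no root replacement happens, $\delta_\gamma$ leaves $\racinebis$ unchanged: the
top function symbol is kept, and a $\Tag_i$-head of the first argument of $\senc,\aenc,\sign,\h$
is preserved because that first argument then has $\racinebis=i$, so the corresponding recursive
call does not replace it; conversely $\delta_\gamma$ never creates a $\Tag_i$-head where there was
none. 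Hence $\delta_\alpha(t_1)=\delta_\beta(t_2)$ forces $\ffun=\gfun$, $n=m$, and
$\racinebis(t_1)=\racinebis(t_2)=:r$.

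It then remains to split on $r$. If $r\in\alpha$ both maps recurse with $\delta_\alpha$ on every
argument, so $\delta_\alpha(u_i)=\delta_\alpha(v_i)$ for all $i$ and Lemma~\ref{lem:transfoV}
gives $u_i=v_i$; if $r\in\beta$ both recurse with $\delta_\beta$ and Lemma~\ref{lem:transfoV}
applies in the same way; if $r=0$ then $\delta_\alpha(t_1)$ recurses with $\delta_\alpha$ while
$\delta_\beta(t_2)$ recurses with $\delta_\beta$, so $\delta_\alpha(u_i)=\delta_\beta(v_i)$ and
the induction hypothesis applies (each $u_i,v_i$ is a strict subterm, still in normal form and
still avoiding $\Ec_\alpha\uplus\Ec_\beta$). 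In every case $u_i=v_i$ for all $i$, hence $t_1=t_2$.
I expect the delicate point to be not any individual case but the careful justification that
$\delta_\gamma$ preserves $\racinebis$ once a root replacement is excluded — in particular the
bookkeeping around $\Tag_i$-headed arguments of $\senc,\aenc,\sign,\h$ — together with making the
``a root replacement is impossible'' step watertight, which is exactly where $\Ec_\alpha\cap\Ec_\beta=\emptyset$
and the hypothesis $(\fn(t_1)\cup\fn(t_2))\cap(\Ec_\alpha\uplus\Ec_\beta)=\emptyset$ are used.
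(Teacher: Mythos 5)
Your proof is correct and follows essentially the same route as the paper's: an induction on term size (the paper uses $|\delta_\alpha(t_1)|$, you use $\max(|t_1|,|t_2|)$, which makes no difference), ruling out root replacements via $\img(\rho^+_\alpha)\subseteq\Ec_\beta$, $\img(\rho^+_\beta)\subseteq\Ec_\alpha$ and the disjointness/name hypotheses, then closing same-colour recursive calls with Lemma~\ref{lem:transfoV} and cross-colour ones ($\racinebis=0$, i.e.\ untagged $\Sigma_0$ symbols) with the induction hypothesis. The $\racinebis$-preservation bookkeeping you flag as delicate is exactly what the paper records in Lemma~\ref{lem:deltaandnormalform}, so you could simply cite it there.
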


\begin{proof}
 We prove the result by induction on $|\delta_\alpha(t_1)|$.

 \smallskip

 \noindent\emph{Base case $|\delta_\alpha(t_1)| = 1$:} 
Since $\delta_\alpha(t_1)
 =\delta_\beta(t_2)$, $\Ec_\alpha \cap \Ec_\beta = \emptyset$, and
 $t_1,t_2$ do not use names in $\Ec_\alpha \uplus \Ec_\beta$, we
 necessarily have that $t_1 \not\in \dom(\rho^+_\alpha)$ and $t_2
 \not\in \dom(\rho^+_\beta)$. Hence, we have that $\delta_\alpha(t_1)
 = t_1$ and $\delta_\beta(t_2) = t_2$. This allows us to conclude.

 \medskip

 \noindent \emph{Inductive step $|\delta_\alpha(t_1)| > 1$:} In that case, we
 have that $\delta_\alpha(t_1) = \ffun(u_1, \ldots, u_n) =
 \delta_\beta(t_2)$. Assume that $\ffun \in \Sigma_i \cup \Sigma_{\Tag_i}$
for some $i \in \{1,\ldots, p\}$. Let $\gamma \in \{\alpha,\beta\}$
such that $i\in \gamma$.  By definition of $\delta_\alpha$ and $\delta_\beta$,
 we can deduce that $\racine(t_1) = \ffun = \racine(t_2)$. Furthermore, if we
 assume that $t_1 = \ffun(v_1, \ldots, v_n)$ and $t_2 = \ffun(w_1, \ldots,
 w_n)$, we would have $\delta_\gamma(v_j) = \delta_\gamma(w_j)$ for all $j \in
 \{1, \ldots, n\}$. By Lemma~\ref{lem:transfoV}, we deduce that $v_j = w_j$ for
 all $j \in \{1, \ldots, n\}$. Hence, we conclude that $t_1 = t_2$.  Assume now
 that $\ffun \in \Sigmazero$. According to the definition of $\delta_\alpha$
 and $\delta_\beta$, there exists $v_1, \ldots, v_n$ and $w_1, \ldots, w_n$
 such that $t_1 = \ffun(v_1, \ldots, v_n)$, $t_2 = \ffun(w_1, \ldots, w_n)$ and
$\delta_{\gamma_1}(v_j) = \delta_{\gamma_2}(w_j)$, for some $\gamma_1,
\gamma_2 \in \{\alpha,
 \beta\}$. Moreover, $t_1$ and $t_2$ being in normal form and not using names in
 $\Ec_\alpha \uplus \Ec_\beta$ implies that so are $v_j$ and $w_j$ for all $j\in\{1, \dots,
 n\}$. Now, either $\gamma_1 = \gamma_2$ and so by Lemma~\ref{lem:transfoV}, we have that
 $v_j = w_j$, else $\gamma_1 \neq \gamma_2$ but then by our inductive hypothesis, we also
 have $v_j = w_j$. Hence we conclude that $t_1 = t_2$.
\end{proof}


\begin{lemma}
  \label{lem:deltaandnormalform}
  Let $u$ be a ground  term in normal form such that $\fn(u) \cap (\Ec_\alpha\
 \uplus \Ec_\beta) = \emptyset$. Let $\gamma \in \{\alpha,\beta\}$.
We have that:
\begin{itemize} 
\item $\delta_\gamma(u)$ is in normal form; and
\item either
  $\racine(\delta_\gamma(u)) = \racine(u)$ or
  $\racine(\delta_\gamma(u)) = \bot$.
\item either
  $\racinebis(\delta_\gamma(u)) = \racinebis(u)$ or
  $\racinebis(\delta_\gamma(u)) = \bot$.
\end{itemize}
\end{lemma}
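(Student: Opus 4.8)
The statement is a purely structural fact about the transformation $\delta_\gamma$ (for $\gamma\in\{\alpha,\beta\}$), so I would prove all three bullets simultaneously by induction on the structure of the term $u$ (equivalently on $|u|$), using the case analysis that already appears in the definition of $\delta_\gamma$ (Definition~\ref{def:rho-extension}). The base case is when $u$ is a name or a variable: then either the first clause of the definition fires, in which case $\delta_\gamma(u)=u\mydownarrow\rho^+_\gamma=u\rho^+_\gamma$ is a name (hence in normal form, and $\racine$ as well as $\racinebis$ equal $\bot$ on both $u$ and $\delta_\gamma(u)$, so the disjunctions hold with the ``$=\bot$'' alternative); or $\delta_\gamma(u)=u$ and everything is trivial. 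Note the hypothesis $\fn(u)\cap(\Ec_\alpha\uplus\Ec_\beta)=\emptyset$ is used here only to know that the images under $\rho^+_\gamma$ are the fresh names we intend — it is inherited by subterms, so the induction hypothesis applies.

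\textbf{Inductive step.} Write $u=\ffun(t_1,\ldots,t_k)$ with $u$ in normal form; note each $t_j$ is then in normal form and has no names in $\Ec_\alpha\uplus\Ec_\beta$, so the induction hypothesis applies to each $t_j$. I split exactly as the definition does. If the abstracting clause fires ($u\mydownarrow=u\in\dom(\rho^+_\gamma)$ and $\racinebis(u)\notin\gamma\cup\{0\}$), then $\delta_\gamma(u)=u\rho^+_\gamma$ is a name, so again $\racine(\delta_\gamma(u))=\racinebis(\delta_\gamma(u))=\bot$ and it is in normal form. Otherwise $\delta_\gamma(u)=\ffun(\delta_{\gamma'}(t_1),\ldots,\delta_{\gamma'}(t_k))$ where $\gamma'$ is $\alpha$, $\beta$, or $\gamma$ itself according to $\racinebis(u)$. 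By the induction hypothesis each $\delta_{\gamma'}(t_j)$ is in normal form. To see the whole term is in normal form I argue that no rewrite rule of $\Or$ can apply at the root: the top symbol is unchanged ($\ffun$), the factors have only been replaced by fresh names or by terms with the same root symbol/same $\racinebis$ as the originals (this is where bullets two and three of the induction hypothesis are essential — they guarantee that e.g. a decryption $\sdec$ still has a non-$\senc$ term in its first argument, or a $\proj_i$ still has a non-pair argument, so no redex is created), and no rule can fire strictly below the root since each $\delta_{\gamma'}(t_j)$ is already normal and alien structure is preserved. For the root-symbol disjunctions: if $\ffun\in\Sigma_i\cup\Sigma^{\Tag}_i$ then $\racine(\delta_\gamma(u))=\ffun=\racine(u)$ and $\racinebis(\delta_\gamma(u))=i=\racinebis(u)$; if $\ffun\in\{\senc,\aenc,\sign,\h\}$, then either its first argument was $\Tag_i(\cdot)$ — and since $\delta_{\gamma'}(\Tag_i(w))=\Tag_i(\delta_{\gamma''}(w))$, the tag is preserved, so $\racinebis$ is preserved — or it was not, and then $\racinebis(u)=0$, $\delta_\gamma$ recurses with the same $\gamma$, and $\racinebis(\delta_\gamma(u))=0=\racinebis(u)$; either way $\racine$ is preserved too. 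For all other $\ffun\in\Sigmazero$, $\racinebis(u)=0$ and the recursion is with the same $\gamma$, so $\racine$ and $\racinebis$ are both literally preserved.

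\textbf{Main obstacle.} The only genuinely delicate point is showing that $\delta_\gamma(u)$ is in normal form — i.e. that the renaming does not accidentally create a redex. This is exactly why the lemma bundles the two root-preservation statements together with the normal-form statement: they are needed as induction hypotheses precisely to rule out the dangerous patterns ($\sdec(\senc(\cdot),\cdot)$, $\adec(\aenc(\cdot,\pk(\cdot)),\cdot)$, $\checksign(\sign(\cdot),\vk(\cdot))$, $\proj_i(\langle\cdot,\cdot\rangle)$, and $\unTag_i(\Tag_i(\cdot))$) arising at the root after the substitution. The case that must be checked most carefully is $\ffun\in\{\sdec,\adec,\checksign\}$ applied to a first argument whose root lies in some $\Sigma_j\cup\Sigma^\Tag_j$: by the induction hypothesis the image of that argument still has the same non-$\Sigmazero$ root (or is a name), so it is still not of the form $\senc(\cdot,\cdot)$, $\aenc(\cdot,\cdot)$ or $\sign(\cdot,\cdot)$, and no redex appears. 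Once this case-by-case check is done the rest is routine bookkeeping following the definition of $\delta_\gamma$ clause by clause.
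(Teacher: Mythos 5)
Your overall structure (induction on $|u|$, case split following the definition of $\delta_\gamma$) matches the paper's, and the base case, the abstracting clause, and the constructor cases are fine. But the argument that $\delta_\gamma(u)$ remains in normal form has a genuine gap: root preservation (your bullets two and three) is not a strong enough induction hypothesis. Consider $u=\sdec(\senc(w_1,w_2),v_2)$ with $v_2\neq w_2$: this is in normal form even though its first argument \emph{is} headed by $\senc$, simply because the keys disagree. After the transformation the term becomes $\sdec(\senc(\delta_{\varepsilon}(w_1),\delta_{\varepsilon}(w_2)),\delta_{\varepsilon'}(v_2))$, possibly with $\varepsilon\neq\varepsilon'$, and nothing in the three bullets prevents $\delta_{\varepsilon}(w_2)=\delta_{\varepsilon'}(v_2)$, which would create a redex. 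You only treat the sub-case where the first argument of $\sdec/\adec/\checksign$ is \emph{not} an encryption; the mismatched-key sub-case is the one that actually requires work. The paper closes it with two injectivity lemmas proved beforehand (Lemma~\ref{lem:transfoV}: $t_1=t_2\Leftrightarrow\delta_\gamma(t_1)=\delta_\gamma(t_2)$, and Lemma~\ref{lem:eqdeltaAetB}: $\delta_\alpha(t_1)=\delta_\beta(t_2)\Rightarrow t_1=t_2$), deriving a contradiction with $u$ being in normal form whenever the image admits a reduction.

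A second instance of the same gap occurs in the case $\racine(u)\in\Sigma_i$ for an arbitrary user theory $\E_i$. Your claim that ``no rule can fire strictly below the root since each $\delta_{\gamma'}(t_j)$ is already normal and alien structure is preserved'' implicitly assumes normalization only inspects root symbols of the alien factors. For an arbitrary $\E_i$ (e.g.\ one containing $\ffun(x,x)=c$, or the Diffie--Hellman theory of the running example) whether $C[u_1,\ldots,u_n]$ is in normal form depends on which alien factors are \emph{equal to each other}. The paper handles this via Lemma~\ref{lem:change alien}, whose hypothesis is exactly that the replacement preserves equalities and disequalities among the factors — again supplied by the injectivity of $\delta_\gamma$. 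So the proof cannot be carried out with the three bullets of the lemma as the only induction hypotheses; it needs the injectivity results as external inputs (or must be strengthened into a simultaneous induction proving injectivity as well).
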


\begin{proof}
 We prove this result by induction on $|u|$ and we assume
 w.l.o.g. that $\gamma = \alpha$.

\smallskip{}

\noindent\emph{Base case $|u| =1$:} In such a case, we have that $u
\in \N$, and we   also
 have that $\delta_\alpha(u) \in \N$ and so $\delta_\alpha(u)$ is in normal
 form with the same root as $u$, namely $\bot$. Moreover, we have $\racinebis(\delta_\gamma(u)) = \bot$.

 \medskip

\noindent \emph{Inductive $|u| > 1$:}
Assume first that $u\mydownarrow \in \dom(\rho^+_\alpha)$ and $\racinebis(u) \not\in \alpha\cup \{0\}$. Hence
by definition of $\delta_\alpha$, we have that $\delta_\alpha(u) \in
\Ec_\alpha$.
Thus,
we trivially obtain that $\delta_\alpha(u)$ is in normal form, $\racine(\delta_\alpha(u)) = \bot$ and $\racinebis(\delta_\alpha(u)) = \bot$.

\medskip

Otherwise, we distinguish two cases:

\medskip{}

\noindent \underline{\emph{Case 1.}} 
We have that $u = C[u_1, \ldots, u_n]$  where $C$ is built
on $\Sigma_j \cup \Sigma_{\Tag_j}$ with $j \in \{1,\ldots, p\}$, 
$C$ is
different from a hole, $u_k$ are factors in normal form of $u$, $k =
1\ldots n$. Let $\varepsilon \in \{\alpha,\beta\}$ such that $j \in
\varepsilon$.
Hence, since $u \not\in \dom(\rho^+_\alpha)$, then by 
definition of $\delta_\alpha$, we deduce that
 $\delta_\alpha(u) = C[\delta_\varepsilon(u_1), \ldots,
 \delta_\varepsilon(u_n)]$. 
 Since $C$ is not a hole, thanks to our
 inductive hypothesis on $u_1, \ldots, u_n$, we have that $\delta_\varepsilon(u_1),
 \ldots, \delta_\varepsilon(u_n)$ are in normal form and
 $\delta_\varepsilon(u_1),
 \ldots, \delta_\varepsilon(u_n)$ are factors of $\delta_\alpha(u)$. Thus, since $u$
 is in normal form, we have that $C[u_1,\ldots,u_n]\mydownarrow =
 C[u_1,\ldots,u_n]$. By Lemmas~\ref{lem:transfoV} and~\ref{lem:change alien}, we
 deduce that 
\[
C[\delta_\varepsilon(u_1), \ldots,
 \delta_\varepsilon(u_n)]\mydownarrow =  C[\delta_\varepsilon(u_1), \ldots,
 \delta_\varepsilon(u_n)]\]
 \emph{i.e.} $\delta_\alpha(u)\mydownarrow = \delta_\alpha(u)$.

 Furthermore, we also have that
 $\racine(\delta_\alpha(u)) = \racine(u)$ and $\racinebis(\delta_\alpha(u)) = \racinebis(u)$.
 
 \medskip
 
 \noindent \underline{\emph{Case 2.}} We have that $u
 =\ffun(v_1,\ldots,v_n)$ for some $\ffun \in \Sigma_0$.
By definition
 of $\delta_\alpha$ there exists $\varepsilon \in \{\alpha,\beta\}$ such that
 $\delta_\alpha(u) = \ffun(\delta_\varepsilon(v_1), \allowbreak\ldots,
 \delta_\varepsilon(v_m))$. We do a case analysis on $\ffun$:

\smallskip{}

 \emph{Case $\ffun \in \{\senc, \aenc, \pk, \sign, \vk, \h, \langle\;
   \rangle\}$:} In this case, we have that
 $\delta_\alpha(u)\mydownarrow = \ffun(\delta_\varepsilon(v_1)\mydownarrow, \ldots,
 \delta_\varepsilon(v_m)\mydownarrow)$. Since by inductive hypothesis,
 $\delta_\varepsilon(v_k)$ is in normal form, for all $k \in \{1, \ldots, m\}$, we
 can  deduce that $\delta_\alpha(u)$ is also in normal form and
 $\racine(\delta_\alpha(u)) = \ffun = \racine(u)$. If $\ffun \in \{\pk,\vk, \langle\; \rangle\}$ then we trivially have that $\racinebis(\delta_\alpha(u)) = \racinebis(u)$. Let's focus on $\ffun \in \{\senc, \aenc, \sign\}$. If $\racinebis(u) \not\in \{0\}$ then it means that $\racine(v_1) = \Tag_i$ for some $i \in \varepsilon$. But by definition of $\delta_\varepsilon$, we would have that $\racine(\delta_\varepsilon(v_1)) = \Tag_i$. Hence $\racinebis(\delta_\alpha(u)) = \racinebis(u)$. Now if $\racinebis(u) \not\in \{0\}$, it means that $\racine(v_1) \not \{ \Tag_1\ldots \Tag_p\}$. But by inductive hypothesis, $\racine(\delta_\varepsilon(v_1)) = \bot$ or $\racine(\delta_\varepsilon(v_1)) = \racine(v_1)$ and so we can conclude that $\racinebis(\delta_\alpha(u)) \in \{0\}$.

\smallskip{}

 \emph{Case $\ffun = \sdec$:} Then $m=2$, and by definition of
 $\delta_\alpha$, 
we have that $\delta_\alpha(u) =
 \sdec(\delta_\alpha(v_1), \delta_\alpha(v_2))$. Thus, in such a case,
 we have that $\racine(\delta_\alpha(u)) = \ffun = \racine(u)$.  By
 inductive hypothesis, we have that $\delta_\alpha(v_1)$ and
 $\delta_\alpha(v_2)$ are both in normal form.
 Assume that $\sdec$ cannot be reduced,
 \emph{i.e.} $\delta_\alpha(u)\mydownarrow = \sdec(\delta_\alpha(v_1)\mydownarrow,
 \delta_\alpha(v_2)\mydownarrow) = \sdec(\delta_\alpha(v_1),
 \delta_\alpha(v_2))$. Thus the result holds.  Otherwise, if $\sdec$ can be
 reduced, there exist $w_1, w_2$ with  $\delta_\alpha(v_1)
 = \senc(w_1, w_2)$ and $\delta_\alpha(v_2) = w_2$. By definition of
 $\delta_\alpha$, there must exist $\varepsilon' \in \{\alpha, \beta\}$, and $w'_1, w'_2$ such that
 $\delta_\alpha(v_1) = \senc(\delta_{\varepsilon'}(w'_1), \delta_{\varepsilon'}(w'_2))$, $v_1 =
 \senc(w'_1, w'_2)$, $w_1 = \delta_{\varepsilon'}(w'_1)$ and $w_2 =
 \delta_{\varepsilon'}(w'_2)$. Thus, we have that $\delta_\alpha(v_2) =
 \delta{\varepsilon'}(w'_2)$. Thanks to Lemmas~\ref{lem:transfoV}
 and~\ref{lem:eqdeltaAetB}, we have that $v_2 = w'_2$. Hence, $u =
 \sdec(\senc(w'_1,w'_2), w'_2)$. But in such a case, we would have that $u$ is
 not in normal form which contradicts our hypothesis.
 
 At last, since $\racine(\delta_\alpha(u)) = \bot$ or $\racine(\delta_\alpha(u)) = \racine(u) = \sdec$ then we can deduce that $\racinebis(\delta_\alpha(u)) = \bot$ or $\racinebis(\delta_\alpha(u)) = \racinebis(u) = 0$.
\smallskip{}

The cases  where $\ffun = \checksign$ or $\ffun = \adec$ are
  analogous to the previous one. 
\end{proof}
\subsection{$\delta_\alpha$ and $\delta_\beta$ on tagged term}
\label{subsec: delta and tag}

Let $\sigma_0$ be a ground substitution. Similarly to the previous section, we consider a pair $(\rho_\alpha,\rho_\beta)$ as defined in Section~\ref{subsec:theo-main} and compatible with $\sigma_0$. We denote $(\rho^+_\alpha,\rho^+_\beta)$ the extension of $(\rho_\alpha,\rho_\beta)$ w.r.t. this substitution. We also denote by $\Ec_\alpha$ and $\Ec_\beta$ the respective image of $\rho_\beta$ and $\rho_\alpha$, and we assume that $\sigma_0$ does not use any name in $\Ec_\alpha$ and $\Ec_\beta$.

Thanks to compatibility, $\rho^+_\alpha$ (resp. $\rho^+_\beta$) is injective on $\dom(\rho^+_\alpha) \smallsetminus \{z^\beta_1,\ldots, z^\beta_l\}$ (resp. $\dom(\rho^+_\beta) \smallsetminus \{z^\alpha_1,\ldots, z^\alpha_k\}$). Moreover, we also have that for all $z \in \{z^\beta_1,\ldots, z^\beta_l\}$ (resp. $z \in \{z^\alpha_1,\ldots, z^\alpha_k\}$), either $\racinebis(z\sigma_0\mydownarrow) = \bot$ or $\racinebis(z\sigma_0\mydownarrow) \not\in \alpha \cup \{0\}$ (resp. $\racinebis(z\sigma_0\mydownarrow) \not\in \beta \cup \{0\}$).

\medskip

Let $i \in \{1,\ldots,p\}$. Let $u \in \T(\Sigma_i \cup \Sigma_{\Tag_i} \cup \Sigmazero, \N \cup \X)$. 
As defined in Section~\ref{sec:app-tagging}, $\TestTag{u}{i}$ is a
conjunction of elementary formulas (equalities between terms). Given a
substitution $\sigma$ such that $\fv(u) \subseteq \dom(\sigma)$, we say that~$\sigma$ satisfies $t_1 = t_2$, denoted $\sigma \vDash t_1 = t_2$, if $t_1\sigma\mydownarrow = t_2\sigma\mydownarrow$. 

At last, for all substitution~$\sigma$, for all $\gamma \in \{\alpha,\beta\}$, we denote by $\delta_\gamma(\sigma)$ the substitution such that $\dom(\sigma) = \dom(\delta_\gamma(\sigma))$ and for all $x \in \dom(\delta_\gamma(\sigma))$, $x\delta_\gamma(\sigma) = \delta_\gamma(x\sigma)$.

\medskip{}




\begin{lemma}
\label{lem:Testandequality}
Let $u \in \T(\Sigma_i \cup \Sigmazero, \N \cup
\X)$ for some $i \in \{1,\ldots,p\}$. Let $\gamma \in \{\alpha,
\beta\}$ such that $i \in \gamma$ and $\sigma_0$ be a ground
substitution such that $\fv(u)
\subseteq \dom(\sigma_0)$.
Moreover, assume that $u$ does not use
names in
$\Ec_\alpha \uplus \Ec_\beta$. We have that:
\begin{itemize}
\setlength{\itemsep}{0.5mm}
\item $\delta_\gamma(\TAG{u}{i}(\sigma_0\mydownarrow)) =
  \delta_\gamma(\TAG{u}{i})\delta_\gamma(\sigma_0\mydownarrow)$; and
\item If $\sigma_0 \vDash \TestTag{\TAG{u}{i}}{i}$ then
  $\delta_\gamma(\TAG{u}{i}(\sigma_0\mydownarrow))\mydownarrow =
  \delta_\gamma(\TAG{u}{i}\sigma_0\mydownarrow)$.
\end{itemize}
\end{lemma}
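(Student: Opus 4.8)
The plan is to establish both items by a single structural induction on $u$, the second item relying on the first. The whole argument rests on one structural feature of tagged terms: every non-variable subterm of $\TAG{u}{i}$ — and of every instance $\TAG{u}{i}(\sigma_0\mydownarrow)$, since substitution preserves root symbols and the $\Tag_i$-guards inserted by tagging — has $\racinebis \in \{0,i\}$; as $i \in \gamma$, this lies in $\gamma \cup \{0\}$. Two consequences will be used repeatedly: (a) the first clause of $\delta_\gamma$, which requires $\racinebis \notin \gamma \cup \{0\}$, never fires at such a position; and (b) whenever $\delta_\gamma$ descends through tagged-term structure, the recursive calls it makes are again to $\delta_\gamma$ (the $\racinebis \in \alpha$ and $\racinebis \in \beta$ clauses both collapse to $\delta_\gamma$ because $i \in \gamma$, and the $\racinebis = 0$ clause already uses $\delta_\gamma$). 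So on tagged structure $\delta_\gamma$ behaves as a pushing-through homomorphism, commuting in particular with $\Tag_i$, $\unTag_i$ and with the constructors $\senc,\aenc,\sign,\h,\pk,\vk$ and pairing.

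For the first item, the only place $\delta_\gamma$ fails to be homomorphic is at the variable leaves of $\TAG{u}{i}$. If $x \notin \dom(\rho^+_\gamma)$ then $\delta_\gamma(x) = x$, and the required leaf identity is exactly the defining equation $x\delta_\gamma(\sigma_0\mydownarrow) = \delta_\gamma(x\sigma_0\mydownarrow)$. If $x \in \dom(\rho_\gamma)$, then compatibility of $\rho_\gamma$ with $\sigma_0$ gives $\racinebis(x\sigma_0\mydownarrow) = \bot$ or $\racinebis(x\sigma_0\mydownarrow) \notin \gamma \cup \{0\}$; in either case $x\sigma_0\mydownarrow \in \dom(\rho^+_\gamma)$, so the first clause of $\delta_\gamma$ yields $\delta_\gamma(x\sigma_0\mydownarrow) = (x\sigma_0\mydownarrow)\rho^+_\gamma = \rho_\gamma(x) = x\rho^+_\gamma = \delta_\gamma(x)$, which is a name and hence fixed by $\delta_\gamma(\sigma_0\mydownarrow)$. (For a variable these two cases are exhaustive, since the extra elements of $\dom(\rho^+_\gamma)$ are ground.) The inductive step then checks each shape of $\TAG{u}{i}$ — $\senc(\Tag_i(\TAG{u_1}{i}),\TAG{u_2}{i})$ and its $\aenc,\sign,\h$ analogues, $\unTag_i(\sdec(\TAG{u_1}{i},\TAG{u_2}{i}))$ and its $\adec,\checksign$ analogues, $\ffun(\TAG{u_1}{i},\dots)$ with $\ffun \in \Sigma_i$, and $\ffun(\TAG{u_1}{i},\dots)$ with $\ffun$ an untagged symbol of $\Sigmazero$ — by pushing $\delta_\gamma$ inside (legitimate by (a) and (b)) and invoking the induction hypothesis on the $\TAG{u_j}{i}$'s.

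For the second item, write $t = \TAG{u}{i}(\sigma_0\mydownarrow)$, so that $\TAG{u}{i}\sigma_0\mydownarrow = t\mydownarrow$ and the goal becomes $\delta_\gamma(t)\mydownarrow = \delta_\gamma(t\mydownarrow)$; again I would induct on $u$. The leaf cases are immediate, using Lemma~\ref{lem:deltaandnormalform} (which needs that $\sigma_0$ uses no name of $\Ec_\alpha \uplus \Ec_\beta$) to know $\delta_\gamma$ preserves normal forms. For the inductive step I distinguish three situations. If the root of $\TAG{u}{i}$ is a constructor, there is no top redex: $t\mydownarrow$ normalises the arguments, $\delta_\gamma(t)$ is the same constructor applied to the $\delta_\gamma$-images of the argument instances (possibly under a $\Tag_i$), and the claim follows from the induction hypothesis plus $\delta_\gamma(t\mydownarrow)$ being in normal form. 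If the root lies in $\Sigma_i$, write $t = C[t_1,\dots,t_n]$ with $C$ a maximal $\Sigma_i$-context over the normalised alien factors $t_k$; Lemma~\ref{lem:transfoV} shows $\delta_\gamma$ preserves equalities among the $t_k$, so Lemma~\ref{lem:change alien} describes $\delta_\gamma(t)\mydownarrow$ as the relevant context applied to the $\delta_\gamma(t_k)$, and the induction hypothesis on the factors closes the case. The remaining situation — where $\sigma_0 \vDash \TestTag{\TAG{u}{i}}{i}$ is used — is when the root is an eliminator $\proj_j$ or $\unTag_i(\gfun(\cdot,\cdot))$ with $\gfun \in \{\sdec,\adec,\checksign\}$. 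Taking $\TAG{u}{i} = \unTag_i(\sdec(\TAG{u_1}{i},\TAG{u_2}{i}))$, the conjunct $\Tag_i(\unTag_i(\sdec(\TAG{u_1}{i},\TAG{u_2}{i}))) = \sdec(\TAG{u_1}{i},\TAG{u_2}{i})$ of $\TestTag{\TAG{u}{i}}{i}$, being satisfied by $\sigma_0$ and since $\unTag_i$ rewrites only a $\Tag_i$-headed term, forces $\TAG{u_1}{i}\sigma_0\mydownarrow = \senc(\Tag_i(w),k)$ and $\TAG{u_2}{i}\sigma_0\mydownarrow = k$ for some $w,k$, so $t\mydownarrow = w$. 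By the induction hypothesis and (b), $\delta_\gamma(t_1)\mydownarrow = \senc(\Tag_i(\delta_\gamma(w)),\delta_\gamma(k))$ and $\delta_\gamma(t_2)\mydownarrow = \delta_\gamma(k)$ (with $t_j$ the $\sigma_0\mydownarrow$-instance of $\TAG{u_j}{i}$), whence
\[
\delta_\gamma(t)\mydownarrow = \unTag_i\bigl(\sdec(\senc(\Tag_i(\delta_\gamma(w)),\delta_\gamma(k)),\delta_\gamma(k))\bigr)\mydownarrow = \delta_\gamma(w) = \delta_\gamma(t\mydownarrow),
\]
using Lemma~\ref{lem:deltaandnormalform} once more; the $\proj_j$-case is identical with pairing and $\proj_j$ in place of $\senc(\Tag_i(\cdot),\cdot)$ and $\unTag_i(\sdec(\cdot,\cdot))$.

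The delicate point throughout is the interaction between the case-split that defines $\delta_\gamma$ and the rewriting: one must ensure $\delta_\gamma$ never abstracts a subterm that a rewrite step is about to consume. This is exactly what the test-tag hypothesis secures — it makes every redex in $\TAG{u}{i}\sigma_0$ tag-consistent, so the consumed subterm is $\Tag_i$-headed (root-tag $i \in \gamma$) or pair-headed (root-tag $0$); its root-tag therefore lies in $\gamma \cup \{0\}$, and $\delta_\gamma$ keeps that head in place and commutes with the step. Keeping track of this, together with the $\racinebis \in \{0,i\}$ invariant that pins every recursive $\delta$-call to $\delta_\gamma$, is the bulk of the work; the remaining bookkeeping is routine.
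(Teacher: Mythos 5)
Your proposal is correct and follows essentially the same route as the paper's proof: induction on the structure of $u$ with a case analysis on the root symbol, the same leaf analysis for variables in $\dom(\rho_\gamma)$ via the extension $\rho^+_\gamma$, the same use of Lemmas~\ref{lem:transfoV}, \ref{lem:change alien} and~\ref{lem:deltaandnormalform} for the $\Sigma_i$-rooted case, and the same use of $\TestTag{\cdot}{i}$ to force destructor redexes onto $\Tag_i$- or pair-headed arguments. Your explicit invariant that every composite subterm of a tagged instance has $\racinebis \in \gamma \cup \{0\}$ (so the abstraction clause of $\delta_\gamma$ never fires above the leaves and all recursive calls stay at $\delta_\gamma$) is left implicit in the paper but is exactly what its case analysis relies on.
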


\begin{proof}
 Let  $\sigma$ be the substitution $\sigma_0\mydownarrow$. We prove the two results separately. First, we show by
 induction on $|u|$ that $\delta_\gamma(\TAG{u}{i}\sigma) =
 \delta_\gamma(\TAG{u}{i})\delta_\gamma(\sigma)$:

\medskip

\noindent\emph{Base case $|u| = 1$:} In this case, $u \in \N \cup
\X$. If $u \in \N$ then we have that $\TAG{u}{i} = u$ and so $\TAG{u}{i}\sigma =
u$ and $\delta_\gamma(u) \in \N$. Thus, we have that
$\delta_\gamma(\TAG{u}{i}\sigma) = \delta_\gamma(u) =
\delta_\gamma(u)\delta_\gamma(\sigma) =
\delta_\gamma(\TAG{u}{i})\delta_\gamma(\sigma)$. Otherwise, we have
that  $u \in \X$ and  $\TAG{u}{i} = u$. W.l.o.g., we assume that $\gamma = \alpha$. First,
if $u \not\in \{z^\beta_1, \ldots, z^\beta_l\}$, then we
have that $\delta_\alpha(u) = u$. Thus,
$\delta_\alpha(u)\delta_\alpha(\sigma) = u\delta_\alpha(\sigma)$. Since $u \in
\X$ and $\fv(u) \subseteq \dom(\sigma)$, we have that $u\delta_\alpha(\sigma) =
\delta_\alpha(u\sigma)$, thus $\delta_\alpha(\TAG{u}{i}\sigma) =
\delta_\alpha(u\sigma) = u\delta_\alpha(\sigma) =
\delta_\alpha(u)\delta_\alpha(\sigma) =
\delta_\alpha(\TAG{u}{i})\delta_\alpha(\sigma)$.
{Now, it remains the case where $u = z^\beta_j$ for some $j \in
\{1,\ldots, l\}$. In such a case, we have that:}
\begin{itemize}
\item $\delta_\alpha(\TAG{z^\beta_j}{i}\sigma) =
\delta_\alpha(z^\beta_j\sigma) = n^\beta_j$, and
\item $\delta_\alpha(\TAG{z^\beta_j}{i})\delta_\alpha(\sigma) =
\delta_\alpha(z^\beta_j)\delta_\alpha(\sigma)  =
n^\beta_j\delta_\alpha(\sigma) = n^\beta_j$. 
\end{itemize}
\medskip

\noindent\emph{Inductive step $|u| > 1|$, i.e. $u = \ffun(u_1,
\ldots, u_n)$.} We do a case analysis on $\ffun$.

\smallskip{}

\emph{Case $\ffun \in \Sigma_i$:} In such a case, $\TAG{u}{i} =
\ffun(\TAG{u_1}{i}, \ldots, \TAG{u_n}{i})$. By definition of
$\delta_\gamma$, $\delta_\gamma(\TAG{u}{i}\sigma) =
\ffun(\delta_\gamma(\TAG{u_1}{i}\sigma), \ldots,
\delta_\gamma(\TAG{u_n}{i}\sigma))$ and $\delta_\gamma(\TAG{u}{i}) =
\ffun(\delta_\gamma(\TAG{u_1}{i}), \ldots, \delta_\gamma(\TAG{u_n}{i}))$. By our
inductive hypothesis, we can deduce that for all $k \in \{1, \ldots, n\}$, we
have that $\delta_\gamma(\TAG{u_k}{i}\sigma) =
\delta_\gamma(\TAG{u_k}{i})\delta_\gamma(\sigma)$. Thus, we can deduce that
$\delta_\gamma(\TAG{u}{i}\sigma) = \ffun(\delta_\gamma(\TAG{u_1}{i}), \ldots,
\delta_\gamma(\TAG{u_n}{i}))\delta_\gamma(\sigma) =
\delta_\gamma(\TAG{u}{i})\delta_\gamma(\sigma)$.

\smallskip{}

\emph{Case $\ffun \in \{ \aenc, \sign, \senc\}$:} In this case $n=2$, and by
definition of $\TAG{u}{i}$, we have that $\TAG{u}{i} =
\ffun(\Tag_i(\TAG{u_1}{i}), \TAG{u_2}{i})$. Thus,
$\delta_\gamma(\TAG{u}{i}) = \ffun(\Tag_i(\delta_\gamma(\TAG{u_1}{i})),
\delta_\gamma(\TAG{u_2}{i}))$ and $\delta_\gamma(\TAG{u}{i}\sigma) =
\ffun(\Tag_i(\delta_\gamma(\TAG{u_1}{i}\sigma)),
\delta_\gamma(\TAG{u_2}{i}\sigma))$. But by our inductive hypothesis,
we have $\delta_\gamma(\TAG{u_k}{i}\sigma) =
\delta_\gamma(\TAG{u_k}{i})\delta_\gamma(\sigma)$ with $k \in \{1,
2\}$.
We conclude that
\[
\begin{array}{rcl}
\delta_\gamma(\TAG{u}{i}\sigma) &=&
\ffun(\Tag_i(\delta_\gamma(\TAG{u_1}{i})\delta_\gamma(\sigma)),
\delta_\gamma(\TAG{u_2}{i})\delta_\gamma(\sigma))\\& =&
\delta_\gamma(\TAG{u}{i})\delta_\gamma(\sigma).
\end{array}
\]

\smallskip{}

\emph{Case $\ffun = \h$:} This case is analogous to the previous one and can be
handled in a similar way.

\smallskip{}
\emph{Case $\ffun \in \{\sdec, \adec, \checksign\}$:} In this case $n=2$, and by
definition of $\TAG{u}{i}$, we have that $\TAG{u}{i} =
\unTag_i(\ffun(\TAG{u_1}{i},\TAG{u_2}{i}))$. Thus,
$\delta_\gamma(\TAG{u}{i}) = \unTag_i(\ffun(\delta_\gamma(\TAG{u_1}{i}),
\delta_\gamma(\TAG{u_2}{i})))$ and $\delta_\gamma(\TAG{u}{i}\sigma) =
\unTag_i(\ffun(\delta_\gamma(\TAG{u_1}{i}\sigma),
\delta_\gamma(\TAG{u_2}{i}\sigma)))$. Relying on our inductive hypothesis,
we deduce that 
\begin{center}
$\delta_\gamma(\TAG{u_k}{i}\sigma) =
\delta_\gamma(\TAG{u_k}{i})\delta_\gamma(\sigma)$ with $k \in
\{1,2\}$.
\end{center}
We conclude that \[
\begin{array}{rcl}
\delta_\gamma(\TAG{u}{i}\sigma) &=&
\unTag_i(\ffun(\delta_\gamma(\TAG{u_1}{i}),
\delta_\gamma(\TAG{u_2}{i})))\delta_\gamma(\sigma)\\ &=&
\delta_\gamma(\TAG{u}{i})\delta_\gamma(\sigma).
\end{array}
\]

\smallskip{}

Otherwise, by definition of $\TAG{u}{i}$, we have
that:
\begin{itemize}
\item  $\TAG{u}{i} = \ffun(\TAG{u_1}{i},\ldots, \TAG{u_n}{i})$, and
\item 
$\delta_\gamma(\TAG{u}{i}) = \ffun(\delta_\gamma(\TAG{u_1}{i}), \ldots,
\delta_\gamma(\TAG{u_n}{i}))$. 
\end{itemize}
Thus, this case is similar to the case $\ffun \in
\Sigma_i$. Hence the result holds.

\medskip{}

We now prove the second property, \emph{i.e.} if $\sigma \vDash
\TestTag{\TAG{u}{i}}{i}$, then $\delta_\gamma(\TAG{u}{i}\sigma)\mydownarrow =
\delta_\gamma(\TAG{u}{i}\sigma\mydownarrow)$. We prove the result
by induction on $|u|$:

\medskip

\noindent\emph{Base case $|u| = 1$:} In this case, $u \in \N \cup
\X$. In both cases, we have that $\TAG{u}{i} = u$ and $\TestTag{u}{i} =
\true$. If $u \in \N$, we know that $\delta_\gamma(u) \in \N$ and so
$\delta_\gamma(u)\mydownarrow = \delta_\gamma(u)$. We also have that
$u\sigma\mydownarrow = u\sigma = u$. This allows us to conclude that
\[
\delta_\gamma(u\sigma)\mydownarrow = \delta_\gamma(u)\mydownarrow =
\delta_\gamma(u) = \delta_\gamma(u\sigma\mydownarrow).
\]
Otherwise, we have $u \in \X$. Since $\sigma$ is is normal form, we deduce that
$u\sigma\mydownarrow = u\sigma$. By Lemma~\ref{lem:deltaandnormalform}, we also know that
$\delta_\gamma(u\sigma\mydownarrow)\mydownarrow =
\delta_\gamma(u\sigma\mydownarrow)$. Thus, we conclude that
\[\delta_\gamma(u\sigma\mydownarrow) =
\delta_\gamma(u\sigma\mydownarrow)\mydownarrow =
\delta_\gamma(u\sigma)\mydownarrow.
\]

\noindent\emph{Inductive step $|u| > 1$, i.e. $u =
\ffun(u_1, \ldots, u_n)$.} We do a case analysis on $\ffun$.

\smallskip{}

\emph{Case $\ffun \in \Sigma_i$:} We have that $\TAG{u}{i} =
\ffun(\TAG{u_1}{i}, \ldots, \TAG{u_n}{i})$. Hence, we have that
$\delta_\gamma(\TAG{u}{i}\sigma) = \ffun(\delta_\gamma(\TAG{u_1}{i}\sigma),
\ldots, \delta_\gamma(\TAG{u_n}{i}\sigma))$ and so
$\delta_\gamma(\TAG{u}{i}\sigma)\mydownarrow =
\ffun(\delta_\gamma(\TAG{u_1}{i}\sigma)\mydownarrow, \ldots,
\allowbreak \delta_\gamma(\TAG{u_n}{i}\sigma)\mydownarrow)\mydownarrow$. We have
that 
$\TestTag{\TAG{u}{i}}{i} = \bigwedge_{j=1}^n \TestTag{\TAG{u_j}{i}}{i}$ which
means that $\sigma \vDash \TestTag{\TAG{u_j}{i}}{i}$ for each $j \in
\{1,\ldots,n\}$. By applying  our
inductive hypothesis on $u_1, \ldots, u_n$, we deduce that
\[
\begin{array}{rcl}
\delta_\gamma(\TAG{u}{i}\sigma)\mydownarrow &=&
\ffun(\delta_\gamma(\TAG{u_1}{i}\sigma\mydownarrow), \ldots,
\delta_\gamma(\TAG{u_n}{i}\sigma\mydownarrow))\mydownarrow\\ &=&
\delta_\gamma(\ffun(\TAG{u_1}{i}\sigma\mydownarrow, \ldots,
\TAG{u_n}{i}\sigma\mydownarrow))\mydownarrow
\end{array}
\]

Let  $t = \ffun(\TAG{u_1}{i}\sigma\mydownarrow, \ldots,
\TAG{u_n}{i}\sigma\mydownarrow)$. We can assume that there exists a
context $C$ built on $\Sigma_i$ such that $t = C[t_1, \ldots, t_m]$
with $\fct(t) = \{t_1, \ldots, t_m\}$ and $t_1, \ldots, t_m$ are in
normal form. Thus, by Lemma~\ref{lem:app_CRicalp05}, there exists a
context $D$ (possibly a hole) such that $t\mydownarrow = D[t_{j_1},
\ldots, t_{j_k}]$ with $j_1, \ldots, j_k \in \{0, \ldots, m\}$ and
$t_0 = n_{min}$. Since $t_1, \ldots, t_m$ are in normal form
and thanks to Lemma~\ref{lem:deltaandnormalform}, we know that
for all $k \in \{0, \ldots, m\}$, $\delta_\gamma(t_k)$ is also in
normal form and its root is not in $\Sigma_i$. Hence, we can apply
Lemma~\ref{lem:change alien} such that $C[\delta_\gamma(t_1), \ldots,
\delta_\gamma(t_m)]\mydownarrow = D[\delta_\gamma(t_{j_1}), \ldots,
\delta_\gamma(t_{j_k})]$. But since $C$ and $D$ are both built upon
$\Sigma_i$, we have that:
\begin{itemize}
\item  $C[\delta_\gamma(t_1), \ldots,
\delta_\gamma(t_m)]\mydownarrow = \delta_\gamma(C[t_1, \ldots,
t_m])\mydownarrow$, and 
\item $D[\delta_\gamma(t_{j_1}), \ldots,
\delta_\gamma(t_{j_k})] = \delta_\gamma(D[t_{j_1}, \ldots,
t_{j_k}])$.
\end{itemize}
 Hence, we can deduce that $\delta_\gamma(t)\mydownarrow =
\delta_\gamma(t\mydownarrow)$. But we already know that $t\mydownarrow
= \TAG{u}{i}\sigma\mydownarrow$ and $\delta_\gamma(t)\mydownarrow =
\delta_\gamma(\TAG{u}{i}\sigma)\mydownarrow$. Thus, we can conclude
that $\delta_\gamma(\TAG{u}{i}\sigma)\mydownarrow =
\delta_\gamma(\TAG{u}{i}\sigma\mydownarrow)$.

\smallskip{}

\emph{Case $\ffun \in \{\senc, \aenc, \sign\}$:} In such a case, we
have that:
\begin{itemize}
\item $\TAG{u}{i} = \ffun(\Tag_i(\TAG{u_1}{i}), \TAG{u_2}{i})$, and
\item $\TestTag{\TAG{u}{i}}{i} = \TestTag{\TAG{u_1}{i}}{i} \wedge
\TestTag{\TAG{u_2}{i}}{i}$. 
\end{itemize}
Hence, we have that
$\TAG{u}{i}\sigma\mydownarrow =
\ffun(\Tag_i(\TAG{u_1}{i}\sigma\mydownarrow),
\TAG{u_2}{i}\sigma\mydownarrow)$, and also
$\delta_\gamma(\TAG{u}{i}\sigma)\mydownarrow =
\ffun(\Tag_i(\delta_\gamma(\TAG{u_1}{i}\sigma)\mydownarrow),
\delta_\gamma(\TAG{u_2}{i}\sigma)\mydownarrow)$. By our inductive
hypothesis on $u_1$ and $u_2$, we have that:
\begin{center}
$\delta_\gamma(\TAG{u_k}{i}\sigma)\mydownarrow
=\delta_\gamma(\TAG{u_k}{i}\sigma\mydownarrow)$ with $k \in \{1,
2\}$.
\end{center}
 Hence, we can deduce that
\[
\begin{array}{rcl}
\delta_\gamma(\TAG{u}{i}\sigma)\mydownarrow &=&
\ffun(\Tag_i(\delta_\gamma(\TAG{u_1}{i}\sigma\mydownarrow)),
\delta_\gamma(\TAG{u_2}{i}\sigma\mydownarrow)) \\ &=&
\delta_\gamma(\ffun(\Tag_i(\TAG{u_1}{i}\sigma\mydownarrow),
\TAG{u_2}{i}\sigma\mydownarrow)) \\
&=& \delta_\gamma(\TAG{u}{i}\sigma\mydownarrow).
\end{array}
\]

\smallskip{}

\emph{Case $\ffun = \h$:} This case is analogous to the previous one and can be
handled in a similar way.

\smallskip{}

\emph{Case $\ffun \in \{ \pk, \vk, \langle\;\rangle\}$:} In such a case, we have
that:
\begin{itemize}
\item $\TAG{u}{i} = \ffun(\TAG{u_1}{i}, \ldots, \TAG{u_n}{i})$ with
$n\in\{1,2\}$, and 
\item $\TestTag{\TAG{u}{i}}{i} = \wedge_{j=1}^{n}
\TestTag{\TAG{u_j}{i}}{i}$. 
\end{itemize}
We have that
$\TAG{u}{i}\sigma\mydownarrow = \ffun(\TAG{u_1}{i}\sigma\mydownarrow, \ldots,
\TAG{u_n}{i}\sigma\mydownarrow)$. Thus, this case is similar to the $\senc$ case
and can be handled similarly.

\smallskip{}

\emph{Case $\ffun \in \{\sdec, \adec, \checksign\}$:} In such a case, we have
that:
\begin{itemize}
\item  $\TAG{u}{i} = \unTag_i(\ffun(\TAG{u_1}{i}, \TAG{u_2}{i}))$, and
\item $\TestTag{\TAG{u}{i}}{i}$ is the following formula: 
\[
\begin{array}{l}
(\Tag_i(\unTag_i(\ffun(\TAG{u_1}{i}, \TAG{u_2}{i}))) =
\ffun(\TAG{u_1}{i}, \TAG{u_2}{i})) \\
\wedge \; \TestTag{\TAG{u_1}{i}}{i} \; \wedge \;
\TestTag{\TAG{u_2}{i}}{i}\\
\end{array}
\]
\end{itemize}
By hypothesis, we have that $\sigma \vDash
\TestTag{\TAG{u}{i}}{i}$, thus 
$\Tag_i(\unTag_i(\ffun(\TAG{u_1}{i},\TAG{u_2}{i})))\sigma\mydownarrow =
\ffun(\TAG{u_1}{i},\TAG{u_2}{i})\sigma\mydownarrow$. 
Hence, we deduce that the
root function symbol $\ffun$ can be reduced and the root of the plaintext is
$\Tag_i$. More formally, there exist $v_1, v_2$ such that:
\begin{itemize}
\item $\ffun = \sdec$: $\TAG{u_1}{i}\sigma\mydownarrow =
 \senc(\Tag_i(v_1),v_2)$, $\TAG{u_2}{i}\sigma\mydownarrow = v_2$ and
 $\TAG{u}{i}\sigma\mydownarrow = v_1$. This implies that:
 \[
\delta_\gamma(\TAG{u_1}{i}\sigma\mydownarrow) =
 \senc(\Tag_i(\delta_\gamma(v_1)), \delta_\gamma(v_2)).
\] 
Thus, we can
 deduce that:
 \[
\begin{array}{r@{\,}c@{\,}l}
\unTag_i(\sdec(\delta_\gamma(\TAG{u_1}{i}\sigma\mydownarrow),
 \delta_\gamma(\TAG{u_2}{i}\sigma\mydownarrow)))\mydownarrow &=&
 \delta_\gamma(v_1) \\ &=& \delta_\gamma(\TAG{u}{i}\sigma\mydownarrow)
\end{array}
\]
\item $\ffun = \adec$: $\TAG{u_1}{i}\sigma\mydownarrow =
 \aenc(\Tag_i(v_1),\pk(v_2))$, $\TAG{u_2}{i}\sigma\mydownarrow =
 v_2$, and  $\TAG{u}{i}\sigma\mydownarrow = v_1$.
\item $\ffun = \checksign$: $\TAG{u_1}{i}\sigma\mydownarrow =
 \sign(\Tag_i(v_1),v_2)$, $\TAG{u_2}{i}\sigma\mydownarrow =
 \vk(v_2)$,  and $\TAG{u}{i}\sigma\mydownarrow = v_1$.
\end{itemize}
In each case, we have that:
\[
\unTag_i(\ffun(\delta_\gamma(\TAG{u_1}{i}\sigma\mydownarrow),
\delta_\gamma(\TAG{u_2}{i}\sigma\mydownarrow)))\mydownarrow =
\delta_\gamma(\TAG{u}{i}\sigma\mydownarrow).
\]
 By inductive
hypothesis, we have $\delta_\gamma(\TAG{u_k}{i}\sigma\mydownarrow)
= \delta_\gamma(\TAG{u_k}{i}\sigma)\mydownarrow$ with $k \in
\{1,2\}$. We also have that:
\[
\delta_\gamma(\TAG{u}{i}\sigma)\mydownarrow =
\unTag_i(\ffun(\delta_\gamma(\TAG{u_1}{i}\sigma)\mydownarrow,
\delta_\gamma(\TAG{u_2}{i}\sigma)\mydownarrow))\mydownarrow.
\] 
This allows us to conclude that
\[
\begin{array}{rcl}
\delta_\gamma(\TAG{u}{i}\sigma)\mydownarrow &=&
\unTag_i(\ffun(\delta_\gamma(\TAG{u_1}{i}\sigma\mydownarrow),
\delta_\gamma(\TAG{u_2}{i}\sigma\mydownarrow)))\mydownarrow \\ &=&
\delta_\gamma(\TAG{u}{i}\sigma\mydownarrow).
\end{array}
\]

\smallskip{}

\emph{Case $\ffun = \proj_j$, $j = 1,2$:} In such a case, we have that
$n=1$, and $\TAG{u}{i} =
\ffun(\TAG{u_1}{i})$. Since $\sigma \vDash \TestTag{\TAG{u}{i}}{i}$, we have
that there exist $v_1, v_2$ such that $\TAG{u_1}{i}\sigma\mydownarrow = \langle
v_1, v_2 \rangle$ and so $\delta_\gamma(\TAG{u}{i}\sigma\mydownarrow) =
\delta_\gamma(v_j)$. But by inductive hypothesis, we have that
$\delta_\gamma(\TAG{u_1}{i}\sigma)\mydownarrow =
\delta_\gamma(\TAG{u_1}{i}\sigma\mydownarrow) = \langle \delta_\gamma(v_1),
\delta_\gamma(v_2) \rangle$. Hence, $\delta_\gamma(\TAG{u}{i}\sigma)\mydownarrow
= \ffun(\delta_\gamma(\TAG{u_1}{i}\sigma))\mydownarrow =
\ffun(\delta_\gamma(\TAG{u_1}{i}\sigma)\mydownarrow)\mydownarrow =
\delta_\gamma(v_j)\mydownarrow$. We have shown that $\delta_\gamma(v_j) =
\delta_\gamma(\TAG{u}{i}\sigma\mydownarrow)$, thus by
Lemma~\ref{lem:deltaandnormalform}, 
$\delta_\gamma(v_j)$ is in normal form and which allows us to
conclude.
\end{proof}

\medskip{}


\begin{corollary}
  \label{cor:equivalenceeq}
  Let $u,v \in \T(\Sigma_i\cup \Sigmazero, \N \cup \X)$ for some $i \in \{1,\ldots,p\}$. 
  Let $\gamma \in \{\alpha,\beta\}$ such that $i \in \gamma$.
  Assume that $\fv(u) \cup \fv(v) \subseteq \dom(\sigma_0\mydownarrow)$, and $\sigma_0 \vDash \TestTag{\TAG{u}{i}}{i} \wedge \TestTag{\TAG{v}{i}}{i}$.
  Moreover, assume that $u$, $v$ do not use names in $\Ec_\alpha \uplus \Ec_\beta$. 
  \[
  \TAG{u}{i}\sigma_0 \mydownarrow = \TAG{v}{i}\sigma_0\mydownarrow
  \Leftrightarrow
  \delta_\gamma(\TAG{u}{i})\delta_\gamma(\sigma_0\mydownarrow)\mydownarrow =
  \delta_\gamma(\TAG{v}{i})\delta_\gamma(\sigma_0\mydownarrow)\mydownarrow.
  \]
\end{corollary}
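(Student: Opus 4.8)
The plan is to obtain the corollary by concatenating two results already proved: Lemma~\ref{lem:Testandequality}, which pushes $\delta_\gamma$ through a well-tagged term together with a substitution and makes it commute with normalisation, and Lemma~\ref{lem:transfoV}, which says $\delta_\gamma$ reflects equality between ground normal-form terms that avoid $\Ec_\alpha\uplus\Ec_\beta$. Write $\sigma = \sigma_0\mydownarrow$. First I would observe that $\TAG{u}{i}\sigma_0$ is ground, since $\fv(\TAG{u}{i}) = \fv(u) \subseteq \dom(\sigma_0) = \dom(\sigma)$, and that $(\TAG{u}{i}\sigma_0)\mydownarrow = (\TAG{u}{i}\sigma)\mydownarrow$ because $\sigma_0 =_\E \sigma$; the same holds for $v$.

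Next I would apply Lemma~\ref{lem:Testandequality} to $u$ (its hypotheses hold: $u \in \T(\Sigma_i\cup\Sigmazero,\N\cup\X)$, $i\in\gamma$, $\fv(u)\subseteq\dom(\sigma_0)$, $u$ avoids $\Ec_\alpha\uplus\Ec_\beta$, and $\sigma_0\vDash\TestTag{\TAG{u}{i}}{i}$). Its first (unconditional) item gives $\delta_\gamma(\TAG{u}{i}\sigma) = \delta_\gamma(\TAG{u}{i})\,\delta_\gamma(\sigma)$, and its second item gives $\delta_\gamma(\TAG{u}{i}\sigma)\mydownarrow = \delta_\gamma((\TAG{u}{i}\sigma)\mydownarrow)$. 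Combining the two, together with $(\TAG{u}{i}\sigma)\mydownarrow = (\TAG{u}{i}\sigma_0)\mydownarrow$, yields $(\delta_\gamma(\TAG{u}{i})\,\delta_\gamma(\sigma))\mydownarrow = \delta_\gamma\bigl((\TAG{u}{i}\sigma_0)\mydownarrow\bigr)$, and symmetrically $(\delta_\gamma(\TAG{v}{i})\,\delta_\gamma(\sigma))\mydownarrow = \delta_\gamma\bigl((\TAG{v}{i}\sigma_0)\mydownarrow\bigr)$.

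Finally I would invoke Lemma~\ref{lem:transfoV} on the ground normal-form terms $t_1 = (\TAG{u}{i}\sigma_0)\mydownarrow$ and $t_2 = (\TAG{v}{i}\sigma_0)\mydownarrow$: both avoid $\Ec_\alpha\uplus\Ec_\beta$ since $u$, $v$ and $\sigma_0$ do, the tagging map introduces no names, and ordered rewriting can at worst introduce $n_{min}$, which is not in $\Ec_\alpha\uplus\Ec_\beta$. Hence $t_1 = t_2$ iff $\delta_\gamma(t_1) = \delta_\gamma(t_2)$; substituting the identities of the previous paragraph rewrites the right-hand side as $(\delta_\gamma(\TAG{u}{i})\,\delta_\gamma(\sigma_0\mydownarrow))\mydownarrow = (\delta_\gamma(\TAG{v}{i})\,\delta_\gamma(\sigma_0\mydownarrow))\mydownarrow$, which is exactly the claimed equivalence. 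There is no genuinely hard step here; the only care needed is the bookkeeping between $\sigma_0$ and $\sigma_0\mydownarrow$ (handled by $\sigma_0 =_\E \sigma_0\mydownarrow$) and the routine check that the intermediate ground terms stay clear of the fresh-name set $\Ec_\alpha\uplus\Ec_\beta$.
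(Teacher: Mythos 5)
Your proposal is correct and follows essentially the same route as the paper: it chains Lemma~\ref{lem:transfoV} (reflection of equality under $\delta_\gamma$ on ground normal forms avoiding $\Ec_\alpha\uplus\Ec_\beta$) with the two items of Lemma~\ref{lem:Testandequality} to rewrite $\delta_\gamma(\TAG{u}{i}\sigma_0\mydownarrow)$ as $\delta_\gamma(\TAG{u}{i})\delta_\gamma(\sigma_0\mydownarrow)\mydownarrow$. The extra bookkeeping you do between $\sigma_0$ and $\sigma_0\mydownarrow$ and the check on the fresh-name set are exactly the implicit steps the paper leaves to the reader.
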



\begin{proof}
 Thanks to Lemma~\ref{lem:transfoV}, we have that
\[
\TAG{u}{i}\sigma_0\mydownarrow =
 \TAG{v}{i}\sigma_0\mydownarrow \Leftrightarrow
\delta_\gamma(\TAG{u}{i}\sigma_0\mydownarrow) =
 \delta_\gamma(\TAG{v}{i}\sigma_0\mydownarrow).
\]
 Thanks to
 Lemma~\ref{lem:Testandequality}, we have that:
\begin{itemize}
\item $\delta_\gamma(\TAG{u}{i}\sigma_0\mydownarrow) =
 \delta_\gamma(\TAG{u}{i}(\sigma_0\mydownarrow))\mydownarrow =
 \delta_\gamma(\TAG{u}{i})\delta_\gamma(\sigma_0\mydownarrow)\mydownarrow$, and
\item  $\delta_\gamma(\TAG{v}{i}\sigma_0\mydownarrow) =
 \delta_\gamma(\TAG{v}{i}(\sigma_0\mydownarrow))\mydownarrow =
 \delta_\gamma(\TAG{v}{i})\delta_\gamma(\sigma_0\mydownarrow)\mydownarrow$. 
\end{itemize}
This allows us to conclude.
\end{proof}

\medskip{}


\begin{lemma}
  \label{lem:TestTagequivalentdelta}
Let $u\in \T(\Sigma_i \cup \Sigmazero, \N
  \cup \X)$ for some $i\in\{1, \ldots, p\}$. Let $\gamma \in
  \{\alpha,\beta\}$ such that $i \in \gamma$. Assume that $\fv(u)
  \subseteq \dom(\sigma_0)$.
Moreover, assume that $u$ does not use names in $\Ec_\alpha
\cup \Ec_\beta$. We have that :
  \[
  \sigma_0\mydownarrow \vDash \TestTag{\TAG{u}{i}}{i} \; \Leftrightarrow \; 
  \delta_\gamma(\sigma_0\mydownarrow) \vDash \TestTag{\delta_\gamma(\TAG{u}{i})}{i}
  \]
\end{lemma}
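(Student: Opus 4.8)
The plan is to prove the biconditional by structural induction on $u$ (equivalently on $|u|$). The key observation that makes everything go through is that, since $u \in \T(\Sigma_i \cup \Sigmazero, \N \cup \X)$ with $i \in \gamma$, every compound subterm $t$ of $\TAG{u}{i}$ satisfies $\racinebis(t) \in \gamma \cup \{0\}$: the only function symbols that can occur in $\TAG{u}{i}$ are those of $\Sigma_i$ (root tag $i \in \gamma$), those of $\Sigma^{\Tag}_i$ (root tag $i$), the symbols $\senc,\aenc,\sign,\h$ which in $\TAG{u}{i}$ always carry a $\Tag_i(\cdot)$ as first argument (root tag $i$), and the remaining symbols of $\Sigmazero$ (root tag $0$). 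Consequently the abstracting clause of $\delta_\gamma$ (Definition~\ref{def:rho-extension}) never fires on a compound subterm of $\TAG{u}{i}$, so $\delta_\gamma$ descends through $\TAG{u}{i}$ homomorphically, renaming only leaves that are variables in $\dom(\rho_\gamma)$ (or ground subterms whose normal form lies in $\dom(\rho^+_\gamma)$). The same holds for the terms occurring in the conjuncts of $\TestTag{\TAG{u}{i}}{i}$, which are built from $\TAG{\cdot}{i}$-subterms using only $\Tag_i,\unTag_i,\sdec,\adec,\checksign,\langle\;\rangle,\proj_j$. From this I would first record (a trivial induction on $u$) that $\TestTag{\delta_\gamma(\TAG{u}{i})}{i}$ is obtained from $\TestTag{\TAG{u}{i}}{i}$ by applying $\delta_\gamma$ term-wise to each equality.

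For the biconditional itself I would then go case by case on the head symbol of $u$. If $u$ is a name or a variable, both tests are $\true$ and there is nothing to do. If the head symbol $\ffun$ lies in $\Sigma_i$ or in $\{\senc,\aenc,\sign,\h,\pk,\vk,\langle\;\rangle\}$, then $\TestTag{\TAG{u}{i}}{i}$ is merely the conjunction of the $\TestTag{\TAG{u_k}{i}}{i}$ for the immediate subterms $u_k$ (and likewise for $\delta_\gamma(\TAG{u}{i})$), so the claim follows directly from the induction hypotheses applied to the $u_k$, which still satisfy the hypotheses of the lemma. The interesting cases are $\ffun \in \{\sdec,\adec,\checksign\}$ and $\ffun = \proj_j$, where $\TestTag{\TAG{u}{i}}{i}$ acquires an extra conjunct $E$ (a ``re-tagging'' equality $\Tag_i(\unTag_i(W)) = W$ with $W = \ffun(\TAG{u_1}{i},\TAG{u_2}{i})$ in the first family, an ``is-a-pair'' equality $\TAG{u_1}{i} = \langle \proj_1(\TAG{u_1}{i}),\proj_2(\TAG{u_1}{i})\rangle$ in the second), and $\TestTag{\delta_\gamma(\TAG{u}{i})}{i}$ acquires the conjunct $E' = \delta_\gamma(E)$. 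Unfolding the rewriting system, $\sigma_0\mydownarrow \vDash E$ holds exactly when $W(\sigma_0\mydownarrow)\mydownarrow$ has root $\Tag_i$ (resp. $\TAG{u_1}{i}(\sigma_0\mydownarrow)\mydownarrow$ has root $\langle\;\rangle$), and $\delta_\gamma(\sigma_0\mydownarrow) \vDash E'$ holds exactly when $\delta_\gamma(W)(\delta_\gamma(\sigma_0\mydownarrow))\mydownarrow$ (resp. $\delta_\gamma(\TAG{u_1}{i})(\delta_\gamma(\sigma_0\mydownarrow))\mydownarrow$) has that root.

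To match these two root conditions I would use the material already developed in this section. In both directions of the biconditional the sub-tests $\TestTag{\TAG{u_k}{i}}{i}$ hold (on the original side by hypothesis, on the transformed side because $\TestTag{\delta_\gamma(\TAG{u}{i})}{i}$ contains the conjuncts $\TestTag{\delta_\gamma(\TAG{u_k}{i})}{i}$ and one invokes the induction hypotheses to transfer satisfaction), so both statements of Lemma~\ref{lem:Testandequality} apply to the $u_k$ and give $\delta_\gamma(W)(\delta_\gamma(\sigma_0\mydownarrow))\mydownarrow = \delta_\gamma\bigl(\ffun(\TAG{u_1}{i}(\sigma_0\mydownarrow)\mydownarrow, \TAG{u_2}{i}(\sigma_0\mydownarrow)\mydownarrow)\bigr)\mydownarrow$ (and similarly for the projection case). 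Lemma~\ref{lem:deltaandnormalform} then tells us that $\delta_\gamma$ either preserves the root of a ground term in normal form or collapses it to $\bot$; since a root $\Tag_i$ or $\langle\;\rangle$ has $\racinebis$ in $\gamma \cup \{0\}$ it is never collapsed, which transports ``has root $\Tag_i$/$\langle\;\rangle$'' forward, and conversely such a root on the $\delta_\gamma$-image forces the same root on the preimage. The only remaining point, in the $\sdec/\adec/\checksign$ case, is to check that the corresponding decryption/verification rewrite fires on the $\delta_\gamma$-image precisely when it fires on the original; this follows from Lemma~\ref{lem:transfoV} (equality of ground normal forms is reflected by $\delta_\gamma$, under the standing assumption that no names of $\Ec_\alpha \uplus \Ec_\beta$ occur — guaranteed for $u$ by hypothesis and for $\sigma_0$ by the assumption of this section) applied to the encryption key and the decryption key. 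Combining these yields $\sigma_0\mydownarrow \vDash E \iff \delta_\gamma(\sigma_0\mydownarrow) \vDash E'$, which with the induction hypotheses on the sub-tests closes the induction. I expect this last family of cases to be the main obstacle: one must verify that the chain ``unfold the test to a root condition, push $\delta_\gamma$ inside via Lemma~\ref{lem:Testandequality}, then use root preservation and key-equality reflection'' is valid in both directions, which forces one to thread the already-established sub-tests carefully through the argument rather than assuming the full test from the outset.
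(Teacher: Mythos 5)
Your proof is correct and follows essentially the same route as the paper's: induction on $|u|$ with a case analysis on the head symbol, the trivial conjunction cases discharged by the induction hypothesis, and the $\sdec/\adec/\checksign$ and $\proj_j$ cases handled by first securing the sub-tests in both directions, then invoking Lemma~\ref{lem:Testandequality} together with Lemma~\ref{lem:transfoV} (and root preservation, which the paper uses implicitly via the shape analysis of $\delta_\gamma$-images) to transfer the extra conjunct. The preliminary observation that $\delta_\gamma$ acts homomorphically on the compound subterms of a tagged term is left implicit in the paper but is exactly what justifies its case-by-case unfoldings.
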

\begin{proof}
To simplify the proof, we denote by $\sigma$ the substitution $\sigma_0\mydownarrow$. We prove this result by induction on $|u|$ :

 \medskip

 \noindent \emph{Base case $|u| = 1$:} In this case, we have that
 $u\in\N\cup\X$, and thus $\TAG{u}{i}, \delta_\gamma(\TAG{u}{i})\in
 \N\cup\X$. In such a case, we have that
$\TestTag{\TAG{u}{i}}{i} = \true$
 and $\TestTag{\delta_\gamma(\TAG{u}{i})}{i} = \true$. Hence, the
 result trivially holds.

\medskip

\noindent\emph{Inductive step $|u| > 1$, i.e. $u =
\ffun(u_1, \ldots, u_n)$.} We do a case analysis on $\ffun$:

\smallskip{}

\emph{Case $\ffun \in \Sigma_i \cup \{ \pk, \vk, \langle\;\rangle\}$:}
In this case, we have that $\TAG{u}{i} = \ffun(\TAG{u_1}{i}, \ldots,
\TAG{u_n}{i})$ and $\delta_\gamma(\TAG{u}{i}) =
\ffun(\delta_\gamma(\TAG{u_1}{i}), \ldots,
\delta_\gamma(\TAG{u_n}{i}))$. Thus, we deduce that
$\TestTag{\TAG{u}{i}}{i} = \bigwedge_{j=1}^n
\TestTag{\TAG{u_j}{i}}{i}$ and $\TestTag{\delta_\gamma(\TAG{u}{i})}{i}
= \bigwedge_{j=1}^n \TestTag{\delta_\gamma(\TAG{u_j}{i})}{i}$. By
inductive hypothesis on $u_1, \ldots, u_n$, the result holds.

\smallskip{}

\emph{Case $\ffun \in \{ \senc, \aenc, \sign \}$:} In this case, we
have that:
\begin{itemize}
\item $\TAG{u}{i} = \ffun(\Tag_i(\TAG{u_1}{i}), \TAG{u_2}{i})$, and
\item $\delta_\gamma(\TAG{u}{i}) =
\ffun(\Tag_i(\delta_\gamma(\TAG{u_1}{i})),
\delta_\gamma(\TAG{u_2}{i}))$. 
\end{itemize}
Thus, we deduce that
$\TestTag{\TAG{u}{i}}{i} = \TestTag{\TAG{u_1}{i}}{i} \wedge
\TestTag{\TAG{u_2}{i}}{i}$ and $\TestTag{\delta_\gamma(\TAG{u}{i})}{i}
= \TestTag{\delta_\gamma(\TAG{u_1}{i})}{i} \wedge
\TestTag{\delta_\gamma(\TAG{u_2}{i})}{i}$. By inductive hypothesis on
$u_1, u_2$, the result holds.

\smallskip{}

\emph{Case $\ffun = \h$:} This case is analogous to de previous one and can be
handled in a similar way.

\smallskip{}

\emph{Case $\ffun \in \{ \sdec, \adec, \checksign\}$:} In this case,
we have that:
\begin{itemize}
\item $\TAG{u}{i} = \unTag_i(\ffun(\TAG{u_1}{i},
\TAG{u_2}{i}))$, and 
\item $\delta_\gamma(\TAG{u}{i}) =
\unTag_i(\ffun(\delta_\gamma(\TAG{u_1}{i}),
\delta_\gamma(\TAG{u_2}{i})))$. 
\end{itemize}
Thus, we deduce that $\TestTag{\TAG{u}{i}}{i}$ is the following formula:
\[
\begin{array}{l}
\TestTag{\TAG{u_1}{i}}{i} \wedge \TestTag{\TAG{u_2}{i}}{i} \\
\wedge  \Tag_i(\unTag_i(\ffun(\TAG{u_1}{i}, \TAG{u_2}{i}))) = \ffun(\TAG{u_1}{i}, \TAG{u_2}{i}) \\
\end{array}
\]
and $\TestTag{\delta_\gamma(\TAG{u}{i})}{i}$ is the following formula:
\[
\begin{array}{@{}l@{}}
 \TestTag{\delta_\gamma(\TAG{u_1}{i})}{i} \wedge \TestTag{\delta_\gamma(\TAG{u_2}{i})}{i}\, \wedge \\ 
\Tag_i(\unTag_i(\ffun(\delta_\gamma(\TAG{u_1}{i}), \delta_\gamma(\TAG{u_2}{i})))) =  \ffun(\delta_\gamma(\TAG{u_1}{i}), \delta_\gamma(\TAG{u_2}{i}))\\
\end{array}
\]
Whether we assume that $\sigma \vDash \TestTag{\TAG{u}{i}}{i}$ or
$\delta_\gamma(\sigma) \vDash \TestTag{\delta_\gamma(\TAG{u}{i})}{i}$,
we have by inductive hypothesis that $\sigma \vDash
\TestTag{\TAG{u_k}{i}}{i}$ with $k \in \{1, 2\}$. Thus by
Lemma~\ref{lem:Testandequality}, it implies that
$\delta_\gamma(\TAG{u_k}{i}\sigma\mydownarrow) =
\delta_\gamma(\TAG{u_k}{i})\delta_\gamma(\sigma)\mydownarrow$ with $k
\in \{1, 2\}$. We do a case analysis on $\ffun$. We detail below the
case where $\ffun = \sdec$. The cases where $\ffun = \adec$, and
$\ffun = \checksign$ can be done in a similar way.

\smallskip{}
In such a case ($\ffun = \sdec$), we have that 
\[\sigma \vDash
 \Tag_i(\unTag_i(\ffun(\TAG{u_1}{i}, \TAG{u_2}{i}))) =
 \ffun(\TAG{u_1}{i}, \TAG{u_2}{i})
\]
is equivalent to there exists
 $v_1, v_2$ s.t. $\TAG{u_2}{i}\sigma\mydownarrow =
 v_2$ and $\TAG{u_1}{i}\sigma\mydownarrow =
 \senc(\Tag_i(v_1),v_2)$. But by Lemma~\ref{lem:transfoV}, it is equivalent to
 $\delta_\gamma(\TAG{u_1}{i}\sigma\mydownarrow) =
 \senc(\Tag_i(\delta_\gamma(v_1)), \delta_\gamma(v_2))$ and
 $\delta_\gamma(\TAG{u_2}{i}\sigma\mydownarrow) =
 \delta_\gamma(v_2)$. Thus, it is equivalent to:
\begin{itemize}
\item $\delta_\gamma(\TAG{u_1}{i})\delta_\gamma(\sigma)\mydownarrow =
 \senc(\Tag_i(\delta_\gamma(v_1)), \delta_\gamma(v_2))$, and
\item $\delta_\gamma(\TAG{u_2}{i})\delta_\gamma(\sigma)\mydownarrow =
 \delta_\gamma(v_2)$.
\end{itemize}
 Hence it is equivalent to
 \[
\delta_\gamma(\sigma) \vDash
\left(
\begin{array}{l}
\Tag_i(\unTag_i(\ffun(\delta_\gamma(\TAG{u_1}{i}),\delta_\gamma(\TAG{u_2}{i})))) \\
=  \ffun(\delta_\gamma(\TAG{u_1}{i}),\delta_\gamma(\TAG{u_2}{i}))
\end{array}
\right)
\]

\smallskip{}

\emph{Case $\ffun \in \{ \proj_1, \proj_2\}$:} In such a case, we have
that $\TAG{u}{i} = \ffun(\TAG{u_1}{i})$ and $\delta_\gamma(\TAG{u}{i}) =
\ffun(\delta_\gamma(\TAG{u_1}{i}))$. Thus, we deduce that $\TestTag{\TAG{u}{i}}{i}$ is the following formula:
\[
\begin{array}{l}
\TestTag{\TAG{u_1}{i}}{i} \wedge \langle\proj_1(\TAG{u_1}{i}), \proj_2(\TAG{u_1}{i})\rangle =
\TAG{u_1}{i}\\
\end{array}
\]
and $\TestTag{\delta_\gamma(\TAG{u}{i})}{i}$ is the following formula:
\[
\begin{array}{l}
 \TestTag{\delta_\gamma(\TAG{u_1}{i})}{i}\, \wedge \\
 \langle
 \proj_1(\delta_\gamma(\TAG{u_1}{i})),
 \proj_2(\delta_\gamma(\TAG{u_1}{i}))\rangle =
 \delta_\gamma(\TAG{u_1}{i})
\end{array}
\]
Whether we assume that $\sigma \vDash \TestTag{\TAG{u}{i}}{i}$ or
$\delta_\gamma(\sigma) \vDash \TestTag{\delta_\gamma(\TAG{u}{i})}{i}$,
we have by inductive hypothesis that $\sigma \vDash
\TestTag{\TAG{u_1}{i}}{i}$. Thus by Lemma~\ref{lem:Testandequality},
it implies that $\delta_\gamma(\TAG{u_1}{i}\sigma\mydownarrow) =
\delta_\gamma(\TAG{u_1}{i})\delta_\gamma(\sigma)\mydownarrow$.

Actually $\sigma \vDash \langle\proj_1(\TAG{u_1}{i}),
\proj_2(\TAG{u_1}{i})\rangle = \TAG{u_1}{i}$ is equivalent to there
exist
$v_1, v_2$ such that $\TAG{u_1}{i}\sigma\mydownarrow = \langle v_1, v_2
\rangle$, which is, thanks to Lemma~\ref{lem:transfoV}, equivalent to
$\delta_\gamma(\TAG{u_1}{i}\sigma\mydownarrow) = \langle
\delta_\gamma(v_1), \delta_\gamma(v_2) \rangle$.

We have shown that this is equivalent to
\[
\delta_\gamma(\TAG{u_1}{i})\delta_\gamma(\sigma)\mydownarrow = \langle
\delta_\gamma(v_1), \delta_\gamma(v_2) \rangle
\]
Thus, we 
conclude that $\sigma \vDash \langle\proj_1(\TAG{u_1}{i}),
\proj_2(\TAG{u_1}{i})\rangle = \TAG{u_1}{i}$ is equivalent
$\delta_\gamma(\sigma) \vDash \langle
\proj_1(\delta_\gamma(\TAG{u_1}{i})),
\proj_2(\delta_\gamma(\TAG{u_1}{i}))\rangle =
\delta_\gamma(\TAG{u_1}{i})$.
\end{proof}
\medskip

For a term $u$ that does not contain any tag, we defined a way to construct a term that is
properly tagged (\emph{i.e.} $\TAG{u}{i}$). Hence, for a term properly
tagged, we would never have $\senc(n, k)$ where $n$ and $k$ are both
nonces, for example. Instead, we would have
$\senc(\Tag_i(n),k)$. However, even if we can force the processes to
properly tag their terms, we do not have any control on what the
intruder can build. Typically, if the intruder is able to deduce $n$
and $k$, he is allowed to send to a process the term
$\senc(n,k)$. Thus, we want to define the notion of \emph{flawed
  tagged term}.

\medskip


\begin{definition}
Let $u$ be a ground term in normal form. Consider $\gamma $ and $\gamma'$ such that $\{ \gamma, \gamma'\} = \{\alpha,\beta\}$. We define the flawed subterms of $u$ w.r.t. $\gamma$, denoted $\FlawedColor{\gamma}{u}$, as follows:
\[
\FlawedColor{\gamma}{u} \stackrel{\defi}{=} \left\{v\in st(u) \left| \begin{array}{l}\racinebis(v) \in \{0\} \cup \gamma' \text{ and } \\\racine(v) \not\in \{\pk, \vk, \langle\, \rangle\} \end{array}\right.\right\}
\]
We define the flawed subterms of $u$, denoted $\Flawed{u}$, as the set $\Flawed{u} = \FlawedColor{\alpha}{u} \cap \FlawedColor{\beta}{u}$
\end{definition}

\medskip


\begin{lemma}
\label{lem: flawed color and tag term}
Let $u \in \T(\Sigma_i \cup \Sigmazero, \N \cup
\X)$ for some $i \in \{1,\ldots,p\}$. Let $\gamma \in
\{\alpha,\beta\}$ such that $i \in \gamma$. Let $\gamma'$ such that $\gamma' \in \{\alpha,\beta\} \smallsetminus \gamma$. Let $\sigma$ be a ground
substitution in normal form such that
$\fv(u) \subseteq \dom(\sigma)$.

If $\sigma \vDash
\TestTag{\TAG{u}{i}}{i}$ then for all $t \in
\FlawedColor{\gamma}{\TAG{u}{i}\sigma\mydownarrow}$, there exists $x \in
\fv(\TAG{u}{i})$ such that $t \in \FlawedColor{\gamma}{x\sigma}$.
\end{lemma}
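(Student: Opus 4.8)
I would prove the claim by induction on $|u|$, following the inductive definition of $\TAG{u}{i}$ and of $\TestTag{\TAG{u}{i}}{i}$.

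\emph{Base case $|u| = 1$.} If $u$ is a name, then $\TAG{u}{i} = u = \TAG{u}{i}\sigma\mydownarrow$ and $\racinebis(u) = \bot \notin \{0\} \cup \gamma'$, so $\FlawedColor{\gamma}{\TAG{u}{i}\sigma\mydownarrow} = \emptyset$ and there is nothing to prove. If $u$ is a variable $x$, then $\TAG{u}{i} = x$, $\TAG{u}{i}\sigma\mydownarrow = x\sigma$ (already in normal form since $\sigma$ is), and $x \in \fv(\TAG{u}{i})$ works directly.

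\emph{Inductive step.} Write $u = \ffun(u_1, \dots, u_n)$ and put $w_j := \TAG{u_j}{i}\sigma\mydownarrow$. In each case below, $\TestTag{\TAG{u}{i}}{i}$ contains $\TestTag{\TAG{u_j}{i}}{i}$ among its conjuncts for every relevant $j$, so $\sigma \vDash \TestTag{\TAG{u_j}{i}}{i}$ and the induction hypothesis applies to $u_j$; since moreover $\fv(\TAG{u_j}{i}) \subseteq \fv(\TAG{u}{i})$, it suffices to locate every flawed subterm of $\TAG{u}{i}\sigma\mydownarrow$ inside some $w_j$. First assume $\ffun \in \{\pk, \vk, \langle\;\rangle\}$ or $\ffun \in \{\senc, \aenc, \sign, \h\}$: no rewrite rule has any of these symbols (nor $\Tag_i$) at the top of its left-hand side, so no reduction occurs at the root and $\TAG{u}{i}\sigma\mydownarrow$ is $\ffun$ applied to the $w_j$, with a $\Tag_i$ inserted on the first argument in the second family. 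Its root is then either excluded from $\FlawedColor{\gamma}{\cdot}$ by the condition $\racine \notin \{\pk, \vk, \langle\;\rangle\}$, or satisfies $\racinebis = i \in \gamma$ and hence is not flawed --- as does the auxiliary $\Tag_i$ node. Thus every element of $\FlawedColor{\gamma}{\TAG{u}{i}\sigma\mydownarrow}$ lies inside some $w_j$, i.e.\ belongs to $\FlawedColor{\gamma}{w_j}$, and the induction hypothesis on $u_j$ concludes.

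Now assume $\ffun \in \{\sdec, \adec, \checksign, \proj_1, \proj_2\}$. The additional equality conjunct of $\TestTag{\TAG{u}{i}}{i}$ is the crucial ingredient: since $\sigma$ satisfies it, $w_1$ must have exactly the shape that lets the outer destructor fire --- an encryption or a signature whose plaintext is headed by $\Tag_i$ in the first three cases, and a pair in the last two --- and then a short computation with the rewrite system shows that $\TAG{u}{i}\sigma\mydownarrow$ is a subterm of $w_1$. Consequently $\FlawedColor{\gamma}{\TAG{u}{i}\sigma\mydownarrow} \subseteq \FlawedColor{\gamma}{w_1}$, and the induction hypothesis on $u_1$ concludes.

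\emph{The main obstacle.} The delicate case is $\ffun \in \Sigma_i$, where $\TAG{u}{i} = \ffun(\TAG{u_1}{i}, \dots, \TAG{u_n}{i})$ and normalization genuinely mixes the theory over $\Sigma_i$ with the other ones. I would apply Lemma~\ref{lem:app_CRicalp05} to $M = \ffun(w_1, \dots, w_n)$, whose factors are all in normal form since the $w_j$ are. Either $M\mydownarrow = \TAG{u}{i}\sigma\mydownarrow$ equals $n_{min}$ (then $\FlawedColor{\gamma}{\cdot} = \emptyset$, done vacuously) or is a factor of $M$, hence a subterm of some $w_j$, and the induction hypothesis on $u_j$ concludes; or else $\racine(M\mydownarrow) \in \Sigma_i$ and $\fct(M\mydownarrow) \subseteq \fct(M) \cup \{n_{min}\}$, so $M\mydownarrow = C[f_1, \dots, f_m]$ with $C$ a context over $\Sigma_i$ and each $f_k$ either $n_{min}$ or a subterm of some $w_j$. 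In this last subcase every subterm of $M\mydownarrow$ is either rooted in $\Sigma_i$ --- hence has $\racinebis = i \in \gamma$ and is not flawed --- or belongs to $\st(f_k)$ for some $k$; and for each $k$, $\st(f_k)$ is either $\{n_{min}\}$, which contains no flawed term, or contained in $\st(w_j)$ for some $j$, whose flawed elements are handled by the induction hypothesis on $u_j$. The careful bookkeeping --- tracking which factor sits inside which $w_j$ and checking that the interplay of $\fct$, $\racine$ and $\racinebis$ is airtight --- is where the real effort goes; the rest is routine.
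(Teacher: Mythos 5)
Your proof is correct and follows essentially the same route as the paper's: induction on $|u|$ with a case analysis on the head symbol, using the conjuncts of $\TestTag{\TAG{u}{i}}{i}$ to force the destructors to reduce into a subterm of $w_1$, and invoking Lemma~\ref{lem:app_CRicalp05} to handle the $\Sigma_i$ case by locating every flawed subterm inside some $w_j$ before applying the induction hypothesis. The only difference is that you spell out the factor/context bookkeeping in the $\Sigma_i$ case slightly more explicitly than the paper does, which is harmless.
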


\smallskip{}

\begin{proof}
We prove the result by induction on $|u|$.

\medskip

\noindent\emph{Base case $|u| = 1$:} In this case, we have that $u
\in \X \cup \N$ and so $\TAG{u}{i} = u$. If $u \in \N$, then $u\sigma$
and $\TAG{u}{i}\sigma\mydownarrow$ are both in~$\N$, which means that $\FlawedColor{\gamma}{\TAG{u}{i}\sigma\mydownarrow} = \emptyset$. Thus, the result
holds. Otherwise, we have that $u \in \X$ and so $\TAG{u}{i} = u \in \dom(\sigma)$ which
means that the result trivially holds.

\medskip

\noindent\emph{Inductive step $|u| > 1$, i.e. $u = \ffun(u_1,
\ldots, u_n)$.} We do a case analysis on $\ffun$.

\smallskip{}

\emph{Case $\ffun \in \Sigma_i$:} In this case, $\TAG{u}{i} =
\ffun(\TAG{u_1}{i}, \ldots, \TAG{u_n}{i})$ and $\TAG{u}{i}\sigma\mydownarrow =
\ffun(\TAG{u_1}{i}\sigma\mydownarrow, \ldots,\allowbreak
\TAG{u_n}{i}\sigma\mydownarrow)\mydownarrow$. By definition, we know that for
all $t \in \FlawedColor{\gamma}{\TAG{u}{i}\sigma\mydownarrow}$, $\racine(t) \not\in \Sigma_\gamma$. Thus, thanks to Lemma~\ref{lem:app_CRicalp05}, 
for all $t \in \FlawedColor{\gamma}{\TAG{u}{i}\sigma\mydownarrow}$, there exists $k \in
\{1, \ldots, n\}$ such that $t \in \st(\TAG{u_k}{i}\sigma\mydownarrow)$. By
hypothesis, $\sigma \vDash \TestTag{\TAG{u}{i}}{i}$ and so $\sigma \vDash
\TestTag{\TAG{u_k}{i}}{i}$. Thus, by inductive hypothesis, we know that there
exists $x \in \fv(\TAG{u_k}{i})$ such that $t \in
\st(x\sigma)$. Since 
$\fv(\TAG{u_k}{i}) \subseteq \fv(\TAG{u}{i})$, we can conclude.

\smallskip{}

\emph{Case $\ffun \in \{\senc, \aenc, \sign\}$:} In such a case,  $\TAG{u}{i} =
\ffun(\Tag_i(\TAG{u_1}{i}), \TAG{u_2}{i})$ and
$\TAG{u}{i}\sigma\mydownarrow = \ffun(\Tag_i(\TAG{u_1}{i}\sigma\mydownarrow),
\TAG{u_2}{i}\sigma\mydownarrow)$.  Moreover, $\sigma \vDash
\TestTag{\TAG{u}{i}}{i}$ implies that $\sigma \vDash \TestTag{\TAG{u_k}{i}}{i}$,
with $k \in \{1,2\}$. Since $\racinebis(\TAG{u}{i}\sigma\mydownarrow) = i$, then we deduce that :
\[
\FlawedColor{\gamma}{\TAG{u}{i}\sigma\mydownarrow} = \FlawedColor{\gamma}{\TAG{u_1}{i}\sigma\mydownarrow} \cup \FlawedColor{\gamma}{\TAG{u_2}{i}\sigma\mydownarrow}
\]
Thanks to our inductive hypothesis on
$u_1$ and $u_2$, the result holds.

\smallskip{}

\emph{Case $\ffun = \h$:} This case is analogous to the previous one and can be
handled in a similar way.

\smallskip{}

\emph{Case $\ffun = \langle\;\rangle$:} In this case, we have that $\TAG{u}{i} =
\ffun(\TAG{u_1}{i}, \TAG{u_2}{i})$, and $\TAG{u}{i}\sigma\mydownarrow =
\ffun(\TAG{u_1}{i}\sigma\mydownarrow, \TAG{u_2}{i}\sigma\mydownarrow)$. Moreover,
$\sigma \vDash \TestTag{\TAG{u}{i}}{i}$ implies that $\sigma \vDash
\TestTag{\TAG{u_k}{i}}{i}$ with  $k \in \{1,2\}$. By definition, since $\racine(\TAG{u}{i}\sigma\mydownarrow) = \langle\ \rangle$, we
have that $\FlawedColor{\gamma}{\TAG{u}{i}\sigma\mydownarrow} = \FlawedColor{\gamma}{\TAG{u_1}{i}\sigma\mydownarrow}
\cup \FlawedColor{\gamma}{\TAG{u_2}{i}\sigma\mydownarrow}.$
Applying our inductive hypothesis
on~$u_1$ and~$u_2$, we conclude.

\smallskip{}

\emph{Case $\ffun = \{\vk, \pk\}$:} In this case, we have $u = \ffun(v)$ with $v \in \N \cup \X$. Thus $\TAG{u}{i} = u$ and so by definition, $\FlawedColor{\gamma}{u\sigma\mydownarrow} = \emptyset$. Thus, the result trivially holds.

\smallskip{}

\emph{Case $\ffun \in \{ \sdec, \adec, \checksign\}$:} In this case,
we have that $\TAG{u}{i} = \unTag_i(\ffun(\TAG{u_1}{i},
\TAG{u_2}{i}))$ and 

\[
\begin{array}{rcl}
\TestTag{\TAG{u}{i}}{i} &=&
\TestTag{\TAG{u_1}{i}}{i} \wedge \TestTag{\TAG{u_2}{i}}{i} \wedge \\
&& \Tag_i(\TAG{u}{i}) = \ffun(\TAG{u_1}{i}, \TAG{u_2}{i}).
\end{array}
\] 

By
hypothesis, we know that $\sigma \vDash \TestTag{\TAG{u}{i}}{i}$ and more specifically $\Tag_i(\TAG{u}{i})\sigma\mydownarrow = \ffun(\TAG{u_1}{i}, \TAG{u_2}{i})\sigma\mydownarrow$. It implies that there exist $v_1$, $v_2$ such that $\TAG{u_1}{i}\sigma\mydownarrow = \gfun(\Tag_i(v_1), v_2)$ and $\TAG{u}{i}\sigma\mydownarrow = v_1$,
with $\gfun \in \{\senc, \aenc, \sign\}$. Thus, for all $t \in \FlawedColor{\gamma}{\TAG{u}{i}\sigma\mydownarrow}$, $t \in
\FlawedColor{\gamma}{\TAG{u_1}{i}\sigma\mydownarrow}$. Since $\sigma \vDash
\TestTag{\TAG{u_1}{i}}{i}$, the result holds by inductive hypothesis.

\smallskip{}

\emph{Case $\ffun = \proj_j$, $j \in \{1,2\}$:} We have that
$\TAG{u}{i} = \ffun(\TAG{u_1}{i})$ and $\TestTag{\TAG{u}{i}}{i} =
\TestTag{\TAG{u_1}{i}}{i} \wedge \langle \proj_1(\TAG{u_1}{i}),
\proj_2(\TAG{u_1}{i}) \rangle = \TAG{u_1}{i}$. Hence, $\sigma \vDash
\TestTag{\TAG{u}{i}}{i}$ implies that there exist $v_1, v_2$ such that
$\TAG{u_1}{i}\sigma\mydownarrow = \langle v_1, v_2 \rangle$ and
$\TAG{u}{i}\sigma\mydownarrow = v_j$. Thus, for all $t \in
\FlawedColor{\gamma}{\TAG{u}{i}\sigma\mydownarrow}$, $t \in
\FlawedColor{\gamma}{\TAG{u_1}{i}\sigma\mydownarrow}$. 
Since $\sigma
\vDash \TestTag{\TAG{u_1}{i}}{i}$, our inductive hypothesis allows
us to conclude.
\end{proof}

\medskip


\begin{corollary}
\label{lem:flawedandtagterm}
Let $u \in \T(\Sigma_i \cup \Sigmazero, \N \cup
\X)$ for some $i \in \{1,\ldots,p\}$. Let $\gamma \in
\{\alpha,\beta\}$ such that $i \in \gamma$. Let $\sigma$ be a ground
substitution in normal form such that
$\fv(u) \subseteq \dom(\sigma)$.

If $\sigma \vDash
\TestTag{\TAG{u}{i}}{i}$ then for all $t \in
\Flawed{\TAG{u}{i}\sigma\mydownarrow}$, there exists $x \in
\fv(\TAG{u}{i})$ such that $t \in \Flawed{x\sigma}$.
\end{corollary}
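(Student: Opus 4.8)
The plan is to derive this immediately from Lemma~\ref{lem: flawed color and tag term}. First I would record the elementary observation that, for any ground term $w$ in normal form and any $\gamma \in \{\alpha,\beta\}$, one has $\Flawed{w} \subseteq \FlawedColor{\gamma}{w}$. Writing $\{\gamma,\gamma'\} = \{\alpha,\beta\}$, the definition gives $\FlawedColor{\gamma}{w} = \{v \in \st(w) \mid \racinebis(v) \in \{0\}\cup\gamma' \text{ and } \racine(v) \not\in \{\pk,\vk,\langle\;\rangle\}\}$; since $\alpha$ and $\beta$ are disjoint subsets of $\{1,\ldots,p\}$ and $0 \not\in \{1,\ldots,p\}$, it follows that $\Flawed{w} = \FlawedColor{\alpha}{w} \cap \FlawedColor{\beta}{w} = \{v \in \st(w) \mid \racinebis(v) = 0 \text{ and } \racine(v) \not\in \{\pk,\vk,\langle\;\rangle\}\}$, which is contained in $\FlawedColor{\gamma}{w}$ for either choice of $\gamma$ since $\{0\} \subseteq \{0\}\cup\gamma'$.

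Then I would apply Lemma~\ref{lem: flawed color and tag term} with the given index $i$ and the colour $\gamma \in \{\alpha,\beta\}$ such that $i \in \gamma$ (this $\gamma$ is unique because $\alpha$ and $\beta$ partition $\{1,\ldots,p\}$). Let $t \in \Flawed{\TAG{u}{i}\sigma\mydownarrow}$. By the inclusion above, $t \in \FlawedColor{\gamma}{\TAG{u}{i}\sigma\mydownarrow}$, so the lemma provides a variable $x \in \fv(\TAG{u}{i})$ with $t \in \FlawedColor{\gamma}{x\sigma}$, and in particular $t \in \st(x\sigma)$. Here $x\sigma$ is a ground term in normal form, because $\sigma$ is ground and in normal form and $\fv(\TAG{u}{i}) = \fv(u) \subseteq \dom(\sigma)$, so $\FlawedColor{\gamma}{x\sigma}$ and $\Flawed{x\sigma}$ are indeed well defined.

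Finally I would observe that whether a term lies in $\Flawed{w}$ depends only on its being a subterm of $w$ together with the intrinsic conditions $\racinebis(t) = 0$ and $\racine(t) \not\in \{\pk,\vk,\langle\;\rangle\}$ on $t$ itself. Both of these conditions already hold because $t \in \Flawed{\TAG{u}{i}\sigma\mydownarrow}$, and $t \in \st(x\sigma)$ has just been established; hence $t \in \Flawed{x\sigma}$, which is exactly the claim. There is no real obstacle in this argument: essentially all the work sits in Lemma~\ref{lem: flawed color and tag term}, and the only point requiring a moment's care is that \emph{flawedness} of $t$ is a property of $t$ alone (modulo the subterm relation) and therefore transports freely between the coloured sets $\FlawedColor{\gamma}{\cdot}$ and the uncoloured set $\Flawed{\cdot}$ once the subterm membership has been secured.
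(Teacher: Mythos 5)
Your proof is correct and matches the paper's (implicit) argument: the corollary is stated without proof precisely because it follows from Lemma~\ref{lem: flawed color and tag term} via the inclusion $\Flawed{w} = \FlawedColor{\alpha}{w} \cap \FlawedColor{\beta}{w} \subseteq \FlawedColor{\gamma}{w}$ together with the observation that membership in $\Flawed{\cdot}$ is determined by the subterm relation plus conditions intrinsic to $t$. Your write-up makes exactly this derivation explicit, so nothing further is needed.
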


\medskip


\begin{corollary}
Let $u \in \T(\Sigma_i \cup \Sigmazero, \N
  \cup \X)$ for some $i \in \{1,\ldots, p\}$. Let $\gamma \in
  \{\alpha,\beta\}$ such that $i \in \gamma$. Assume that $\fv(u) \subseteq
  \dom(\sigma_0)$.
Moreover, assume that $u$ does not use names in
$\Ec_\alpha \cup \Ec_\beta$.
If  $\delta_\gamma(\sigma_0\mydownarrow) \vDash
  \TestTag{\delta_\gamma(\TAG{u}{i})}{i}$, then for all $t \in
  \FlawedColor{\gamma}{\delta_\gamma(\TAG{u}{i})\delta_\gamma(\sigma_0\mydownarrow)\mydownarrow}$,
  there exists $x \in \fv(\delta_\gamma(\TAG{u}{i}))$ such that $t \in
  \FlawedColor{\gamma}{x\delta_\gamma(\sigma_0\mydownarrow)}$.
\end{corollary}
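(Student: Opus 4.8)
The plan is to reduce the corollary to Lemma~\ref{lem: flawed color and tag term} applied, not to $u$ itself, but to a ``detagged'' witness of $\delta_\gamma(\TAG{u}{i})$. The crucial structural observation, which I would prove first, is that $\delta_\gamma(\TAG{u}{i})$ is again an $i$-tagged term: there is a term $u'\in\T(\Sigma_i\cup\Sigmazero,\N\cup\X)$ with $\delta_\gamma(\TAG{u}{i})=\TAG{u'}{i}$. This goes by induction on $u$, following verbatim the case analysis in the first half of the proof of Lemma~\ref{lem:Testandequality}. The point is that every subterm of $\TAG{u}{i}$ has its root tag $\racinebis$ in $\{0,i,\bot\}$: the only function symbols occurring are those of $\Sigmazero$, of $\Sigma_i$ and of $\Sigma^{\Tag}_i$, every occurrence of $\senc,\aenc,\sign,\h$ carries a $\Tag_i$ directly below it, and no occurrence of $\sdec,\adec,\checksign$ does. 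Since $i\in\gamma$, the first clause in the definition of $\delta_\gamma$ can only fire at a leaf, and, the recursion clauses being selected by $\racinebis\in\{0\}$ or $\racinebis\in\gamma$, no colour switch to $\delta_{\gamma'}$ ever happens. Hence $\delta_\gamma$ simply recurses through the tagged skeleton untouched and rewrites the leaves: a variable $z\in\dom(\rho_\gamma)$ becomes $\rho_\gamma(z)\in\Ec_{\gamma'}$, any other variable is kept, and a free name is either kept or replaced by a name of $\Ec_{\gamma'}$ (using that $u$, and $\sigma_0$ by the standing assumption of Section~\ref{subsec: delta and tag}, mention no name of $\Ec_\alpha\uplus\Ec_\beta$). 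Writing $\theta$ for this leaf substitution, one obtains $\delta_\gamma(\TAG{u}{i})=\TAG{u\theta}{i}$, so $u':=u\theta$ works and is still built over $\Sigma_i\cup\Sigmazero$ because $\theta$ introduces only names; moreover $\fv(\TAG{u'}{i})=\fv(u')=\fv(u)\smallsetminus\dom(\rho_\gamma)=\fv(\delta_\gamma(\TAG{u}{i}))$.

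Next I would instantiate Lemma~\ref{lem: flawed color and tag term} with the term $u'$, the colour $\gamma$ (recall $i\in\gamma$), and the substitution $\delta_\gamma(\sigma_0\mydownarrow)$. This substitution is ground and, by Lemma~\ref{lem:deltaandnormalform} together with the assumption that $\sigma_0$ avoids names of $\Ec_\alpha\uplus\Ec_\beta$, in normal form; its domain is $\dom(\sigma_0)\supseteq\fv(u)\supseteq\fv(u')$. The hypothesis $\delta_\gamma(\sigma_0\mydownarrow)\vDash\TestTag{\TAG{u'}{i}}{i}$ required by that lemma is, since $\TAG{u'}{i}=\delta_\gamma(\TAG{u}{i})$, exactly the hypothesis of the corollary. (Lemma~\ref{lem:TestTagequivalentdelta} would let one phrase things in terms of $\sigma_0\mydownarrow\vDash\TestTag{\TAG{u}{i}}{i}$ instead, but this detour is unnecessary here.) Lemma~\ref{lem: flawed color and tag term} then yields: for every $t\in\FlawedColor{\gamma}{\TAG{u'}{i}\,\delta_\gamma(\sigma_0\mydownarrow)\mydownarrow}$ there exists $x\in\fv(\TAG{u'}{i})$ with $t\in\FlawedColor{\gamma}{x\,\delta_\gamma(\sigma_0\mydownarrow)}$. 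Substituting back $\TAG{u'}{i}=\delta_\gamma(\TAG{u}{i})$ and $\fv(\TAG{u'}{i})=\fv(\delta_\gamma(\TAG{u}{i}))$ gives precisely the claimed statement.

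The only genuine work lies in the first step — checking that $\delta_\gamma$ maps tagged terms to tagged terms — so I expect that, rather than anything else, to be the main (though still routine) obstacle; the rest is a direct instantiation of an already-established lemma. One should be mildly careful in that induction about the first clause of $\delta_\gamma$ possibly firing on a free name equal to some $z\sigma_0\mydownarrow$, but well-definedness of $\rho^+_\gamma$ there is guaranteed by the compatibility hypothesis, and the outcome is still a name, so the tagged shape is preserved.
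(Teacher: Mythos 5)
Your proof is correct and follows essentially the same route as the paper: both reduce the corollary to Lemma~\ref{lem: flawed color and tag term} applied to a term $u'$ (the paper's $v$) with $\TAG{u'}{i}=\delta_\gamma(\TAG{u}{i})$ and to the substitution $\delta_\gamma(\sigma_0\mydownarrow)$, whose normal form is guaranteed by Lemma~\ref{lem:deltaandnormalform}. The only difference is that you spell out the existence of the detagged witness $u'$ by induction, whereas the paper simply asserts it ``by definition of $\delta_\gamma$ and $\TAG{\ }{i}$''.
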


\begin{proof}
By Lemma~\ref{lem:deltaandnormalform}, we deduce
  that $\delta_\gamma(\sigma_0\mydownarrow)$ is a substitution in normal form. 
Moreover, 
since $u \in  \T(\Sigma_i \cup \Sigmazero, \N
 \cup \X)$ and by definition of $\delta_\gamma$,  and $\TAG{\ }{i}$,
 we deduce that there exists $v \in  \T(\Sigma_i \cup \Sigmazero, \N
 \cup \X)$ such that $\TAG{v}{i} = \delta_\gamma(\TAG{u}{i})$. By
 application of Lemma~\ref{lem: flawed color and tag term}, we deduce that for
 all $t \in \FlawedColor{\gamma}{\TAG{v}{i}\delta_\gamma(\sigma_0\mydownarrow)\mydownarrow}$,
 there exists $x \in \fv(\TAG{v}{i})$ such that $t \in
 \st(x\delta_\gamma(\sigma_0\mydownarrow))$. Hence, we conclude 
that for all $t \in
\FlawedColor{\gamma}{\delta_\gamma(\TAG{u}{i})\delta_\gamma(\sigma_0\mydownarrow)\mydownarrow}$, 
there exists $x \in \fv(\delta_\gamma(\TAG{u}{i}))$ such that $t \in \st(x\delta_\gamma(\sigma_0\mydownarrow))$.
\end{proof}

\medskip

\begin{definition}
Let $u \in \T(\Sigma, \N \cup \X)$. The $\alpha$-factors (resp. $\beta$-factors) of $u$, denoted $\fct_\alpha(u)$, are the maximal syntactic subterms of $u$ that are also in $\FlawedColor{\alpha}{u}$ (resp. $\FlawedColor{\beta}{u}$).
\end{definition}

\medskip


\begin{lemma}
\label{lem: flawed color and tag term 2}
Let $u \in \T(\Sigma_i \cup \Sigmazero, \N \cup
\X)$ for some $i \in \{1,\ldots,p\}$. Let $\gamma \in
\{\alpha,\beta\}$ such that $i \in \gamma$. Let $\sigma$ be a ground
substitution in normal form such that
$\fv(u) \subseteq \dom(\sigma)$. 

If $\sigma \vDash \TestTag{\TAG{u}{i}}{i}$ then 
\begin{itemize}
\item either $\TAG{u}{i}\sigma\mydownarrow \in \fct_\gamma(\TAG{u}{i}\sigma)$,
\item otherwise $\fct_\gamma(\TAG{u}{i}\sigma\mydownarrow) \subseteq  \fct_\gamma(\TAG{u}{i}\sigma)$
\end{itemize}
\end{lemma}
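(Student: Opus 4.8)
\emph{Approach.} I would prove the statement by induction on $|u|$, with a case analysis on the root symbol of $u$, following the pattern already used for Lemma~\ref{lem:app_CRicalp05}, Lemma~\ref{lem:change alien} and Lemma~\ref{lem: flawed color and tag term}. The claim is the colour-aware, tagged counterpart of Lemma~\ref{lem:app_CRicalp05}: for the colour-$\gamma$ part of a term, the $\gamma$-factors $\fct_\gamma(\cdot)$ play the role that the alien subterms $\fct(\cdot)$ play for a single component, the symbols $\pk,\vk,\langle\,\rangle$ being treated as colour-neutral glue. The two disjuncts of the conclusion mirror the two cases of Lemma~\ref{lem:app_CRicalp05} (either the normal form collapses onto one of the factors, or its root stays colour-$\gamma$ and the factor set only shrinks), and the $n_{min}$ caveat of that lemma is inherited here.

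\emph{Base and constructor cases.} If $u \in \N$ then $\TAG{u}{i} = u = \TAG{u}{i}\sigma\mydownarrow$ and $\fct_\gamma(u) = \emptyset$, so the second disjunct holds; if $u \in \X$ then $\TAG{u}{i}\sigma = u\sigma$ is already in normal form (since $\sigma$ is), and either $u\sigma$ is $\gamma$-flawed (first disjunct) or it is colour-$\gamma$/glue-rooted (second disjunct, with equality). For the inductive step I write $u = \ffun(u_1,\dots,u_n)$. When $\ffun \in \Sigma_i$, we have $\TAG{u}{i} = \ffun(\TAG{u_1}{i},\dots,\TAG{u_n}{i})$ and the root of $\TAG{u}{i}\sigma$ lies in $\Sigma_i \subseteq \Sigma_\gamma$; I would apply the induction hypothesis to each $u_j$ (controlling the roots of the $\TAG{u_j}{i}\sigma\mydownarrow$ via Lemma~\ref{lem:deltaandnormalform}) and then invoke Lemma~\ref{lem:app_CRicalp05} together with Lemma~\ref{lem:change alien} \emph{inside the component $\Sigma_i$}, treating every maximal non-$\Sigma_i$-rooted subterm as an opaque factor; normalisation then either keeps the root in $\Sigma_i$ with a shrunk $\gamma$-factor set, or collapses onto a factor (possibly $n_{min}$) — precisely the two disjuncts. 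When $\ffun \in \{\senc,\aenc,\sign,\h\}$, we have $\TAG{u}{i} = \ffun(\Tag_i(\TAG{u_1}{i}),\TAG{u_2}{i})$ (resp.\ $\h(\Tag_i(\TAG{u_1}{i}))$); no rewriting applies at the root, the root stays colour-$\gamma$, hence $\fct_\gamma(\TAG{u}{i}\sigma\mydownarrow)$ is the union of the $\fct_\gamma(\TAG{u_j}{i}\sigma\mydownarrow)$, and the second disjunct follows from the induction hypothesis on $u_1,u_2$ — here the first disjunct for a $u_j$ is impossible because a $\ffun$-rooted term is never $\gamma$-flawed, exactly as in the proof of Lemma~\ref{lem: flawed color and tag term}. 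The cases $\ffun = \langle\,\rangle$ and $\ffun \in \{\pk,\vk\}$ are handled similarly, the root being colour-neutral glue (and in the $\pk,\vk$ case the argument is a name or a variable).

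\emph{Destructor and projection cases.} When $\ffun \in \{\sdec,\adec,\checksign\}$, we have $\TAG{u}{i} = \unTag_i(\ffun(\TAG{u_1}{i},\TAG{u_2}{i}))$, and the hypothesis $\sigma \vDash \TestTag{\TAG{u}{i}}{i}$ forces the top redex to fire with a $\Tag_i$-headed plaintext, so $\TAG{u}{i}\sigma\mydownarrow = v_1$ where $\TAG{u_1}{i}\sigma\mydownarrow = \gfun(\Tag_i(v_1),v_2)$ for some $\gfun \in \{\senc,\aenc,\sign\}$ — exactly as in the corresponding case of Lemma~\ref{lem: flawed color and tag term}; the case $\ffun = \proj_j$ is analogous with $\TAG{u_1}{i}\sigma\mydownarrow = \langle v_1,v_2\rangle$. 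I would then apply the induction hypothesis to $u_1$ and use that $v_1$ is a subterm of $\TAG{u_1}{i}\sigma\mydownarrow$ sitting directly under the tag (so its $\gamma$-factors are among those of $\TAG{u_1}{i}\sigma\mydownarrow$) to transfer the conclusion to $\TAG{u}{i}\sigma\mydownarrow$, possibly landing in the first disjunct (the whole normal form being itself one of the $\gamma$-factors originating from $\sigma$, as Lemma~\ref{lem: flawed color and tag term} indicates).

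\emph{Main difficulty.} The hard part will be the destructor and projection cases: there the normal form is \emph{not} obtained by reducing inside a colour-$\gamma$ context but by extracting a strict subterm from under a tag, so the bookkeeping of which of the two disjuncts holds, and how it is transferred from $u_1$ through the induction hypothesis in cooperation with the $\TestTag$ invariant and Lemma~\ref{lem: flawed color and tag term}, needs care; one must also keep track of the $n_{min}$ exception inherited from Lemma~\ref{lem:app_CRicalp05} when it surfaces in the $\Sigma_i$-constructor case. The constructor-in-$\Sigma_i$ case itself is routine once the combination statement of Lemma~\ref{lem:change alien} has been applied componentwise.
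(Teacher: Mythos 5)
Your proposal follows essentially the same route as the paper's proof: induction on $|u|$ with a case analysis on the root symbol, using Lemma~\ref{lem:app_CRicalp05} to push $\gamma$-factors of the normal form down into the arguments in the $\Sigma_i$ case, a direct union decomposition of $\fct_\gamma$ in the tagged-constructor and pair cases, and the $\TestTag$-forced redex (extracting $v_1$ from under $\Tag_i$) in the destructor and projection cases before closing with the induction hypothesis on $u_1$. The only cosmetic differences are that the paper does not need Lemma~\ref{lem:change alien} in the $\Sigma_i$ case, and that it handles the possibility of the first IH-disjunct for an argument $u_j$ by observing that it forces $t = \TAG{u_j}{i}\sigma\mydownarrow$ rather than by ruling it out; neither affects correctness.
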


\begin{proof}
We prove the result by induction on $|u|$.

\medskip

\noindent\emph{Base case $|u| = 1$:} In this case, we have that $u \in \X \cup \N$ and so $\TAG{u}{i} = u$. If $u \in \N$, then $u\sigma$
and $\TAG{u}{i}\sigma\mydownarrow$ are both in~$\N$, which means that $\fct_\gamma(\TAG{u}{i}\sigma) = \emptyset$ and $\fct_\gamma(\TAG{u}{i}\sigma\mydownarrow) = \emptyset$. Thus, the result
holds. Otherwise, we have that $u \in \X$ and so $\TAG{u}{i} = u$. But $\sigma$ is in normal form hence $\TAG{u}{i}\sigma\mydownarrow = \TAG{u}{i}\sigma$. Thus, $\fct_\alpha(\TAG{u}{i}\sigma\mydownarrow) =  \fct_\alpha(\TAG{u}{i}\sigma\mydownarrow)$ and so the result holds.

\medskip

\noindent\emph{Inductive step $|u| > 1$, i.e. $u = \ffun(u_1,
\ldots, u_n)$.} We do a case analysis on $\ffun$.

\smallskip{}

\emph{Case $\ffun \in \Sigma_i$:} In this case, $\TAG{u}{i} =
\ffun(\TAG{u_1}{i}, \ldots, \TAG{u_n}{i})$ and $\TAG{u}{i}\sigma\mydownarrow =
\ffun(\TAG{u_1}{i}\sigma\mydownarrow, \ldots,\allowbreak
\TAG{u_n}{i}\sigma\mydownarrow)\mydownarrow$.

By definition, we know that for all $t \in \fct_\gamma{\TAG{u}{i}\sigma\mydownarrow}$, $\racine(t) \not\in \Sigma_\gamma$. Thus, thanks to Lemma~\ref{lem:app_CRicalp05}, for all $t \in \fct_\gamma{\TAG{u}{i}\sigma\mydownarrow}$, there exists $k \in
\{1, \ldots, n\}$ such that $t \in \fct_\gamma(\TAG{u_k}{i}\sigma\mydownarrow)$. 
By hypothesis, $\sigma \vDash \TestTag{\TAG{u}{i}}{i}$ and so $\sigma \vDash
\TestTag{\TAG{u_k}{i}}{i}$. Thus, by inductive hypothesis, we know that 
\begin{itemize}
\item either $\TAG{u_k}{i}\sigma\mydownarrow \in \fct_\gamma(\TAG{u_k}{i}\sigma)$,
\item otherwise $\fct_\gamma(\TAG{u_k}{i}\sigma\mydownarrow) \subseteq  \fct_\gamma(\TAG{u_k}{i}\sigma)$
\end{itemize}
Thus, if $\TAG{u_k}{i}\sigma\mydownarrow \in \fct_\gamma(\TAG{u_k}{i}\sigma)$ then it means that $t = \TAG{u_k}{i}\sigma\mydownarrow$ and so $t \in \fct_\gamma(\TAG{u_k}{i}\sigma)$ (otherwise it contradicts the notion of maximal subterm). Thus in both cases, we obtain that $t \in \fct_\gamma(\TAG{u_k}{i}\sigma)$. Since $\fct_\gamma(\TAG{u_k}{i}\sigma) \subseteq \fct_\gamma(\TAG{u}{i}\gamma)$ then we deduce that $t \in \fct_\gamma(\TAG{u}{i}\gamma)$ hence the result holds.

\smallskip{}

\emph{Case $\ffun \in \{\senc, \aenc, \sign\}$:} In such a case,  $\TAG{u}{i} =
\ffun(\Tag_i(\TAG{u_1}{i}), \TAG{u_2}{i})$ and
$\TAG{u}{i}\sigma\mydownarrow = \ffun(\Tag_i(\TAG{u_1}{i}\sigma\mydownarrow),
\TAG{u_2}{i}\sigma\mydownarrow)$.  Moreover, $\sigma \vDash
\TestTag{\TAG{u}{i}}{i}$ implies that $\sigma \vDash \TestTag{\TAG{u_k}{i}}{i}$,
with $k \in \{1,2\}$. Since $\racinebis(\TAG{u}{i}\sigma\mydownarrow) = i$, then we deduce that :
\[
\fct_\gamma(\TAG{u}{i}\sigma\mydownarrow) = \fct_\gamma(\TAG{u_1}{i}\sigma\mydownarrow) \cup \fct_\gamma(\TAG{u_2}{i}\sigma\mydownarrow)
\]
Thanks to our inductive hypothesis on
$u_1$ and $u_2$, the result holds.

\smallskip{}

\emph{Case $\ffun = \h$:} This case is analogous to the previous one and can be
handled in a similar way.

\smallskip{}

\emph{Case $\ffun = \langle\;\rangle$:} In this case, we have that $\TAG{u}{i} =
\ffun(\TAG{u_1}{i}, \TAG{u_2}{i})$, and $\TAG{u}{i}\sigma\mydownarrow =
\ffun(\TAG{u_1}{i}\sigma\mydownarrow, \TAG{u_2}{i}\sigma\mydownarrow)$. Moreover,
$\sigma \vDash \TestTag{\TAG{u}{i}}{i}$ implies that $\sigma \vDash
\TestTag{\TAG{u_k}{i}}{i}$ with  $k \in \{1,2\}$. By definition, since $\racine(\TAG{u}{i}\sigma\mydownarrow) = \langle\ \rangle$, we
have that $\fct_\gamma(\TAG{u}{i}\sigma\mydownarrow) = \fct_\gamma(\TAG{u_1}{i}\sigma\mydownarrow)
\cup \fct_\gamma(\TAG{u_2}{i}\sigma\mydownarrow).$
Applying our inductive hypothesis
on~$u_1$ and~$u_2$, we conclude.

\smallskip{}

\emph{Case $\ffun = \{\vk, \pk\}$:} In this case, we have $u = \ffun(v)$ with $v \in \N \cup \X$. Thus $\TAG{u}{i} = u$ and so by definition, $\fct_\gamma(u\sigma\mydownarrow) = \emptyset$. Thus, the result trivially holds.

\smallskip{}

\emph{Case $\ffun \in \{ \sdec, \adec, \checksign\}$:} In this case,
we have that $\TAG{u}{i} = \unTag_i(\ffun(\TAG{u_1}{i},
\TAG{u_2}{i}))$ and 

\[
\begin{array}{rcl}
\TestTag{\TAG{u}{i}}{i} &=&
\TestTag{\TAG{u_1}{i}}{i} \wedge \TestTag{\TAG{u_2}{i}}{i} \wedge \\
&& \Tag_i(\TAG{u}{i}) = \ffun(\TAG{u_1}{i}, \TAG{u_2}{i}).
\end{array}
\] 

By
hypothesis, we know that $\sigma \vDash \TestTag{\TAG{u}{i}}{i}$ and more specifically $\Tag_i(\TAG{u}{i})\sigma\mydownarrow = \ffun(\TAG{u_1}{i}, \TAG{u_2}{i})\sigma\mydownarrow$. It implies that there exists $v_1$, $v_2$ such that $\TAG{u_1}{i}\sigma\mydownarrow = \gfun(\Tag_i(v_1), v_2)$ and $\TAG{u}{i}\sigma\mydownarrow = v_1$, with $\gfun \in \{\senc, \aenc, \sign\}$. 
Thus, for all $t \in \fct_\gamma(\TAG{u}{i}\sigma\mydownarrow)$, $t \in
\fct_\gamma(\TAG{u_1}{i}\sigma\mydownarrow)$. Since $\sigma \vDash
\TestTag{\TAG{u_1}{i}}{i}$, the result holds by inductive hypothesis.

\smallskip{}

\emph{Case $\ffun = \proj_j$, $j \in \{1,2\}$:} We have that
$\TAG{u}{i} = \ffun(\TAG{u_1}{i})$ and $\TestTag{\TAG{u}{i}}{i} =
\TestTag{\TAG{u_1}{i}}{i} \wedge \langle \proj_1(\TAG{u_1}{i}),
\proj_2(\TAG{u_1}{i}) \rangle = \TAG{u_1}{i}$. Hence, $\sigma \vDash
\TestTag{\TAG{u}{i}}{i}$ implies that there exist $v_1, v_2$ such that
$\TAG{u_1}{i}\sigma\mydownarrow = \langle v_1, v_2 \rangle$ and
$\TAG{u}{i}\sigma\mydownarrow = v_j$. Thus, for all $t \in
\fct_\gamma(\TAG{u}{i}\sigma\mydownarrow)$, $t \in
\fct_\gamma(\TAG{u_1}{i}\sigma\mydownarrow)$. 
Since $\sigma
\vDash \TestTag{\TAG{u_1}{i}}{i}$, our inductive hypothesis allows
us to conclude. 
\end{proof}

\subsection{Frame of a tagged process}
\label{subset: frame flagged process}

In this subsection, we will state and prove the lemmas regarding
frames and static equivalence.
Let $\nu \Ec. \Phi$ be a frame such that:
\[
\Phi = \{ w_1
\refer u_1, \ldots, w_n \refer u_n\}.
\]
Let $M$ be a recipe, \emph{i.e.} a term such that $\fv(M) \subseteq \dom(\Phi)$ and $\fn(M) \cap
\Ec = \emptyset$, we define the measure $\M$ as follows:
 \[
\M(M) = (i_\mathsf{max}, |M|)
\]
where $i_\mathsf{max} \in \{1,\ldots,n\}$ is the maximal indice $i$ such that
$w_ i \in \fv(M)$, and $|M|$ denotes the size of the term~$M$, \emph{i.e.} the number of
symbols that occur in~$M$.

We have that $\M(M_1) \stackrel{\defi}{=} (i_1, s_1) < \M(M_2)
\stackrel{\defi}{=} (i_2,s_2)$ when
either $i_1 < i_2$;
or $i_1 = i_2$ and $s_1 < s_2$.

Once again, we denote by $z^\alpha_1, \ldots, z^\alpha_k$ and $z^\beta_1, \ldots, z^\beta_\ell$ the assignment variables of the extended processes that we are considering. 

\begin{definition}
 Let $\quadruple{\Ec}{\p}{\Phi}{\sigma}$ be an extended process, $\prec$ be a total order on $\dom(\Phi) \cup \dom(\sigma)$ and $\vcol$ be a mapping from $\dom(\Phi) \cup \dom(\sigma)$ to $\{ 1, \ldots, p\}$. We say that  $\quadruple{\Ec}{\p}{\Phi}{\sigma}$ is a \emph{derived well-tagged extended process} w.r.t.~$\prec$ and $\vcol$ if for every $x \in \dom(\Phi)$ (resp. $x \in \dom(\sigma)$), there exists $\{\gamma,\gamma'\} = \{\alpha,\beta\}$ such that one of the following condition is
satisfied:
\begin{enumerate}
\item there exist $v$ and $i = \vcol(x) \in \gamma$ such that
  $u = \TAG{v}{i}\sigma$, $\sigma  \vDash
  \TestTag{\TAG{v}{i}}{i}$, and for all $z \in \fv(v)$, $z \prec x$ and either $\vcol(z) \in \gamma$ or there exists $j$ such that $z = z^{\gamma'}_j$; or
\item there exists $M$ such that $\fv(M)
    \subseteq \dom(\Phi) \cap \{z ~|~ z \prec x\}$, $\fn(M) \cap \Ec = \emptyset$ and
    $M\Phi = u$.
\end{enumerate}
where $u = x\Phi$ (resp. $u = x\sigma$). 
\end{definition}

{In the case of variables instantiated through an output, and or an internal communication, it will be the first item that needs to hold; while in the case of variables intantiated through inputs on public channels it is the second item that needs to hold.} Intuitively, the order $\prec$ on $\dom(\Phi) \cup \dom(\sigma)$ corresponds to the order in which the variables in $\dom(\Phi) \cup \dom(\sigma)$ have been introduced along the execution. In particular, we have that $w_1 \prec w_2 \prec \ldots \prec w_n$ where $\dom(\Phi) = \{w_1,\ldots, w_n\}$. In the following, we sometimes simply say that $\quadruple{\Ec}{\p}{\Phi}{\sigma}$ is a derived well-tagged extended process.




\begin{lemma}
  \label{lem:Flawedandframeelement}
  Let $\quadruple{\Ec}{\p}{\Phi}{\sigma}$ be a derived well-tagged extended process w.r.t $\prec$ and $\vcol$. Let $x \in \dom(\Phi)$ (resp. $x \in \dom(\sigma)$) and $t \in   \Flawed{x\Phi\mydownarrow}$ (resp. $t \in \Flawed{x\sigma\mydownarrow}$). We have that there exists $M$ such that $\fv(M) \subseteq \dom(\Phi) \cap \{z ~|~ z \prec x\}$, $\fn(M) \cap \Ec = \emptyset$ and $t \in \Flawed{M\Phi\mydownarrow}$.
\end{lemma}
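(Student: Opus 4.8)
The plan is to proceed by induction on the total order $\prec$, following the two-case structure of the definition of a derived well-tagged extended process. First I would fix $x \in \dom(\Phi) \cup \dom(\sigma)$ and a flawed subterm $t \in \Flawed{u\mydownarrow}$ where $u = x\Phi$ (resp. $u = x\sigma$). By definition there is a choice $\{\gamma,\gamma'\} = \{\alpha,\beta\}$ such that either item~(1) or item~(2) of the definition holds for $x$.

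In the \emph{second case}, $u = M\Phi$ for some recipe $M$ with $\fv(M) \subseteq \dom(\Phi) \cap \{z \mid z \prec x\}$ and $\fn(M) \cap \Ec = \emptyset$. Then $u\mydownarrow = M\Phi\mydownarrow$, so taking the same $M$ gives directly $t \in \Flawed{M\Phi\mydownarrow}$, and we are done without using the induction hypothesis. In the \emph{first case}, we have $u = \TAG{v}{i}\sigma$ with $i = \vcol(x) \in \gamma$ and $\sigma \vDash \TestTag{\TAG{v}{i}}{i}$, and every $z \in \fv(v)$ satisfies $z \prec x$ and either $\vcol(z) \in \gamma$ or $z = z^{\gamma'}_j$ for some $j$. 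Here the key tool is Corollary~\ref{lem:flawedandtagterm} (the corollary of Lemma~\ref{lem: flawed color and tag term}): since $\sigma \vDash \TestTag{\TAG{v}{i}}{i}$, for the flawed subterm $t \in \Flawed{\TAG{v}{i}\sigma\mydownarrow}$ there exists $z \in \fv(\TAG{v}{i}) = \fv(v)$ such that $t \in \Flawed{z\sigma}$. One then has to be slightly careful: $z\sigma$ may not be in normal form, but $t$ is in $st(z\sigma)$ with the appropriate root conditions, and one reduces to $\Flawed{z\sigma\mydownarrow}$ — here I would note that a flawed subterm of $z\sigma$ that survives (i.e. occurs in a flawed position) must correspond to a flawed subterm of $z\sigma\mydownarrow$, or alternatively push the normalization through using Lemma~\ref{lem:deltaandnormalform}-style reasoning on factors; this is the one routine-but-fiddly point.

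Now $z \prec x$, so the induction hypothesis applies to $z$ provided $z \in \dom(\Phi) \cup \dom(\sigma)$. If $\vcol(z) \in \gamma$, then $z$ is an ordinary frame/assignment variable and the induction hypothesis directly yields a recipe $M$ with $\fv(M) \subseteq \dom(\Phi) \cap \{w \mid w \prec z\} \subseteq \dom(\Phi) \cap \{w \mid w \prec x\}$, $\fn(M) \cap \Ec = \emptyset$, and $t \in \Flawed{M\Phi\mydownarrow}$, which is exactly what we need. If instead $z = z^{\gamma'}_j$ is an assignment variable of the \emph{other} color $\gamma'$, then $z \in \dom(\sigma)$ (assuming the assignment has been executed; if not, $v$ could not have been fully instantiated, so this case cannot arise for the $u$ under consideration), and again $z \prec x$, so the induction hypothesis applies to $z$ and gives the desired recipe. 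The main obstacle I anticipate is exactly the cross-color case: one must be sure that the definition of derived well-tagged process forces $z^{\gamma'}_j \prec x$ and that $z^{\gamma'}_j$ is already in $\dom(\sigma)$ at the point where $x$ was introduced, so that the induction is well-founded — this is guaranteed by the fact that $\prec$ records the order of introduction along the execution and $x = \TAG{v}{i}\sigma$ can only be formed once all variables of $v$ are bound in $\sigma$. Chaining these observations closes the induction.
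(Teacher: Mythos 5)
Your proposal is correct and follows essentially the same route as the paper's proof: induction along $\prec$, with case~(2) of the definition handled directly by reusing the recipe $M$, and case~(1) handled by Corollary~\ref{lem:flawedandtagterm} applied to $\TAG{v}{i}$ and the normalized substitution $\sigma\mydownarrow$, followed by the induction hypothesis on the variable $z \prec x$ that it produces. The normalization point you flag is resolved exactly as you suggest (instantiate the corollary with $\sigma\mydownarrow$, for which $z\sigma\mydownarrow$ is already in normal form), and the paper treats your two colour subcases uniformly since only $z \prec x$ is needed to invoke the induction hypothesis.
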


\begin{proof}
We prove this result by induction on $\dom(\Phi) \cup \dom(\sigma)$
with the order $\prec$.

\smallskip{}

\noindent \emph{Base case $u = x\sigma$ or $u = x\Phi$ with $x \prec
  z$ for any $z \in \dom(\Phi) \cup \dom(\sigma)$.} Assume $t \in \Flawed{u\mydownarrow}$.
By definition of a derived well-tagged extended process w.r.t $\prec$ and $\vcol$, one of the
following condition is satisfied:
\begin{enumerate}
\item There exist $v$ and $i = \vcol(x)$ such that $u
=\TAG{v}{i}\sigma$, $\sigma \vDash \TestTag{\TAG{v}{i}}{i}$, and $z
\prec x$ for any $z \in \fv(v)$.  Since $u = \TAG{v}{i}\sigma$ and $\sigma \vDash
 \TestTag{\TAG{v}{i}}{i}$, we can apply Lemma~\ref{lem: flawed color and tag term}
 to $v$ and~$\sigma\mydownarrow$. Thus, we have that there exists $z \in
 \fv(\TAG{v}{i})$ such that $t \in
 \Flawed{z\sigma\mydownarrow}$. However, since $x$ is mimimal w.r.t. $\prec$,
we know that $\fv(v) = \emptyset$.   Hence, we obtain a
contradiction. This case is impossible.
\item There exists $M$ such that $\fv(M) \subseteq \dom(\Phi) \cap \{z
  ~|~ z \prec x\}$, $\fn(M) \cap \Ec = \emptyset$, and $M\Phi =
  u$. Thus, we have that $M\Phi\mydownarrow = u\mydownarrow$, and we
  have that $t \in \Flawed{M\Phi\mydownarrow}$.
\end{enumerate}

\medskip{}

\noindent \emph{Inductive case $u = x\sigma$ or $u = x\Phi$.} Assume $t \in \Flawed{u\mydownarrow}$.
By definition of a derived well-tagged extended process w.r.t $\prec$ and $\vcol$, one of the
following condition is satisfied:
\begin{enumerate}
\item There exist $v$ and $i = \vcol(x)$ such that $u
=\TAG{v}{i}\sigma$, $\sigma \vDash \TestTag{\TAG{v}{i}}{i}$, and $z
\prec x$ for any $z \in \fv(v)$.  Since $u = \TAG{v}{i}\sigma$ and $\sigma \vDash
 \TestTag{\TAG{v}{i}}{i}$, we can apply Lemma~\ref{lem: flawed color and tag term}
 to $v$ and~$\sigma\mydownarrow$. Thus, we have that there exists $z \in
 \fv(\TAG{v}{i})$ such that $t \in
 \Flawed{z\sigma\mydownarrow}$, and we have that $z \prec x$.
Hence, we conclude by applying our induction hypothesis.
\item There exists $M$ such that $\fv(M) \subseteq \dom(\Phi) \cap \{z
  ~|~ z \prec x\}$, $\fn(M) \cap \Ec = \emptyset$, and $M\Phi =
  u$. Thus, we have that $M\Phi\mydownarrow = u\mydownarrow$, and we
  have that $t \in \Flawed{M\Phi\mydownarrow}$.
\end{enumerate}
This allows us to conclude.
\end{proof}


\begin{lemma}
  \label{lem:FlawedColor and frame element direct element}
  Let $\quadruple{\Ec}{\p}{\Phi}{\sigma}$ be a derived well-tagged extended process w.r.t $\prec$ and $\vcol$. Let $\{\gamma,\gamma'\} = \{\alpha,\beta\}$. Let $x \in \dom(\Phi)$ (resp. $x \in \dom(\sigma)$) such that $\vcol(x) \in \gamma$. Let $u = x\Phi$ (resp. $u = x\sigma$). Let $t \in \fct_\gamma(u\mydownarrow)$. We have that 
  \begin{itemize}
  \item either there exists $M$ such that $\fv(M) \subseteq \dom(\Phi) \cap \{z ~|~ z \prec x\}$, $\fn(M) \cap \Ec = \emptyset$ and $t \in \fct_\gamma(M\Phi\mydownarrow)$; 
  \item otherwise there exists $j$ such that $z^{\gamma'}_j \prec x$ and $z^{\gamma'}_j\sigma\mydownarrow = t$.
  \end{itemize}
\end{lemma}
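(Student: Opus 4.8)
The plan is to prove Lemma~\ref{lem:FlawedColor and frame element direct element} by induction on $\dom(\Phi) \cup \dom(\sigma)$ ordered by $\prec$, exactly parallel to the proof of Lemma~\ref{lem:Flawedandframeelement}. Fix $x$ and $u = x\Phi$ (resp.\ $u = x\sigma$) with $\vcol(x) = i \in \gamma$, and take $t \in \fct_\gamma(u\mydownarrow)$. Since $x$ is a variable of a derived well-tagged extended process w.r.t.\ $\prec$ and $\vcol$, one of the two defining conditions holds. In the second case (the attacker case), there is a recipe $M$ with $\fv(M) \subseteq \dom(\Phi) \cap \{z \mid z \prec x\}$, $\fn(M) \cap \Ec = \emptyset$, and $M\Phi = u$; then $M\Phi\mydownarrow = u\mydownarrow$, so $t \in \fct_\gamma(M\Phi\mydownarrow)$ and we are in the first alternative of the conclusion, with no induction needed.

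The interesting case is the first defining condition: there exist $v$ built over $\Sigma_i \cup \Sigmazero$ and $i = \vcol(x) \in \gamma$ such that $u = \TAG{v}{i}\sigma$, $\sigma \vDash \TestTag{\TAG{v}{i}}{i}$, and for every $z \in \fv(v)$ we have $z \prec x$ and either $\vcol(z) \in \gamma$ or $z = z^{\gamma'}_j$ for some $j$. Here I would apply Lemma~\ref{lem: flawed color and tag term 2} to $v$ and $\sigma\mydownarrow$: since $t \in \fct_\gamma(\TAG{v}{i}\sigma\mydownarrow)$, either $\TAG{v}{i}\sigma\mydownarrow = t$ and $t \in \fct_\gamma(\TAG{v}{i}\sigma)$, or $\fct_\gamma(\TAG{v}{i}\sigma\mydownarrow) \subseteq \fct_\gamma(\TAG{v}{i}\sigma)$; in either situation $t \in \fct_\gamma(\TAG{v}{i}\sigma)$. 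Now by analysing how $\fct_\gamma$ distributes over the tagged term structure (using that $\TAG{v}{i}$ is properly $i$-tagged with $i \in \gamma$, so its own $\gamma$-factors are not introduced by the context of $v$ but sit below the tag boundaries), one shows that each $\gamma$-factor of $\TAG{v}{i}\sigma$ either is a $\gamma$-factor of $z\sigma$ for some $z \in \fv(\TAG{v}{i}) = \fv(v)$, or else is already a $\gamma$-factor of a (necessarily variable-free, i.e.\ ground) piece of $\TAG{v}{i}$ itself — but the latter cannot happen since the construction $\TAG{\ }{i}$ only introduces function symbols of $\Sigma_i \cup \Sigma^{\Tag}_i$ and the untagged symbols $\pk, \vk, \langle\,\rangle$ at the context level, so any non-ground factor with $\racinebis$ value in $\{0\} \cup \gamma'$ must come from one of the $z\sigma$. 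Hence there is $z \in \fv(v)$ with $t \in \fct_\gamma(z\sigma\mydownarrow)$ (normalising is harmless by Lemma~\ref{lem:Testandequality} applied to the subterm $z$).

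Given such a $z$, I split on the two possibilities from the well-tagged condition. If $z = z^{\gamma'}_j$ for some $j$, then since $z \prec x$ we directly land in the second alternative of the conclusion, with $z^{\gamma'}_j \sigma\mydownarrow$ having $t$ as a $\gamma$-factor — wait, here one must be slightly careful: the conclusion's second bullet asks for $z^{\gamma'}_j\sigma\mydownarrow = t$, so I need that $t$ is not merely a factor but equal to $z^{\gamma'}_j\sigma\mydownarrow$; this follows because $z^{\gamma'}_j$ is an assignment variable whose value, when it is a $\gamma'$-colored (hence $\gamma$-alien) term, is itself the whole factor, so $t = z^{\gamma'}_j\sigma\mydownarrow$ (and if instead $z^{\gamma'}_j\sigma\mydownarrow$ decomposes further at the top into $\Sigma_i$-symbols, its top would be in $\Sigma_\gamma$, contradicting $t \in \FlawedColor{\gamma}{\cdot}$ being a maximal such subterm of a $\gamma$-colored context). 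Otherwise $\vcol(z) \in \gamma$ and $z \prec x$, so by the induction hypothesis applied to $z$, either there is a recipe $M'$ over $\dom(\Phi) \cap \{z' \mid z' \prec z\} \subseteq \dom(\Phi)\cap\{z' \mid z' \prec x\}$ with $t \in \fct_\gamma(M'\Phi\mydownarrow)$, giving the first alternative; or there is $j$ with $z^{\gamma'}_j \prec z \prec x$ and $z^{\gamma'}_j\sigma\mydownarrow = t$, giving the second.

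The main obstacle I expect is the combinatorial bookkeeping in the middle paragraph: showing cleanly that every element of $\fct_\gamma(\TAG{v}{i}\sigma)$ is accounted for by some $\fct_\gamma(z\sigma)$, $z \in \fv(v)$. This is where the precise interaction between the tagging transformation $\TAG{\ }{i}$, the definition of $\fct_\gamma$ (maximal subterms in $\FlawedColor{\gamma}{\cdot}$), and the hypothesis $\sigma \vDash \TestTag{\TAG{v}{i}}{i}$ all come together, and it essentially re-uses the inductive argument already carried out in Lemma~\ref{lem: flawed color and tag term 2} and Lemma~\ref{lem: flawed color and tag term}; I would structure it as a short induction on $|v|$ mirroring those proofs, or better, derive it as an immediate corollary of Lemma~\ref{lem: flawed color and tag term 2} combined with Lemma~\ref{lem: flawed color and tag term} (noting $\fct_\gamma(u) \subseteq \FlawedColor{\gamma}{u}$). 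Everything else is routine propagation of the two conclusion alternatives through the induction.
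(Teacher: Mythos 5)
Your proof follows essentially the same route as the paper's: the same induction on $\dom(\Phi)\cup\dom(\sigma)$ ordered by $\prec$, the same case split on the two defining conditions of a derived well-tagged process, the same appeal to Lemma~\ref{lem: flawed color and tag term 2} to pass from $\fct_\gamma(\TAG{v}{i}\sigma\mydownarrow)$ to a $\gamma$-factor of some $z\sigma\mydownarrow$ with $z\in\fv(v)$, and the same two-way split on whether $\vcol(z)\in\gamma$ (apply the induction hypothesis) or $z=z^{\gamma'}_j$ (use compatibility to force $\fct_\gamma(z^{\gamma'}_j\sigma\mydownarrow)=\{z^{\gamma'}_j\sigma\mydownarrow\}$, hence $t=z^{\gamma'}_j\sigma\mydownarrow$). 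The "combinatorial bookkeeping" step you flag as the main obstacle is also left at the same level of detail in the paper's own proof, so your treatment is, if anything, slightly more explicit about what remains to be checked.
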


\begin{proof}
We prove this result by induction on $\dom(\Phi) \cup \dom(\sigma)$
with the order $\prec$.

\smallskip{}

\noindent \emph{Base case $u = x\sigma$ or $u = x\Phi$ with $x \prec
  z$ for any $z \in \dom(\Phi) \cup \dom(\sigma)$.} Let $t \in \fct_\gamma(u\mydownarrow)$ and $\vcol(x) \in \gamma$ with $\gamma \in \{\alpha,\beta\}$.
By definition of a derived well-tagged extended process w.r.t $\prec$ and $\vcol$, one of the
following condition is satisfied:
\begin{enumerate}
\item There exist $v$ and $i = \vcol(x)$ such that $u
=\TAG{v}{i}\sigma$, $\sigma \vDash \TestTag{\TAG{v}{i}}{i}$, and $z
\prec x$ for any $z \in \fv(v)$. Since $x$ is minimal by $\prec$ then $\fv(v)  = \emptyset$. Hence $u = \TAG{v}{i}$. Thus we deduce that $\fct_\gamma(u\mydownarrow) = \emptyset$. Hence there is a contradiction with $t \in \fct_\gamma(u\mydownarrow)$ and so this condition cannot be satisfied.
\item There exists $M$ such that $\fv(M) \subseteq \dom(\Phi) \cap \{z
  ~|~ z \prec x\}$, $\fn(M) \cap \Ec = \emptyset$, and $M\Phi =
  u$. Thus, we have that $M\Phi\mydownarrow = u\mydownarrow$ and so the result holds.
  \end{enumerate}
  
  \medskip{}
  
\noindent \emph{Inductive case $u = x\sigma$ or $u = x\Phi$.} Assume $t \in \fct_\gamma(u\mydownarrow)$ and $\vcol(x) \in \gamma$.
By definition of a derived well-tagged extended process w.r.t $\prec$ and $\vcol$, one of the
following condition is satisfied:  
\begin{enumerate}
\item There exist $v$ and $i = \vcol(x) \in \gamma$ such that
  $u = \TAG{v}{i}\sigma$, $\sigma  \vDash
  \TestTag{\TAG{v}{i}}{i}$, and for all $z \in \fv(v)$, $z \prec x$ and either $\vcol(z) \in \gamma$ or there exists $j$ such that $z = z^{\gamma'}_j$.
  Since $u = \TAG{v}{i}\sigma$ and $\sigma \vDash
 \TestTag{\TAG{v}{i}}{i}$, we can apply Lemma~\ref{lem: flawed color and tag term 2}
 to $v$ and~$\sigma\mydownarrow$. Thus we have that $t \in \fct_\gamma(\TAG{v}{i}(\sigma\mydownarrow)$. In such a case, it means that there exists $z \in \fv(v)$ with $z \prec x$ such that $t \in \fct_\gamma(z\sigma\mydownarrow)$ and one of the two conditions is satisfied:
 \begin{itemize}
 \item $\vcol(z) \in \gamma$: In such a case, we can apply our inductive hypothesis on $t$ and $z$ and so the result holds.
 \item there exists $j$ such that $z = z^{\gamma'}_j$: Otherwise, we know by hypothesis that $z^{\gamma'}\sigma\mydownarrow \in \N$ or $\fct_\gamma(z^{\gamma'}\sigma\mydownarrow) = \{ z^{\gamma'}\sigma\mydownarrow\}$. Since $t \in \fct_\gamma(z\sigma\mydownarrow)$, we deduce that $z^{\gamma'}\sigma\mydownarrow \not\in \N$ and so $\fct_\gamma(z^{\gamma'}\sigma\mydownarrow) = \{ z^{\gamma'}\sigma\mydownarrow\}$. But this implies that $t = z\sigma\mydownarrow$. Hence the result holds.
 \end{itemize}
\item There exists $M$ such that $\fv(M) \subseteq \dom(\Phi) \cap \{z
  ~|~ z \prec x\}$, $\fn(M) \cap \Ec = \emptyset$, and $M\Phi =
  u$. Thus, we have that $M\Phi\mydownarrow = u\mydownarrow$, and we
  have that $t \in \fct_\gamma(M\Phi\mydownarrow)$.
\end{enumerate}
This allows us to conclude.
\end{proof}


\begin{lemma}
  \label{lem:flawed,smallerrecipe}
  Let $\quadruple{\Ec}{\p}{\Phi}{\sigma}$ be a derived well-tagged extended process. Let $M$ be a term such that $\fn(M) \cap \Ec = \emptyset$ and $\fv(M) \subseteq \dom(\Phi)$. Let $\ffun(t_1, \ldots, t_m) \in \Flawed{M\Phi\mydownarrow}$.  There exists $M_1, \ldots, M_m$ such that $\fv(M_k) \subseteq \dom(\Phi)$, $\fn(M_k) \cap \Ec = \emptyset$, $M_k\Phi\mydownarrow = t_k$, and $\M(M_k) < \M(M)$, for all $k \in \{1, \ldots, m\}$.
\end{lemma}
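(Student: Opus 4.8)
\textbf{Proof plan for Lemma~\ref{lem:flawed,smallerrecipe}.}
The statement says: given a derived well-tagged extended process $\quadruple{\Ec}{\p}{\Phi}{\sigma}$ and a recipe $M$, if $\ffun(t_1,\ldots,t_m)\in\Flawed{M\Phi\mydownarrow}$, then each argument $t_k$ is itself deducible by a recipe $M_k$ whose measure $\M(M_k)$ is strictly smaller than $\M(M)$. The plan is to proceed by induction on the measure $\M(M)=(i_{\max},|M|)$, doing a case analysis on the outermost structure of the recipe $M$.

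First I would handle the base case where $M\in\N\cup\X$. If $M$ is a name, then $M\Phi\mydownarrow=M\in\N$, so $\Flawed{M\Phi\mydownarrow}=\emptyset$ and the statement is vacuous. If $M=w$ for some $w\in\dom(\Phi)$, then $M\Phi\mydownarrow=w\Phi\mydownarrow$, and the flawed term $\ffun(t_1,\ldots,t_m)$ lies in $\Flawed{w\Phi\mydownarrow}$. Here I would invoke Lemma~\ref{lem:Flawedandframeelement} to relocate this flawed term inside some $\Flawed{N\Phi\mydownarrow}$ where $\fv(N)\subseteq\dom(\Phi)\cap\{z\mid z\prec w\}$ — that is, $N$ uses strictly earlier frame variables, so $\M(N)<\M(w)=\M(M)$. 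Then either $\ffun(t_1,\ldots,t_m)$ already equals $N\Phi\mydownarrow$ (in which case the $t_k$ are obtained by applying the appropriate projection/decryption operators or, more carefully, by descending one more level and using the well-tagged structure of the frame element pointed to by $N$), or it sits strictly below, and I recurse. Actually the cleanest route is: since $N$ has strictly smaller measure, I can apply the induction hypothesis to $N$ once I know $\ffun(t_1,\ldots,t_m)$ is flawed in $N\Phi\mydownarrow$; and if $\ffun(t_1,\ldots,t_m)=N\Phi\mydownarrow$ itself, I take $M_k=\sdec(N,\ldots)$ or $\proj_k(N)$ etc., which still have measure dominated by $i_{\max}$ of $N$, hence $<\M(M)$.

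For the inductive step, $M=\gfun(N_1,\ldots,N_n)$ for some $\gfun\in\Sigma$. The key observation is that a flawed term $\ffun(t_1,\ldots,t_m)$ — one whose root is not $\pk,\vk,\langle\;\rangle$ and whose $\racinebis$ is in $\{0\}\cup\gamma'$ for both $\gamma$ — appearing in $M\Phi\mydownarrow$ must, after normalization, be traceable to a syntactic subterm of one of the $N_j\Phi\mydownarrow$, \emph{unless} it is created by $\gfun$ itself at the root. I would split on whether $\gfun$ is a constructor ($\senc,\aenc,\sign,\h,\pk,\vk,\langle\;\rangle$) or a destructor ($\sdec,\adec,\checksign,\proj_1,\proj_2$) or an alien symbol from some $\Sigma_i$. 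In the constructor case, $M\Phi\mydownarrow=\gfun(N_1\Phi\mydownarrow,\ldots,N_n\Phi\mydownarrow)$ up to a possible top-level $\Tag_i$ wrapping, so either $\ffun(t_1,\ldots,t_m)=M\Phi\mydownarrow$ — but then $\gfun=\ffun$ and I directly set $M_k=N_k$, each of strictly smaller size and no larger $i_{\max}$, hence $\M(N_k)<\M(M)$ — or $\ffun(t_1,\ldots,t_m)\in\Flawed{N_j\Phi\mydownarrow}$ for some $j$, and I apply the induction hypothesis to $N_j$ (smaller size). In the destructor case, normalization may collapse $\gfun$ against a constructor inside some $N_j\Phi\mydownarrow$; I would use Lemma~\ref{lem:app_CRicalp05} / Lemma~\ref{lem:change alien} together with the structure of $\to_\Or$ to argue the flawed subterm is inherited from one of the $N_j\Phi\mydownarrow$, again invoking the induction hypothesis; and crucially, in the collapse case I can exhibit the relevant argument directly via a subrecipe $N_j$ (or a small extension of it) of strictly smaller measure. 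In the alien-symbol case ($\gfun\in\Sigma_i$), Lemma~\ref{lem:app_CRicalp05} tells us $M\Phi\mydownarrow$ either is one of the factors $N_j\Phi\mydownarrow$ or has root in $\Sigma_i$ with factors drawn from the $N_j\Phi\mydownarrow\cup\{n_{min}\}$; since a flawed term has root outside $\Sigma_i$, it must lie inside one of the factors, and the induction hypothesis applies.

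The main obstacle, as usual with these combination arguments, is the destructor case: when $\gfun$ is $\sdec$, $\adec$, $\checksign$, or $\proj_j$ and the rewriting actually fires, the normal form $M\Phi\mydownarrow$ is a proper subterm of some $N_j\Phi\mydownarrow$, and I must be careful that the flawed term I am tracking really does occur inside the \emph{normalized} frame element and that I can build a recipe for its arguments with a strictly smaller measure — the subtlety is ensuring the measure decreases rather than merely not-increases, which is guaranteed because any such subrecipe uses only the frame variables already appearing in $M$ and has strictly fewer symbols, so either $i_{\max}$ drops or $i_{\max}$ stays and $|M_k|<|M|$. Keeping the bookkeeping of which frame variables are "available" (the $\prec$-downward-closed sets) consistent with the definition of a derived well-tagged extended process, via Lemmas~\ref{lem:Flawedandframeelement} and~\ref{lem:flawed,smallerrecipe}'s own induction, is where the care is needed; everything else is a routine structural case chase.
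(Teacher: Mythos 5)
Your plan is essentially the paper's proof: induction on $\M(M)$ with a case analysis on the root symbol of the recipe, using Lemma~\ref{lem:Flawedandframeelement} to push the flawed term to a strictly earlier recipe when $M$ is a frame variable, Lemma~\ref{lem:app_CRicalp05} for the alien-signature case, and the immediate subrecipes $N_k$ when the flawed term is $M\Phi\mydownarrow$ itself. Your self-correction in the base case is the right call — the detour via $\sdec(N,\ldots)$ or $\proj_k(N)$ would not work in general (the key need not be deducible, and a flawed term is never a pair), whereas simply applying the induction hypothesis to the smaller-measure recipe $N$, as the paper does, covers both the equality and the strict-subterm situations.
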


\begin{proof}
 We prove this result by induction on $\M(M)$.

\smallskip{}

 \noindent\emph{Base case $\M(M) = (j, 1)$:} In this case, either we
 have that $M \in \N$ or $M = w_j$. If $M \in \N$, then we have
 $M\Phi\mydownarrow = M \in \N$ and $\Flawed{M\Phi\mydownarrow} =
 \emptyset$. Thus the result holds. If $M = w_j$
then, by Lemma~\ref{lem:Flawedandframeelement},
$\ffun(t_1, \ldots, t_m) \in
 \Flawed{w_j\Phi\mydownarrow}$ implies that 
there exists $M'$ such that:
\begin{itemize}
\item  $\fv(M') \subseteq
 \{w_1, \ldots, w_{j-1}\}$, 
\item $\fn(M) \cap \Ec = \emptyset$, and 
\item $\ffun(t_1,
 \ldots, t_m) \in \Flawed{M'\Phi\mydownarrow}$.
\end{itemize} 
Since $\M(M') < \M(M)$, thanks to
 our inductive hypothesis, we  deduce that there exist $M_1, \ldots, M_m$
 such that for each $k \in \{1,
 \ldots, m\}$, we have that:
$\fv(M_k) \subseteq \dom(\Phi)$, $\fn(M_k) \cap \Ec =
  \emptyset$,
 $M_k\Phi\mydownarrow = t_k$, and 
$\M(M_k) < \M(M') < \M(M)$.

 \medskip

 \noindent \emph{Inductive step $\M(M) > (j,1)$:} In such a case, we have that $M = \ffun(M_1,
 \ldots, M_n)$. Let $t = \gfun(t_1, \ldots, t_m) \in
 \Flawed{M\Phi\mydownarrow}$. We do a case analysis on $\ffun$.

\smallskip{}

 \emph{Case $\ffun \in \Sigma_i \cup \Sigma_{\Tag_i}$ for some $i \in
   \{1,\ldots,p\}$:} In such a  case, $M\Phi\mydownarrow = \ffun(M_1\Phi\mydownarrow, \ldots,
 M_n\Phi\mydownarrow)\mydownarrow$. By definition, we know that for all $t \in
 \Flawed{M\Phi\mydownarrow}$, we have that $\racine(t) \not\in \Sigma_i \cup
 \Sigma_{\Tag_i}$. Thus, thanks to Lemma~\ref{lem:app_CRicalp05}, we 
 deduce that 
\[
\Flawed{M\Phi\mydownarrow} \subseteq  \Flawed{M_1\Phi\mydownarrow}
\cup \ldots \cup \Flawed{M_n\Phi\mydownarrow}.
\] 
Since $\M(M_k)
 < \M(M)$ for any $k \in \{1,\ldots, n\}$, thanks to our inductive hypothesis, we know that there exists $M'_1,
 \ldots, M'_m$ such that $\fv(M'_j) \subseteq \dom(\Phi)$, $\fn(M'_j) \cap \Ec =
 \emptyset$, $M'_j\Phi\mydownarrow = t_i$ and $\M(M'_j) < \M(M_k) < \M(M)$, for
 $j \in \{1, \ldots, m\}$. Hence the result holds.

\smallskip{}

 \emph{Case $\ffun = \langle\;\rangle$:} In such a case, $M\Phi\mydownarrow =
 \ffun(M_1\Phi\mydownarrow, M_2\Phi\mydownarrow)$. Moreover, we have
 that 
$\Flawed{M\Phi\mydownarrow} = \Flawed{M_1\Phi\mydownarrow} \cup
 \Flawed{M_2\Phi\mydownarrow}$. Since $\M(M_1) < \M(M)$, $\M(M_2) < \M(M)$ and
 $t \in \Flawed{M_1\Phi\mydownarrow} \cup \Flawed{M_2\Phi\mydownarrow}$, we
 conclude by applying our inductive hypothesis on $M_1$ (or $M_2$).

\smallskip{}

 \emph{Case $\ffun \in \{\pk, \vk\}$:} In this case, $M\Phi\mydownarrow =
 \ffun(M_1\Phi\mydownarrow)$ and we have
 that $\Flawed{M\Phi\mydownarrow} = \emptyset$. Hence the result trivially holds.

\smallskip{}
 
 \emph{Case $\ffun \in \{\senc, \aenc, \sign\}$:} In such a case, we
 have that
 $M\Phi\mydownarrow = \ffun(M_1\Phi\mydownarrow, M_2\Phi\mydownarrow)$. We need
 to distinguish whether $\racine(M_1\Phi\mydownarrow) = \Tag_i$ for
 some $i \in \{1,\ldots,p\}$ or not.

 If $\racine(M_1\Phi\mydownarrow) = \Tag_i$ for some $i \in
 \{1,\ldots, p\}$,  then there exists $u_1$ such that $M_1\Phi\mydownarrow =
 \Tag_i(u_1)$. Hence, we have that $\Flawed{M_1\Phi\mydownarrow} =
 \Flawed{u_1}$. We have also that:
\[
 \Flawed{M\Phi\mydownarrow} = \Flawed{u_1} \cup
 \Flawed{M_2\Phi\mydownarrow}.
\]
 We deduce that $t \in \Flawed{M_1\Phi\mydownarrow}$ or $t \in
 \Flawed{M_2\Phi\mydownarrow}$. Since $\M(M_1) < \M(M)$ and $\M(M_2) < \M(M)$,
 we conclude by applying our inductive hypothesis on $M_1$ or $M_2$.

 Otherwise $\racine(M_1\Phi\mydownarrow) \not\in \{\Tag_1, \ldots, \Tag_p\}$. In such a
 case, $\Flawed{M\Phi\mydownarrow} = \Flawed{M_1\Phi\mydownarrow} \cup
 \Flawed{M_2\Phi\mydownarrow} \cup \{M\Phi\mydownarrow\}$. If $t =
 M\Phi\mydownarrow$, we have that $t_1 = M_1\Phi\mydownarrow$, $t_2 =
 M_2\Phi\mydownarrow$ and $\M(M_1) < \M(M)$, $\M(M_2) < \M(M)$. Thus the result
 holds. If $t \in \Flawed{M_1\Phi\mydownarrow} \cup
 \Flawed{M_2\Phi\mydownarrow}$, we conclude by applying our inductive
 hypothesis on $M_1$ or $M_2$.

\smallskip{}

 \emph{Case $\ffun = \h$:} This case is analogous to the previous one and can
 be handled similarly.

\smallskip{}

 \emph{Case $\ffun \in \{\sdec, \adec, \checksign\}$:} In such a case,
 we have to distinguish two cases depending on whether $\ffun$ is reduced in
 $M\Phi\mydownarrow$, or not.

 If $\ffun$ is not reduced, \emph{i.e.} $M\Phi\mydownarrow =
 \ffun(M_1\Phi\mydownarrow, M_2\Phi\mydownarrow)$, then we have that
 \[
\Flawed{M\Phi\mydownarrow} = \{M\Phi\mydownarrow\} \cup
 \Flawed{M_1\Phi\mydownarrow} \cup \Flawed{M_2\Phi\mydownarrow}.
\] 
Thus if $t =
 M\Phi\mydownarrow$, we have that $t_1 = M_1\Phi\mydownarrow$, $t_2 =
 M_2\Phi\mydownarrow$ and $\M(M_1) < \M(M)$, $\M(M_2) < \M(M)$. Thus the result
 holds. Otherwise, we have that $t \in \Flawed{M_1\Phi\mydownarrow}$ or $t \in
 \Flawed{M_2\Phi\mydownarrow}$. Since $\M(M_1) < \M(M)$, $\M(M_2) < \M(M)$, we
 can conclude by applying our inductive hypothesis on~$M_1$ or~$M_2$.

 If $\ffun$ is reduced, then we have that $M_1\Phi\mydownarrow =
 \ffun'(u_1,u_2)$ with $M\Phi\mydownarrow = u_1$ and $\ffun' \in \{\senc,
 \aenc, \sign\}$. If $\racine(u_1) = \Tag_i$ for some $i \in \{1,\ldots,p\}$, then
 we have that there exists $u'_1$ such that $u_1 = \Tag_i(u'_1)$,
 $\Flawed{M\Phi\mydownarrow} = \Flawed{u'_1}$ and $\Flawed{M_1\Phi\mydownarrow}
 = \Flawed{u'_1} \cup \Flawed{u_2}$. Thus, we have that 
$\Flawed{M\Phi\mydownarrow} \subseteq
 \Flawed{M_1\Phi\mydownarrow}$.
Otherwise, if $\racine(u_1) \not\in \{\Tag_1,\ldots,\Tag_p\}$, then we
have that 
\[
\Flawed{M_1\Phi\mydownarrow} =
 \{M_1\Phi\mydownarrow\} \cup \Flawed{u_1} \cup \Flawed{u_2}
\]
 and
 $\Flawed{M\Phi\mydownarrow} = \Flawed{u_1}$. Thus, 
 $\Flawed{M\Phi\mydownarrow} \subseteq \Flawed{M_1\Phi\mydownarrow}$. In
 both cases, we have that $\Flawed{M\Phi\mydownarrow} \subseteq
 \Flawed{M_1\Phi\mydownarrow}$ and since $\M(M_1) < \M(M)$, we can conclude by
 applying our inductive hypothesis on~$M_1$.
\end{proof}

\newcommand{\fctpair}{\fct_{\langle\ \rangle}}

In the following lemma, we will use the factors of the signature only composed of $\langle\ \rangle$, denoted $\fctpair$. Typically, for all terms $u$, for all context built only on $\langle\ \rangle$, for all terms $u_1, \ldots, u_n$, if $u = C[u_1, \ldots, u_n]$ and for all $k \in \{1, \ldots, n\}$, $\racine(u_i) \neq \langle\ \rangle$ then $\fctpair(u) = \{ u_1, \ldots, u_n\}$.


\begin{lemma}
  \label{lem:FlawedColor and frame element direct element 2}
  Let $\quadruple{\Ec}{\p}{\Phi}{\sigma}$ be a derived well-tagged extended process w.r.t $\prec$ and $\vcol$. Assume that for all assignment variables $z$, $\new \Ec. \Phi \not\vdash z\sigma\mydownarrow$. Let $M$ such that $\fv(M) \subseteq \dom(\Phi)$, $\fn(M) \cap \Ec = \emptyset$. For all $\{\gamma,\gamma'\} = \{\alpha,\beta\}$, for all $t \in \fct_\gamma(M\Phi\mydownarrow)$, if $t \not\in \fctpair(M\Phi\mydownarrow)$ and for all assignment variable $z$, for all $w \in \dom(\Phi)$, $z \prec w$ and $\M(w) \leq \M(M)$ implies $z\sigma\mydownarrow \neq t$ then there exists $M'$ such that $\M(M') < \M(M)$, $\fn(M) \cap \Ec = \emptyset$ and $t \in \fctpair(M'\Phi\mydownarrow)$.
\end{lemma}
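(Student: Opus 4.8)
The plan is to prove the statement by \emph{strong induction on the measure $\M(M)$}, with a case analysis on the root symbol of $M$, following the pattern of the proof of Lemma~\ref{lem:flawed,smallerrecipe} but tracking the $\gamma$-factors $\fct_\gamma$ together with the pair-factors $\fctpair$ rather than the downward-closed set $\Flawed{\cdot}$. The recurring observation, which disposes of most subcases, is the following dichotomy for $M\Phi\mydownarrow$: whenever $\racinebis(M\Phi\mydownarrow)\in\{0\}\cup\gamma'$ and $\racine(M\Phi\mydownarrow)\notin\{\pk,\vk,\langle\ \rangle\}$, the term $M\Phi\mydownarrow$ is simultaneously its own unique $\gamma$-factor \emph{and} its own unique pair-factor, so $t\in\fct_\gamma(M\Phi\mydownarrow)$ would force $t=M\Phi\mydownarrow\in\fctpair(M\Phi\mydownarrow)$, contradicting the hypothesis $t\notin\fctpair(M\Phi\mydownarrow)$; such subcases are therefore vacuous. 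In every surviving subcase both $\fct_\gamma$ and $\fctpair$ commute with the outermost symbol of $M$, so $t$ is located inside $M_k\Phi\mydownarrow$ for some strict subrecipe $M_k$ of $M$ with $t\notin\fctpair(M_k\Phi\mydownarrow)$ and $\M(M_k)<\M(M)$; one then concludes either immediately (if $t\in\fctpair(M_k\Phi\mydownarrow)$, take $M'=M_k$) or by the induction hypothesis applied to $M_k$, noting that the ``assignment-variable'' side condition on $t$ is inherited, since it quantifies over $w$ with $\M(w)\le\M(M_k)<\M(M)$.

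For the base cases: if $M$ is a name then $\fct_\gamma(M\Phi\mydownarrow)=\emptyset$ and there is nothing to prove; if $M=w_j\in\dom(\Phi)$, I would appeal to the derived-well-tagged structure and to Lemma~\ref{lem:FlawedColor and frame element direct element} (and, when needed, Lemma~\ref{lem: flawed color and tag term}). The point is that the alternative of Lemma~\ref{lem:FlawedColor and frame element direct element} producing an assignment variable $z^{\gamma'}_{j'}\prec w_j$ with $z^{\gamma'}_{j'}\sigma\mydownarrow=t$ is excluded by the side condition of the present lemma instantiated with $w=w_j$ (since $\M(w_j)=\M(M)$), while its other alternative yields a recipe $M''$ with $\fv(M'')\subseteq\{z\mid z\prec w_j\}$, hence $\M(M'')<\M(M)$, and $t\in\fct_\gamma(M''\Phi\mydownarrow)$, from which we conclude by the induction hypothesis on $M''$.

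For the inductive step, write $M=\ffun(M_1,\dots,M_n)$. If $\ffun\in\Sigma_i\cup\Sigma^{\Tag}_i$ with $i\in\gamma$, Lemmas~\ref{lem:app_CRicalp05} and~\ref{lem:change alien} show that every $\gamma$-factor of $M\Phi\mydownarrow$ is already a $\gamma$-factor of some $M_k\Phi\mydownarrow$, and we recurse; if instead $i\in\gamma'$, normalisation leaves $M\Phi\mydownarrow$ with $\racinebis\in\gamma'$ and root not in $\{\pk,\vk,\langle\ \rangle\}$, so the dichotomy above makes the subcase vacuous. If $\ffun=\langle\ \rangle$, both $\fct_\gamma$ and $\fctpair$ distribute over the pair and we recurse into the component $M_k$ that contains $t$. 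If $\ffun\in\{\pk,\vk\}$, the sort discipline forces $M_1\Phi\mydownarrow$ to be a name, so $\fct_\gamma(M\Phi\mydownarrow)=\emptyset$ and the case is vacuous. If $\ffun\in\{\senc,\aenc,\sign,\h\}$, then $\racinebis(M\Phi\mydownarrow)$ is $0$, or lies in $\gamma'$, or equals some $i\in\gamma$; the first two are vacuous by the dichotomy, and in the third $M_1\Phi\mydownarrow$ has the shape $\Tag_i(\cdot)$ with $i\in\gamma$, so $\fct_\gamma$ passes through and we recurse into $M_1$ or $M_2$. Finally, if $\ffun\in\{\sdec,\adec,\checksign,\proj_1,\proj_2\}$ and the destructor is \emph{not} reduced in $M\Phi\mydownarrow$, then $\racine(M\Phi\mydownarrow)\in\Sigmazero\setminus\{\pk,\vk,\langle\ \rangle\}$ and $\racinebis(M\Phi\mydownarrow)=0$, so the case is again vacuous.

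The remaining case --- a destructor that \emph{is} reduced, say $M=\sdec(M_1,M_2)$ with $M_1\Phi\mydownarrow=\senc(u_1,u_2)$, $M_2\Phi\mydownarrow=u_2$, $M\Phi\mydownarrow=u_1$ (the cases $\adec$, $\checksign$, $\proj_j$ being analogous) --- is the one I expect to be the main obstacle. Unlike $\Flawed{\cdot}$, which is closed under subterms so that $\Flawed{M\Phi\mydownarrow}\subseteq\Flawed{M_1\Phi\mydownarrow}$ (allowing a direct recursion on $M_1$ in Lemma~\ref{lem:flawed,smallerrecipe}), the operator $\fct_\gamma$ keeps only the \emph{maximal} flawed subterms, and $t$ need not be a $\gamma$-factor of $M_1\Phi\mydownarrow$: if $\racinebis(\senc(u_1,u_2))=0$ then $\fct_\gamma(M_1\Phi\mydownarrow)=\{M_1\Phi\mydownarrow\}\not\ni t$. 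To resolve this I would exploit that $\TAG{\cdot}{i}$ always tags $\senc$, $\aenc$, $\sign$: an untagged outermost $\senc$ in $M_1\Phi\mydownarrow$ can only arise either because $M_1$ is itself a constructor recipe $\senc(N,N')$ --- and then $N\Phi\mydownarrow=u_1$ with $\M(N)<\M(M_1)<\M(M)$, so we recurse on $N$ --- or because $M_1$ bottoms out at a frame or assignment variable, and then the derived-well-tagged structure together with Lemmas~\ref{lem: flawed color and tag term 2} and~\ref{lem:FlawedColor and frame element direct element} lets us replace that variable by a $\prec$-smaller, hence $\M$-smaller, recipe. One also has to peel off the outermost $\langle\ \rangle$-layers of $u_1$: the hypothesis $t\notin\fctpair(u_1)$ guarantees that $t$ sits strictly inside a pair-factor of $u_1$ whose root is coloured by some $i\in\gamma$, so that $t$ is again a $\gamma$-factor inside a $\gamma$-coloured term and the recursion (on $\M$) applies. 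The assignment-secrecy hypothesis $\new\,\Ec.\Phi\not\vdash z\sigma\mydownarrow$ is precisely what prevents this peeling from getting stuck on a shared value and, together with the side condition on $t$, closes the last loophole; the remaining bookkeeping is routine and parallels the arguments of Subsections~\ref{subsec: delta and tag} and~\ref{subset: frame flagged process}.
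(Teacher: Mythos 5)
There is a genuine gap in the two places where you declare subcases vacuous by your ``dichotomy''. The dichotomy itself is correct as stated: \emph{if} $\racinebis(M\Phi\mydownarrow)\in\{0\}\cup\gamma'$ and $\racine(M\Phi\mydownarrow)\notin\{\pk,\vk,\langle\ \rangle\}$, then $\fct_\gamma(M\Phi\mydownarrow)=\fctpair(M\Phi\mydownarrow)=\{M\Phi\mydownarrow\}$ and the hypothesis $t\notin\fctpair(M\Phi\mydownarrow)$ kills the case. But when $\ffun\in\Sigma_i\cup\Sigma^{\Tag}_i$ with $i\in\gamma'$, it is false that normalisation necessarily leaves $M\Phi\mydownarrow$ rooted in $\gamma'$: by Lemma~\ref{lem:app_CRicalp05} the normal form may \emph{collapse to an alien factor}, e.g.\ $\sdecDiffie(M_1,M_2)$ with $\ffun\in\Sigma_{\gamma'}$ can reduce to a pair whose components are $\gamma$-rooted and contain flawed material underneath. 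In that situation $t$ is a genuine $\gamma$-factor of $M\Phi\mydownarrow$ sitting strictly inside a pair-factor, the premises of the lemma are satisfied, and the case is not vacuous. The same problem occurs in your base case: Lemma~\ref{lem:FlawedColor and frame element direct element} only applies when $\vcol(w_j)\in\gamma$, and you give no argument for $\vcol(w_j)\in\gamma'$, where a $\gamma'$-coloured frame entry can perfectly well expose $\gamma$-factors under pairs.

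The missing idea is the colour-swapping double recursion the paper uses precisely here. Since $t\notin\fctpair(M\Phi\mydownarrow)$, there is a pair-factor $u$ of $M\Phi\mydownarrow$ with $\racinebis(u)\in\gamma$ and $t\in\fct_\gamma(u)$; such a $u$ is then a $\gamma'$-factor of $M\Phi\mydownarrow$, and, being a pair-factor of a deducible term, $u$ is itself deducible from $\new\,\Ec.\Phi$ --- which, by the global hypothesis that assignment values are not deducible, discharges the assignment-variable side condition \emph{for $u$ and the swapped colour $\gamma'$}. One then applies the lemma (same measure argument) to $u$, $\gamma'$ and $M$ to obtain $M'$ with $\M(M')<\M(M)$ and $u\in\fctpair(M'\Phi\mydownarrow)$, whence $t\in\fct_\gamma(M'\Phi\mydownarrow)\setminus\fctpair(M'\Phi\mydownarrow)$, and concludes by a second application of the induction hypothesis to $t$, $\gamma$ and $M'$. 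This is why the statement quantifies over both orderings of $\{\gamma,\gamma'\}$ and why the non-deducibility of assignment values is assumed globally. Your treatment of the reduced-destructor case is vaguer than the paper's (which simply observes that an untagged plaintext root makes the encryption a member of $\Flawed{M_1\Phi\mydownarrow}$ and invokes Lemma~\ref{lem:flawed,smallerrecipe} to get a smaller recipe for $M\Phi\mydownarrow$), but that part is repairable; the vacuity claims are not.
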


\begin{proof}
We do a proof by induction on $\M(M)$:

\medskip

\noindent\emph{Base case $\M(M) = (0, 1)$:} In this case, we have that $M \in \N$ which means that $M\Phi\mydownarrow = M \in \N$ and $\fct_\gamma(M\Phi\mydownarrow) = \emptyset$. Thus the result holds. 

\medskip

\noindent\emph{Base case $\M(M) = (j,1)$:} In this case, we have $M = w_j$. Let $\{\gamma,\gamma'\} = \{\alpha,\beta\}$. Let $t \in \fct_\gamma(M\Phi\mydownarrow)$ such that $t \not\in \fctpair(M\Phi\mydownarrow)$. We do a case analysis on $\vcol(w_j)$:

\emph{Case $\vcol(w_j) \in \gamma$:} In this case, since for all assignment variable $z$, for all $w \in \dom(\Phi)$, $z \prec w$ and $\M(w) \leq \M(M)$ implies $z\sigma\mydownarrow \neq t$, than we can deduce that for all assignment variables $z \prec w_j$, $z\sigma\mydownarrow \neq t$. Thus by Lemma~\ref{lem:FlawedColor and frame element direct element}, we obtain that there exists $M'$ such that $\fv(M') \subseteq \dom(\Phi) \cap \{z ~|~ z \prec x\}$, $\fn(M') \cap \Ec = \emptyset$ and $t \in \fct_\gamma(M'\Phi\mydownarrow)$. $\fv(M') \subseteq \dom(\Phi) \cap \{z ~|~ z \prec x\}$ implies that $\M(M') = (k,k')$ with $k < j$ and so $\M(M') < \M(M)$. If $t \in \fctpair(M'\Phi\mydownarrow)$ then the result holds. Otherwise, we can apply our inductive hypothesis on $t$ and $M'$ and so the result holds.
 
 \emph{Case $\vcol(w_j) \in \gamma'$ :} Since $t \not\in \fctpair(M\Phi\mydownarrow)$, we deduce that there exists $u \in \fctpair(M\Phi\mydownarrow)$ s.t. $\racinebis(u) = \gamma$ and $t \in \fct_\gamma(u)$. Note that $\racinebis(u) \not\in \gamma' \cup \{0\}$ otherwise it would contradict the fact that $t \in \fct_\gamma(M\Phi\mydownarrow)$. But $u \in \fct_{\gamma'}(M\Phi\mydownarrow)$. Moreover, $u \in \fctpair(M\Phi\mydownarrow)$ implies that $u$ is deducible in $\new\ \Ec. \Phi$. Thus we deduce that for all assignment variables $z$, $z\sigma\mydownarrow \neq u$. By applying the same proof as case $\vcol(w_j) \in \gamma$, we deduce that there exists $M'$ such that $\fn(M') \cap \Ec = \emptyset$, $\M(M') < \M(M)$ and $u \in \fctpair(M'\Phi\mydownarrow)$. But $t \in \fct_\gamma(u)$, $\racinebis(u) = \gamma$ and $u \in \fctpair(M'\Phi\mydownarrow)$ implies that $t \in \fct_\gamma(M'\Phi\mydownarrow)$ and $t \not\in \fctpair(M'\Phi\mydownarrow)$. Hence we can apply our inductive hypothesis on $M'$ and $t$ which allows us to conclude.

\medskip

 \noindent \emph{Inductive step $\M(M) > (j,1)$:} In such a case, we have that $M = \ffun(M_1,\ldots, M_n)$. Let $t \in \fct_\gamma(M\Phi\mydownarrow)$ such that $t \not\in \fctpair(M\Phi\mydownarrow)$. We do a case analysis on $\ffun$.

\smallskip{}

 \emph{Case $\ffun \in \Sigma_i \cup \Sigma_{\Tag_i}$ for some $i \in \gamma$:} In such a  case, $M\Phi\mydownarrow = \ffun(M_1\Phi\mydownarrow, \ldots,
 M_n\Phi\mydownarrow)\mydownarrow$. By definition, we know that for all $t \in
 \fct_\gamma(M\Phi\mydownarrow)$, we have that $\racine(t) \not\in \Sigma_i \cup
 \Sigma_{\Tag_i}$. Thus, thanks to Lemma~\ref{lem: flawed color and tag term 2}, we 
 deduce that there exists 
\[
\fct_\gamma(M\Phi\mydownarrow) \subseteq  \fct_\gamma(M_1\Phi\mydownarrow)
\cup \ldots \cup \fct_\gamma(M_n\Phi\mydownarrow).
\] 
Thus there exists $k \in \{1, \ldots, n\}$ such that $t \in \fct_\gamma(M_k\Phi\mydownarrow)$. If $t \in \fctpair(M_k\Phi\mydownarrow)$ then the result holds, else we apply our inductive hypothesis on $t$ and $M_k$ and so the result also holds.

\smallskip{}

 \emph{Case $\ffun \in \Sigma_i \cup \Sigma_{\Tag_i}$ for some $i \not\in \gamma$:} In such a  case, $M\Phi\mydownarrow = \ffun(M_1\Phi\mydownarrow, \ldots, M_n\Phi\mydownarrow)\mydownarrow$. We assumed that $t \not\in \fctpair(M\Phi\mydownarrow)$ hence there exists $u \in \fctpair(M\Phi\mydownarrow)$ s.t. $\racinebis(u) = \gamma$ and $t \in \fct_\gamma(u)$. But it also implies that $\racinebis(M\Phi\mydownarrow) \in \gamma \cup \{ 0 \}$. Hence, by applying Lemma~\ref{lem:app_CRicalp05}, we deduce that there exists $k \in \{1, \ldots, n\}$ such that $M\Phi\mydownarrow \in \st(M_k\Phi\mydownarrow)$. Moreover, it also implies that $u \in \fct_\gamma'(M_k\Phi\mydownarrow)$. 
 
 If $u \in \fctpair(M_k\Phi\mydownarrow)$ then we deduce that $\racine(M_k\Phi\mydownarrow) \not\in \gamma'$ and so, by Lemma~\ref{lem:app_CRicalp05}, $M_k\Phi\mydownarrow = M\Phi\mydownarrow$. Since we had $t \not\in \fctpair(M\Phi\mydownarrow)$, then we also have $t \not\in \fctpair(M_k\Phi\mydownarrow)$ and so we conclude by applying our inductive hypothesis on $t$ and $M_k$.
 
 if $u \not\in \fctpair(M_k\Phi\mydownarrow)$ then we can apply our inductive hypothesis on $u, \gamma'$ and $M_k$. Indeed, since $u \in \fctpair(M\Phi\mydownarrow)$, then $u$ is deducible in $\new\ \Ec. \Phi$ and so we deduce that for all assignment variable $z$, $z\sigma\mydownarrow \neq u$. Hence we obtain that there exists $M'$ such that $\M(M') < \M(M_k)$, $\fn(M) \cap \Ec = \emptyset$ and $u \in \fctpair(M'\Phi\mydownarrow)$. But $t \in \fct_\gamma(u)$ and $u \in \fctpair(M'\Phi\mydownarrow)$. Hence we deduce that $t \in \fct_\gamma(M'\Phi\mydownarrow)$ and $t \not\in \fctpair(M'\Phi\mydownarrow)$. We conclude by applying once again our inductive hypothesis but on $t, \gamma$ and $M'$.
 
 \smallskip{}

 \emph{Case $\ffun = \langle\;\rangle$:} In such a case, $M\Phi\mydownarrow =
 \ffun(M_1\Phi\mydownarrow, M_2\Phi\mydownarrow)$. Moreover, we have
 that 
$\fct_\gamma(M\Phi\mydownarrow) = \fct_\gamma(M_1\Phi\mydownarrow) \cup
 \fct_\gamma(M_2\Phi\mydownarrow)$. Since $\M(M_1) < \M(M)$, $\M(M_2) < \M(M)$ and
 $t \in \fct_\gamma(M_1\Phi\mydownarrow) \cup \fct_\gamma(M_2\Phi\mydownarrow)$, we
 conclude by applying our inductive hypothesis on $t$ and $M_1$ (or $M_2$).

\smallskip{}

 \emph{Case $\ffun \in \{\pk, \vk\}$:} In this case, $M\Phi\mydownarrow =
 \ffun(M_1\Phi\mydownarrow)$ and we have
 that $\fct_\gamma{M\Phi\mydownarrow} = \emptyset$. Hence the result trivially holds.

\smallskip{}
 
 \emph{Case $\ffun \in \{\senc, \aenc, \sign\}$:} In such a case, we
 have that
 $M\Phi\mydownarrow = \ffun(M_1\Phi\mydownarrow, M_2\Phi\mydownarrow)$. We need
 to distinguish whether $\racine(M_1\Phi\mydownarrow) = \Tag_i$ for
 some $i \in \{1,\ldots,p\}$ or not.

 If $\racine(M_1\Phi\mydownarrow) = \Tag_i$ for some $i \in
 \{1,\ldots, p\}$,  then there exists $u_1$ such that $M_1\Phi\mydownarrow =
 \Tag_i(u_1)$. Assume first that $i \in \gamma'$. In such a case $\fct_\gamma(M\Phi\mydownarrow) = \{ \fct_\gamma(M\Phi\mydownarrow) \}$ and $\fctpair(M\Phi\mydownarrow) = \{ \fctpair(M\Phi\mydownarrow) \}$. Hence it contradicts the fact that $t \not\in \fctpair(M\Phi\mydownarrow)$. We can thus deduce that $i \in \gamma$. But in such a case, we have that $\fct_\gamma(M_1\Phi\mydownarrow) = \fct_\gamma(u_1)$ and:
\[
 \fct_\gamma(M\Phi\mydownarrow) = \fct_\gamma(u_1) \cup
 \fct_\gamma(M_2\Phi\mydownarrow).
\]
 We deduce that $t \in \fct_\gamma(M_1\Phi\mydownarrow)$ or $t \in
 \fct_\gamma(M_2\Phi\mydownarrow)$. Since $\M(M_1) < \M(M)$ and $\M(M_2) < \M(M)$,
 we conclude by applying our inductive hypothesis on $M_1$ or $M_2$.

 Otherwise $\racine(M_1\Phi\mydownarrow) \not\in \{\Tag_1, \ldots, \Tag_p\}$. In such a
 case, $\fct_\gamma(M\Phi\mydownarrow) = \{M\Phi\mydownarrow\}$ and $\fctpair(M\Phi\mydownarrow) = \{ M\Phi\mydownarrow\}$. But we assume that $t \not\in \fctpair(M\Phi\mydownarrow)$ hence this case is impossible.

\smallskip{}

 \emph{Case $\ffun = \h$:} This case is analogous to the previous one and can
 be handled similarly.

\smallskip{}

 \emph{Case $\ffun \in \{\sdec, \adec, \checksign\}$:} In such a case,
 we have to distinguish two cases depending on whether $\ffun$ is reduced in
 $M\Phi\mydownarrow$, or not.

 If $\ffun$ is not reduced, \emph{i.e.} $M\Phi\mydownarrow =
 \ffun(M_1\Phi\mydownarrow, M_2\Phi\mydownarrow)$, then we have that
 \[
\fct_\gamma(M\Phi\mydownarrow) = \{M\Phi\mydownarrow\}.
\] 
Once again this is in contradiction with our hypothesis that $t \not\in \fctpair(M\Phi\mydownarrow)$.

We now focus on the case where $\ffun$ is reduced: we have that $M_1\Phi\mydownarrow =
 \ffun'(u_1,u_2)$ with $M\Phi\mydownarrow = u_1$ and $\ffun' \in \{\senc,
 \aenc, \sign\}$. We have to do a case analysis on $\racine(u_1)$:
 \begin{itemize}
 \item if $\racine(u_1) = \Tag_i$ for some $i \in \gamma$. In such a case, there exists $u'_1$ such that $u_1 = \Tag_i(u'_1)$, $\fct_\gamma(M\Phi\mydownarrow) = \fct_\gamma(u'_1)$ and $\fct_\gamma(M_1\Phi\mydownarrow) = \fct_\gamma(u'_1) \cup \fct_\gamma(u_2)$. Thus we deduce that $\fct_\gamma(M\Phi\mydownarrow) \subseteq \fct_\gamma(M_1\Phi\mydownarrow)$. We can conclude thanks to our inductive hypothesis on $t$ and $M_1$. 
\item if $\racine(u_1) = \Tag_i$ for some $i \not\in \gamma$. In such a case, $\fct_\gamma(M\Phi\mydownarrow) = \{ M\Phi\mydownarrow)$ which contradicts the hypothesis $t \not\in \fctpair(M\Phi\mydownarrow)$.
\item otherwise, $\racine(u_1) \not\in \{\Tag_1,\ldots,\Tag_p\}$, then we
have that $\ffun'(u_1,u_2) \in \Flawed{M_1\Phi\mydownarrow}$. By Lemma~\ref{lem:flawed,smallerrecipe}, we deduce that there exists $M'$ such that $\M(M') < \M(M_1)$, $\fn(M') \cap \Ec = \emptyset$ and $M'\Phi\mydownarrow = u_1$. Since $u_1 = M\Phi\mydownarrow$ and $\M(M') < \M(M)$ then we can apply our inductive hypothesis on $t, \alpha$ and $M'$ and so the result holds.
\end{itemize}
\end{proof}


\begin{lemma}
  \label{lem : deductibily of fct_C}
 {Let $A = \quadruple{\Ec}{\p}{\Phi}{\sigma}$ be a derived well-tagged process, and let $(\rho_\alpha, \rho_\beta)$ be compatible with $A$.} Let $u$ be a ground term in normal form that do not use names in $\Ec_\alpha \uplus \Ec_\beta$. We have that there exists a context $C$ (possibly a hole) built only using $\langle\; \rangle$, and terms $u_1, \ldots, u_m$ such that $u = C[u_1, \ldots, u_m]$, and for all $i \in \{1, \ldots, m\}$,
  \begin{itemize}
  \item either $u_i \in \Flawed{u}$;
  \item or $u_i \in \fct_{\Sigmazero}(u)$ and $\delta_\alpha(u_i) =
   \delta_\beta(u_i)$,
  \item or $u_i = \ffun(n)$ for some $\ffun \in \{\pk, \vk\}$ and $n \in \N$,
  \item or $u_i \in \dom(\rho^+_\alpha) \cup \dom(\rho^+_\beta)$.
  \end{itemize}
\end{lemma}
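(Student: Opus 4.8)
The plan is to prove the statement by structural induction on the term $u$, peeling off the outermost pairing operators first and then doing a case analysis on the root function symbol of each component that is not a pair.

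First I would set up the induction. If $u$ is a name $n$, then either $u = \ffun(\ldots)$ for some $\ffun \in \{\pk,\vk\}$ does not apply (since $u$ is a name), and we simply take $C = \_$ and $u_1 = n$: a name is always a valid factor since $\racinebis(n) = \bot$, so it falls into one of the allowed cases (in particular $\delta_\alpha(n) = \delta_\beta(n) = n$ when $n \notin \dom(\rho^+_\alpha) \cup \dom(\rho^+_\beta)$, and otherwise $n \in \dom(\rho^+_\alpha) \cup \dom(\rho^+_\beta)$). If $u = \langle u_1, u_2 \rangle$, I apply the induction hypothesis to $u_1$ and $u_2$ (both are in normal form since $u$ is, both avoid names in $\Ec_\alpha \uplus \Ec_\beta$), obtaining contexts $C_1, C_2$ and factor lists; then $C = \langle C_1, C_2 \rangle$ works and the concatenated factor list satisfies the required disjunction componentwise.

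Second, the interesting case is when $u = \ffun(u_1,\ldots,u_n)$ with $\ffun \neq \langle\ \rangle$. Here I would stop peeling and set $C = \_$, $m = 1$, $u_1 = u$, and argue that $u$ itself satisfies one of the four bullets. If $\ffun \in \{\pk,\vk\}$: since $u$ is in normal form and these are unary with a name or variable argument in the well-tagged setting (and the process is ground, so the argument is a ground term in normal form — actually it could be a compound term, but $\pk, \vk$ applied to a compound term is still in normal form), I would handle the case $u = \ffun(n)$ with $n$ a name directly (third bullet), and if the argument is compound I would need to recurse into it — but actually the cleaner move is to observe that $\racinebis(u) = 0$ and $\racine(u) \in \{\pk,\vk\}$, so $u \notin \Flawed{u}$; instead I treat $\pk,\vk$ as transparent like pairing and recurse, adjusting $C$ to allow $\pk,\vk$ nodes, OR I restrict to the case that matters (arguments are names, which is what the tagging discipline guarantees). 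Given the statement lists $u_i = \ffun(n)$ as a separate bullet, the intended reading is that $\pk/\vk$ are only ever applied to names here, so this case is immediate. For all remaining $\ffun$ (i.e. $\ffun \in \Sigmazero \setminus \{\pk,\vk,\langle\ \rangle\}$ or $\ffun \in \Sigma_i \cup \Sigma^{\Tag}_i$), I distinguish on $\racinebis(u)$: if $\racinebis(u) \in \{0\}$ or $\racinebis(u) = \gamma'$ for both choices of $\{\gamma,\gamma'\}$ — precisely when $u \in \Flawed{u} = \FlawedColor{\alpha}{u} \cap \FlawedColor{\beta}{u}$ — we are in the first bullet. Otherwise $\racinebis(u) \in \{1,\ldots,p\}$, say $\racinebis(u) \in \gamma$; then $u \in \fct_{\Sigmazero}(u)$ and I must show $\delta_\alpha(u) = \delta_\beta(u)$. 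By Definition~\ref{def:rho-extension}, since $\racinebis(u) \in \gamma$, both $\delta_\alpha$ and $\delta_\beta$ descend into $u$ using $\delta_\gamma$ on the arguments (the clause "$\ffun(\delta_\gamma(t_1),\ldots,\delta_\gamma(t_k))$ if $\racinebis \in \gamma$" is the same regardless of whether the outer call was $\delta_\alpha$ or $\delta_\beta$), hence $\delta_\alpha(u) = \ffun(\delta_\gamma(t_1),\ldots) = \delta_\beta(u)$ — provided that $u \notin \dom(\rho^+_\alpha) \cup \dom(\rho^+_\beta)$, and if $u$ IS in that domain we land in the fourth bullet directly.

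The main obstacle I anticipate is the $\pk/\vk$ case and, more subtly, making sure the dichotomy on $\racinebis(u)$ is exhaustive and that the "$u \in \Flawed{u}$" case and the "$u \in \fct_{\Sigmazero}(u)$ with $\delta_\alpha(u) = \delta_\beta(u)$" case together with the $\dom(\rho^+)$ escape hatch really do cover everything. Concretely: when $\racinebis(u) = i$ for $i \in \gamma$, one needs $\FlawedColor{\gamma}{u}$ to NOT contain $u$ (so the first bullet genuinely fails and we must use the second) — this holds because $\racinebis(u) = i \in \gamma$ means $i \notin \gamma' \cup \{0\}$, so $u \notin \FlawedColor{\gamma}{u}$, confirming $u \notin \Flawed{u}$; but then one must double-check that $\delta_\alpha$ applied from the top when the top symbol is in $\Sigmazero$ still consults $\racinebis$ of the whole term (the clause in Definition~\ref{def:rho-extension} for $\ffun(t_1,\ldots,t_k)$ with $\racinebis(\ffun(\vec t)) \in \alpha$ or $\in\beta$ fires regardless of whether $\ffun$ is a tagged symbol or a $\Sigmazero$ symbol whose first argument is $\Tag_i(\cdot)$), so the two sides agree. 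I would also need to invoke compatibility of $(\rho_\alpha,\rho_\beta)$ with $A$ only mildly — essentially to know that $\delta_\gamma$ is well-defined on the relevant terms — and Lemma~\ref{lem:deltaandnormalform} to keep everything in normal form along the recursion, though strictly the statement as written does not mention normal forms of the $\delta$ images, so this is just for cleanliness. The proof is then a finite, if tedious, case analysis with no deep difficulty once the $\racinebis$-dichotomy is organized correctly.
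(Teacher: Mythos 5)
Your proof is correct and follows essentially the same route as the paper's: the paper phrases the induction as being on the size of the $\Sigma_0$-context sitting above $\fct_{\Sigma_0}(u)$ rather than on the structure of $u$ itself, but both arguments amount to peeling off the pairing constructors and classifying each remaining leaf by a case analysis on its root symbol and on $\racinebis$, with the same treatment of the $\pk/\vk$-applied-to-a-name case and the same observation that a leaf with $\racinebis(u)\in\gamma$ lands in the second bullet (via the definition of $\delta_\alpha,\delta_\beta$) unless it lies in $\dom(\rho^+_\alpha)\cup\dom(\rho^+_\beta)$. Your explicit unfolding of the definition of $\delta_\gamma$ to check $\delta_\alpha(u)=\delta_\beta(u)$ is a detail the paper leaves implicit, but it is the same argument.
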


\begin{proof}
 Let $u$ a ground term in normal form and let $\{v_1, \ldots, v_n\} =
 \fct_{\Sigmazero}(u)$. Thus there exists a context $D$ (possibly a hole) built
 on $\Sigmazero$ such that $u = D[v_1, \ldots, v_n]$. We now prove the result by
 induction on $|D|$.

 \medskip

 \noindent \emph{Base case $|D| = 0$:} 
We show that the result holds and in such a case the context $C$ is
reduced to a hole.
Since $|D| = 0$, we know that $\fct_{\Sigmazero}(u) = u$ and so
 either $\racinebis(u) = i$ with $i \in \{1,\ldots,p\}$ or
 $\racinebis(u) = \bot$. 
If $u \in \dom(\rho^+_\alpha) \cup \dom(\rho^+_\beta)$, then the
result trivially holds. Otherwise, we have 
that $\delta_\alpha(u) = \delta_\beta(u)$ by definition of
$\delta_\alpha$ and $\delta_\beta$. Hence the result holds.

 \medskip

 \noindent \emph{Inductive step $|D| > 0$:} There exists $\ffun \in
 \Sigmazero$, and $v_1, \ldots, v_k$ such that $u = \ffun(u_1, \ldots,
 u_k)$. We do a case analysis on $\ffun$.

\smallskip{}

 \emph{Case $\ffun = \langle\;\rangle$:} In such a case, there exist
 two contexts $D_1,
 D_2$  (possibly holes) built on $\Sigmazero$ such that:
\begin{itemize}
\item  $D =
 \pair{D_1}{D_2}$ with $|D_1|, |D_2| < |D|$, 
\item $u_1 = D_1[v^1_1, \ldots, v^1_{n_1}]$ and $\{v^1_1, \ldots,
 v^1_{n_1}\} = \fct_{\Sigmazero}(u_1)$,
\item $u_2 = D_1[v^2_1, \ldots, v^2_{n_1}]$ and $\{v^2_1, \ldots,
 v^2_{n_2}\} = \fct_{\Sigmazero}(u_2)$
\end{itemize}
By applying our inductive hypothesis on $u_1$ and $u_2$, we know that
there exist two contexts $C_1$ and $C_2$.
Since 
\begin{itemize}
\item $\Flawed{u} =
   \Flawed{u_1} \cup \Flawed{u_2}$, and 
\item  $\fct_{\Sigmazero}(u) =
   \fct_{\Sigmazero}(u_1) \uplus \fct_{\Sigmazero}(u_2)$, 
\end{itemize}
we conclude
   that $C = \langle C_1,C_2 \rangle$ satisfies all the conditions
   stated in the lemma.

\smallskip{}

 \emph{Case $\ffun \in \{\pk, \vk\}$ and $u = \ffun(n)$ for some $n \in N$:}
 The result trivially hold by choosing the context $C$ to be a hole.

\smallskip{}

Otherwise, we have that 
\[
\Flawed{u} = \{u\} \cup \Flawed{u_1} \cup \ldots \cup 
\Flawed{u_k}.
\] 
Since $u \in \Flawed{u}$, we can choose $C$ to be the context reduced
to a hole. 
The result trivially holds.
\end{proof}


\begin{lemma}
\label{lem:samerecipesymmetric}
Let $A = \quadruple{\Ec}{\p}{\Phi}{\sigma}$ be a derived well-tagged extended process, and let $(\rho_\alpha, \rho_\beta)$ be compatible with $A$.
Let  $M$ be a term such that $\fv(M) \subseteq \dom(\Phi)$ and $\fn(M) \cap \Ec =
\emptyset$. We assume that 
$\Ec =  \Ec_0 \uplus \Ec_\alpha
\uplus \Ec_\beta$, 
 $\fn(\Phi) \cap (\Ec_\alpha \uplus \Ec_\beta) = \emptyset$, and
one of the two following conditions
is satisfied: 
\begin{enumerate}
\item  $\new\ \Ec. \Phi \not\vdash k$ for any $k \in K_S$; or
\item  $\new\ \Ec. \delta(\Phi\mydownarrow) \not\vdash k$ for any {$k \in \delta_\alpha(K_S) \cup \delta_\beta(K_S)$.} 
\end{enumerate}
with $K_S = \{t, \pk(t),\vk(t) ~|~  \mbox{$t$ ground}, t \in \dom(\rho^+_\alpha) \cup
\dom(\rho^+_\beta)\}$. 
We have that $\delta_\gamma(M\Phi\mydownarrow) =
M\delta(\Phi\mydownarrow)\mydownarrow$ with $\gamma \in \{\alpha,\beta\}$.
\end{lemma}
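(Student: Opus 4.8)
The plan is to prove Lemma~\ref{lem:samerecipesymmetric} by induction on the measure $\M(M)$ defined just before the statement, following the same case analysis on the outermost recipe symbol that is used throughout Section~\ref{subset: frame flagged process}. Fix $\gamma \in \{\alpha,\beta\}$; the two cases are symmetric, so I would present the argument for $\gamma = \alpha$. Before starting the induction, I would record the key structural facts we may use: by Lemma~\ref{lem:deltaandnormalform}, $\delta_\gamma$ preserves normal forms and roots (up to collapsing to $\bot$); by Lemmas~\ref{lem:transfoV} and~\ref{lem:eqdeltaAetB}, $\delta_\alpha$ and $\delta_\beta$ are injective on ground normal forms not mentioning $\Ec_\alpha\uplus\Ec_\beta$; and by Lemma~\ref{lem : deductibily of fct_C} every ground term in normal form factors (through a pairing context) into pieces that are either flawed, or $\Sigmazero$-factors on which $\delta_\alpha$ and $\delta_\beta$ agree, or public keys $\pk(n)/\vk(n)$, or elements of $\dom(\rho^+_\alpha)\cup\dom(\rho^+_\beta)$. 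The non-revelation hypothesis (condition~1 or condition~2) is exactly what rules out the last kind of piece ever being produced as a subterm of a deducible term, which is the crucial ingredient.

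For the base case $\M(M) = (j,1)$: either $M$ is a name, and then $M\Phi\mydownarrow = M$ is a name not in $\Ec_\alpha\uplus\Ec_\beta$ (since $\fn(M)\cap\Ec = \emptyset$ and $\Ec_\alpha\uplus\Ec_\beta \subseteq \Ec$), so $\delta_\alpha(M) = M = M\delta(\Phi\mydownarrow)\mydownarrow$; or $M = w_j$, and then I would invoke the fact that $A$ is a \emph{derived well-tagged} extended process to see how $w_j\Phi$ was introduced. If it was introduced by item~2 of the definition (an input with recipe over strictly earlier frame variables), we recurse on a recipe of strictly smaller measure. If it was introduced by item~1 ($w_j\Phi = \TAG{v}{i}\sigma$ with $\sigma \vDash \TestTag{\TAG{v}{i}}{i}$ and $i \in \vcol(w_j)$), then Lemma~\ref{lem:Testandequality} and Corollary from Section~\ref{subsec: delta and tag} let us push $\delta_\gamma$ through the tagged term, reducing the claim to the same statement on the smaller witnesses $z\sigma$ for $z \in \fv(v)$, which are earlier in $\prec$; this is where a secondary induction on $\prec$ (mirroring Lemma~\ref{lem:Flawedandframeelement}) is needed. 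For the inductive step $\M(M) = \ffun(M_1,\dots,M_n)$, the routine cases are $\ffun \in \Sigma_i\cup\Sigma_{\Tag_i}$, pairing, $\pk/\vk$, and the destructors that do not reduce: in each, $\delta_\alpha$ commutes syntactically with $\ffun$ and we apply the induction hypothesis to each $M_k$ (of smaller measure), then re-normalise using Lemmas~\ref{lem:app_CRicalp05} and~\ref{lem:change alien} (as in the proof of Lemma~\ref{lem:Testandequality}) to handle the $\Sigma_i$ case where normalisation can collapse the term.

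The genuinely delicate cases are the \emph{reducing} destructors $\ffun \in \{\sdec,\adec,\checksign\}$, and $\senc/\aenc/\sign$ when the first argument has a $\Tag$ at the root versus when it does not. Here I would argue: if $M_1\Phi\mydownarrow = \gfun(\Tag_i(u_1),u_2)$ is properly tagged, then by Lemma~\ref{lem:deltaandnormalform} and the induction hypothesis on $M_1,M_2$, $M_1\delta(\Phi\mydownarrow)\mydownarrow$ has the matching shape with $\delta_\gamma$ applied componentwise, the untag/decrypt fires on both sides identically, and we conclude. If instead $M_1\Phi\mydownarrow$ is \emph{not} properly tagged at the relevant position, then $M\Phi\mydownarrow$ (or $M_1\Phi\mydownarrow$) is flawed, and Lemma~\ref{lem:flawed,smallerrecipe} gives recipes of strictly smaller measure for its immediate subterms, letting us recurse; but we also have to rule out that a reducing step on the $\delta(\Phi\mydownarrow)$ side fires spuriously (because $\delta_\gamma$ introduced an unexpected encryption). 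This is precisely prevented by injectivity (Lemmas~\ref{lem:transfoV} and~\ref{lem:eqdeltaAetB}) together with the non-revelation hypothesis, which guarantees no subterm of a deducible term lies in $\dom(\rho^+_\alpha)\cup\dom(\rho^+_\beta)$, so the abstracted names $\Ec_\alpha\uplus\Ec_\beta$ never occur as encryption keys or plaintexts in $M\delta(\Phi\mydownarrow)\mydownarrow$. I expect the main obstacle to be carefully threading the non-revelation hypothesis through the destructor cases and keeping the two nested inductions (on $\M(M)$ and on $\prec$ for the well-tagged unfolding of frame variables) cleanly separated; the combinatorics of the $\Sigmazero$-factorisation from Lemma~\ref{lem : deductibily of fct_C} is what makes the bookkeeping heavy but, given all the machinery already proved, not conceptually hard.
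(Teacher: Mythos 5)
Your toolkit is the right one, but the architecture of your induction has a real gap that you would hit as soon as you tried to execute the base case. You fix $\gamma=\alpha$ ``by symmetry'', but the two instances of the statement are not independent: for $M=w_j$ the term $M\delta(\Phi\mydownarrow)$ is, by definition of $\delta$ on frames, $\delta_{\vcol(w_j)}(w_j\Phi\mydownarrow)$ --- it is built with the colour of $w_j$, not with your chosen $\gamma$. When $\vcol(w_j)\in\beta$ you therefore get $\delta_\beta(w_j\Phi\mydownarrow)=w_j\delta(\Phi\mydownarrow)\mydownarrow$ essentially for free (no unfolding of the well-tagged structure and no secondary induction on $\prec$ is needed, contrary to what you propose), but the statement you set out to prove requires $\delta_\alpha(w_j\Phi\mydownarrow)=w_j\delta(\Phi\mydownarrow)\mydownarrow$, i.e.\ $\delta_\alpha(w_j\Phi\mydownarrow)=\delta_\beta(w_j\Phi\mydownarrow)$. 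That agreement of the two abstractions on every deducible term is the actual content of the lemma, and your plan never says where in the induction it is established. The recursion on $\prec$ through the well-tagged structure cannot supply it: pushing $\delta$ through $\TAG{v}{i}\sigma$ via Lemma~\ref{lem:Testandequality} only yields facts about $\delta_{\gamma'}$ for the colour $\gamma'$ containing $i$, never about the other colour.

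The paper's proof resolves this by splitting every step of the single induction on $\M(M)$ into two parts: first show that \emph{some} $\gamma_0$ satisfies $\delta_{\gamma_0}(M\Phi\mydownarrow)=M\delta(\Phi\mydownarrow)\mydownarrow$ (immediate for $M=w_j$, by commutation and renormalisation for composite $M$), and then show $\delta_\alpha(M\Phi\mydownarrow)=\delta_\beta(M\Phi\mydownarrow)$ by factoring $M\Phi\mydownarrow$ through a pairing context via Lemma~\ref{lem : deductibily of fct_C}, using Lemma~\ref{lem:flawed,smallerrecipe} to obtain recipes of strictly smaller measure for the arguments of each flawed piece (so the induction hypothesis on $\M$ applies to them), and invoking the non-revelation hypothesis to exclude the pieces lying in $\dom(\rho^+_\alpha)\cup\dom(\rho^+_\beta)$ and the $\pk/\vk$ pieces over abstracted names. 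You name all of these ingredients, but you deploy the factorisation only inside the destructor cases and route the frame-variable case through an induction on $\prec$ that neither decreases the right quantity nor produces the cross-colour equality. Note also that the agreement $\delta_\alpha=\delta_\beta$ on the subrecipes is already needed in your ``routine'' case $\ffun\in\Sigmazero$: $\delta_\gamma(\ffun(t_1,\ldots,t_n))$ dispatches to $\delta_{\gamma'}$ on the arguments according to the tag at the root, so without carrying the agreement through the induction even those cases do not close.
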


\begin{proof}
Let $\Phi\mydownarrow =\{w_1 \refer u_1,\ldots, w_n \refer u_n\}$.
We prove this result by induction on $\M(M)$:

\medskip

\noindent\emph{Base case $\M(M) = (0,0)$:} There exists no term $M$ such that $|M| = 0$, thus the result holds.

\medskip

\noindent\emph{Inductive step $\M(M) > (0,0)$:} We first prove there exists $\gamma
\in \{\alpha, \beta\}$ such that $\delta_{\gamma}(M\Phi\mydownarrow) =
M\delta(\Phi\mydownarrow)\mydownarrow$ and then we show that
$\delta_\alpha(M\Phi\mydownarrow) = \delta_\beta(M\Phi\mydownarrow)$.

\medskip{}

Assume first that $|M| = 1$, \emph{i.e.} either $M\in\N$ or there exists $j \in \{1,
\ldots, n\}$ such that $M = w_j$. 

\noindent \emph{Case $M\in \N$.} In such a case, we have that
$M\Phi\mydownarrow = M$, and $M \not\in \Ec$. Hence, we have that 
$\new \ \Ec. \Phi \vdash M$ and also that $\new\
\Ec. \delta(\Phi\mydownarrow) \vdash M$. {In case condition $1$ is
satisfied, we easily deduce that $M \not\in K_S$. Otherwise, we know
that the condition $2$ is satisfied, and thus $M \not\in
\delta_\alpha(K_S) \cup \delta_\beta(K_S)$. Again, we want to conclude
that $M \not\in K_S$. Assume that this is not the case, \emph{i.e.} $M \in
K_S$. This means that $M$ is a name in $\dom(\rho^+_\alpha)$ (or
$\dom(\rho^+_\beta)$). Hence, we  have that $\delta_\beta(M) \in
\delta_\beta(K_S)$, and {$\delta_\beta(M) = M$}. Hence, we
deduce that $M \in \delta_\beta(K_S)$, and this leads to a
contradiction, since in such a case, by hypothesis $M$ can not be deducible from $\new\
\Ec. \delta(\Phi\mydownarrow)$.} 
Thus, in any case, we have that  $M \not\in K_S$, and
thus $M \not\in \dom(\rho^+_\alpha) \cup \dom(\rho^+_\beta)$.
Hence, we have that 
$\delta_\gamma(M\Phi\mydownarrow) =
\delta_\gamma(M) = M = M\delta(\Phi\mydownarrow)\mydownarrow$ for any $\gamma\in\{\alpha, \beta\}$.

\noindent \emph{Case $M = w_j$ for some $j \in \{1,\ldots,n\}$.} 
We know that $w_j$ is colored with $\gamma \in \{\alpha,\beta\}$. Hence, we have
that  $w_j\delta(\Phi\mydownarrow) =
\delta_\gamma(w_j\Phi\mydownarrow)$. Since $u_j$ is in normal form, then by
Lemma~\ref{lem:deltaandnormalform}, 
we know that $\delta_\gamma(w_j\Phi)$ is also in
normal form. Thus, we have that $\delta_\gamma(M\Phi\mydownarrow) =
M\delta(\Phi\mydownarrow)\mydownarrow$.

\medskip{}

Otherwise, if $|M| > 1$, then there exists a symbol $\ffun$ and $M_1, \ldots,
M_n$ such that $M = \ffun(M_1, \ldots, M_n)$. We do a case analysis on $\ffun$.

\smallskip{}

\emph{Case $\ffun \in \Sigma_i\cup \Sigma_{\Tag_i}$ with $i \in \{ 1,
  \ldots, p\}$.} Consider $\gamma \in \{\alpha,\beta\}$ such that $i \in \gamma$. In such a case, let $t = \ffun(M_1\Phi\mydownarrow, \ldots,
M_n\Phi\mydownarrow)$. Since $\ffun \in \Sigma_i$ (resp. $\Sigma_{\Tag_i}$), then there exists a
context $C$ built upon $\Sigma_i$ (resp. $\Sigma_{\Tag_i}$) such that $t = C[u_1, \ldots, u_m]$ and
$u_1, \ldots, u_m$ are factor of $t$ in normal form. By
Lemma~\ref{lem:app_CRicalp05}, we know that there exists a context $D$ (possibly
a hole) over $\Sigma_i$ (resp. $\Sigma_{\Tag_i}$) such that $t\mydownarrow = D[u_{i_1}, \ldots,
u_{i_k}]$ with $i_1, \ldots, i_k \in \{0, \ldots, m\}$ and $u_0 = n_{min}$. But
thanks to Lemma~\ref{lem:change alien}, ~\ref{lem:transfoV} and~\ref{lem:deltaandnormalform}, we also that $C[\delta_\gamma(u_1), \ldots,
\delta_\gamma(u_m)]\mydownarrow = D[\delta_\gamma(u_{i_1}), \ldots,
\delta_\gamma(u_{i_k})]$. But $C$ and $D$ are both built on $\Sigma_i$ (resp. $\Sigma_{\Tag_i}$), thus by definition of $\delta_\gamma$, we have that
$\delta_\gamma(t)\mydownarrow = C[\delta_\gamma(u_1), \ldots,
\delta_\gamma(u_m)]\mydownarrow$ and $\delta_\gamma(t\mydownarrow) =
D[\delta_\gamma(u_{i_1}), \ldots, \delta_\gamma(u_{i_k})]$. Hence, the
equality, $\delta_\gamma(t\mydownarrow) = \delta_\gamma(t)\mydownarrow$,
holds. But $t\mydownarrow = M\Phi\mydownarrow$ which means that
$\delta_\gamma(M\Phi\mydownarrow) = \delta_\gamma(t)\mydownarrow$.
We have that:
 \[
\begin{array}{rcl}
\delta_\gamma(t)\mydownarrow &=&
\delta_\gamma(\ffun(M_1\Phi\mydownarrow, \ldots,
M_n\Phi\mydownarrow))\mydownarrow \\ &=&
\ffun(\delta_\gamma(M_1\Phi\mydownarrow), \ldots,
\delta_\gamma(M_n\Phi\mydownarrow))\mydownarrow
\end{array}
\]
Since $\M(M_1) <
\M(M)$, \ldots, $\M(M_n) < \M(M)$, we can apply our inductive hypothesis
on $M_1, \ldots, M_n$. This gives us $\delta_\gamma(t)\mydownarrow =
\ffun(M_1\delta(\Phi\mydownarrow)\mydownarrow, \ldots,
M_n\delta(\Phi\mydownarrow)\mydownarrow)\mydownarrow =  \ffun(M_1,\ldots,
M_n)\delta(\Phi\mydownarrow)\mydownarrow$. Thus we can conclude that
$\delta_\gamma(M\Phi\mydownarrow) = \delta_\gamma(t)\mydownarrow = M\delta(\Phi\mydownarrow)\mydownarrow$.

\smallskip{}

\emph{Case $\ffun \in \Sigmazero \smallsetminus \{\sdec, \adec, \checksign\}$:} In
this case, we have that $M\Phi\mydownarrow = \ffun(M_1\Phi\mydownarrow,\allowbreak \ldots,
M_n\Phi\mydownarrow)$. By applying our inductive hypothesis on $M_1, \ldots,
M_n$, we have that 
\begin{center}
$\delta_\alpha(M_k\Phi\mydownarrow) =
\delta_\beta(M_k\Phi\mydownarrow)$, for all $k \in \{1, \ldots,
n\}$. 
\end{center}
Thus we
have that $\delta_\gamma(M\Phi\mydownarrow) =
\ffun(\delta_{\gamma'}(M_1\Phi\mydownarrow), \ldots,
\delta_{\gamma'}(M_n\Phi\mydownarrow))$ with $\gamma,
\gamma'\in\{\alpha, \beta\}$.
Applying our inductive hypothesis on $M_1, \ldots, M_n$, we
deduce that 
\[\delta_\gamma(M\Phi\mydownarrow) =
\ffun(M_1\delta(\Phi\mydownarrow)\mydownarrow, \ldots,
M_n\delta(\Phi\mydownarrow)\mydownarrow) = M\delta(\Phi\mydownarrow)\mydownarrow.
\]

\smallskip{}

\emph{Case $\ffun \in \{\sdec, \adec, \checksign\}$:} If we first assume that
the root occurence $\ffun$ is not reduced in $M\Phi\mydownarrow$ then the proof
is similar to the previous case. Thus, we focus on the case where the root
occurence of $\ffun$ is reduced, and we consider the case where $\ffun
= \sdec$. The other cases can be done in a similar way.
In such a situation, we know that there exist $v_1, v_2$ such that
$M_1\Phi\mydownarrow = \senc(v_1,v_2)$, $M_2\Phi\mydownarrow = v_2$ and
 $M\Phi\mydownarrow = v_1$. According to the definition of $\delta_\gamma$, we
 know that there exists $\gamma \in \{\alpha,\beta\}$ such that
 $\delta_{\gamma}(\senc(v_1,v_2)) = \senc(\delta_{\gamma}(v_1),
 \delta_{\gamma}(v_2))$. For such $\gamma$, we have that
 $\sdec(\delta_{\gamma}(M_1\Phi\mydownarrow),
 \delta_\gamma(M_2\Phi\mydownarrow))\mydownarrow =
 \delta_\gamma(M\Phi\mydownarrow)$. But by applying our inductive hypothesis on
 $M_1$ and $M_2$, we obtain $\delta_\gamma(M\Phi\mydownarrow) =
 \sdec(M_1\delta(\Phi\mydownarrow)\mydownarrow,
 M_2\delta(\Phi\mydownarrow)\mydownarrow)\mydownarrow =
 M\delta(\Phi\mydownarrow)\mydownarrow$.

\bigskip{}

It remains to prove that $\delta_\alpha(M\Phi\mydownarrow) =
\delta_\beta(M\Phi\mydownarrow)$. We have shown that there exists $\gamma_0 \in \{\alpha,\beta\}$
such that $\delta_{\gamma_0}(M\Phi\mydownarrow) =
M\delta(\Phi\mydownarrow)\mydownarrow$. Thanks to Lemma~\ref{lem : deductibily of fct_C},
we know that there exists a context $C$ built over $\{ \langle\rangle\}$, and
$v_1, \ldots, v_m$ terms such that $M\Phi\mydownarrow = C[v_1, \ldots, v_m]$ and
for all $i \in \{1, \ldots, m\}$:
\begin{itemize}
\item either $v_i \in \Flawed{M\Phi\mydownarrow}$
\item or $v_i \in \fct_{\Sigmazero}(M\Phi\mydownarrow)$ and $\delta_\alpha(v_i) = \delta_\beta(v_i)$.
\item or $v_i = \ffun(n)$ for some $\ffun \in \{\pk, \vk\}$ and $n \in \N$,
\item or $v_i \in \dom(\rho^+_\alpha) \cup \dom(\rho^+_\beta)$.
\end{itemize}

Note that $C$ being built upon $\{ \langle \rangle \}$ means that
$v_i$ is deducible in $\new\ \Ec. \Phi$ for all $i \in \{1, \ldots, m\}$. Furthermore,
since $C[v_1, \ldots, v_m]$ is in normal form,
\[
\delta_{\gamma_0}(M\Phi\mydownarrow) = C[\delta_{\gamma_0}(v_1),\ldots,
\delta_{\gamma_0}(v_m)].
\]
But we have shown that
$\delta_{\gamma_0}(M\Phi\mydownarrow) = M\delta(\Phi\mydownarrow)\mydownarrow$, thus
$\delta_{\gamma_0}(v_i)$ is deducible from $\delta(\Phi\mydownarrow)$, for all $i \in
\{1, \ldots, m\}$. Now, we distinguish several cases depending on
which condition is fullfilled by $v_i$.
\smallskip{}

\emph{Case $v_i \in \Flawed{M\Phi\mydownarrow}$:} There exists $w_1, \ldots,
w_\ell$ terms and a function symbol $\ffun$ such that $v_i = \ffun(w_1, \ldots,
w_\ell)$. By Lemma~\ref{lem:flawed,smallerrecipe}, there exists $N_1,
\ldots, N_\ell$ such that for all $k \in \{1, \ldots, \ell\}$, $\M(N_k) < \M(M)$
and $N_k\Phi\mydownarrow = w_k$. Hence, by applying inductive hypothesis on
$N_1, \ldots, N_\ell$, we obtain that $\delta_\alpha(N_k\Phi\mydownarrow) =
\delta_\beta(N_k\Phi\mydownarrow)$, for all $k \in \{1, \ldots, \ell\}$. Thus,
thanks to $v_i$ being in normal form, we can conclude that $\delta_\alpha(v_i) =
\delta_\beta(v_i)$.

\smallskip{}

\emph{Case $v_i \in \fct_{\Sigmazero}(M\Phi\mydownarrow)$:} In
  such a case, we have that 
$\delta_\alpha(v_i) = \delta_\beta(v_i)$. Hence, we easily conclude.

\smallskip{}

\emph{Case $v_i = \ffun(n)$ for some $\ffun \in \{\pk, \vk\}$ and $n \in \N$:} {By
hypothesis, we know that either $\new\ \Ec. \Phi \not\vdash k$,
for all $k \in K_S$; or $\new\ \Ec. \delta(\Phi\mydownarrow) \not\vdash k$, for
all $k \in \delta_\alpha(K_S) \cup \delta_\beta(K_S)$.
Since we have shown that $v_i$ is deducible from $\new\ \Ec. \Phi$
and $\delta_{\gamma_0}(v_i)$ is deducible from $\new\ \Ec. \delta(\Phi\mydownarrow)$, both
hypotheses imply that $n \not\in \dom(\rho^+_\alpha) \cup
\dom(\rho^+_\beta)$, and so $\delta_\alpha(v_i) =
\delta_\beta(v_i)$.}

\smallskip{}

\emph{Case $v_i \in \dom(\rho^+_\alpha) \cup \dom(\rho^+_\beta)$:} 
{By hypothesis, we know that either $\new\ \Ec. \Phi \not\vdash k$,
for all $k \in K_S$; or $\new\ \Ec. \delta(\Phi\mydownarrow) \not\vdash k$, for
all $k \in \delta_\alpha(K_S) \cup \delta_\beta(K_S)$.
Since we have shown that $v_i$ is deducible from $\new\ \Ec. \Phi$
and $\delta_{\gamma_0}(v_i)$ is deducible from $\new\ \Ec. \delta(\Phi\mydownarrow)$, both
hypotheses imply that $v_i \not\in \dom(\rho^+_\alpha) \cup
\dom(\rho^+_\beta)$ and lead us to a contradiction.}
\end{proof}


\begin{corollary}
\label{cor:deltaandkeyhidden}
Let $A = \quadruple{\Ec}{\p}{\Phi}{\sigma}$ be a derived well-tagged extended
process and let $(\rho_\alpha, \rho_\beta)$ be compatible with $A$, such that $\Ec = \Ec_0 \uplus \Ec_\alpha \uplus \Ec_\beta$, and
$\fn(\Phi) \cap (\Ec_\alpha \uplus \Ec_\beta) = \emptyset$.  The two following
conditions are equivalent:
\begin{enumerate}
\item  $\new\ \Ec. \Phi \not\vdash k$ for any $k \in K_S$; or
\item  $\new\ \Ec. \delta(\Phi\mydownarrow) \not\vdash k$ for any {$k \in \delta_\alpha(K_S) \cup \delta_\beta(K_S)$.} 
\end{enumerate}
with 
$K_S = \{t, \pk(t),\vk(t) ~|~ t \in \dom(\rho^+_\alpha) \cup
\dom(\rho^+_\beta),  \mbox{ $t$ ground}\}$. 
\end{corollary}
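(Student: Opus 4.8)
The plan is to derive this directly from Lemma~\ref{lem:samerecipesymmetric}, whose hypothesis is precisely that \emph{one} of the two conditions of the corollary holds: whichever of~(1) and~(2) we assume, the lemma applies and yields $\delta_\gamma(M\Phi\mydownarrow) = M\delta(\Phi\mydownarrow)\mydownarrow$ for both $\gamma \in \{\alpha,\beta\}$ and every recipe $M$ with $\fn(M)\cap\Ec = \emptyset$ and $\fv(M)\subseteq\dom(\Phi)$. Since $\dom(\delta(\Phi\mydownarrow)) = \dom(\Phi)$, such an $M$ is simultaneously a recipe for $\Phi$ and for $\delta(\Phi\mydownarrow)$, so this equality lets us transport any deduction between the two frames. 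Both implications are then obtained by contraposition along the same pattern, the only extra ingredient being the injectivity of $\delta_\gamma$ on ground normal forms avoiding $\Ec_\alpha \uplus \Ec_\beta$ (Lemma~\ref{lem:transfoV}), together with Lemma~\ref{lem:deltaandnormalform} to know that the relevant terms stay in normal form.

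For $(1) \Rightarrow (2)$, I would assume~(1) and suppose, for contradiction, that $\new\, \Ec.\delta(\Phi\mydownarrow) \vdash k$ for some $k \in \delta_\alpha(K_S) \cup \delta_\beta(K_S)$, say $k = \delta_{\gamma'}(k')$ with $k' \in K_S$, witnessed by a recipe $M$. Elements of $K_S$ are ground, in normal form, and — using that $\sigma$, like $\Phi$, mentions no name of $\Ec_\alpha \uplus \Ec_\beta$ — free of those names, so $k' = k'\mydownarrow$ and, by Lemma~\ref{lem:deltaandnormalform}, $k = \delta_{\gamma'}(k')$ is already in normal form. By Lemma~\ref{lem:samerecipesymmetric} (applicable since~(1) holds), $\delta_{\gamma'}(M\Phi\mydownarrow) = M\delta(\Phi\mydownarrow)\mydownarrow = \delta_{\gamma'}(k')$; and since $M\Phi\mydownarrow$ is a ground normal form with no name in $\Ec_\alpha \uplus \Ec_\beta$ (as $\fn(M)\cap\Ec = \emptyset$ and $\fn(\Phi)\cap(\Ec_\alpha\uplus\Ec_\beta) = \emptyset$), Lemma~\ref{lem:transfoV} gives $M\Phi\mydownarrow = k'$. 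As $M$ is a recipe for $\Phi$ avoiding $\Ec$, this shows $\new\, \Ec.\Phi \vdash k'$ with $k' \in K_S$, contradicting~(1).

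For $(2) \Rightarrow (1)$, I would symmetrically assume~(2) and suppose $\new\, \Ec.\Phi \vdash k'$ for some $k' \in K_S$, via a recipe $M$ with $\fn(M)\cap\Ec = \emptyset$, so that $M\Phi\mydownarrow = k'\mydownarrow = k'$. By Lemma~\ref{lem:samerecipesymmetric} (applicable since~(2) holds), $M\delta(\Phi\mydownarrow)\mydownarrow = \delta_\gamma(M\Phi\mydownarrow) = \delta_\gamma(k')$ for $\gamma \in \{\alpha,\beta\}$. Hence $M$, read as a recipe for $\delta(\Phi\mydownarrow)$ (same domain, still avoiding $\Ec$), witnesses $\new\, \Ec.\delta(\Phi\mydownarrow) \vdash \delta_\gamma(k')$, and $\delta_\gamma(k') \in \delta_\alpha(K_S)\cup\delta_\beta(K_S)$, contradicting~(2).

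The argument is short once Lemma~\ref{lem:samerecipesymmetric} is available, so there is no substantial obstacle; the only points requiring care are the bookkeeping of the side conditions (groundness, normality, absence of $\Ec_\alpha\uplus\Ec_\beta$ names) needed to invoke Lemmas~\ref{lem:transfoV} and~\ref{lem:deltaandnormalform} on the witness $M\Phi\mydownarrow$ and on the keys of $K_S$, and the observation that one and the same recipe serves for both $\Phi$ and $\delta(\Phi\mydownarrow)$ because $\delta$ changes neither the domain nor introduces names outside $\Ec$.
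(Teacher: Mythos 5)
Your proof is correct and follows essentially the same route as the paper's: both directions are handled by contraposition, using Lemma~\ref{lem:samerecipesymmetric} (whose hypothesis is supplied by whichever condition is assumed) to transport the witnessing recipe between $\Phi$ and $\delta(\Phi\mydownarrow)$, and Lemmas~\ref{lem:transfoV} and~\ref{lem:deltaandnormalform} for the injectivity and normal-form bookkeeping on $M\Phi\mydownarrow$ and the keys of $K_S$. The side conditions you flag (groundness, normality, and absence of names from $\Ec_\alpha \uplus \Ec_\beta$ in the values of $\sigma$) are exactly the ones the paper relies on implicitly from its standing assumptions.
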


\smallskip{}

\begin{proof}
We prove the two implications separately.

\noindent $(2) \Rightarrow (1)$: Let $k \in K_S$ such that $\new\
\Ec. \Phi \vdash k$. In such a case, there exists $M$ such that
$\fv(M) \subseteq \dom(\Phi)$, $\fn(M) \cap \Ec = \emptyset$, and
$M\Phi\mydownarrow = k\mydownarrow$. We assume w.l.o.g. that $k \in
\{t, \pk(t),\vk(t) ~|~ t \in \dom(\rho^+_\alpha) \mbox{ and $t$
  ground}\}$. Let $\gamma \in \{\alpha,\beta\}$. 
By Lemma~\ref{lem:transfoV}, we
have that {$\delta_\gamma(M\Phi\mydownarrow) =
\delta_\gamma(k\mydownarrow)$.}
Thanks to Lemma~\ref{lem:samerecipesymmetric}, we have that
{$\delta_\gamma(M\Phi\mydownarrow) = M\delta(\Phi\mydownarrow)\mydownarrow$}, and by
Definition of {$\delta_\gamma$}, we have that
{$\delta_\gamma(k\mydownarrow) \in \delta_\gamma(K_S)$}. Thus, we deduce that there
exists {$k' \in \delta_\gamma(K_S)$} such that $\new\ \Ec. \delta(\Phi\mydownarrow) \vdash
k'$.

\smallskip{}

\noindent $(1) \Rightarrow (2)$:{Let $k \in \delta_\gamma(K_S)$ with $\gamma \in \{\alpha,\beta\}$}, and $M$ be a term such that $\fv(M) \subseteq \dom(\Phi)$, $\fn(M) \cap \Ec = \emptyset$, and $M\delta(\Phi\mydownarrow)\mydownarrow = k\mydownarrow$. {$k \in \delta_\gamma(K_S)$ implies the existence of $k' \in K_S$ such that $k =\delta_\gamma(k')$, and thus such that $M\delta(\Phi\mydownarrow)\mydownarrow = \delta_\gamma(k')\mydownarrow$. Thanks to Lemma~\ref{lem:samerecipesymmetric}, we have that $\delta_\gamma(M\Phi\mydownarrow)\mydownarrow = \delta_\gamma(k')\mydownarrow$. Now, if $k'\in K_S$ there must exist $k''\in dom(\rho_{\gamma'})$ such that either $k' = k''$, or $k' = \pk(k'')$, or $k'=\vk(k'')$. In any case, because $\rho_{\gamma'}$ is in normal form, we know that $k''\mydownarrow = k''$ and thus that $k'\mydownarrow = k'$. Hence $M\delta(\Phi\mydownarrow)\mydownarrow = \delta_\gamma(k'\mydownarrow)\mydownarrow$. But, then according to Lemma~\ref{lem:deltaandnormalform}, $\delta_\gamma(M\Phi\mydownarrow) = \delta_\gamma(M\Phi\mydownarrow)\mydownarrow = \delta_\gamma(k'\mydownarrow)\mydownarrow = \delta_\gamma(k'\mydownarrow)$. Finally, thanks to Lemma~\ref{lem:transfoV} we can derive that $M\Phi\mydownarrow = k'\mydownarrow = k'$.} This implies that $M\Phi\mydownarrow \in K_S$, and thus there is a term in $K_S$ that is deducible from $\new\, \Ec. \Phi$.
\end{proof}


\begin{corollary}
\label{cor:framestatequiv}
Let $A = \quadruple{\Ec}{\p}{\Phi}{\sigma}$ be a derived extended process and let $(\rho_\alpha, \rho_\beta)$ be compatible with $A$
such
that $\Ec =  \Ec_0 \uplus \Ec_\alpha
\uplus \Ec_\beta$, 
 $\fn(\Phi) \cap (\Ec_\alpha \uplus \Ec_\beta) = \emptyset$, 
and
$\new\ \Ec. \Phi \not\vdash k$ for 
any $k \in K_S$.
We have that
$\new\; \Ec. \Phi \sim
  \new\; \Ec. \delta(\Phi\mydownarrow)$.
\end{corollary}

\begin{proof}
 The proof directly follows from Lemmas~\ref{lem:transfoV}
 and~\ref{lem:samerecipesymmetric}. 
Indeed, $M\Phi\mydownarrow = N\Phi\mydownarrow$ is
 equivalent to $\delta_\gamma(M\Phi\mydownarrow) =
 \delta_\gamma(N\Phi\mydownarrow)$ (thanks to Lemma~\ref{lem:transfoV}), which
 is equivalent to $M\delta(\Phi\mydownarrow)\mydownarrow =
 N\delta(\Phi\mydownarrow)\mydownarrow$ (thanks to Lemma~\ref{lem:samerecipesymmetric}).
\end{proof}

\subsection{Proof of Theorem~\ref{theo:main-main}}
\label{subsec: proof main theorem}

The goal of this section is to prove Theorem~\ref{theo:main-main}. 
We first state and prove two propositions.

Let $S = \quadruple{\Ec_S}{\p_S}{\Phi_S}{\sigma_S}$ 
and $D =  \quadruple{\Ec_D}{\p_D}{\Phi_D}{\sigma_D}$. 
We say that $D = \delta(S)$ if $\Ec_S = \Ec_D$, $\p_D = \delta(\p_S)$,
$\Phi_D\mydownarrow = \delta(\Phi_S\mydownarrow)$, and
$\sigma_D\mydownarrow = \delta(\sigma_S\mydownarrow)$. 


\begin{proposition}
  \label{pro:shared-to-disjoint}
  Let $P_0$ be a plain coloured process without replication and such that $\bn(P_0) = \fv(P_0) = \emptyset$.
  Let $B_0$ be an extended coloured biprocess such that:
  \begin{itemize}
  \item  $S_0 = \quadruple{\Ec_\alpha \uplus \Ec_\beta \uplus \Ec_0}{\TAGG{P_0}}{\emptyset}{\emptyset} \stackrel{\defi}{=} \fst(B_0)$, 
  \item  $D_0 = \quadruple{\Ec_\alpha \uplus \Ec_\beta \uplus \Ec_0}{P'_0}{\emptyset}{\emptyset}
    \stackrel{\defi}{=} \snd(B_0) $, and  
  \item $D_0 =\delta^{\rho^+_\alpha, \rho^+_\beta}(S_0)$ for some $(\rho_\alpha,\rho_\beta)$, and
  \item $D_0$ does not reveal the value of its assignments w.r.t. $(\rho_\alpha,\rho_\beta)$.
  \end{itemize}
 
  For any extended process $S = \quadruple{\Ec_S}{\p_S}{\Phi_S}{\sigma_S}$ 
  such that $S_0 \LRstep{\tr} S$ with $(\rho_\alpha,\rho_\beta)$ compatible with $S$,
  there exists a biprocess~$B$ and an extended process $D = \quadruple{\Ec_D}{\p_D}{\Phi_D}{\sigma_D} $ 
  such that $B_0 \LRstep{\tr}_\bi B$, ${\fst(B) = S}$, $\snd(B) = D$, $D = \delta(S)$ and with $(\rho_\alpha,\rho_\beta)$ compatible with $D$.
\end{proposition}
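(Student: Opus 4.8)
The plan is to prove the statement by induction on the length of the derivation $S_0 \LRstep{\tr} S$. The base case is immediate since $D_0 = \delta(S_0)$ holds by hypothesis (taking $B = B_0$), and $(\rho_\alpha,\rho_\beta)$ is compatible with $D_0$ since this is part of the hypotheses. For the inductive step, I would assume the result for a derivation $S_0 \LRstep{\tr} S' = \quadruple{\Ec_{S'}}{\p_{S'}}{\Phi_{S'}}{\sigma_{S'}}$ together with a matching biprocess $B'$ and disjoint-case process $D' = \delta(S')$, and then analyse the single step $S' \lrstep{\ell} S$. The heart of the argument is to show that, for whatever rule of the semantics fires on $S'$, the corresponding rule fires on $D'$ with the same label $\ell$ (so that the biprocess step $B' \lrstep{\ell}_\bi B$ is well-defined), and that the resulting $D$ still satisfies $D = \delta(S)$ and compatibility of $(\rho_\alpha,\rho_\beta)$.

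\textbf{Case analysis on the semantic rule.} The structural rules ({\sc Par}, {\sc New}, {\sc Comm}, {\sc Assgn}) are routine: they manipulate the process component in a way that commutes with $\delta^{\rho^+_\alpha,\rho^+_\beta}$ since $\delta$ is defined pointwise on actions according to their colour, and for {\sc Comm}/{\sc Assgn} one must check $\delta_\gamma(u\sigma_{S'}) = \delta_\gamma(u)\,\delta_\gamma(\sigma_{S'})$, which is exactly Lemma~\ref{lem:Testandequality} (using that the tests $\TestTag{\cdot}{i}$ embedded by the tagging transformation $\TAGG{\cdot}$ hold, which they do since the step was enabled). There is no replication to worry about since $P_0$ has none. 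The {\sc Then}/{\sc Else} cases are where tagging genuinely matters: the tagged process first checks $\varphi_{\mathsf{test}} = \TestTag{\TAG{\varphi}{i}}{i}$ and then $\TAG{\varphi}{i}$; by Lemma~\ref{lem:TestTagequivalentdelta} the $\TestTag{}$ part holds on $S'$ iff it holds on $D'$, and by Corollary~\ref{cor:equivalenceeq} the equalities in $\TAG{\varphi}{i}$ evaluate the same way under $\sigma_{S'}$ and under $\delta_\gamma(\sigma_{S'})$; hence the same branch is taken on both sides, which is precisely what is needed for the biprocess {\sc Then}/{\sc Else} rule to apply. The {\sc Out-T} rule updates $\Phi$ by appending $u\sigma_{S'}$; one checks $\delta_\gamma(u\sigma_{S'}) = u\,\delta(\Phi_{S'}\mydownarrow)\mydownarrow$ entries-wise, again via Lemma~\ref{lem:Testandequality}, so $\Phi_D\mydownarrow = \delta(\Phi_S\mydownarrow)$ is preserved and the frame stays derived well-tagged.

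\textbf{The input rule — the main obstacle.} The genuinely delicate case is {\sc In}: an attacker recipe $M$ with $\fn(M)\cap\Ec_{S'} = \emptyset$, $\fv(M)\subseteq\dom(\Phi_{S'})$ produces $u = M\Phi_{S'}$ on the shared side, and we need the \emph{same} recipe $M$ to be usable on the disjoint side producing $\delta_\gamma(u) = M\,\delta(\Phi_{S'}\mydownarrow)\mydownarrow$ so that the biprocess {\sc In} rule (which forces identical recipes on both sides) applies, and moreover we need the resulting extended process to remain derived well-tagged and compatible. The equality $\delta_\gamma(M\Phi_{S'}\mydownarrow) = M\,\delta(\Phi_{S'}\mydownarrow)\mydownarrow$ is exactly Lemma~\ref{lem:samerecipesymmetric}, whose hypotheses — $\Ec_{S'} = \Ec_0\uplus\Ec_\alpha\uplus\Ec_\beta$, $\fn(\Phi_{S'})\cap(\Ec_\alpha\uplus\Ec_\beta)=\emptyset$, and non-deducibility of keys in $K_S$ — must be discharged here; the last is guaranteed because $D_0$ does not reveal the value of its assignments w.r.t. $(\rho_\alpha,\rho_\beta)$ and, via Corollary~\ref{cor:deltaandkeyhidden}, this transfers between the shared and disjoint frames along any compatible derivation. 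The subtle point is maintaining the \emph{derived well-tagged} invariant for $\sigma_S \cup \{z \mapsto u\}$: since the input recipe $M$ uses only public data with smaller measure, the new entry falls under the second clause of the derived well-tagged definition, so the invariant is preserved directly. Finally one must re-check that $(\rho_\alpha,\rho_\beta)$ is still compatible with $D$: the equalities/inequalities among assignment variables are unchanged by an input step (no assignment variable is written), and the root-symbol condition $\racinebis(z\sigma\mydownarrow)\notin\gamma\cup\{0\}$ is a property of the already-fixed assignment values, so compatibility is inherited from $D'$. I expect threading Lemma~\ref{lem:samerecipesymmetric} through the induction — in particular keeping the derived-well-tagged bookkeeping and the key-secrecy hypothesis alive at each step — to be the technically heaviest part of the proof.
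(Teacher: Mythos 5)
Your overall strategy (induction on the length of the derivation, case analysis on the fired rule, Lemma~\ref{lem:Testandequality} for outputs and assignments, Lemma~\ref{lem:TestTagequivalentdelta} and Corollary~\ref{cor:equivalenceeq} for conditionals, Lemma~\ref{lem:samerecipesymmetric} for attacker inputs with the non-deducibility hypothesis supplied by the assumption that $D_0$ does not reveal its assignments) coincides with the paper's proof, and your handling of {\sc In}, {\sc Out-T}, {\sc Then}/{\sc Else} and the compatibility bookkeeping is essentially right.

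There is, however, a genuine gap in your dismissal of {\sc Comm} as a routine rule discharged by Lemma~\ref{lem:Testandequality}. That lemma only yields $\delta_\gamma(\TAG{u}{i}\sigma_{S}\mydownarrow) = (\delta_\gamma(\TAG{u}{i})\delta_\gamma(\sigma_S\mydownarrow))\mydownarrow$ for the colour $\gamma$ of the \emph{output} action. When the internal communication crosses colours --- the output is coloured $i\in\gamma$ and the input is coloured $i'\in\gamma'$ with $\gamma\neq\gamma'$ --- the invariant $D=\delta(S)$ demands $x\sigma_D\mydownarrow = \delta_{\gamma'}(x\sigma_S\mydownarrow)$ for the newly bound variable $x$, whose colour is that of the \emph{input}, while the term actually stored on the disjoint side is $\delta_\gamma(\TAG{u}{i})\sigma_D$. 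You therefore need the cross-colour agreement $\delta_\gamma(\TAG{u}{i}\sigma_S\mydownarrow) = \delta_{\gamma'}(\TAG{u}{i}\sigma_S\mydownarrow)$, which does not follow from Lemma~\ref{lem:Testandequality}; this is precisely the point where a value computed by one protocol flows directly into the other and where disjointness could fail. The paper closes it by observing that private channels cannot be shared across colours (sharing happens only through base-type assignment variables), hence $c\notin\Ec$ and the sender could equally well have emitted $u$ on the public channel; one then replays the already-established {\sc Out-T} case and applies Lemma~\ref{lem:samerecipesymmetric} to the recipe $w_n$ (with non-deducibility of $K_S$ again coming from the hypothesis on $D_0$) to obtain the required equality. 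Without this step the {\sc Comm} case is not proved. A minor further point: {\sc New} never fires here since $\bn(P_0)=\emptyset$, so it need not appear in your list of cases.
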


\begin{proof} 
  Let $\Ec = \Ec_0 \uplus \Ec_\alpha \uplus \Ec_\beta$. 
  We show the result by induction on the length of the derivation. 
  The base case when $S = S_0$ is trivial. We simply conclude by considering $B = B_0$, and $D = D_0$. 
  Now, we assume that $S_0 \LRstep{\tr'} S'$ such that $(\rho_\alpha,\rho_\beta)$ is compatible with $S'$. This means that there exists $S'$, $\tr$ and $\ell$ such that:
  \[
  S_0 \LRstep{\tr} S \lrstep{\ell} S' \mbox{ with $\tr'  = \tr \cdot \ell$}
  \]
  Moreover, we have that $(\rho_\alpha,\rho_\beta)$ is compatible with $S$.

  By induction hypothesis, we have that there exists an extended biprocess $B$ and an extended process $D$ such that $\fst(B) = S$, $\snd(B) = D$, $B_0 \LRstep{\tr}_\bi B$, and $D = \delta(S)$.
  We will show by case analysis on the rule involved in $S \lrstep{\ell} S'$ that exists a biprocess~$B$ and an extended process $D' = \quadruple{\Ec_D'}{\p_D'}{\Phi_D'}{\sigma_D'}$ such that $B_0 \LRstep{\tr}_\bi B'$, ${\fst(B') = S'}$, $\snd(B') = D'$, $D' = \delta(S')$. 
  Then it will remain to prove that $(\rho_\alpha,\rho_\beta)$ compatible with $D'$. To do so, we rely on the fact that $\delta(\sigma_{S'}\mydownarrow) = \sigma_{D'}\mydownarrow$. In particular, by Lemma~\ref{lem:transfoV}, we have that for all assignment variables $z,z'$,  $z\sigma_{S'}\mydownarrow = z'\sigma_{S'}\mydownarrow$ is equivalent to $\delta(z\sigma_{S'}\mydownarrow) = \delta(z'\sigma_{S'}\mydownarrow)$ which is also equivalent to $z\delta(\sigma_{S'}\mydownarrow) = z'\delta(\sigma_{S'}\mydownarrow)$. 
Moreover, since $(\rho_\alpha,\rho_\beta)$ is compatible with $S'$, then for all assignment variable $z \in \dom(\rho_\gamma)$, either $\racinebis(z\sigma_{S'}\mydownarrow) =  \bot$ or $\racinebis(z\sigma_{S'}\mydownarrow)  \not\in \gamma \cup \{ 0\}$. Thus, by Lemma~\ref{lem:deltaandnormalform}, we deduce that either $\racinebis(\delta_\gamma'(z\sigma_{S'}\mydownarrow)) = \bot$ or $\racinebis(\delta_\gamma'(z\sigma_{S'}\mydownarrow)) \not\in \gamma \cup \{ 0\}$. This allows us to conclude that either $\racinebis(z\sigma_{D'}) = \bot$ or $\racinebis(z\sigma_{D'}\mydownarrow) \not\in \gamma \cup \{ 0\}$, and so that $(\rho_\alpha,\rho_\beta)$ is compatible with $D'$.
  
\medskip

Let's now prove the core part of the result. Let $S = \quadruple{\Ec_S}{\p_S}{\Phi_S}{\sigma_S}$ and $S'  = \quadruple{\Ec'_S}{\p'_S}{\Phi_S}{\sigma'_S}$.

  \medskip{}
  
  \noindent\emph{Case of the rule {\sc Out-T}.}
  In such a case, we have that $\Ec'_S = \Ec_S = \Ec$, $\sigma'_S = \sigma_S$, $\p_S = \{\Out(c,\TAG{u}{i})^i.Q\} \uplus \q_S$, $\p'_S = \{Q\} \uplus \q_S$, and $\Phi'_S = \Phi_S \cup \{w_n \refer \TAG{u}{i}\sigma_S\}$.
  Furthermore, we have that $\ell = \new\ w_n. \Out(c,w_n)$, $c \not\in \Ec$, and $n = |\Phi_S| + 1$. 
  Lastly, since $S$ is issued from $\quadruple{\Ec}{\TAGG{P_0}}{\emptyset}{\emptyset}$, we have that $\sigma_S \vDash \TestTag{\TAG{u}{i}}{i}$.

  By hypothesis, we have that $D =\delta(S)$. Hence, we have that:
  \[
  D = \quadruple{\Ec}{\{\Out(c, \delta_\gamma(\TAG{u}{i})).\delta(Q)\} \uplus \delta(\q_S)}{\Phi_D}{\sigma_D}
  \]
  with $\Phi_D\mydownarrow = \delta(\Phi_S\mydownarrow)$, $\sigma_D\mydownarrow = \delta(\sigma_S\mydownarrow)$, and $\gamma \in \{\alpha,\beta\}$ such that $i \in \gamma$.

  Hence, we have that $D \lrstep{\new\, w_n. \Out(c,w_n)} D'$ where
  \[
  D' = \quadruple{\Ec}{\delta(Q) \uplus \delta(\q_S)}{\Phi_D \cup \{w_n \refer \delta_\gamma(\TAG{u}{i})\sigma_D\}}{\sigma_D}.
  \]
  Hence, we have that $B \lrstep{\new\, w_n. \Out(c, w_n)}_\bi B'$ with $\fst(B') = S'$ and ${\snd(B') = D'}$. It remains to show that $D' = \delta(S')$, i.e.
  \[
  (\delta_\gamma(\TAG{u}{i})\sigma_D)\mydownarrow = \delta_\gamma(\TAG{u}{i}\sigma_S\mydownarrow).
  \]
  Since $\sigma_D\mydownarrow = \delta(\sigma_S\mydownarrow)$, we have that:
  \[
  (\delta_\gamma(\TAG{u}{i})\sigma_D)\mydownarrow  = 
  (\delta_\gamma(\TAG{u}{i})\delta(\sigma_S\mydownarrow))\mydownarrow 
  \]
  Let $\gamma'$ be equal to $\alpha$ if $\gamma = \beta$, and equal to $\beta$ if $\gamma = \alpha$. 
  Each variable that occurs in $\TAG{u}{i}$ also occurs in $\dom(\sigma_S)$ and such a variable is either colored with a color in $\gamma$, or an assignation variable $z^{\gamma'}_j$. 
  Thus, we have that $\delta_\gamma(\TAG{u}{i})$ only contains variables that are colored with a color in $\gamma$.  
  Hence, we have that
  \[
  (\delta_\gamma(\TAG{u}{i})\delta(\sigma_S\mydownarrow))\mydownarrow = 
  (\delta_\gamma(\TAG{u}{i})\delta_\gamma(\sigma_S\mydownarrow))\mydownarrow
  \]
  Relying on Lemma~\ref{lem:Testandequality} (note that $\sigma_S\mydownarrow \vDash \TestTag{\TAG{u}{i}}{i}$), we have that:
  \[
  \begin{array}{rclcl}
    (\delta_\gamma(\TAG{u}{i})\sigma_D)\mydownarrow & = &
    (\delta_\gamma(\TAG{u}{i})\delta_\gamma(\sigma_S\mydownarrow))\mydownarrow \\ &= & \delta_\gamma(\TAG{u}{i}(\sigma_S\mydownarrow))\mydownarrow \\
    &=& \delta_\gamma(\TAG{u}{i}(\sigma_S\mydownarrow)\mydownarrow)\\
    &=& \delta_\gamma(\TAG{u}{i}\sigma_S\mydownarrow)
  \end{array}
  \]
  
  \medskip{}

  \noindent \emph{Case of the rule {\sc In}.}
  In such a case, we have that $\Ec'_S = \Ec_S$, $\Phi'_S = \Phi_S$, $\p_S =\{\In(c,x)^i.Q\} \uplus \q_S$, $\p'_S = \{Q\} \uplus \q_S$, $\sigma'_S = \sigma_S \cup \{x \mapsto M\Phi_S\}$, and $\ell = \In(c,M)$ with $c \not\in \Ec_S$, $\fv(M) \subseteq \dom(\Phi_S)$ and $\fn(M) \cap \Ec_S = \emptyset$.
  
  By hypothesis, we have that $D = \delta(S)$. Hence, we have that:
  \[
  D = \quadruple{\Ec}{\{\In(c, x). \delta(Q)\} \uplus \delta(\q_S)}{\Phi_D}{\sigma_D}
  \]
  with $\Phi_D\mydownarrow = \delta(\Phi_S\mydownarrow)$, $\sigma_D\mydownarrow = \delta(\sigma_S\mydownarrow)$.
  Let $\gamma \in \{\alpha,\beta\}$ such that $i \in \gamma$.

  Hence, we have that $D \lrstep{\In(c,M)} D'$ where
  \[
  D' = \quadruple{\Ec}{\delta(Q) \uplus \delta(\q_S)}{\Phi_D}{\sigma_D \cup \{x \mapsto M\Phi_D\}}.
  \]
  Hence, we have that $B \lrstep{\In(c,M)}_\bi B'$ with $\fst(B') = S'$ and $\snd(B') = D'$. It remains to show that $D' = \delta(S')$, i.e. 
  \[
  (M\Phi_D)\mydownarrow = \delta_\gamma(M\Phi_S\mydownarrow).
  \]
By hypothesis, we know that $D_0$ does not reveal the values of its assignment variables w.r.t. $(\rho_\alpha,\rho_\beta)$. 
Hence, for all assignment variable $x$ of color $\alpha$
(resp. $\beta$) in $\dom(\sigma_D)$, for all $k \in \{ k, \pk(k),
\vk(k) \mid k = x\sigma_D \vee k = x\rho_\alpha \text{
  (resp. $x\rho_\beta$)}\}$, $k$ is not deducible in $\new\
\Ec. \Phi_D$. We denote $K$ this set.

\noindent {Let $K_S = \{t,\pk(t), \vk(t)~|~t \in \dom(\rho^+_\alpha)\cup \dom(\rho^+_\beta), \mbox{ $t$ ground}\}$ We know that  $\sigma_D\mydownarrow = \delta(\sigma_S\mydownarrow)$, and by definition of $\rho_\alpha^+$ and $\rho_\beta^+$, we have that $K = \delta_\alpha(K_S) \cup \delta_\beta(K_S)$.} We have also that $\Phi_D\mydownarrow =
\delta(\Phi_S\mydownarrow)$. Hence, we deduce that $\new\
\Ec. \delta(\Phi_S\mydownarrow) \not\vdash k$ for any $k \in
\delta_\alpha(K_S) \cup \delta_\beta(K_S)$ 
This allow us to apply Lemma~\ref{lem:samerecipesymmetric} and thus to obtain that:
  \[
  (M\Phi_D)\mydownarrow =  (M (\Phi_D\mydownarrow))\mydownarrow 
  = (M\delta(\Phi_S\mydownarrow))\mydownarrow
  = \delta_\gamma(M\Phi_S\mydownarrow).
  \]
  
  \medskip{}

  \noindent\emph{Case of the rule {\sc Then}.}
  In such a case, we have that $\Ec'_S = \Ec_S$, $\Phi'_S = \Phi_S$, $\sigma'_S = \sigma_S$, $\p_S = \{P_S\} \uplus \q_S$, and $\p'_S = \{P'_S\} \uplus \q_S$ where $P_S$ and $P'_S$ are as follows:

  \begin{itemize}
  \item \emph{Case a: a test before an output.}
    \[
    \begin{array}{l}
      P_S = \myIf\, \TestTag{\TAG{v}{i}}{i} \, \myThen \, \Out(c,\TAG{v}{i})^i. Q_S\\
      P'_S = \Out(u,\TAG{v}{i})^i. Q_S\\
      \sigma_S \vDash  \TestTag{\TAG{v}{i}}{i}
    \end{array}
    \]
    for some $i \in \{1,\ldots,p\}$.
  \item \emph{Case b: a test before an assignation.}
    \[
    \begin{array}{l}
      P_S = \myIf\, \TestTag{\TAG{v}{i}}{i} \, \myThen \, [z := \TAG{v}{i}]^i. Q_S\\
      P'_S = \{[z := \TAG{v}{i}]^i. Q_S\\
      \sigma_S \vDash  \TestTag{\TAG{v}{i}}{i}
    \end{array}
    \]
    for some $i \in \{1,\ldots,p\}$.
  \item \emph{Case c: a test before a conditional.}
    \[
    \begin{array}{l}
      P_S = \myIf\, \TestTag{\TAG{\varphi}{i}}{i} \, \myThen\, (\myIf\, \TAG{\varphi}{i} \, \myThen\, Q^1_S \, \myElse \, Q^2_S)\\
      P'_S = \myIf\, \TAG{\varphi}{i} \, \myThen\, Q^1_S \, \myElse \, Q^2_S \\
      \sigma_S \vDash  \TestTag{\TAG{\varphi}{i}}{i}
    \end{array}
    \]
    for some $i \in \{1,\ldots, p\}$. 
  \item \emph{Case d: a test of a conditional.} 
    \[
    \begin{array}{l}
      P_S = \myIf\, \TAG{\varphi}{i} \, \myThen \, Q^1_S \, \myElse\, Q^2_S\\
      P'_S = Q^1_S\\
      \sigma_S \vDash  \TAG{\varphi}{i} \mbox{ and } \sigma_S \vDash \TestTag{\TAG{\varphi}{i}}{i}
    \end{array}
    \]
    for some $i \in \{1,\ldots, p\}$. 
  \end{itemize}
  Each case can be handled in a similar way. 
  Note that we rely on Corollary~\ref{cor:equivalenceeq} instead of Lemma~\ref{lem:TestTagequivalentdelta} to establish the result in \emph{Case d.}
  We assume that we are in the first case. 
  Let $\gamma \in \{\alpha,\beta\}$ such that $i \in \gamma$. 
  By hypothesis, we have that $D = \delta(S)$. 
  Hence, we have that $D$ is equal to
  \[
  \begin{array}{@{}l@{}}
  \quadruple{\Ec}{\{\myIf\ \TestTag{\delta_\gamma(\TAG{v}{i})}{i} \, \myThen\\
  \quad\quad\quad\Out(c,\delta_\gamma(\TAG{v}{i})). \delta(Q_S)\} \uplus \q_S}{\Phi_D}{\sigma_D}\\
  \end{array}
  \]
  with $\Phi_D\mydownarrow = \delta(\Phi_S\mydownarrow)$, and $\sigma_D\mydownarrow = \delta(\sigma_S\mydownarrow)$.

  Since $\sigma_S \vDash  \TestTag{\TAG{v}{i}}{i}$, we have also that $(\sigma_S\mydownarrow) \vDash \TestTag{\TAG{v}{i}}{i}$.
  Thanks to Lemma~\ref{lem:TestTagequivalentdelta}, we deduce that $\delta_\gamma(\sigma_S\mydownarrow)  \vDash \TestTag{\delta_\gamma(\TAG{v}{i})}{i}$.
  Actually, each variable that occurs in $\TestTag{\delta_\gamma(\TAG{v}{i})}{i}$ is a variable that occurs in $\dom(\sigma_S)$ and such a variable is necessarily colored with a color in $\gamma$. Hence, we have also that:
  \[
  \delta(\sigma_S\mydownarrow)  \vDash \TestTag{\delta_\gamma(\TAG{v}{i})}{i}.
  \]

  Hence, we have that $D \lrstep{\tau} D'$ where 
  \[
  D' = \quadruple{\Ec}{\{\Out(u,\delta_\gamma(\TAG{v}{i})). \delta(Q_S)\} \uplus \delta(\q_S)}{\Phi_D}{\sigma_D}.
  \]
  Hence, we have that $B \lrstep{\tau}_\bi B'$ with $\fst(B') = S'$ and $\snd(B') = D'$.
  We also have that $D' = \delta(S')$.
  
  \medskip{}

  \noindent\emph{Case of the rule {\sc Else}.} This case is similar to the previous one.

  \medskip{}

  \noindent\emph{Case of the rule {\sc Assgn}.}
  In such a case, we have that $\Ec'_S = \Ec_S$, $\Phi'_S = \Phi_S$, $\p_S = \{[x := \TAG{v}{i}].Q\} \uplus \q_S$, $\p'_S = \{Q\} \uplus \q_S$, $\sigma'_S = \sigma_S \cup \{x \mapsto \TAG{v}{i}\sigma_S\}$, and $\ell = \tau$. Lastly, since $S$ is issued from $\quadruple{\Ec}{\TAGG{P_0}}{\emptyset}{\emptyset}$, we have that $\sigma_S \vDash \TestTag{\TAG{v}{i}}{i}$.

  By hypothesis, we have that $D = \delta(S)$. Hence, we have that:
  \[
  D = \quadruple{\Ec}{[x:=\delta(\TAG{v}{i})].\delta(Q) \uplus \delta(\q_S)}{\Phi_D}{\sigma_D}
  \] 
  with $\Phi_D\mydownarrow = \delta(\Phi_S\mydownarrow)$, $\sigma_D\mydownarrow = \delta(\sigma_S\mydownarrow)$, and $\gamma \in \{\alpha,\beta\}$ such that $i \in \gamma$.

  Hence, we have that $D \lrstep{\tau} D'$ where
  \[
  D' = \quadruple{\Ec}{\delta(Q) \uplus \delta(\q_S)}{\Phi_D}{\sigma_D \cup \{x \mapsto \delta_\gamma(\TAG{v}{i})\sigma_D\}}.
  \]
  Hence, we have that $B \lrstep{\tau} B'$ with $\fst(B')= S'$ and $\snd(B') = D'$. It remains to show that $D' = \delta(S')$, i.e.
  \[
  (\delta_\gamma(\TAG{v}{i})\sigma_D)\mydownarrow = \delta_\gamma(\TAG{v}{i}\sigma_S\mydownarrow).
  \]
  This can be done as in the case of the rule {\sc Out-T}.

  \medskip{}

  \noindent\emph{Case of the rule {\sc Comm}.}
  In such a case, we have that $\Ec'_S = \Ec_S$, $\Phi'_S = \Phi_S$, $\p_S = \{\Out(c,\TAG{u}{i})^{i}.Q_1; \In(c,x)^{i'}.Q_2\} \uplus \q_S$, $\sigma'_S = \sigma_S \cup \{x \mapsto \TAG{u}{i}\sigma_S\}$, and $\ell = \tau$. 
  Lastly, since $S$ is issued from $\quadruple{\Ec}{\TAGG{P_0}}{\emptyset}{\emptyset}$, we have that $\sigma_S \vDash \TestTag{\TAG{u}{i}}{i}$.

  By hypothesis, we have that $D = \delta(S)$. Hence, we have that $D$ is equal to
  \[
  \quadruple{\Ec}{\{\Out(c,\delta_\gamma(\TAG{u}{i})).\delta(Q_1); \In(c,x).\delta(Q_2)\}\uplus \delta(\q_S)}{\Phi_D}{\sigma_D}
  \]
  with $\Phi_D\mydownarrow = \delta(\Phi_S\mydownarrow)$, $\sigma_D\mydownarrow = \delta(\sigma_S\mydownarrow)$.

  Let $\gamma,\gamma' \in \{\alpha,\beta\}$ such that $i \in \gamma$, and $i' \in \gamma'$.
  Hence, we have that $D \lrstep{\tau} D'$ where $D'$ is equal to:
  \[
  \quadruple{\Ec}{\{\delta(Q_1); \delta(Q_2)\} \uplus \delta(\q_S)}{\Phi_D}{\sigma_D \cup \{(x \mapsto \delta_\gamma(\TAG{u}{i})\sigma_D)^{i'}\}}.
  \]

  Hence, we have that $B \lrstep{\tau} B'$ for some biprocess $B'$ such that $\fst(B') = S'$ and $\snd(B')=D'$.
  It remains to show that $D' = \delta(S')$, i.e.
  \[
  (\delta_\gamma(\TAG{u}{i})\sigma_D)\mydownarrow =
  \delta_{\gamma'}((\TAG{u}{i}\sigma_S)\mydownarrow)
  \]
  If $\gamma = \gamma'$, then this can be done as in the previous cases. 

  Otherwise, since names can only be shared through assignments, and assignments only concern variables/terms of base type, we necessarily have that $c \not\in \Ec$. 
  Hence, we have that $S \lrstep{\nu w_n. \Out(c,w_n)} S_\Out$ where:
  \[
  S_\Out = \quadruple{\Ec}{\{Q_1; \In(c,x).Q_2\} \uplus \q_S}{\Phi_S \cup \{w_n \refer \TAG{u}{i}\sigma_S\}}{\sigma_S}
  \]
  Note that $(\rho_\alpha,\rho_\beta)$ is still compatible with $S_\Out$. We would like to apply Lemma~\ref{lem:samerecipesymmetric} with $M = w_n$ on the frame of $S_\Out$ which requires an hypothesis of non deductibility of the shared key. For these, we will rely on our hypothesis that $D_0$ does not reveal the values of its assignments w.r.t. $(\rho_\alpha,\rho_\beta)$:
  
Let $\Phi'_S = \Phi_S \cup \{w_n \refer \TAG{u}{i}\sigma_S\}$. We
already proved our induction result for the rule {\sc Out-T}. 
Hence, we deduce that there exists $D_\Out$ such that $D \lrstep{\nu
  w_n. \Out(c,w_n)} D_\Out$ where $D_\Out = \quadruple{\Ec}{\p_\Out}{\Phi'_D}{\sigma_D}$, $\Phi'_D = \Phi_D \cup \{w_n \refer \delta_\gamma(\TAG{u}{i})\sigma_D\}$. Moreover, it implies that $\Phi'_D\mydownarrow = \delta(\Phi'_S\mydownarrow\})$ and $\sigma_D\mydownarrow = \delta(\sigma_S\mydownarrow)$.
As mentioned, by hypothesis, we know that $D_0$ does not reveal the values of its assignments w.r.t. $(\rho_\alpha,\rho_\beta)$. 
Hence, for all assignment variable $x$ of color $\alpha$
(resp. $\beta$) in $\dom(\sigma_D)$, for all $k \in \{ k, \pk(k),
\vk(k) \mid k = x\sigma_D \vee k = x\rho_\alpha \text{
  (resp. $x\rho_\beta$)}\}$, $k$ is not deducible in $\new\
\Ec. \Phi'_D$. We denote by $K$ such a set.

\noindent {Let $K_S = \{t, \pk(t), \vk(t)~|~t \in \dom(\rho^+_\alpha)
\cup \dom(\rho^+_\beta), \mbox{ $t$ ground}\}$.
Since  $\sigma_D\mydownarrow = \delta(\sigma_S\mydownarrow)$, and by
definition of $\rho_\alpha^+$ and $\rho_\beta^+$, we deduce that $K =
\delta_\alpha(K_S) \cup \delta_\beta(K_S)$. Moreover, we have that
$\Phi'_D\mydownarrow = \delta(\Phi'_S\mydownarrow)$. Hence, we deduce
that $\new\ \Ec. \delta(\Phi'_S\mydownarrow) \not\vdash k$ for any $k
\in \delta_\alpha(K_S) \cup \delta_\beta(K_S)$.}
This allow us to apply Lemma~\ref{lem:samerecipesymmetric} with $M = w_n$ and so we deduce that $\delta_\gamma(\TAG{u}{i}\sigma_S\mydownarrow) = \delta_{\gamma'}(\TAG{u}{i}\sigma_S\mydownarrow)$.
Hence, we can conclude as in the previsous case.

  \medskip{}

  \noindent\emph{Case of the rule {\sc Par}.} It is easy to see that the result holds for this case.

  \medskip{}
  Note that the rules {\sc New} and {\sc Repl} can not be triggered since the processes under study do not contain bounded names and replication.
\end{proof}



\begin{proposition}
  \label{pro:disjoint-to-shared}
  Let $P_0$ be a plain colored process without replication and such that $\bn(P_0) = \fv(P_0) = \emptyset$.
  Let $B_0$ be an extended colored biprocess such that:
  \begin{itemize}
  \item  $S_0 = \quadruple{\Ec_\alpha \uplus \Ec_\beta \uplus \Ec_0}{\TAGG{P_0}}{\emptyset}{\emptyset} \stackrel{\defi}{=} \fst(B_0)$, 
  \item  $D_0 = \quadruple{\Ec_\alpha \uplus \Ec_\beta \uplus \Ec_0}{P'_0}{\emptyset}{\emptyset}
    \stackrel{\defi}{=} \snd(B_0) $, and  
  \item $D_0 =\delta^{\rho^+_\alpha, \rho^+_\beta}(S_0)$ for some $(\rho_\alpha,\rho_\beta)$.
   \item $D_0$ does not reveal the value of its assignments w.r.t. $(\rho_\alpha,\rho_\beta)$.
  \end{itemize}
  
  For any extended process $D = \quadruple{\Ec_D}{\p_D}{\Phi_D}{\sigma_D}$ 
  such that  $D_0 \LRstep{\tr} D$ with $(\rho_\alpha,\rho_\beta)$  compatible with $D$, 
  there exists a  biprocess $B$ and an extended process $S = \quadruple{\Ec_S}{\p_S}{\Phi_S}{\sigma_S}$ 
  such that $B_0 \LRstep{\tr}_\bi B$, ,${\fst(B) = S}$, $\snd(B) = D$, and $D = \delta(S)$.
\end{proposition}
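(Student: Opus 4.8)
\textbf{Proof plan for Proposition~\ref{pro:disjoint-to-shared}.}
The plan is to mirror the structure of the proof of Proposition~\ref{pro:shared-to-disjoint}: proceed by induction on the length of the derivation $D_0 \LRstep{\tr} D$, maintaining at each step the invariant that there is a biprocess $B$ and an extended process $S$ with $B_0 \LRstep{\tr}_\bi B$, $\fst(B) = S$, $\snd(B) = D$, and $D = \delta(S)$ (i.e.\ $\Ec_S = \Ec_D$, $\p_D = \delta(\p_S)$, $\Phi_D\mydownarrow = \delta(\Phi_S\mydownarrow)$, $\sigma_D\mydownarrow = \delta(\sigma_S\mydownarrow)$). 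The base case $D = D_0$ is immediate with $B = B_0$ and $S = S_0$. For the inductive step, I would write $D_0 \LRstep{\tr} D \lrstep{\ell} D'$ with $(\rho_\alpha,\rho_\beta)$ compatible with $D$ (and with $D'$), apply the induction hypothesis to get $B$ and $S = \delta^{-1}$-image of $D$, and then perform a case analysis on the semantic rule used in $D \lrstep{\ell} D'$ to produce the matching step $S \lrstep{\ell} S'$ and hence $B \lrstep{\ell}_\bi B'$. As in the companion proposition, the key point after producing $S'$ is to check $D' = \delta(S')$, which reduces in each case to a computation showing that applying $\delta_\gamma$ commutes appropriately with substitution and normalisation.

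The routine cases are handled exactly as their counterparts in Proposition~\ref{pro:shared-to-disjoint}, only read in the opposite direction. For \textsc{Out-T} and \textsc{Assgn}: the disjoint process outputs/assigns $\delta_\gamma(\TAG{v}{i})$; since $S$ is derived from $\TAGG{P_0}$ we have $\sigma_S \vDash \TestTag{\TAG{v}{i}}{i}$, and Lemma~\ref{lem:Testandequality} gives $(\delta_\gamma(\TAG{v}{i})\sigma_D)\mydownarrow = \delta_\gamma(\TAG{v}{i}\sigma_S\mydownarrow)$, using that $\delta_\gamma(\TAG{v}{i})$ only mentions variables coloured in $\gamma$ so $\delta(\sigma_S\mydownarrow)$ and $\delta_\gamma(\sigma_S\mydownarrow)$ agree on them. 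For \textsc{In}: the recipe $M$ is the same on both sides; I invoke the hypothesis that $D_0$ does not reveal the value of its assignments w.r.t.\ $(\rho_\alpha,\rho_\beta)$ together with $\sigma_D\mydownarrow = \delta(\sigma_S\mydownarrow)$ and $\Phi_D\mydownarrow = \delta(\Phi_S\mydownarrow)$ to see that condition~(2) of Lemma~\ref{lem:samerecipesymmetric} holds, hence $(M\Phi_D)\mydownarrow = \delta_\gamma(M\Phi_S\mydownarrow)$, which is exactly what $D' = \delta(S')$ requires. For \textsc{Then}/\textsc{Else}: the test $\TestTag{\delta_\gamma(\TAG{v}{i})}{i}$ (resp.\ $\delta_\gamma(\TAG{\varphi}{i})$) is evaluated on $\sigma_D$; Lemma~\ref{lem:TestTagequivalentdelta} transfers its truth value to $\sigma_S \vDash \TestTag{\TAG{v}{i}}{i}$ on the shared side (and Corollary~\ref{cor:equivalenceeq} does the same for the conditional's own test $\TAG{\varphi}{i}$ in the "test of a conditional" subcase), so $S$ can fire the same branch. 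The \textsc{Par} case is trivial, and \textsc{New}/\textsc{Repl} cannot occur because $P_0$ has no bound names and no replication.

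The one case requiring care — and the main obstacle — is \textsc{Comm} with mismatched colours, i.e.\ an internal communication between an output coloured $i \in \gamma$ and an input coloured $i' \in \gamma'$ with $\gamma \neq \gamma'$. On the disjoint side $x$ is bound to $\delta_\gamma(\TAG{u}{i})\sigma_D$, but the assignment rule on the shared side would bind $x$ to $\TAG{u}{i}\sigma_S$ and we must show $(\delta_\gamma(\TAG{u}{i})\sigma_D)\mydownarrow = \delta_{\gamma'}(\TAG{u}{i}\sigma_S\mydownarrow)$ — i.e.\ that $\delta_\alpha$ and $\delta_\beta$ agree on this term. As in the companion proof, I would factor the communication through an \textsc{Out-T} step followed by an \textsc{In} step on a fresh channel-free frame entry $w_n$: apply the already-proved \textsc{Out-T} case to get $D_\Out$ with $\Phi'_D\mydownarrow = \delta(\Phi'_S\mydownarrow)$, note that $(\rho_\alpha,\rho_\beta)$ stays compatible, and then use the non-revelation hypothesis to establish condition~(2) of Lemma~\ref{lem:samerecipesymmetric} with $M = w_n$, yielding $\delta_\gamma(\TAG{u}{i}\sigma_S\mydownarrow) = \delta_{\gamma'}(\TAG{u}{i}\sigma_S\mydownarrow)$. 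This is exactly the direction already exploited in Proposition~\ref{pro:shared-to-disjoint}, so the same argument applies verbatim; the only thing to double-check is that the channel name $c$ is indeed not in $\Ec$ (true because sharing happens only through assignment variables of base type, never through channels). Finally, the compatibility of $(\rho_\alpha,\rho_\beta)$ with $S'$, $D'$ is re-established from $\delta(\sigma_{S'}\mydownarrow) = \sigma_{D'}\mydownarrow$ via Lemma~\ref{lem:transfoV} (for the equality/inequality constraints on assignment variables) and Lemma~\ref{lem:deltaandnormalform} (for the $\racinebis$ constraints), just as in the previous proposition. Together with Corollary~\ref{cor:framestatequiv} these two propositions immediately give Theorem~\ref{theo:main-main}.
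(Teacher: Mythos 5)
Your plan reproduces the paper's own proof of this proposition essentially step for step: the same induction on the derivation length, the same case analysis on the semantic rules, the same use of Lemma~\ref{lem:Testandequality}, Lemma~\ref{lem:TestTagequivalentdelta}, Corollary~\ref{cor:equivalenceeq} and Lemma~\ref{lem:samerecipesymmetric}, and the same factoring of the mixed-colour \textsc{Comm} case through an \textsc{Out-T} step so that the non-revelation hypothesis yields $\delta_\gamma(\TAG{u}{i}\sigma_S\mydownarrow) = \delta_{\gamma'}(\TAG{u}{i}\sigma_S\mydownarrow)$. It is correct and takes the same approach as the paper.
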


\begin{proof}
  We show the result by induction on the length of the derivation. 
  The base case when $D_0 = D$ is trivial. We simply conclude by considering $B  = B_0$, and $S = S_0$. 
  Now, we assume that $D_0 \LRstep{\tr'} D'$ such that $(\rho_\alpha,\rho_\beta)$ is compatible with $D'$. This means that there exist $D$, $\tr$, and $\ell$ such that:
  \[
  D_0 \LRstep{\tr} D \lrstep{\ell} D' \mbox{ with } \tr' = \tr\cdot \ell
  \]
  Note that we necessarily have that $(\rho_\alpha,\rho_\beta)$ is compatible with~$D$.

  By induction hypothesis, we have that there exists an extended biprocess $B$ and an extended process $S$ such that $\fst(B) = S$, $\snd(B) =D$, $B_0 \LRstep{\tr}_\bi B$, and $D = \delta(S)$.
  We show the result by case analysis on the rule involved in $D \lrstep{\ell} D'$. 
  Let $D = \quadruple{\Ec_D}{\p_D}{\Phi_D}{\sigma_D}$ and $D' = \quadruple{\Ec'_D}{\p'_D}{\Phi'_D}{\sigma'_D}$.
  First, note that since $D$ is issued from $D_0 = \delta(S_0)$ and $S_0 = \quadruple{\Ec}{\TAGG{P_0}}{\emptyset}{\emptyset}$, we know that terms invovled in $D$ are tagged and obtained through the $\delta$ transformation.

  \medskip{}
  
  \noindent\emph{Case of the rule {\sc Out-T}.} 
  In such a case, we have that $\Ec'_D = \Ec_D$, $\sigma'_D = \sigma_D$, $\p_D = \{\Out(c,\delta(\TAG{v}{i})).\delta(Q_S)\}\uplus \delta(\q_S)$, $\p'_D = \{\delta(Q_D)\} \uplus \delta(\q_D)$, and $\Phi'_D = \Phi_D \cup \{w_n \refer \delta_\gamma(\TAG{v}{i})\sigma_D\}$ with $\gamma \in \{\alpha,\beta\}$ such that $i \in \gamma$.
  Furthermore, we have that $\ell = \new\, w_n.\Out(c,w_n)$, $c \not\in \Ec_D$, and $n = |\Phi_D|+1$. 
  We have also 
  \[
  S = \quadruple{\Ec}{\{\Out(c,\TAG{v}{i}).Q_S\} \uplus \q_S)}{\Phi_S}{\sigma_S}
  \]
  with $\Phi_D\mydownarrow = \delta(\Phi_S\mydownarrow)$, and $\sigma_D\mydownarrow = \delta(\sigma_S\mydownarrow)$.
  Hence, we have that $S \lrstep{\new\, w_n. \Out(c,w_n)} S'$ where
  \[
  S' = \quadruple{\Ec}{Q_S \uplus \q_S}{\Phi_S \cup \{w_n \refer \TAG{v}{i}\sigma_S\}}{\sigma_S}.
  \]

  Hence, we have that $B \lrstep{\new \, w_n. \Out(c, w_n)}_\bi B'$ with $\fst(B') = S'$ and $\snd(B') = D'$. 
  It remains to show that $D' = \delta(S')$, i.e.
  \[
  (\delta_\gamma(\TAG{v}{i})\sigma_D)\mydownarrow = \delta_\gamma(\TAG{v}{i}\sigma_S\mydownarrow)
  \]
  Since $D$ is issued from $\quadruple{\Ec}{\delta(\TAGG{P_0})}{\emptyset}{\emptyset}$ and $B_0 \LRstep{\tr}_\bi B$, we have that  $\sigma_D \vDash \TestTag{\delta_\gamma(\TAG{v}{i})}{i}$ and $\sigma_S\vDash \TestTag{\TAG{v}{i}}{i}$.

  Since $\sigma_D\mydownarrow = \delta(\sigma_S\mydownarrow)$, we have that:
  \[
  (\delta_\gamma(\TAG{v}{i})\sigma_D)\mydownarrow  = 
  (\delta_\gamma(\TAG{v}{i})\delta(\sigma_S\mydownarrow))\mydownarrow 
  \]
  Let $\gamma'$ be equal to $\alpha$ if $\gamma = \beta$, and equal to $\beta$ if $\gamma = \alpha$. 
  Each variable that occurs in $\TAG{v}{i}$ also occurs in $\dom(\sigma_S)$ and such a variable is either  colored with a color in $\gamma$, or an assignation variable $z^{\gamma'}_j$. 
  Thus, we have that $\delta_\gamma(\TAG{v}{i})$ only contains variables that are colored with a color in $\gamma$.
  Hence, we have that
  \[
  (\delta_\gamma(\TAG{v}{i})\delta(\sigma_S\mydownarrow))\mydownarrow =
  (\delta_\gamma(\TAG{v}{i})\delta_\gamma(\sigma_S\mydownarrow))\mydownarrow
  \]
  Relying on Lemma~\ref{lem:Testandequality} (note that $\sigma_S\mydownarrow \vDash \TestTag{\TAG{v}{i}}{i}$), we have that:
  \[
  \begin{array}{rclcl}
    (\delta_\gamma(\TAG{v}{i})\sigma_D)\mydownarrow & = &
    (\delta_\gamma(\TAG{v}{i})\delta_\gamma(\sigma_S\mydownarrow))\mydownarrow \\ &= & \delta_\gamma(\TAG{v}{i}(\sigma_S\mydownarrow))\mydownarrow \\
    &=& \delta_\gamma(\TAG{v}{i}(\sigma_S\mydownarrow)\mydownarrow)\\
    &=& \delta_\gamma(\TAG{v}{i}\sigma_S\mydownarrow)
  \end{array}
  \]
  
  \medskip{}

  \noindent\emph{Case of the rule {\sc In}.} 
  In such a case, we have that $\Ec'_D = \Ec_D$, $\Phi'_D = \Phi_D$, $\p_D = \{\in(c,x)^i.\delta(Q_S\} \uplus \delta(\q_S)$, $\p'_D = \{\delta(Q_S)\} \uplus \delta(\q_S)$, $\sigma'_D = \sigma_D \cup \{x \mapsto M\Phi_D\}$, and $\ell = \In(c,M)$ with $c \not\in \Ec_D$, $\fv(M) \subseteq \dom(\Phi_D)$, and $\fn(M) \cap \Ec_D = \emptyset$. 
  Moreover, we have that:
  \[
  S = \quadruple{\Ec}{\{\In(c,x).Q_S\}\uplus \q_S}{\Phi_S}{\sigma_S}
  \]
  with $\Phi_D\mydownarrow = \delta(\Phi_S\mydownarrow)$, and $\sigma_D\mydownarrow = \delta(\sigma_S\mydownarrow)$. Let $\gamma \in \{\alpha,\beta\}$ such that $i \in \gamma$.
  
  Hence, we have that $S \lrstep{\In(c,M)} S'$ where
  \[
  S' = \quadruple{\Ec}{\{Q_S\}\uplus\q_S}{\Phi_S}{\sigma_S \cup \{x \mapsto M\Phi_S}.
  \]
  Hence, we have that $B \lrstep{\In(c,M)}_\bi B'$ with $\fst(B') = S'$, and $\snd(B') = D'$. It remains to show that $D' = \delta(S')$, i.e.
  \[
  (M\Phi_D)\mydownarrow = \delta_\gamma(M\Phi_S\mydownarrow).
  \]
  By hypothesis, we know that $D_0$ does not reveal the value of its assignments w.r.t. $(\rho_\alpha,\rho_\beta)$. Since $\Phi_D\mydownarrow = \delta(\Phi_S\mydownarrow)$ and $\sigma_D\mydownarrow = \delta(\sigma_S\mydownarrow)$. Hence, by following the definition of $\rho_\alpha^+$ and $\rho_\beta^+$, we deduce that the hypothesis of Lemma~\ref{lem:samerecipesymmetric} are satisfied. Hence, by relying on it, we have that:
  \[
  (M\Phi_D)\mydownarrow = (M(\Phi_D\mydownarrow))\mydownarrow =
  (M\delta(\Phi_S\mydownarrow))\mydownarrow = \delta_\gamma(M\Phi_S\mydownarrow).
  \] 

  \medskip{}

  \noindent \emph{Case of the rule {\sc Then}.} 
  In such a case, we have that $\Ec'_D = \Ec_D$, $\Phi'_D = \Phi_D$, $\sigma'_D = \sigma_D$, $\p_D = \{P_D\} \uplus \q_D$, and $\p'_D = \{P'_D\}\uplus \q_D$ where $P_D$ and $P'_D$ are as follows:
  
  \begin{itemize}
  \item \emph{Case a: a test before an output.}
    \[
    \begin{array}{l}
      P_D = \myIf\, \TestTag{\delta_\gamma(\TAG{v}{i})}{i} \, \myThen \, \Out(c,\delta_\gamma(\TAG{v}{i}))^i. \delta(Q_S)\\
      P'_D = \Out(u,\delta_\gamma(\TAG{v}{i}))^i. \delta(Q_S)\\
      \sigma_D \vDash  \TestTag{\delta_\gamma(\TAG{v}{i})}{i}
    \end{array}
    \]
    for some $i \in \{1,\ldots,p\}$ and $\gamma \in \{\alpha,\beta\}$ such that $i \in \gamma$.
  \item \emph{Case b: a test before an assignation.}
    \[
    \begin{array}{l}
      P_D = \myIf\, \TestTag{\delta_\gamma(\TAG{v}{i})}{i} \, \myThen \, [z := \delta_\gamma(\TAG{v}{i})]^i. \delta(Q_S)\\
      P'_D = \{[z := \delta_\gamma(\TAG{v}{i})]^i. \delta(Q_S)\\
      \sigma_D \vDash  \TestTag{\delta_\gamma(\TAG{v}{i})}{i}
    \end{array}
    \]
    for some $i \in \{1,\ldots,p\}$ and $\gamma \in \{\alpha,\beta\}$ such that $i \in \gamma$.
  \item \emph{Case c: a test before a conditional.}
    \[
    \begin{array}{l}
      P_D = \myIf\, \TestTag{\delta_\gamma(\TAG{\varphi}{i})}{i} \, \myThen\, \\
      \quad\quad\quad\quad(\myIf\, \TAG{\varphi}{i} \, \myThen\, \delta(Q^1_S) \, \myElse \, \delta(Q^2_S))\\
      P'_D = \myIf\, \TAG{\varphi}{i} \, \myThen\, \delta(Q^1_S) \, \myElse \, \delta(Q^2_S) \\
      \sigma_D \vDash  \TestTag{\delta_\gamma(\TAG{\varphi}){i}}{i}
    \end{array}
    \]
    for some $i \in \{1,\ldots, p\}$ and $\gamma \in \{\alpha,\beta\}$ such that $i \in \gamma$.
  \item \emph{Case d: a test of a conditional.} 
    \[
    \begin{array}{l}
      P_D = \myIf\, \delta_\gamma(\TAG{\varphi}{i}) \, \myThen \, \delta(Q^1_S) \, \myElse\, \delta(Q^2_S)\\
      P'_D = \delta(Q^1_S)\\
      \sigma_D \vDash  \delta_\gamma(\TAG{\varphi}{i}) \mbox{ and } \sigma_D \vDash \TestTag{\delta_\gamma(\TAG{\varphi}{i})}{i}
    \end{array}
    \]
    for some $i \in \{1,\ldots, p\}$ and $\gamma \in \{\alpha,\beta\}$ such that $i \in \gamma$.
  \end{itemize}
  Each case can be handled in a similar way. 
  Note that we rely in addition on Corollary~\ref{cor:equivalenceeq} instead of Lemma~\ref{lem:TestTagequivalentdelta} to establish the result in \emph{case d}. 
  We assume that we are in the first case. 
  Let $\gamma \in \{\alpha,\beta\}$ such that $i \in \gamma$. 
  We have that $S$ is equal to
  \[
  \quadruple{\Ec}{\{\myIf\, \TestTag{\TAG{v}{i}}{i} \, \myThen \, \Out(c,\TAG{v}{i})^i. Q_S\}\uplus\q_S}{\Phi_S}{\sigma_S}
  \]
  with $\Phi_D\mydownarrow = \delta(\Phi_S\mydownarrow)$, 
  and $\sigma_D\mydownarrow = \delta(\sigma_S\mydownarrow)$.

  Since $\sigma_D \vDash \TestTag{\delta_\gamma(\TAG{v}{i})}{i}$, 
  we have $(\sigma_D\mydownarrow) \vDash  \TestTag{\delta_\gamma(\TAG{v}{i})}{i}$, 
  and thus $\delta(\sigma_S\mydownarrow)  \vDash \TestTag{\delta_\gamma(\TAG{v}{i})}{i}$. 
  As in the previous cases, we deduce that  $\delta_\gamma(\sigma_S\mydownarrow)  \vDash \TestTag{\delta_\gamma(\TAG{v}{i})}{i}$. 
  Thanks to Lemma~\ref{lem:TestTagequivalentdelta}, we deduce that $\sigma_S\mydownarrow \vDash \TestTag{\TAG{v}{i}}{i}$. 
  Hence, we have that $S \lrstep{\tau} S'$ where
  \[
  S' = \quadruple{\Ec}{\{\Out(u,\TAG{v}{i}).Q_S\} \uplus \q_S}{\Phi_S}{\sigma_S}.
  \]
  Hence, we have that $B \lrstep{\tau}_\bi B'$ with $\fst(B') = S'$, and $\snd(B') = D'$. We also have that $D' = \delta(S')$.

  \medskip{}

  \noindent\emph{Case of the rule {\sc Else}.} This case is similar to the previous one.
  
  \medskip{}
  
  \noindent\emph{Case of the rule {\sc Assgn}.} 
  In such a case, we have that $\Ec'_D = \Ec_D$, $\Phi'_D = \Phi_D$, $\p_D = \{[x := \delta_\gamma(\TAG{v}{i})].\delta(Q)\} \uplus \delta(\q_S)$, $\p'_D = \{\delta(Q)\} \uplus \delta(\q_S)$, $\sigma'_D = \sigma_D \cup \{x \mapsto \delta_\gamma(\TAG{v}{i})\sigma_D\}$, and $\ell = \tau$ where $\gamma \in  \{\alpha,\beta\}$ with $i\in \gamma$. 
  We have also that $\sigma_D \vDash \TestTag{\delta(\TAG{v}{i})}{i}$ and $\sigma_S \vDash \TestTag{\TAG{v}{i}}{i}$. 
  Hence, we have that:
  \[
  S = \quadruple{\Ec}{\{[x := \TAG{v}{i}].Q\} \uplus \q_S}{\Phi_S}{\sigma_S}
  \]
  with $\Phi_D\mydownarrow = \delta(\Phi_S\mydownarrow)$, and $\sigma_D\mydownarrow = \delta(\sigma_S\mydownarrow)$.

  Hence, we have that $S \lrstep{\tau} S'$ where:
  \[
  S' = \quadruple{\Ec}{\{Q\} \uplus \q_S}{\Phi_S}{\sigma_S \cup \{x \mapsto \TAG{v}{i}\sigma_S\}}.
  \]
  Hence, we have that $B \lrstep{\tau}_\bi B'$ with $\fst(B') = S'$ and $\snd(B') = D'$. It remains to show that $D ' = \delta(S')$, i.e. 
  \[
  (\delta_\gamma(\TAG{v}{i})\sigma_D)\mydownarrow = \delta_\gamma(\TAG{v}{i}\sigma_S\mydownarrow).
  \]
  This can be done as in the case of the rule {\sc Out-T}.
  
  \medskip{}

  \noindent\emph{Case of the rule {\sc Comm}.}
  In such a case, $\p_D = \{\Out(c,\delta_\gamma(\TAG{u}{i}))^i.\delta(Q_1); \In(c,x)^{i'}.\allowbreak\delta(Q_2)\} \uplus \delta(\q_S)$, $\Ec'_D = \Ec_D$, $\Phi'_D = \Phi_D$, $\sigma'_D = \sigma_D \cup \{x \mapsto \delta_\gamma(\TAG{u}{i})\sigma_D\}$, and $\ell = \tau$. 
  Moreover, we have that $\sigma_D \vDash \delta_\gamma(\TestTag{\TAG{u}{i}}{i})$ and $\sigma_S \vDash \TestTag{\TAG{v}{i}}{i}$ where $\gamma \in \{\alpha,\beta\}$ such that $i \in \gamma$.
  Hence, we have that $S$ is equal to
  \[
  \quadruple{\Ec}{\{\Out(c,\TAG{u}{i}).Q_1; \In(c,x).Q_2\}\uplus \q_S}{\Phi_S}{\sigma_S}
  \]
  with $\Phi_D\mydownarrow = \delta(\Phi_S\mydownarrow)$, and $\sigma_D\mydownarrow = \delta(\sigma_S\mydownarrow)$.

  Let $\gamma' \in \{\alpha,\beta\}$ such that $i' \in \gamma'$.
  Hence, we have that $S \lrstep{\tau} S'$ where $S'$ is equal to:
  \[
  \quadruple{\Ec}{\{Q_1;Q_2\} \uplus \q_S}{\Phi_S}{\sigma_S \cup \{(x \mapsto \TAG{u}{i}\sigma_S)^{i'}\}}.
  \] 
  Hence, we have that $B \lrstep{\tau} B'$ for some biprocess $B'$ such that $\fst(B') = S'$ and $\snd(B') = D'$. It remains to show that $D' =\delta(S')$, i.e.
  \[
  (\delta_\gamma(\TAG{u}{i})\sigma_D)\mydownarrow = \delta_{\gamma'}(\TAG{u}{i}\sigma_S\mydownarrow)
  \]

  If $\gamma = \gamma'$, then this can be done as in the previous cases. 
  Otherwise, since names can only be shared through assignations, and assignations only concern variables/terms of base type, we necessarily have that $c \not\in \Ec$. Hence, we have that $D \lrstep{\nu w_n. \Out(c,w_n)} D_\Out$ where $D_\Out$ is equal to:
  \[
  \quadruple{\Ec_D}{\{\delta(Q_1); \In(c,x).\delta(Q_2)\} \uplus \q_S}{\Phi_D \cup \{w_n \refer \delta_\gamma(\TAG{u}{i})\sigma_D\}}{\sigma_D}
  \]
  Note that $(\rho_\alpha,\rho_\beta)$ is still compatible with $D_\Out$. We would like to apply Lemma~\ref{lem:samerecipesymmetric} with $M = w_n$ on the frame of $D_\Out$ which requires an hypothesis of non deductibility of the shared key. For these, we will rely on our hypothesis that $D_0$ does not reveal the values of his assignment variables w.r.t. $(\rho_\alpha,\rho_\beta)$:

Let  $\Phi'_D = \Phi_D \cup \{w_n \refer
\delta_\gamma(\TAG{u}{i})\sigma_D\}$. 
We already proved our induction result for the rule {\sc
  Out-T}. Hence, we deduce that there exists $S_\Out$ such that $S
\lrstep{\nu w_n. \Out(c,w_n)} S_\Out$ where $S_\Out =
\quadruple{\Ec}{\p'_S}{\Phi'_S}{\sigma_S}$, $\Phi'_S = \Phi_S \cup
\{w_n \refer \TAG{u}{i})\sigma_S\}$. 
Moreover, it implies that $\Phi'_D\mydownarrow = \delta(\Phi'_S\mydownarrow\})$ and $\sigma_D\mydownarrow = \delta(\sigma_S\mydownarrow)$.
As mentioned, by hypothesis, we know that $D_0$ does not reveal the values of its assignments w.r.t. $(\rho_\alpha,\rho_\beta)$. 
Hence, for all assignment variable $x$ of color $\alpha$
(resp. $\beta$) in $\dom(\sigma_D)$, for all $\key \in \{ k, \pk(k),
\vk(k) \mid k = x\sigma_D \vee k = x\rho_\alpha \text{
  (resp. $x\rho_\beta$)}\}$, $\key$ is not deducible from $\new\
\Ec. \Phi'_D$. We denote $K$ this set.

\noindent {Let $K_S = \{t, \pk(t),\vk(t)~|~ t \in \dom(\rho^+_\alpha) \cup
\dom(\rho^+_\beta), \mbox{ $t$ ground}\}$. 
Since 
$\sigma_D\mydownarrow = \delta(\sigma_S\mydownarrow)$, and by
definition 
of $\rho_\alpha^+$ and $\rho_\beta^+$, we deduce that
$K = \delta_\alpha(K_S) \cup \delta_\beta(K_S)$.}
We have also that $\Phi'_D\mydownarrow =
\delta(\Phi'_S\mydownarrow)$. Hence, 
we can now apply Lemma~\ref{lem:samerecipesymmetric} with $M = w_n$ and so we deduce that $\delta_\gamma(\TAG{u}{i}\sigma_S\mydownarrow) = \delta_{\gamma'}(\TAG{u}{i}\sigma_S\mydownarrow)$. Hence, we can conclude as in the previous case.
  \medskip{}
  
  \noindent\emph{Case of the rule {\sc Par}.} It is easy to see that the result holds for this case.

  \medskip{}
  
  Note that the rules {\sc New} and {\sc Repl} can not be triggered since the processes under study do not contain bounded names and replication.
\end{proof}


\theoremmain*

\begin{proof}
We prove the two items separately. 
\begin{enumerate}
\item The first item is actually a direct consequence of
  Proposition~\ref{pro:shared-to-disjoint}. We  rely on Corollary~\ref{cor:framestatequiv} and the fact that $D = \delta(S)$ to
  establish that:
\[
\new \, \Ec_S. \Phi_S \statequiv \new\, \Ec_D. \Phi_D.
\]
\item The second item is actually a direct consequence of
  Proposition~\ref{pro:disjoint-to-shared}. We rely on
  Corollary~\ref{cor:framestatequiv},  Corollary~\ref{cor:deltaandkeyhidden} and the fact
  that $D = \delta(S)$ to establish that:
\[
\new \, \Ec_S. \Phi_S \statequiv \new\, \Ec_D. \Phi_D.
\]
\end{enumerate}
This concludes the proof of the theorem.
\end{proof}

\section{Parallel composition}
\label{sec:app-parallel composition}

The goal of this section is to prove the results that relate to the parallel composition, that are Theorem~\ref{theo:compo-par-reachability-main} and~\ref{theo:compo-par-diff-equivalence-main}.
We prove a slightly improved version of
Theorem~\ref{theo:compo-par-diff-equivalence-main} 
assuming that composition contexts may contain several holes.
To prove these composition results, we will rely on
Theorem~\ref{theo:main-main}, 
and for this we have to explain how to get rid of the replications, and the $\new$ instructions (see Section~\ref{subsec:par-unfold}). 
We have also to rewrite the process to ensure that names are shared via assignment variables only (see Section~\ref{subsec:par-assignation}).

\subsection{Unfolding a biprocess}
\label{subsec:par-unfold}

Given an extended process $A = \triple{\Ec}{\p}{\Phi}$ where $\p$ may contain name restrictions and replications, the idea 
is to unfold the replications and to gather together all the restricted names in the set $\Ec$. Of course, it is not possible to apply such a transformation and to 
preserve the set of possible traces. However, given a specific trace issued from $A$, it is possible to compute an unfolding of $A$ that will exhibit  this specific trace.
The converse is also true, any trace issued from an unfolding of $A$ will correspond to a trace of $A$. Thus, the process $A$ and all its possible unfoldings will exhibit exactly  the same set of traces.
We define this notion directly on biprocesses.

\medskip

\begin{definition}
Let $A = \triple{\Ec}{\p}{\Phi}$ be an extended biprocess. We define the $n\textsuperscript{th}$ unfolding of $A$, denote by $\unfolding{A}{n}$, the biprocess $\triple{\Ec \uplus \Ec_n}{\p_n}{\Phi}$ 
obtained from $A$ by replacing in $\p$ 
each instance of $!Q$ with $n$ instances of $Q$ (applying $\alpha$-renaming to ensure name and variable distinctness), and then removing the 
 $\new$ operations  from the resulting process. These names are then put in the set $\Ec_n$ and added in the first component of the extended process.
\end{definition}

The link between an extended biprocess and its unfoldings is stated in Lemma~\ref{lem:unfolding-diff}.
 
 \medskip
 
\begin{lemma}
\label{lem:unfolding-diff}
Let $A = \triple{\Ec}{\p}{\Phi}$ be an extended biprocess. 
The biprocess $A$ is in diff-equivalence if, and only if, $\unfolding{A}{n}$ is in diff-equivalence for any $n \in \mathbb{N}$.
\end{lemma}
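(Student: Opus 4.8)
The statement is an ``iff'' about diff-equivalence of $A=\triple{\Ec}{\p}{\Phi}$ and of each finite unfolding $\unfolding{A}{n}$. The natural strategy is to establish a tight correspondence between the reductions of $A$ and those of its unfoldings, then read off both implications. First I would make precise the intuition stated in the text: for every finite trace $\tr$ there is a bound $n$ (roughly, the number of times the {\sc Repl} rule fires along $\tr$) such that $A\LRstep{\;\tr\;}_\bi B$ if and only if $\unfolding{A}{n}\LRstep{\;\tr\;}_\bi B'$ for a biprocess $B'$ that differs from $B$ only in that the ``leftover'' copies of replicated subprocesses $!Q$ are already present in $\unfolding{A}{n}$ (and some extra restricted names have been moved into the name-set $\Ec_n$ rather than sitting under a $\new$). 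Conversely, any trace of $\unfolding{A}{n}$ lifts to a trace of $A$ by re-inserting the needed {\sc Repl} and {\sc New} steps. The key point is that the {\sc New} rule just relocates a name from a $\new$ binder into the first component without otherwise changing the frame or the process structure, and {\sc Repl} only spawns fresh copies; neither step produces an observable label, so erasing/inserting them does not affect $\tr$.

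\textbf{Key steps, in order.}
\begin{enumerate}
\item State and prove an auxiliary lemma: for any $A$ and any derivation $A\LRstep{\;\tr\;}_\bi B$ with $B=\quadruple{\Ec_B}{\p_B}{\Phi_B}{\sigma_B}$, there is $n\in\mathbb{N}$ and a derivation $\unfolding{A}{n}\LRstep{\;\tr\;}_\bi B'$ with $\fst(B')$ and $\fst(B)$ (resp.\ $\snd$) equal up to the presence of unused copies of replicated subprocesses and the placement of restricted names, hence with identical frames $\Phi_B$. This is by induction on the length of $\tr$, doing the obvious case analysis on the last applied rule (the only interesting cases being {\sc Repl}, which is absorbed into the choice of $n$, and {\sc New}, which is absorbed into $\Ec_n$).
\item Prove the converse auxiliary lemma: any $\unfolding{A}{n}\LRstep{\;\tr\;}_\bi B'$ can be simulated by $A$, by first performing enough {\sc Repl} and {\sc New} steps (which are $\tau$-steps, so invisible in $\tr'$) and then following the same reduction. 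Again induction on the derivation length.
\item Derive the ``only if'' direction: assume $A$ satisfies diff-equivalence; given $\unfolding{A}{n}\LRstep{\;\tr\;}_\bi B'$, use step~2 to get $A\LRstep{\;\tr\;}_\bi B$ with the ``same'' $B$; the two conditions of the definition of diff-equivalence (static equivalence of $\fst(\Phi)$ and $\snd(\Phi)$, and the one-step simulation property) transfer to $B'$ because the frames coincide and, since $\fst(B')$ and $\fst(B)$ differ only by inert material (unused $!Q$ copies, relocated names), a step of $\fst(B')$ is also available as the corresponding step of $\fst(B)$ and conversely. Conclude that $\unfolding{A}{n}$ satisfies diff-equivalence for all $n$.
\item Derive the ``if'' direction: assume $\unfolding{A}{n}$ satisfies diff-equivalence for all $n$; given $A\LRstep{\;\tr\;}_\bi B$, apply step~1 to obtain $n$ and $B'$; transfer the two conditions back from $B'$ to $B$ exactly as above. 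Conclude.
\end{enumerate}

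\textbf{Main obstacle.} The delicate part is not the correspondence of traces (which is essentially bookkeeping about {\sc Repl} and {\sc New}) but the transfer of clause~(2) of the definition of diff-equivalence --- the one-step simulation property ``if $\fst(B)\lrstep{\ell}A_L$ then $B\lrstep{\ell}_\bi B'$ with $\fst(B')=A_L$'' --- between $B$ and its unfolded counterpart $B'$. One must argue that the extra (as yet unused) copies of replicated processes and the relocated restricted names in $B'$ never enable a reduction that is unavailable in $B$ modulo a further unfolding, and, crucially, that the witnessing bi-step on the $B'$ side can be chosen so that its $\fst$-projection matches a prescribed $A_L$; this requires being slightly careful about how $\alpha$-renaming of the spawned copies is threaded through, and about the fact that a fresh {\sc Repl}/{\sc New} prefix can always be prepended without disturbing the label. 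Making this ``inert material does not matter'' argument rigorous --- ideally by defining a structural equivalence (or a simulation relation) between $B$ and $\unfolding{A}{n}$-derived processes that is preserved by $\lrstep{\;\ell\;}_\bi$ and preserves frames --- is where the real work lies; once that relation is in hand, both implications of the lemma follow immediately.
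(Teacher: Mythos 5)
Your proposal is correct and follows exactly the route the paper has in mind: the paper in fact states this lemma without a written proof, offering only the informal remark that $A$ and its unfoldings exhibit the same traces (with the {\sc Repl} and {\sc New} steps absorbed into the choice of $n$ and of the name set $\Ec_n$), and your two auxiliary trace-correspondence lemmas plus the transfer of the two clauses of diff-equivalence are the natural elaboration of that remark. Your identification of the delicate point --- transferring the one-step simulation clause across the inert leftover copies and the relocated restricted names, by prepending invisible {\sc Repl}/{\sc New} $\tau$-steps on the folded side --- is exactly where the real bookkeeping lives, and your treatment of it is sound.
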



\subsection{Sharing names via assignments}
\label{subsec:par-assignation}

In Theorem~\ref{theo:main-main}, one can note that processes may only share data through assignment variables. 
This is not a real limitation since a name that is shared via the composition context can be assigned to an assignment variable by 
one process and used  by the other through the assignment variables. Below, we describe this transformation that actually preserves diff-equivalence of a biprocess.

\smallskip{}

Let $A = \triple{\Ec}{\p}{\Phi}$ be an extended colored (with colors in $\{1, \ldots,p\} = \alpha \uplus \beta$)
biprocess that does not contain any name restriction nor replication in $\p$. Let $K = k_1, \ldots, k_\ell$ be a sequence of names (of base type) in 
$\Ec$ that contains at least all the names occurring in both type of
actions -- in actions colored $\alpha$ as well as in actions colored
$\beta$ (intuitively $k_1,\ldots,k_\ell$ are the names shared by the
two processes we want to compose). Since we work with a biprocess, we
do this transformation simulatenously on both sides. We do this each
time the transformation is required by
one side of the biprocess. Actually, when we will apply this
transformation, the right-hand side will correspond to the 
disjoint case, whereas the left-hand side will correspond to the shared case,  and all the transformations will arise because of the left-hand side.

Let $Z = z^\alpha_1, \ldots, z^\alpha_\ell$ be a sequence of fresh variables, and $i \in \alpha$. 
We denote by $\ndisjunction{A}{\mathcal{K}}{Z}{i}$ the extended biprocess 
$\triple{\Ec}{P_\ass}{\Phi}$ where $P_\ass$ is defined as follows:
\[
P_\ass =  [z^\alpha_1 := k_1]^i.\ldots.[z^\alpha_\ell := k_\ell]^i. ( \mid_{P \in \p} P\rho^{}_{\beta})
\]
where $\rho^{}_\beta$ replaces each occurrence of the name $k_j$ ($1 \leq j \leq \ell$) that occurs in an action 
$\beta$-colored by its associated assignment variable $z^\alpha_j$ ($1\leq j\leq \ell$).
Note that the replacement $\rho_{\beta}$ will not affect the process corresponding to the disjoint case.
\smallskip{}

Note that in the definition above, the $\alpha$-colored process will
assign the shared names into assignment variables whereas the
$\beta$-colored process will simply use those variables instead of the
corresponding names. 
This choice is arbitrary and the roles played by $\alpha$ and $\beta$ can be swapped. Again, this transformation preserves equivalence.
This result is stated below in Lemma~\ref{lem:assignation-diff}.

\smallskip{}

\begin{lemma}
\label{lem:assignation-diff}
Let $A = \triple{\Ec}{\p}{\Phi}$ and $\ndisjunction{A}{\mathcal{K}}{Z}{i}$ be two extended biprocesses as described above.  We have that
$A$ is in diff-equivalence if, and only if, $\ndisjunction{A}{\mathcal{K}}{Z}{i}$ is in diff-equivalence.
\end{lemma}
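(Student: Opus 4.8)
\textbf{Proof strategy for Lemma~\ref{lem:assignation-diff}.} The plan is to establish a strong step-by-step correspondence between executions of $A$ and executions of $\ndisjunction{A}{\mathcal{K}}{Z}{i}$, so that the two biprocesses exhibit essentially the same traces and the same frames (up to renaming the shared names $k_j$ into the assignment variables $z^\alpha_j$ and substituting them back). The key observation is that after the $\ell$ initial silent steps using the \textsc{Assgn} rule, the process $P_\ass$ reaches a configuration $\triple{\Ec}{\mid_{P\in\p} P\rho_\beta}{\Phi}{\sigma_0}$ where $\sigma_0 = \{z^\alpha_1 \mapsto k_1, \ldots, z^\alpha_\ell \mapsto k_\ell\}$. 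From that point on, every term manipulated by a $\beta$-colored action mentions $z^\alpha_j$ exactly where the corresponding action in $\p$ mentioned $k_j$; but since $z^\alpha_j\sigma_0 = k_j$, the \emph{values} computed (after applying $\sigma$) coincide. So the two configurations stay in lockstep: the same rule applies on both sides, producing syntactically different but semantically equal substitutions and frames.

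First I would make this precise by defining a relation $\mathcal{R}$ between reachable extended biprocesses of $A$ and of $\ndisjunction{A}{\mathcal{K}}{Z}{i}$: a biprocess $B = \quadruple{\Ec_B}{\p_B}{\Phi_B}{\sigma_B}$ derived from $A$ is related to $B' = \quadruple{\Ec_B}{\p'_B}{\Phi'_B}{\sigma'_B}$ derived from $\ndisjunction{A}{\mathcal{K}}{Z}{i}$ when $\sigma'_B = \sigma_0 \cup (\sigma_B\rho_\beta)$, $\p'_B = \p_B\rho_\beta$ (applying the renaming to $\beta$-colored actions only), $\Phi'_B = \Phi_B$ (the frames are literally equal since output values are ground and $k_j\sigma_0\mydownarrow = z^\alpha_j\sigma_0\mydownarrow = k_j$), and similarly on the $\snd$ side. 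I would then show by a straightforward induction on the length of the derivation that $\mathcal{R}$ is a bisimulation compatible with the labelled steps $\lrstep{\ell}_\bi$: whenever $B\ \mathcal{R}\ B'$ and $B \lrstep{\ell}_\bi B_1$, there is $B'_1$ with $B' \lrstep{\ell}_\bi B'_1$ and $B_1\ \mathcal{R}\ B'_1$, and conversely. The case analysis runs over the semantic rules of Figure~\ref{fig:semantics-biprocesses}; the crucial point in each case is that for any term $u$ occurring in a $\beta$-colored action, $u\sigma_B\mydownarrow = (u\rho_\beta)(\sigma_0\cup\sigma_B\rho_\beta)\mydownarrow$, which holds because $\rho_\beta$ merely renames the ground names $k_j$ to variables $z^\alpha_j$ whose $\sigma_0$-value is $k_j$. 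For the \textsc{In} rule one also uses that recipes $M$ never mention names in $\Ec$, hence are unaffected, and that $\dom(\Phi)$ is preserved.

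From the bisimulation it follows immediately that $\fst(A)$ and $\fst(\ndisjunction{A}{\mathcal{K}}{Z}{i})$ reach frames that are \emph{equal} (not merely statically equivalent) along matching traces, so each side's static-equivalence condition of diff-equivalence transfers verbatim; and the ``both sides reduce the same way'' condition transfers because the bisimulation is on biprocesses and respects $\fst$/$\snd$ simultaneously. Conversely, an execution of $\ndisjunction{A}{\mathcal{K}}{Z}{i}$ either has not yet finished the $\ell$ opening \textsc{Assgn} steps --- in which case it is trivially matched by the empty execution of $A$ (the opening steps are silent and forced, producing no observable and a frame equal to $\Phi$) --- or it has, and then it is related by $\mathcal{R}$ to a unique execution of $A$. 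This yields the equivalence in both directions.

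\textbf{Main obstacle.} The routine part is the rule-by-rule case analysis; the only genuinely delicate point is handling the \emph{biprocess} aspect together with the renaming: I must check that $\rho_\beta$ is applied consistently to $\fst$ and $\snd$ of the biprocess and does not interfere with the $\diff$ operator (it does not, since $\rho_\beta$ only touches names $k_j$, which are ordinary base-type names, not $\diff$-terms), and that a $\tau$-step of type \textsc{Comm} between an $\alpha$-colored output and a $\beta$-colored input (or vice versa) is still matched --- here the renamed input variable gets value $u\sigma_B$ on the $A$ side and $(u\rho_\beta)\sigma'_B$ on the other, and one must invoke the value-equality identity above once more. A secondary bookkeeping point is that $\bv$/$\bn$ freshness and the colouring discipline (each action keeps its colour under the transformation, and $\rho_\beta$ introduces no new names) are preserved by the unfolding/assignment transformations, so that $\ndisjunction{A}{\mathcal{K}}{Z}{i}$ is again a well-formed coloured biprocess; this is immediate from the definitions but worth stating.
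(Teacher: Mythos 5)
Your argument is correct and is precisely the routine lockstep correspondence that the paper leaves implicit: Lemma~\ref{lem:assignation-diff} is stated without proof (the text merely asserts ``this transformation preserves equivalence''), and your bisimulation between configurations of $A$ and configurations of the transformed biprocess after its $\ell$ forced silent \textsc{Assgn} steps, with literally equal frames and value-equal substitutions, is exactly the justification one would expect. One small bookkeeping remark: in the definition of your relation $\mathcal{R}$ the component $\sigma'_B = \sigma_0 \cup (\sigma_B\rho_\beta)$ should really be $\sigma'_B = \sigma_0 \cup \sigma_B$ --- the stored values are ground terms and coincide syntactically (applying $\rho_\beta$ to them would reintroduce the variables $z^\alpha_j$ and make them non-ground), which is in fact what you use later when you argue $u\sigma_B = (u\rho_\beta)\sigma'_B$; this does not affect the validity of the argument.
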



\subsection{Composing trace equivalence}
 \label{subsec:par-equiv}

The theorem we want to prove is stated below.
Note that, this theorem differs from the one 
stated in the main body of the paper since we work in a slightly more general setting. 

We denote by $\Sigma_0^c = \{\senc,\aenc,\sign, \pk, \vk, \langle \; \rangle\}$, 
\emph{i.e.} the constructors of the common signature $\Sigma_0$. We consider composition contexts that may contain several holes. They are formally defined as follows:

\smallskip{}

\begin{definition}
\label{def:context} 
A \emph{composition context} $C$ is defined by
the following grammar where~$n$ is a name of base type.
\[
C, C_1, C_2  :=  \_  \; \mid\; \new \ n.\ C \; \mid \; ! C   \; \mid \; C_1|C_2
\]
\end{definition}

We only allow names of base type (typically keys) to be shared between processes
through the composition context.  In particular, they are not allowed to share a
private channel even if each process can used its own private channels to
communicate internally.  We also suppose w.l.o.g. that names occurring in~$C$
are distinct. A composition context may contain several holes. We can index them to avoid
confusion.  We write $C[P_1, \ldots, P_\ell]$ (or shortly
$C[\seq{P}]$) the process obtained by filling the
$i\textsuperscript{th}$ hole with the process~$P_i$ (or the
$i\textsuperscript{th}$ process of the sequence $\seq{P}$).

\smallskip{}

We use the notation $\seq{P} \mid \seq{Q}$ to represent the sequence of processes
obtained by putting in parallel the processes of the sequences
$\seq{P}$ and $\seq{Q}$ componentwise.

\smallskip{}

Parallel composition between tagged processes can only 
be achieved assuming that the shared keys are not revealed. Indeed, if the security of $P$ is ensure through the secrecy 
of the shared key $k$, there is no way to guarantee that $P$ is still secure in an environment where another process $Q$ running in parallel will reveal  this key.

Since, we consider a common signature $\Sigmazero$ and composition
contexts with several holes, we have to
generalize a bit the notion of revealing a shared key stated in the
body of the paper. We have to take into account public keys and
verification keys.

\medskip{}

\begin{definition}
  \label{def:reveal}
  Let $C$ be a composition context, 
  $A$ be an extended process of the form $\quadruple{\Ec}{C[P_1,\ldots,
    P_\ell]}{\Phi}{\sigma}$, and $\key \in \{n, \pk(n), \vk(n) ~|~ 
  \mbox{$n$ occurs in $C$}\}$.  We say that \emph{the extended
    process $A$ reveals the key $\key$} when:
  \begin{itemize}
  \item $\quadruple{\Ec \cup \{s\}}{C[P^+_1, \ldots, P^+_\ell]}{\Phi}{\sigma}
    \LRstep{w} \quadruple{\Ec'}{\p'}{\Phi'}{\sigma'}$ with $P^+_{i_0}
    \stackrel{\defi}{=} P_{i_0} \mid \In(c,x). \,\texttt{if } x = \key
    \;\texttt{then}\, \Out(c,s)$ and $P^+_i \stackrel{\defi}{=} P_i$
    if $i\not=i_0$; and
  \item $M\Phi' =_\E s$ for some $M$ such that $\fv(M) \subseteq
    \dom(\Phi')$ and $\fn(M) \cap \Ec' = \emptyset$
  \end{itemize}
  {where $c$ is a fresh public channel name,  $s$ is a fresh
    name of base type, and the $i_0\textsuperscript{th}$ hole of $C$ is in the scope of ``$\new \ \fn(\key)$''.}
\end{definition}

\medskip

\begin{definition}
  \label{def:composability-app}
  Let $C$ be a composition context and $\Ec_0$ be a finite set of
  names of base type. Let $\seq{P}$ and $\seq{Q}$ be two sequences of
plain processes together
with their frames $\Phi$ and $\Psi$.
  We say that $\seq{P}/\Phi$ and $\seq{Q}/\Psi$ are \emph{composable} under
  $\Ec_0$ and~$C$ when:
  \begin{enumerate}
\setlength{\itemsep}{1mm}
\item $\seq{P}$ (resp. $\seq{Q}$)  are built over $\Sigma_\alpha
  \cup \Sigmazero$ (resp. $\Sigma_\beta \cup \Sigmazero$), whereas
  $\Phi$ (resp. $\Psi$) are built over $\Sigma_\alpha \cup
  \{\pk,\vk\}$ (resp. $\Sigma_\beta \cup \{\pk,\vk\}$),
  $\Sigma_\alpha \cap \Sigma_\beta = \emptyset$, and $\seq{P}$ (resp. $\seq{Q}$) is tagged;
  \item $\fv(\seq{P}) = \fv(\seq{Q}) = \emptyset$, and $\dom(\Phi) \cap \dom(\Psi) =\emptyset$.
  \item $\Ec_0 \cap (\fn(C[\seq{P}]) \cup \fn(\Phi)) \cap (\fn(C[\seq{Q}]) \cup
    \fn(\Psi)) = \emptyset$;
  \item $\triple{\Ec_0}{C[\seq{P}]}{{\Phi}}$
    (resp.~$\triple{\Ec_0}{C[\seq{Q}]}{{\Psi}}$) 
does not reveal any key in: 
\[{\{n, \pk(n), \vk(n) ~|~ n \mbox{ occurs in
  }{\fn(\seq{P}) \mathord{\cap} \fn(\seq{Q}) \mathord{\cap} \bn(C)}\}.}
\]
  \end{enumerate}
This notion is extended as expected to biprocesses requiring that
$\fst(\seq{P})/\fst(\Phi)$ and $\fst(\seq{Q})/\fst(\Psi)$, as well as
$\snd(\seq{P})/\snd(\Phi)$ and $\snd(\seq{Q})/\snd(\Psi)$,  are composable.
\end{definition}

\medskip

\begin{theorem}
  \label{theo:compo-par-diff-equivalence-app}
Let $C$ be a composition context, and $\Ec_0$ be a finite set of names
of base type. Let $\seq{P}$ (resp.~$\seq{Q}$) be a sequence of
plain biprocesses together with its frame $\Phi$ (resp. $\Psi$),
and assume that $\seq{P}/\Phi$ and $\seq{Q}/\Psi$ are composable under~$\Ec_0$ and~$C$.

If $\triple{\Ec_0}{C[\seq{P}]}{\Phi}$ and 
$\triple{\Ec_0}{C[\seq{Q}]}{\Psi}$
satisfy diff-equivalence (resp. trace equivalence), then
$\triple{\Ec_0}{C[\seq{P} \mid \seq{Q}]}{\Phi \uplus
  \Psi}$ 
satisfies diff-equivalence (resp. trace equivalence).
\end{theorem}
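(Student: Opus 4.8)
\textbf{Proof plan for Theorem~\ref{theo:compo-par-diff-equivalence-app}.}
The plan is to reduce the shared case $S = \triple{\Ec_0}{C[\seq{P}\mid\seq{Q}]}{\Phi\uplus\Psi}$ to the disjoint case $D = \triple{\Ec_0}{C[\seq{P}]\mid C[\seq{Q}]}{\Phi\uplus\Psi}$, and then invoke the generic result Theorem~\ref{theo:main-main} to transport diff-equivalence between the two. First I would treat the easy half: since $\seq{P}/\Phi$ and $\seq{Q}/\Psi$ are composable, the disjoint process $D$ contains two ``side-by-side'' copies of the composition context, one applied to $\seq{P}$ and one to $\seq{Q}$, and these two parts share no names other than those in $\Ec_0$ (which, by condition~3 of composability, is never shared between them). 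Hence $D$ is essentially $C[\seq{P}]$ put in parallel with $C[\seq{Q}]$ over disjoint data, and from the hypotheses that both $\triple{\Ec_0}{C[\seq{P}]}{\Phi}$ and $\triple{\Ec_0}{C[\seq{Q}]}{\Psi}$ satisfy diff-equivalence (resp. trace equivalence), the standard fact that diff-equivalence (resp. trace equivalence) is preserved by parallel composition of processes over disjoint signatures/names yields that $D$ satisfies diff-equivalence (resp. trace equivalence). For the trace-equivalence variant this is the elementary observation $\fst(D)=\fst(C[\seq{P}])\mid\fst(C[\seq{Q}])$, etc.

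The substantive step is relating $S$ to $D$. I would first dispose of the features of $C$ that Theorem~\ref{theo:main-main} cannot handle directly: using Lemma~\ref{lem:unfolding-diff}, it suffices to prove the statement for every $n\textsuperscript{th}$ unfolding $\unfolding{S}{n}$ and $\unfolding{D}{n}$, which are replication-free and name-restriction-free; and using Lemma~\ref{lem:assignation-diff}, I can rewrite the shared names $K = \fn(\seq{P})\cap\fn(\seq{Q})\cap\bn(C)$ (together with their $\pk,\vk$ where relevant) into assignment variables, obtaining a process $P_0$ that is colored (with $\alpha$ for the $\seq{P}$-part and $\beta$ for the $\seq{Q}$-part) and where cross-color sharing occurs only via assignments — exactly the shape demanded in Section~\ref{subsec:theo-main}. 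Then I would instantiate Theorem~\ref{theo:main-main} with this $P_0$: take $\rho_\alpha,\rho_\beta$ to be the identity (or near-identity) abstractions that map each assignment variable back to a fresh name $\Ec_\alpha,\Ec_\beta$ — here the abstraction is trivial because the only assignment variables introduced are those standing for the shared keys themselves, so $\delta_{\rho_\alpha,\rho_\beta}$ simply renames the key on the $\beta$-side, producing the disjoint case $D$. Compatibility of $(\rho_\alpha,\rho_\beta)$ with every reachable state holds because the assigned values are precisely the (ground, fixed) shared keys, so the equality/inequality pattern among them is constant; and $D_0$ does not reveal its assignments w.r.t. $(\rho_\alpha,\rho_\beta)$ precisely because of condition~4 of composability (neither $\triple{\Ec_0}{C[\seq{P}]}{\Phi}$ nor $\triple{\Ec_0}{C[\seq{Q}]}{\Psi}$ reveals any key in $\{n,\pk(n),\vk(n)\mid n\in K\}$), which I would need to argue carries over to the unfolded, assignment-rewritten, side-by-side process.

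With Theorem~\ref{theo:main-main} in hand, I get a biprocess $B$ with $\fst(B)$ ranging over the shared-case executions and $\snd(B)$ over the disjoint-case executions, related step-for-step and with statically equivalent frames at each point; concretely this gives that $\fst(S)\approx\fst(D)$ and $\snd(S)\approx\snd(D)$ (the two directions of trace inclusion coming from the two items of Theorem~\ref{theo:main-main}), and moreover that diff-reductions of $S$ and $D$ are in bijection with matching labels and statically equivalent frames. For the diff-equivalence conclusion I would then chain: $S$ satisfies diff-equivalence iff $D$ does (via $B$, since a blocking/divergence on one side corresponds to one on the other, and the frame static-equivalence condition transfers), and we have just shown $D$ satisfies it; hence $S$ does. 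For the trace-equivalence conclusion I would instead argue $\fst(S)\approx\fst(D)\approx\snd(D)\approx\snd(S)$ using transitivity of $\approx$ together with the trace-equivalence hypothesis on the disjoint parts. \textbf{The main obstacle} I anticipate is the bookkeeping in the second paragraph: verifying that after unfolding and the assignment rewriting the resulting colored process genuinely meets all the structural side-conditions of Section~\ref{subsec:theo-main} ($\bn=\fv=\emptyset$, $\fn^\alpha(P_0)\cap\fn^\beta(P_0)\cap\Ec_0=\emptyset$, $\fv^\alpha\subseteq\{z^\beta_j\}$ and $\fv^\beta\subseteq\{z^\alpha_j\}$), and that the ``does not reveal its assignments'' property for the side-by-side disjoint process is a consequence of the per-component non-revealing hypotheses — this last point requires care because revealing in the sense of Definition~\ref{def:reveal} is tested with an added adversarial process in one hole, and one must check that the parallel-disjoint structure does not give the attacker extra power to deduce a shared key, which follows from $\Ec_0$-disjointness (condition~3) but deserves an explicit argument.
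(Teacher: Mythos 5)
Your proposal follows essentially the same route as the paper's proof: establish diff-equivalence (resp. trace equivalence) of the disjoint case $D = \triple{\Ec_0}{C[\seq{P}]\mid C[\seq{Q}]}{\Phi\uplus\Psi}$ from the per-component hypotheses, then form a biprocess pairing each side of $S$ with the corresponding side of $D$, unfold via Lemma~\ref{lem:unfolding-diff}, introduce assignment variables via Lemma~\ref{lem:assignation-diff}, apply Theorem~\ref{theo:main-main} with $\dom(\rho_\alpha)=\emptyset$ and $\rho_\beta$ sending the new assignment variables to fresh names (compatibility being trivial since distinct names are assigned, and non-revealing following from condition~4 of composability), and conclude by transitivity. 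The bookkeeping concerns you flag at the end are exactly the points the paper's proof also handles, and in the same way.
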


\smallskip{}

\begin{proof}
According to our hypothesis, $\seq{P}$ and $\seq{Q}$ are both tagged hence there exists two sequences of colored plain processes $\seq{P_t}$ and $\seq{Q_t}$ such that $\TAGG{\seq{P_t}} = \seq{P}$ and $\TAGG{\seq{Q_t}} = \seq{Q}$. Moreover, we can split the set of names~$\Ec_0$
into two disjoint sets $\Ec_P$ and $\Ec_Q$ depending on whether the name occurs 
in $\seq{P}/\Phi$ or $\seq{Q}/\Psi$.

Let $S = \triple{\Ec_0}{C[\seq{P} \mid
  \seq{Q}]}{\Phi}$. Our goal is to show that $S$ satisfies
diff-equivalence (resp trace equivalence).
By hypothesis, we actually have that
$\triple{\Ec_P}{C[\seq{P}]}{\Phi}$, and 
$\triple{\Ec_Q}{C[\seq{Q}]}{\Psi}$ 
satisfy diff-equivalence (resp trace equivalence).
Let $D = \triple{\Ec_P \uplus \Ec_Q}{C[\seq{P}] \mid C[\seq{Q}]}{\Phi \uplus \Psi}$ (modulo some $\alpha$-renaming to ensure name and variable distinctness of the resulting process). 
Since the two processes that are composed in parallel do not share any
data,  we have that $D$ satisfies diff-equivalence (resp trace equivalence).
In order to conclude that $S$ satisfies diff-equivalence (resp trace equivalence), we will show
that $\fst(S) \approx_\diff \fst(D)$ and $\snd(S) \approx_\diff
\snd(D)$
relying on Theorem~\ref{theo:main-main}.

\smallskip{}
Let $B_1$ be the biprocess obtained by forming a biprocess with $\fst(S)$ and $\fst(D)$. Even if the two processes do not have exactly 
the same structure, this can be achieved by introducing some $\new$ instructions that will not be used in $\fst(S)$.
%
Relying on Lemma~\ref{lem:unfolding-diff}, we have that $B_1$ is in diff-equivalence if and only if $\unfolding{B_1}{n}$ is in diff-equivalence for any $n \in \mathbb{N}$.
Let $n_0 \in \mathbb{N}$. We transform the biprocess
$\unfolding{B_1}{n_0}$ to introduce assignment variables (and
  we may assume w.l.o.g. that the processes under study do not rely on
any assignment variables, thus the resulting process will only contain
the assignment variables introduced by our transformation), namely
$z^\alpha_1,\ldots,z^\alpha_\ell$. This leads us to another biprocess 
and this transformation still preserves diff-equivalence as stated in Lemma~\ref{lem:assignation-diff}. 
Note that, on the right-hand side of the biprocess (the disjoint case), the assignments variables are assigned to names that do not occur in any action colored~$\beta$.
In order to apply our  Theorem~\ref{theo:main-main}, we perform a last
transformation on this biprocess that consists in 
replacing the elements that occur inside the frame by output actions
(colored with $\alpha$ or $\beta$ depending on its origin)   in front
of the biprocess. 
This last transformation preserves also diff-equivalence. We finally
consider $\Ec_\alpha = \emptyset$, and {$\Ec_\beta =
  \{k^\alpha_1,\ldots,k^\alpha_\ell\}$} a set of fresh names, 
and we add these two sets of names to the set of $\Ec_0$ (first argument of the biprocess).
Now, it remains to show that this resulting biprocess $B'_1$ is in diff-equivalence. For this, we rely on Theorem~\ref{theo:main-main}.
Let $\rho_\alpha$ be such that $\dom(\rho_\alpha) = \emptyset$, and $\rho_\beta$ be such that $\dom(\rho_\beta)  = \{z^\alpha_1,\ldots,z^\alpha_\ell\}$, 
and {$z^\alpha_j\rho_\beta = k^\alpha_j$} for $j \in \{1,\ldots, \ell\}$. 
Actually, we have that {$D'_1 = \delta(S'_1)$} where $S'_1 =
\fst(B'_1)$ and $D'_1 = \snd(B'_1)$, and for all possible executions of $S'_1$ or $D'_1$, compatibility will be satisfied.
Indeed, by construction, we know that all the assignment variables
(remember that all the assignments occurring in the process
  have been introduced by our transformation) will be assigned to distinct names. 
Now, to satisfy all the requirements needed to apply Theorem~\ref{theo:main-main}, it remains to establish the non-deducibility of the keys.

\medskip

By hypothesis, $\triple{\Ec_0}{C[\seq{P}]}{\Phi}$ and $\triple{\Ec_0}{C[\seq{Q}]}{\Psi}$ do not reveal  $k$, $\pk(k)$, or $\vk(k)$ for any $k \in \fn(\seq{P}) \cap \fn(\seq{Q}) \cap
 \bn(C)$. Hence, we deduce that $\fst(D)$ (parallel composition -
 disjoint case) does not 
reveal $k$, $\pk(k)$, or $\vk(k)$ for any $k \in \fn(\seq{P}) \cap \fn(\seq{Q}) \cap \bn(C)$. 
 
Note that we want to apply Theorem~\ref{theo:main-main} on $S'_1$ and
$D'_1$ and not on $S_1$ and $D_1$. However, we built $D'_1$ by
unfolding $D_1$ and introducing assignment variables. First, note that these
transformations preserve deducibility.
Moreover, secrecy of $k$, $\pk(k)$, or $\vk(k)$ for any $k \in \fn(\seq{P}) \cap \fn(\seq{Q}) \cap
 \bn(C)$ actually implies that $D'_1$ does not reveal its
 assignments w.r.t. $(\rho_\alpha, \rho_\beta)$.
 This allows us to apply Theorem~\ref{theo:main-main} and so to conclude.
\end{proof}

\subsection{Composing reachability}
\label{subsec:par-diff}

We now prove a variant of
Theorem~\ref{theo:compo-par-reachability-main} considering our  slightly more general setting.

\begin{theorem}
\label{cor:compo-par-reachability-app}
Under the same hypotheses as Theorem~\ref{theo:compo-par-diff-equivalence-app} with processes instead of bioprocesses, and considering a name $s$ that occurs in $C$. If 
$\triple{\Ec_0}{C[\seq{P}]}{\Phi}$ and $\triple{\Ec_0}{C[\seq{Q}]}{\Psi}$ do not reveal
$s$, then $\triple{\Ec_0}{C[\seq{P} \mid \seq{Q}]}{\Phi \cap \Psi}$ does not reveal $s$.
\end{theorem}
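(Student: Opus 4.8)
The final statement to prove is Theorem~\ref{cor:compo-par-reachability-app}, the reachability (confidentiality) version of the parallel composition result. The plan is to derive it from the diff-equivalence version, Theorem~\ref{theo:compo-par-diff-equivalence-app}, by the standard trick of encoding a secrecy property as a trace equivalence. The key observation is that for a name $s$ occurring in the composition context $C$, the process $\triple{\Ec_0}{C[\seq{P}]}{\Phi}$ does not reveal $s$ if and only if it is in trace equivalence with a variant where the reachability of $s$ is blocked; more precisely, one appends to each hole a tester process $\In(c,x).\,\texttt{if } x = s \,\texttt{then}\, \Out(c,s')$ and replaces the possibly-emitted $s'$ by $\diff(s',s'')$ for two distinct fresh names, so that the biprocess is in diff-equivalence exactly when $s$ is never deducible. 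This reduces the reachability statement on $\seq{P}$, $\seq{Q}$, and $\seq{P} \mid \seq{Q}$ to three diff-equivalence statements that fit the shape of Theorem~\ref{theo:compo-par-diff-equivalence-app}.

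First I would set up the encoding precisely. Given the composition context $C$ and the name $s$ occurring in $C$, fix two fresh names $s_1, s_2$ of base type and a fresh public channel $c$. For a sequence of plain processes $\seq{R}$ to be plugged in the holes of $C$, define $\seq{R}^s$ to be the sequence obtained by adding, in parallel to $R_{i_0}$ (where the $i_0$-th hole is in the scope of $\new\, s$), the tester $\In(c,x).\,\texttt{if } x = s \,\texttt{then}\, \Out(c,\diff[s_1,s_2])$, leaving the other components untouched; the frame is extended with nothing. Then the biprocess $\triple{\Ec_0 \cup \{s_1,s_2\}}{C[\seq{R}^s]}{\Phi}$ satisfies diff-equivalence if and only if $\triple{\Ec_0}{C[\seq{R}]}{\Phi}$ does not reveal $s$: indeed the two projections of the biprocess differ only in the single output, so they fail to be statically equivalent on some reachable frame exactly when that output can be triggered, i.e. when $s$ is deducible from some reachable frame of $C[\seq{R}]$ — and the tester can be triggered precisely when $C[\seq{R}]$ reveals $s$ in the sense of Definition~\ref{def:reveal-main}. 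This equivalence is routine but I would state it as an auxiliary lemma to avoid repeating the argument three times.

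Next I would check that composability transfers to the decorated processes. The tester process uses only the common signature $\Sigma_0$ (in fact only equality tests and the fresh names $s_1, s_2, s$, which are all of base type), so $\seq{P}^s/\Phi$ and $\seq{Q}/\Psi$ remain composable under $\Ec_0 \cup \{s_1,s_2\}$ and $C$: conditions (1) and (2) are clearly preserved since $s_1,s_2,c$ are fresh and the tester is tagged trivially (no composed-term actions beyond the $\Sigma_0$ equality test, which requires no tag); condition (3) holds because $s_1, s_2$ are fresh and not in $\Ec_0$; and condition (4), the no-reveal hypothesis for shared keys $n \in \fn(\seq{P}) \cap \fn(\seq{Q}) \cap \bn(C)$, is unaffected since adding the tester to $\seq{P}$ cannot help deduce any such $n$ (formally, one composes $\seq{P}$ with the already-present tester from Definition~\ref{def:reveal} and observes the two testers do not interact). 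I would similarly decorate $\seq{Q}$, or rather $\seq{P}\mid\seq{Q}$, being careful to add the single tester to the appropriate hole. The main care point here is bookkeeping: only one hole (the one under $\new\, s$) gets the tester, and we must make sure that in the composed system $\seq{P}\mid\seq{Q}$ the tester is attached consistently with how it was attached in $\seq{P}$ alone and in $\seq{Q}$ alone — since $s$ occurs in $C$ and not in $\seq{P}$ or $\seq{Q}$, the choice of hole is dictated by $C$ and is the same in all three cases, so this is consistent.

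Finally I would assemble the proof: assuming $\triple{\Ec_0}{C[\seq{P}]}{\Phi}$ and $\triple{\Ec_0}{C[\seq{Q}]}{\Psi}$ do not reveal $s$, the auxiliary lemma gives that $\triple{\Ec_0 \cup \{s_1,s_2\}}{C[\seq{P}^s]}{\Phi}$ and $\triple{\Ec_0 \cup \{s_1,s_2\}}{C[\seq{Q}^s]}{\Psi}$ satisfy diff-equivalence; by the composability just checked and Theorem~\ref{theo:compo-par-diff-equivalence-app} (trace-equivalence version suffices, but diff-equivalence is what we get and it is stronger), the biprocess $\triple{\Ec_0 \cup \{s_1,s_2\}}{C[\seq{P}^s \mid \seq{Q}]}{\Phi \uplus \Psi}$ — which is $\triple{\Ec_0 \cup \{s_1,s_2\}}{C[(\seq{P}\mid\seq{Q})^s]}{\Phi\uplus\Psi}$ up to reorganizing the parallel composition, since the tester sits in one fixed hole — satisfies diff-equivalence; applying the auxiliary lemma in the other direction yields that $\triple{\Ec_0}{C[\seq{P} \mid \seq{Q}]}{\Phi \uplus \Psi}$ does not reveal $s$, which is the claim. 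The main obstacle I anticipate is not conceptual but the precise matching of the ``$(\seq{P}\mid\seq{Q})^s = \seq{P}^s \mid \seq{Q}$ up to rearrangement of parallel'' step together with verifying composability of the decorated processes rigorously against Definition~\ref{def:composability-app}; everything else is the standard reduction of secrecy to equivalence and a direct appeal to the already-established parallel composition theorem for equivalence.
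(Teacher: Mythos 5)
Your overall strategy --- encoding ``does not reveal $s$'' as a diff-equivalence via a tester process and then invoking Theorem~\ref{theo:compo-par-diff-equivalence-app} --- is a genuinely different route from the paper's. The paper argues directly: secrecy is preserved by the truly disjoint composition $D=\triple{\Ec_0}{C[\seq{P}]\mid C[\seq{Q}]}{\Phi\uplus\Psi}$, and the generic Theorem~\ref{theo:main-main} (first item) maps every trace of the shared case to a trace of $D$ with a statically equivalent, $\delta$-related frame, so non-deducibility of the witness transfers back. Your reduction is reasonable in principle, but as written it contains a step that fails: the auxiliary lemma is false in the direction you need. If the tester outputs $\diff[s_1,s_2]$ for two \emph{fresh restricted} names $s_1,s_2\in\Ec_0\cup\{s_1,s_2\}$, then the two projections of any reachable biprocess differ at most in one frame entry $w\refer s_1$ versus $w\refer s_2$; since $s_1$ and $s_2$ are both restricted and occur nowhere else, the two frames are related by a bijective renaming of bound names and are therefore statically equivalent, and since no conditional carries a $\diff$ the biprocess never blocks. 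Hence the decorated biprocess satisfies diff-equivalence \emph{whether or not} $s$ is deducible, and the ``diff-equivalence implies no reveal'' direction collapses. The encoding must make the two branches observably different, e.g.\ outputting $\diff[\yes,\no]$ for public constants, or emitting a witness on one side only --- compare Definition~\ref{def:reveal}, where the emitted witness is required to be \emph{deducible} afterwards.

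A second, smaller issue: attaching the tester to $\seq{P}$ adds $s$ to $\fn(\seq{P}^s)$, so if $s$ happens to occur free in $\seq{Q}$ then $s$ enters $\fn(\seq{P}^s)\cap\fn(\seq{Q})\cap\bn(C)$, and condition~4 of Definition~\ref{def:composability-app} then demands that neither side reveal $\pk(s)$ or $\vk(s)$ --- which is not among the hypotheses of the theorem you are proving. Both problems are repairable (fix the encoding; attach the tester so as not to enlarge the relevant intersection, or argue the extra non-revelation separately), but the paper's direct argument avoids them entirely and is shorter: it needs no encoding at all, only that the trace correspondence of Theorem~\ref{theo:main-main} preserves non-deducibility of the secret.
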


\begin{proof}
The proof follows the same lines as the one for dealing with diff-equivalence and trace equivalence. In order to show that the process $S = 
\triple{\Ec_0}{C[\seq{P} \mid \seq{Q}]}{\Phi \uplus \Psi}$ does not reveal $s$, we rely on the fact that the secrecy is preserved by parallel composition of ``disjoint'' processes. Thanks to our hypotheses, we have that $D = \triple{\Ec_0}{C[\seq{P}] \mid C[\seq{Q}]}{\Phi \uplus \Psi}$ does not reveal $s$. Then, by applying
Theorem~\ref{theo:main-main} and more specifically the first bullet point
of this theorem, we can deduce that for all $(\tr, \new\
\Ec_S. \Phi_S) \in \trace(S)$, there exists a trace $(\tr, \new\
\Ec_D. \Phi_D) \in \trace(D)$ such that $\new\ \Ec_S. \Phi_S
\statequiv \new\ \Ec_D. \Phi_D$. 
Since $D$ does not reveal~$s$, we conclude that $S$ does not reveal
$s$ too.
\end{proof}

\section{Sequential composition}
\newcommand{\unf}{\mathsf{u}}
\newcommand{\vari}{\mathsf{v}}
\newcommand{\rename}{\mathsf{r}}
\newcommand{\ru}{\mathsf{ru}}
\newcommand{\vsequ}{\mathsf{vseq}}

In this section we prove Theorems~\ref{theo:main-compo-seq-equiv} and~\ref{theo:main-compo-seq-confidentiality}. 
As for establishing parallel composition results, we will rely on  Theorem~\ref{theo:main-main}. 
This will require to unfold  the processes under study, and to use assignment variables to share data.
However, as already discussed in Section~\ref{sec:difficulties-equiv}, we also
have to tackle some additional difficulties. In particular, to ensure
the compatibility of the executions as required by Theorem~\ref{theo:main-main}.


\subsection{Unfolding biprocesses and sharing names via assignments}

Unfolding the biprocesses for sequential composition follows the same
principles as unfolding the biprocesses for parallel composition.
However, we need to be more specific.  In particular, we need to be
able to easily talk about the replicated instances of a nonce after
unfolding.  We explain in this section how the unfolded biprocesses
are built, and we introduce some notation that we will use throughout
the entire section.

\begin{example}
\label{ex:renaming-one}
Let  $P = ! \new\ k.! \new\ n.\allowbreak\Out(c,\senc(n,k))$. The plain process 
\[
\begin{array}{l}
\ \phantom{\mid }\Out(c,\senc(n[1,1],k[1])) \mid \Out(c,\senc(n[1,2],k[1]))\\
\mid \Out(c,\senc(n[2,1],k[2])) \mid \Out(c,\senc(n[2,2],k[2]))
\end{array}
\]
together with the set 
\[\mathcal{K} = \{k[1], k[2], n[1,1], n[1,2], n[2,1], n[2,2]\}
\]
will correspond to the 2-unfolding of $P$, denoted $\unfolding{P}{2}$. 
In this example, $k[1], k[2], n[1,1], \ldots, n[2,2]$ are considered as distinct names.
\end{example}

\medskip

More generally, in such formalism, two names $n_1[i_1, \ldots, i_p]$ and $n_2[j_1, \ldots, j_q]$ are equal if, and only if,  they are syntactically equal, \emph{i.e.} $n_1 = n_2$, $p = q$ and $i_k = j_k$ for each $k \in\{1\ldots p\}$.  We will use the same convention to represent the variables occurring in the processes. We will also extend this notation to processes. Thus 
$P[i_1, \ldots, i_n]$ will represent the instance of $P$ that
correspond to the $i_1^\textsuperscript{th}$ instance of the $1^\textsuperscript{st}$ replication, 
$i_2^\textsuperscript{th}$ instance of the $2^\textsuperscript{nd}$ replication, \emph{etc}.

\medskip

\begin{example}
\label{ex:renaming-two}
Going back to our previous example, we have that $\unfolding{P}{2} = (Q[1,1] \mid Q[1,2] \mid Q[2,1] \mid Q[2,2], \mathcal{K})$ where $Q[i,j] = \Out(c,\senc(n[i,j], k[i]))$.
\end{example}

\medskip

With such notation, we can now be much more precise on how our disjoint and shared processes are unfolded.

\medskip

Following notation given in Theorem~\ref{theo:main-compo-seq-confidentiality}, we will consider the biprocesses:
\begin{enumerate}
\item $S = \triple{\Ec_0}{C[P_1[Q_1] \mid P_2[Q_2]]}{\Phi \uplus
    \Psi}$, the so-called shared case;
\item $D^\para = \triple{\Ec_0}{C[P] \mid C[Q]}{\Phi
    \uplus \Psi}$, the so-called parallel disjoint case;
\item $D^\sequ = \triple{\Ec_0}{\tilde{C}[P_1[\tilde{Q_1}]\mid
      P_2[\tilde{Q_2}]]}{\Phi \uplus \Psi}$ where $\tilde{C}$ is as
  $C$ but each name $n$ is duplicated $n$/ $n^Q$ in order to ensure
  disjointness. The processes $\tilde{Q_1}$ and $\tilde{Q_2}$ are
  obtained from $Q_1$ and $Q_2$ by replacing each name $n$ occurring
  in $C$ by its copy $n^Q$. This represents the so-called sequential
  disjoint case.
\end{enumerate}

Then, given a biprocess $B$ (typically one given above), we denote by
$B_n$ its $n$\textsuperscript{th} unfolding relying on the naming
convention introduced in Example~\ref{ex:renaming-one} and Example~\ref{ex:renaming-two}.

Using the notation introduced above, it should be clear that for each
unfolding $n$ (with $n \in \mathbb{N}$),  the
biprocess that represents the
parallel disjoint case, \emph{i.e.} $D^\para_n$ exhibits more behaviours
than the biprocess that represents the sequential disjoint case,
\emph{i.e.}  $D^\sequ_n$.

\medskip{}

\begin{lemma}
\label{lem:par to seq}
If $D^{\para}_n$ satisfies diff-equivalence then $D^{\sequ}_n$ satisfies  diff-equivalence
\end{lemma}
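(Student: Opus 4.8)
\textbf{Proof plan for Lemma~\ref{lem:par to seq}.}
The plan is to exhibit a structural correspondence between the $n$\textsuperscript{th} unfolded disjoint biprocesses $D^{\sequ}_n$ and $D^{\para}_n$, under which every move of $D^{\sequ}_n$ is mirrored by a move of $D^{\para}_n$, and then use the fact that $D^{\para}_n$ satisfies diff-equivalence to transport this property back. Concretely, I would first make precise what ``$D^{\para}_n$ exhibits more behaviours than $D^{\sequ}_n$'' means at the level of the operational semantics: I claim there is an injective simulation $\iota$ from reachable biprocesses of $D^{\sequ}_n$ to reachable biprocesses of $D^{\para}_n$ such that $\iota(D^{\sequ}_n)=D^{\para}_n$ (up to $\alpha$-renaming of the duplicated names $n^Q$), and such that if $A \LRstep{\;\ell\;}_\bi A'$ in $D^{\sequ}_n$ then $\iota(A)\LRstep{\;\ell\;}_\bi \iota(A')$ in $D^{\para}_n$, with $\fst(\iota(A))$ and $\fst(A)$ having statically equivalent frames (and likewise for $\snd$). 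The map $\iota$ is obtained by sending each instance $P_1[i_1,\dots][\tilde Q_1[\dots]]$ of the sequential disjoint case to the parallel juxtaposition of the corresponding instance of $P_1$ (taken from the $C[P]$ copy) and the corresponding instance of $\tilde Q_1$ (taken from the $C[Q]$ copy), using the naming convention of Example~\ref{ex:renaming-one}; the duplicated context names $n$/$n^Q$ in $D^{\sequ}_n$ are precisely what make the image land inside the two disjoint copies $C[P]$ and $C[Q]$ of $D^{\para}_n$.

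Next I would verify the simulation property by a routine case analysis on the semantic rule applied (the rules of Figure~\ref{fig:semantics-biprocesses}): communications, conditionals, assignments, inputs and outputs on public channels all commute with $\iota$ because $\iota$ only reorganises which parallel component a subprocess sits in and renames some restricted names; the only point requiring a word is that an internal communication in $D^{\sequ}_n$ between a subprocess of $P_1[\dots]$ and a subprocess of $\tilde Q_1[\dots]$ on a private channel is mirrored in $D^{\para}_n$ by the same communication — this is fine because such a channel is internal to one branch and is not among the shared base-type names, so both sides of $\iota$ agree on it. (If $P$ and $Q$ communicated through a private channel shared via $C$, the lemma would be false, but Definition~\ref{def:context} forbids sharing channels, so this does not occur.) Because diff-equivalence of $D^{\para}_n$ means both (i) static equivalence of $\fst(\Phi)$ and $\snd(\Phi)$ along every reachable biprocess, and (ii) the matching condition on $\lrstep{\ell}$ moves, I would deduce (i) for $D^{\sequ}_n$ directly from the static-equivalence part of the simulation, and (ii) for $D^{\sequ}_n$ as follows: given $\fst(A)\lrstep{\ell}A_L$ with $A$ reachable in $D^{\sequ}_n$, apply the simulation to get $\fst(\iota(A))\lrstep{\ell}\iota(A)_L$; use diff-equivalence of $D^{\para}_n$ to obtain a biprocess move $\iota(A)\lrstep{\ell}_\bi B'$ with $\fst(B')=\iota(A)_L$; then argue that $B'$ lies in the image of $\iota$ — this is where one checks that the move performed on the parallel side could only have come from the correspondingly-structured move on the sequential side — and pull it back along $\iota$ to the required biprocess move of $A$.

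The main obstacle I expect is precisely the last step: showing that the biprocess move produced by diff-equivalence of $D^{\para}_n$ stays within the image of $\iota$, i.e.\ that $\snd(\iota(A))$ reduces ``in the same shape'' and does not exploit the extra looseness of having $C[P]$ and $C[Q]$ fully separated rather than fused through $P_1[\cdot]$/$P_2[\cdot]$. The key observation making this work is that $\iota$ is a bijection onto a reduction-closed subset of the reachable biprocesses of $D^{\para}_n$: any biprocess reachable from $D^{\para}_n$ whose first projection is reachable from $D^{\sequ}_n$ must itself be in the image of $\iota$, because the structure of $\fst$ already determines how the $Q$-parts and $P$-parts are interleaved, and $\diff$-reduction forces $\snd$ to have the same control structure as $\fst$. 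So I would isolate this reduction-closedness as a small auxiliary claim, prove it by induction on the length of the derivation alongside the simulation, and then the transport of both clauses of diff-equivalence is immediate. Everything else is bookkeeping with the indexed-name notation of Examples~\ref{ex:renaming-one}--\ref{ex:renaming-two} and with Lemma~\ref{lem:unfolding-diff}-style unfolding arguments, which are already available.
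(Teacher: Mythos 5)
The paper does not actually prove this lemma: it is asserted immediately after the one-sentence remark that ``$D^{\para}_n$ exhibits more behaviours than $D^{\sequ}_n$'', with no proof environment at all. Your simulation argument is therefore a formalisation of exactly the intuition the paper appeals to, and in outline it is sound: both biprocesses are assembled from the same indexed instances $P_1[\vec{\imath}\,]$, $P_2[\vec{\imath}\,]$, $\tilde{Q}_1[\vec{\imath}\,]$, $\tilde{Q}_2[\vec{\imath}\,]$ over disjoint data, and the only difference is the control-flow coupling (in $D^{\sequ}_n$ the $Q$-instances are guarded by the corresponding $P$-instances reaching their holes). Mapping a reachable state of $D^{\sequ}_n$ to the corresponding state of $D^{\para}_n$, transporting clause~(1) of diff-equivalence via equality of frames up to a bijective renaming of restricted names, and pulling the matching move of clause~(2) back through the position determined by $\fst$, is the right shape of argument, and your ``reduction-closed image'' observation is the right way to justify the pull-back.

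Two points deserve more care than your sketch gives them. First, $\iota$ is not purely a reorganisation of parallel components: each $Q$-instance of $D^{\para}_n$ carries the silent prefix $\new\ k.[x_1:=k].[x_2:=k]$, which has no counterpart in $D^{\sequ}_n$. Hence $\iota(D^{\sequ}_n)$ is a $\tau$-derivative of $D^{\para}_n$ rather than $D^{\para}_n$ itself, and since clause~(2) of diff-equivalence quantifies over single steps $\lrstep{\ell}$, you must let $\iota$ eagerly discharge these prefixes and then invoke diff-equivalence of $D^{\para}_n$ at the reachable biprocess $\iota(B)$ rather than at the root (harmless, because diff-equivalence is a property of every reachable biprocess, but it should be said). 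Second, the step where you assert that conditionals, communications and outputs ``commute with $\iota$'' is precisely where the lemma would fail if $D^{\sequ}_n$ were not genuinely data-disjoint: if the $\tilde{Q}_i$ still consumed the values $t_i$ computed by the $P_i$, corresponding tests would evaluate differently on the two sides and the resulting frames would not be statically equivalent --- bridging that gap is the job of Theorem~\ref{theo:main-main}, not of this lemma. Your proof therefore implicitly, and correctly, relies on the $n$/$n^Q$ duplication together with the abstraction of the assignment data making corresponding components manipulate literally the same terms up to a bijective renaming of restricted names; the write-up would be stronger if you stated this invariant explicitly and carried it through the induction alongside the simulation.
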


\medskip{}

As for parallel composition, once unfolding has been done, we get rid
of names that are shared through the composition context using
assignment variables. We denote these names $r_1, \ldots r_p$ and
their associated assignment variables $z_1, \ldots, z_p$. We also get
rid of the content of the frame by adding some outputs in front of the
resulting process. Note that, we can assume w.l.o.g. that the only
assignment instructions are those that occur in $P_1$ and $P_2$ to
give a value to $x_1$ and $x_2$. Indeed, an assignment of the form $[x
:= t]$ that is ``local'' to $P_1/P_2$ (or $Q_1/Q_2$) has the same
effect as applying the substitution ${x \mapsto t}$ directly on the
process.  This additional hypothesis will help us ensure compatibility
of all executions when applying Theorem~\ref{theo:main-main}.

Given a biprocess $B$, we will denote  $B^\vari$ the biprocess
  resulting from the transformation described above. In particular, we
  will consider $S^\vari_n$ the biprocess obtained by applying the
  transformation above on $S_n$ (the $n$\textsuperscript{th} unfolding
  of the shared case), and also $D^\vsequ_n$ the biprocess obtained by
  applying the transformation on $D^\sequ_n$.

Again, it should be clear that these transformations preserve
diff-equivalence. 

\medskip{}

\begin{lemma}
We have that:
\begin{enumerate}
\item $D^{\vsequ}_n$ satisfies diff-equivalence if, and only if, $D^{\sequ}_n$ satisfies  diff-equivalence
\item $S^{\vari}_n$ satisfies diff-equivalence if, and only if, $S_n$ satisfies  diff-equivalence
\end{enumerate}
\end{lemma}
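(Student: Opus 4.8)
The statement to prove is a pair of equivalences: unfolding plus the assignment/output transformations preserve diff-equivalence, both for the shared case $S_n$ versus $S^{\vari}_n$, and for the sequential disjoint case $D^{\sequ}_n$ versus $D^{\vsequ}_n$.

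\medskip

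\textbf{Overall approach.} The plan is to decompose each of the two claimed equivalences into the elementary transformation steps already isolated in the text and to invoke the corresponding preservation lemmas one by one, composing them. Concretely, both $S_n \rightsquigarrow S^{\vari}_n$ and $D^{\sequ}_n \rightsquigarrow D^{\vsequ}_n$ are obtained by applying, in sequence: (i) the replacement of ``local'' assignments $[x := t]$ by the substitution $\{x \mapsto t\}$ on the continuation; (ii) the sharing-via-assignment transformation on the names $r_1,\ldots,r_p$ introduced through the composition context, yielding a process of the shape $\ndisjunction{\cdot}{\mathcal K}{Z}{i}$; and (iii) moving the elements of the frame $\Phi \uplus \Psi$ to output actions placed in front of the process. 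For step (ii) we use Lemma~\ref{lem:assignation-diff}, which states exactly that $A$ and $\ndisjunction{A}{\mathcal K}{Z}{i}$ are simultaneously in diff-equivalence; note its hypotheses are met because, after unfolding, the biprocess $S_n$ (resp.\ $D^{\sequ}_n$) contains no $\new$ and no replication in its process component, and the set $\mathcal K = \{r_1,\ldots,r_p\}$ contains precisely the base-type names shared between the $\alpha$-coloured part (coming from $P_1/P_2$) and the $\beta$-coloured part (coming from $Q_1/Q_2$). For steps (i) and (iii) the argument is a direct structural bisimulation: replacing $[x:=t].R$ by $R\{x\mapsto t\}$ merely removes a silent $\tau$-step and substitutes the value $x$ would have received, and prepending $\Out(c_j,u_j)$ actions for each $w_j \refer u_j$ in the frame is invertible up to the order of the initial outputs, so the sets of traces (and hence diff-equivalence, by its trace-based characterisation) coincide on both sides of the biprocess.

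\medskip

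\textbf{Key steps in order.} First I would record that, by construction, $S^{\vari}_n$ is obtained from $S_n$ by applying (i), then (ii), then (iii), and likewise $D^{\vsequ}_n$ from $D^{\sequ}_n$; this is just spelling out the definitions of $(\cdot)^{\vari}$ and $(\cdot)^{\vsequ}$ given above. Second, I would check that the intermediate biprocesses are of the form required by Lemma~\ref{lem:assignation-diff}: after unfolding, the process component has no name restriction and no replication, it is coloured with colours in $\alpha \uplus \beta$, and $\mathcal K$ contains all names occurring both in $\alpha$-coloured and $\beta$-coloured actions. Third, I would invoke Lemma~\ref{lem:assignation-diff} to handle step (ii), and give the short invertibility arguments for (i) and (iii). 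Fourth, I would chain the three ``if and only if'' statements to conclude $S^{\vari}_n \text{ diff-equivalent} \iff S_n \text{ diff-equivalent}$, and identically for the sequential disjoint case. Since all the transformations act uniformly on $\fst$ and $\snd$ of a biprocess (they never depend on the $\diff$ data), the biprocess-level statement follows from the process-level one applied to both projections together with the fact that diff-equivalence is a property of the biprocess, not of its projections separately.

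\medskip

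\textbf{Main obstacle.} The genuinely delicate point is step (ii) and in particular verifying the side conditions of Lemma~\ref{lem:assignation-diff} for the \emph{shared} case $S_n$: one must argue that the names $r_1,\ldots,r_p$ chosen for abstraction really do contain every name that is shared across the colour boundary, and that the colouring of the $\ndisjunction{S_n}{\mathcal K}{Z}{i}$ obtained is consistent (the $\alpha$-coloured part performs the assignments $[z_j := r_j]$, the $\beta$-coloured part uses $z_j$). For the disjoint cases $D^{\sequ}_n$ this is easier because the $\beta$-coloured part uses the copies $n^Q$ rather than $n$, so after unfolding the two colours are already name-disjoint except through the variables $x_1,x_2$; one still needs the transformation so that the disjoint-case biprocess has the exact shape on which Theorem~\ref{theo:main-main} will later be applied. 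The remaining steps (i) and (iii) are routine: they are standard ``administrative'' reductions whose correctness is an immediate structural induction on the derivation, and I would not grind through the case analysis here.
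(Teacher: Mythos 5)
Your decomposition into (i) elimination of local assignments, (ii) the $\ndisjunction{\cdot}{\mathcal K}{Z}{i}$ transformation handled by Lemma~\ref{lem:assignation-diff}, and (iii) moving the frame into initial outputs is exactly the chain of administrative steps the paper uses to define $(\cdot)^{\vari}$ and $(\cdot)^{\vsequ}$, and the paper itself offers no proof of this lemma beyond the remark that these transformations "should clearly" preserve diff-equivalence. Your argument is correct and matches the paper's intended route; the only cosmetic imprecision is that in step (iii) the trace sets do not literally coincide (the transformed process has an extra deterministic prefix of outputs), but the induced equivalence of diff-equivalence is immediate, as you indicate.
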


\medskip{}

\label{subsec:rho}
Relying on this transformation, by colouring actions of~$P$ with
$\alpha$, and actions of~$Q$ with $\beta$, given an integer $n$
corresponding to the unfolding under study, and assuming that the hole
of $C$ is under $m$ replications, we consider $\rho_\alpha$ such that
$\dom(\rho_\alpha) = \emptyset$, and $\rho_\beta$ with
\[
\begin{array}{l}
\dom(\rho_\beta) =   \{ z_1, \ldots,
  z_p\} 
\;\; \cup \{ x_1[i_1,\ldots, i_m], x_2[i_1,\ldots,
i_m] \mid  1 \leq i_1, \ldots, i_m\leq n\}
\end{array}
\]

\begin{itemize}
\item $z_i\rho_\beta = r_i$ for $1 \leq i \leq p$;
\item $\rho_\beta(x_1[i_1,\ldots,i_m]) =k[i_1,\ldots, i_m]$
\item   $\rho_\beta(x_2[i_1,\ldots,i_m]) = k[i_1,\ldots, i_m]$.
\end{itemize}

In other words, we abstract each name shared via the composition
context by a fresh one, \emph{i.e.} $r_i$, and each term shared
through the variables $x_1$ and $x_2$ are abstracted by a fresh name,
a new one for each instance.

\subsection{Secrecy of the shared keys}
We now focus on the fourth condition of Theorem~\ref{theo:main-main}, 
\emph{i.e.} we ensure that $D^{\vsequ}_n$ does not reveal the values
of its assignments w.r.t. $(\rho_\alpha,\rho_\beta)$ as defined in Section~\ref{subsec:rho}.


\begin{lemma}
\label{lem:assignment variable non deduce}
Assume that $P_1/P_2/\Phi$ is a good key-exchange protocol under $\Ec_0$ and $C$.
Assume also that $\quadruple{\Ec_0}{C[Q]}{\Psi}{\emptyset}$ does not reveal any $k, \pk(k)$ and $\vk(k)$.

In such a case, we have that $D^{\vsequ}_n$ 
does not reveal the value of its assignment variables w.r.t. $(\rho_\alpha,\rho_\beta)$.
\end{lemma}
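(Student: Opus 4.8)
The plan is to reduce the statement about $D^{\vsequ}_n$ not revealing its assignment values to the hypotheses, namely that $P_1/P_2/\Phi$ is a good key-exchange protocol and that $\triple{\Ec_0}{C[Q]}{\Psi}$ does not reveal $k$, $\pk(k)$, $\vk(k)$. Recall that the assignment variables of $D^{\vsequ}_n$ are, on the one hand, the variables $z_1,\ldots,z_p$ abstracting the names $r_1,\ldots,r_p$ shared through the composition context, and on the other hand the variables $x_1[i_1,\ldots,i_m]$, $x_2[i_1,\ldots,i_m]$ abstracting the session keys established by the key-exchange part. Unfolding the definition of ``does not reveal the value of its assignments w.r.t.\ $(\rho_\alpha,\rho_\beta)$'', I must show that along any execution of $D^{\vsequ}_n$ that is compatible with $(\rho_\alpha,\rho_\beta)$, none of the terms in $K_\alpha \cup K_\beta$ is deducible from the reached frame, where these sets collect the current values $x\sigma$ and the abstractions $x\rho_\gamma$ of the assignment variables, together with their images under $\pk$ and $\vk$. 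So concretely there are two families of secrets to protect: the $r_i$ (with $\pk(r_i)$, $\vk(r_i)$) and the session keys $k[i_1,\ldots,i_m]$ (with $\pk$, $\vk$ variants), in both their ``concrete'' $x\sigma$ form and their ``abstracted'' $x\rho_\beta$ form.

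First I would handle the $r_i$ part. Since $D^{\vsequ}_n$ is the sequential disjoint case, the names $r_i$ occurring in the $\beta$-coloured part have been duplicated into fresh copies $r_i^Q$, so the value assigned to $z_i$ on the $\fst$ side is $r_i$ and is used only by the $\alpha$-coloured (key-exchange) part; the disjointness of the context $\tilde C$ means the $\beta$-part never touches $r_i$. The secrecy of $r_i$, $\pk(r_i)$, $\vk(r_i)$ then follows from the composability hypothesis built into the setup (these names occur in $\bn(C)$ and are not revealed by either protocol in isolation, hence not by their disjoint composition), and similarly for $r_i^Q$ from the fact that $\triple{\Ec_0}{C[Q]}{\Psi}$ does not reveal its context keys. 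The main work is the second family. For the abstracted values $k[i_1,\ldots,i_m]$ I would argue: an execution compatible with $(\rho_\alpha,\rho_\beta)$ is, by construction of $D^{\vsequ}_n$, one where the $\beta$-part $\tilde Q_1/\tilde Q_2$ uses each $k[i_1,\ldots,i_m]$ exactly as a freshly generated key; since $\triple{\Ec_0}{C[Q]}{\Psi}$ does not reveal $k$, $\pk(k)$, $\vk(k)$ (and $Q = \new\ k. [x_1:=k].[x_2:=k].(Q_1\mid Q_2)$ so this is precisely secrecy of the session key of $Q$ in context), mapping each instance back to a session of $C[Q]$ shows these names and their $\pk/\vk$ images are not deducible. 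For the concrete values $x_1[\ldots]\sigma\mydownarrow$ and $x_2[\ldots]\sigma\mydownarrow$ established by the key-exchange part, this is exactly what the last line of the process $P_\good$ in Definition~\ref{def:good} forbids: a ``good'' key-exchange protocol does not allow the adversary to deduce $z\sigma$, $\pk(z\sigma)$, or $\vk(z\sigma)$ for an assignment variable $z$, so $\triple{\Ec_0}{P_\good}{\Phi}$ not revealing $\bad$ gives non-deducibility of these values in any reachable frame of the key-exchange part, hence in $D^{\vsequ}_n$.

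The step I expect to be the main obstacle is the passage from ``$\triple{\Ec_0}{P_\good}{\Phi}$ does not reveal $\bad$'' and ``$\triple{\Ec_0}{C[Q]}{\Psi}$ does not reveal $k,\pk(k),\vk(k)$'' — both statements about a \emph{single} protocol running in its own context — to the statement about the \emph{disjoint composition} $D^{\vsequ}_n$, and moreover only along \emph{compatible} executions. The argument is a projection/simulation: given a compatible run of $D^{\vsequ}_n$ that deduces one of the forbidden terms, project it onto the $\alpha$-component to obtain a run of (an unfolding of) $\triple{\Ec_0}{C[P^+]}{\Phi}$-flavoured process, respectively onto the $\beta$-component to obtain a run of $\triple{\Ec_0}{C[Q]}{\Psi}$, in each case feeding the other component's outputs to the adversary as extra knowledge; the disjointness of $\tilde C$ and the fact (from composability and the non-revealing hypotheses) that the other component contributes nothing useful towards the target key lets this extra knowledge be discharged. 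Care is needed because the session keys are shared between $P_1/P_2$ (who generate them) and $Q_1/Q_2$ (who use them): one must check that compatibility — which forces each $x_1[i_1,\ldots,i_m]$ and $x_2[i_1,\ldots,i_m]$ to receive distinct values, matched in the way dictated by $P_\good$ — makes the abstraction $\rho_\beta$ faithful, so that a deduction of $k[i_1,\ldots,i_m]$ in $D^{\vsequ}_n$ really does transfer to a deduction of the session key in $C[Q]$. I would organise this as: (i) note $D^{\vsequ}_n$ is derived-well-tagged so the frame/deduction machinery of Section~\ref{subset: frame flagged process} applies; (ii) for the $r_i$: invoke composability; (iii) for the abstracted session keys: simulate in $C[Q]$; (iv) for the concrete session keys: simulate in $P_\good$, using that the $\Out(d,\langle x_j,id\rangle)$ probes of $P_\good$ dominate what $Q_1/Q_2$ can reveal about $x_j$, plus the last conjunct of $P_\good$ for the $\pk/\vk$ cases; then conclude by combining these to rule out deducibility of every element of $K_\alpha\cup K_\beta$.
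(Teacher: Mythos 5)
Your proposal is correct and follows essentially the same route as the paper: the last conjunct of $P_\good$ gives non-deducibility of the concrete assignment values and their $\pk/\vk$ images, the hypothesis on $C[Q]$ covers the abstracted session keys, and everything is transferred to $D^{\vsequ}_n$ by the preservation of secrecy under disjoint composition and under the unfolding/assignment-introduction transformations. The paper's own proof simply asserts these preservation facts in one sentence, whereas you spell out the projection/simulation argument and the case split over the two families of assignment variables — a more detailed rendering of the same argument, not a different one.
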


\begin{proof}
By Definition~\ref{def:good},  $P_1/P_2/\Phi$ being a good key-exchange protocol under $\Ec_0$ and $C$ implies that $\triple{\Ec_0}{P_{\good}}{\Phi}$ 
does not reveal $\bad$ where $P_\good$ is defined as follows:
\[
\begin{array}{l}
    P_\good =  \new\, \bad. \new\, d. \big(C[\new\, id. ( P_1[\Out(d,\langle x_1, id\rangle)] \mid P_2[\Out(d,\langle x_2,id\rangle)])] \\[0.5mm]
    \;\mid  \In(d,x). \In(d,y). \myIf \  \proj_1(x) = \proj_1(y)  \wedge \proj_2(x) \neq \proj_2(y)\, \myThen \,\Out(c,\bad)\\[0.5mm]
    \;\mid \In(d,x). \In(d,y). \myIf\ \proj_1(x) \neq \proj_1(y)  \wedge \proj_2(x) = \proj_2(y)\, \myThen \,\Out(c,\bad)\\[0.5mm]
    \;\mid \In(d,x). \In(c,z). \myIf\, z \in \{\proj_1(x), \pk(\proj_1(x)), \vk(\proj_1(x))\}\, \myThen\, \Out(c,\bad)\big)\\[0.5mm]
  \end{array}
\]
In the case were $C$ is of the form $C'[!\_]$, $P_\good$ is defined as follows:
\[
\begin{array}{l}
  \new\, \bad, d, r_1, r_2.\big(C'[\new\, id. !( P_1[\Out(d,\langle x_1, id,r_1\rangle)] \mid P_2[\Out(d,\langle x_2,id,r_2\rangle)])] \\[0.5mm]
    \;\mid  \In(d,x). \In(d,y). \myIf \  \proj_1(x) = \proj_1(y)  \wedge \proj_2(x) \neq \proj_2(y)\, \myThen \,\Out(c,\bad)\\[0.5mm]
    \;\mid \In(d,x). \In(d,y). \myIf\ \proj_1(x) = \proj_1(y)  \wedge \proj_3(x) = \proj_3(y)\, \myThen \,\Out(c,\bad)\\[0.5mm]
    \;\mid \In(d,x). \In(c,z). \myIf\, z \in \{\proj_1(x), \pk(\proj_1(x)), \vk(\proj_1(x))\}\, \myThen\, \Out(c,\bad)\big)\\[0.5mm]
  \end{array}
\]
In both cases, it indicates that the secrecy of $x$, $\pk(x)$ and $\vk(x)$ is preserved, where $x$ is the value of any assignment variable. 
Then, the result is actually a direct consequence of the fact that secrecy is
preserved through disjoint composition, and the transformations that
are performed on the process (\emph{e.g.} unfolding, adding of some
assignments operations) also preserve secrecy.
\end{proof}


\subsection{Compatibility}
To use Theorem~\ref{theo:main-main}, a compatibility condition is
required. As in the case of parallel composition, this property will
be trivially satisfied for assignments that have been added by our
transformation. However, more work is needed to deal with assignments
present in the original processes, that is in our situation,
assignments of the form $[x_1[...] = \_]$ and $[x_2[...] = \_]$ that
come from the unfolding of the process $P_1/P_2$. The idea is that the
 abstractability property and the fact that $P_1/P_2/\Phi$ is a good key-exchange protocol will give us the required conditions to apply Theorem~\ref{theo:main-main}.

\medskip{}

The following lemma focuses on $P_1/P_2/\Phi$ being a good key-exchange protocol under $\Ec_0$ and $C$. However, the definition of a good key-exchange protocol depends on the shape of the composition context, and the properties satisfied by our processes will depends on the distinction. Hence, to avoid any confusion, unless the composition context is explicitely mentioned being of the form $C'[!\_]$, the definition of good key-exchange protocol always follows Definition~\ref{def:good}.


\begin{lemma}
\label{lem:strong-agreement}
Let $\quadruple{\Ec_0}{C[P_1[0] \mid 
    P_2[0]]}{\Phi}{\emptyset}$ be a process such that $P_1/P_2/\Phi$ is a good key-exchange protocol under $\Ec_0$ and $C$ . Let $n$ be an integer, and
 $\quadruple{\Ec}{\p}{\Phi'}{\sigma}$  a process such that
 $\fst(D^{\sequ}_n) \LRstep{\tr} \quadruple{\Ec}{\p}{\Phi'}{\sigma}$. 
Let $i_1, j_1, \ldots, i_m,j_m \in \mathbb{N}$, and $q_1,q_2 \in
\{1,2\}$ such that $x_{q_1}[i_1,\ldots,i_m]$ and $x_{q_2}[j_1, \ldots,
\allowbreak j_m]$ are in $\dom(\sigma)$. 
We have that:
\begin{center} 
$x_{q_1}[i_1,\ldots,i_m]\sigma\mydownarrow = x_{q_2}[j_1, \ldots,
\allowbreak j_m]\sigma\mydownarrow$ \\
if, and only if, \\
$i_p= j_p$ for all $1 \leq p \leq m$.
\end{center}
A similar property holds for $\snd(D^{\sequ}_n)$.
\end{lemma}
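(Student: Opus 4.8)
\textbf{Proof plan for Lemma~\ref{lem:strong-agreement}.}
The plan is to derive the required agreement/freshness property directly from the assumption that $P_1/P_2/\Phi$ is a \emph{good} key-exchange protocol under $\Ec_0$ and $C$, \emph{i.e.} from the fact that $\triple{\Ec_0}{P_{\good}}{\Phi}$ does not reveal $\bad$ (Definition~\ref{def:good}). The key observation is that $\fst(D^{\sequ}_n)$ is (up to renaming of the duplicated context names, which are irrelevant here since $Q_1=Q_2=0$) a particular unfolding of the $P$-part of $P_{\good}$ \emph{without} the three monitoring processes that emit $\bad$. So any execution $\fst(D^{\sequ}_n) \LRstep{\tr} \quadruple{\Ec}{\p}{\Phi'}{\sigma}$ lifts to an execution of $\triple{\Ec_0}{P_{\good}}{\Phi}$ in which, additionally, we run the instances $P_1[\Out(d,\langle x_1,id\rangle)]$ and $P_2[\Out(d,\langle x_2,id\rangle)]$ and collect on the private channel $d$ the pairs $\langle x_{q}[i_1,\dots,i_m]\sigma, id[i_1,\dots,i_m]\rangle$ for every assigned variable. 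Here the crucial structural point is that the $id$ is generated just inside the hole of $C$, hence (in the non-replicated-bottom case, Definition~\ref{def:good}) there is exactly one fresh $id$ per tuple of replication indices $(i_1,\dots,i_m)$, and two such identifiers are equal iff the index tuples coincide.

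First I would make precise this lifting: given the trace $\tr$ reaching $\quadruple{\Ec}{\p}{\Phi'}{\sigma}$, build the corresponding trace of $\triple{\Ec_0}{P_{\good}}{\Phi}$ by performing the same replication unfoldings and the same visible actions, then feeding the monitor inputs $\In(d,x).\In(d,y)$ with the outputs $\langle x_{q_1}[\vec{\imath}], id[\vec{\imath}]\rangle$ and $\langle x_{q_2}[\vec{\jmath}], id[\vec{\jmath}]\rangle$. Then I would argue by contraposition on each direction. If $x_{q_1}[\vec{\imath}]\sigma\mydownarrow = x_{q_2}[\vec{\jmath}]\sigma\mydownarrow$ but $\vec{\imath}\neq\vec{\jmath}$, then $id[\vec{\imath}]$ and $id[\vec{\jmath}]$ are distinct names, so $\proj_1$ of the two collected pairs are equal while $\proj_2$ differ: the first monitor fires and emits $\bad$ on the public channel $c$, contradicting that $\triple{\Ec_0}{P_{\good}}{\Phi}$ does not reveal $\bad$. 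Conversely, if $\vec{\imath}=\vec{\jmath}$ but $x_{q_1}[\vec{\imath}]\sigma\mydownarrow \neq x_{q_2}[\vec{\imath}]\sigma\mydownarrow$, then necessarily $q_1\neq q_2$ (two instances of the same role with the same indices are literally the same assignment), the $\proj_1$ components differ (well, coincide: $id[\vec{\imath}]=id[\vec{\jmath}]$) while the $\proj_2$ components differ, so the \emph{second} monitor process of $P_{\good}$ fires and emits $\bad$ — again a contradiction. This handles both implications.

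The one subtlety is the case where the composition context $C$ is of the form $C'[!\_]$, in which case Definition~\ref{def:good} uses the variant of $P_{\good}$ with the extra fresh names $r_1,r_2$ and the modified third line. There the $id$ is generated \emph{before} the last replication, so it is no longer unique per index tuple; but the roles $P_1$ and $P_2$ append their own fresh $r_1[\dots]$ / $r_2[\dots]$ (one per instance of the innermost replication), and the third line of the replicated-bottom $P_{\good}$ detects exactly when two instances of the \emph{same} role would assign equal values. So the argument has to be reorganized: equality of the assigned values forces, via the first monitor, equality of the $id$-components, and then, via the third monitor together with the $r$-tags, forces the index tuples to be equal; and if the index tuples are equal while the values differ we use the second monitor as above. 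I expect this replicated-bottom bookkeeping — tracking precisely which replication indices control $id$ versus $r_1/r_2$, and matching them against the three monitors — to be the main obstacle; the non-replicated case is essentially immediate from the private-channel lifting. Finally, the statement for $\snd(D^{\sequ}_n)$ follows by the identical argument, since $\snd$ only renames the duplicated context names ($n$ vs. $n^Q$), which does not affect any of the identifiers, assigned values, or monitors involved.
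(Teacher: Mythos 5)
Your core argument is exactly the paper's: lift the execution of $\fst(D^{\sequ}_n)$ to an execution of (an unfolding of) $P_{\good}$, feed the two pairs $\langle x_{q}[\vec{\imath}]\sigma, id[\vec{\imath}]\rangle$ to the monitors on the private channel $d$, and use the first monitor (equal values, distinct $id$s) for one implication and the second monitor (distinct values, equal $id$s) for the other, together with the observation that each instance outputs its pair exactly once because the hole is not under a replication. Two corrections, though. First, your parenthetical justification that the renamings are irrelevant ``since $Q_1=Q_2=0$'' is wrong: $D^{\sequ}_n$ is the unfolding of $\tilde{C}[P_1[\tilde{Q_1}]\mid P_2[\tilde{Q_2}]]$, so the $Q$-continuations are present (running on the duplicated names and, in the proof's lifted process, on a fresh key $k$ substituted for $x_1,x_2$); the lifting is still sound, but the reason is that secrecy of $\bad$ is preserved under disjoint composition with these continuations, which is the step the paper invokes explicitly and you should too. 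Second, the ``main obstacle'' you identify --- the replicated-bottom case $C=C'[!\_]$ --- is not part of this lemma at all: the paper stipulates just before the statement that, absent explicit mention of $C'[!\_]$, ``good key-exchange protocol'' refers to Definition~\ref{def:good} (the non-replicated $P_{\good}$), and the replicated case is handled by the separate Lemma~\ref{lem:weak-agreement}, whose conclusion is strictly weaker (agreement only on the first $m-1$ indices across roles, plus injectivity within each role). Attempting to establish the full biconditional of the present lemma in the replicated case would fail, since there the matching between instances of $P_1$ and $P_2$ under the innermost replication is not fixed in advance; so that part of your plan should simply be dropped rather than elaborated.
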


\begin{proof}
By definition of $P_1/P_2/\Phi$ being a good key-exchange protocol under $\Ec_0$ and $C$ and since secrecy is preserved
when considering disjoint composition, we have that 
$\quadruple{\Ec_0}{P}{\Phi \uplus \Psi}{\emptyset}$ preserves the secrecy of $bad$ where:
\[
\begin{array}{@{}ll@{}}
P =  \new\, \bad.\, \new\, d. (\\
\quad \tilde{C}[\new\, k. \new\, id. (P_1[\Out(d,\langle x_1,
id\rangle).\tilde{Q_1}\{^k/_{x_1}\}] \\
\quad \quad \hspace{2cm}  \mid
P_2[\Out(d,\langle x_2,id\rangle). \tilde{Q_2}\{^k/_{x_2}\}])] \\
\quad\mid  \In(d,x). \In(d,y). \\
\quad\quad \myIf \  \proj_1(x) = \proj_1(y)  \wedge \proj_2(x) \neq \proj_2(y)\, \\ 
\quad\quad \myThen \,\Out(c,\bad)\\
\quad\quad \myElse\, \myIf\ \proj_1(x) \neq \proj_1(y)  \wedge \proj_2(x) = \proj_2(y)\\
\quad\quad \myThen \,\Out(c,\bad)\\
\quad)\\
\end{array}
\]
Here, the notation $\tilde{C}$, $\tilde{Q_1}$, and $\tilde{Q_2}$ refer
to the same renaming as the one used to define $D^\sequ$.

Let $n$ be an integer. Consider the $n$\textsuperscript{th} unfolding
of $D^\sequ$ as well as the $n$\textsuperscript{th} unfolding of the
process $P$ defined above. First, note that an output on channel $d$ is always of the
form 
\begin{center}
$\Out(d,\langle x_j[i_1,\ldots i_m],id[i_1,\ldots i_m]\rangle)$ with $j
\in\{1,2\}$.
\end{center}
 
Let $\quadruple{\Ec}{\p}{\Phi'}{\sigma}$ be a process such that
$\fst(D^{\sequ}_n) \LRstep{\tr} \quadruple{\Ec}{\p}{\Phi'}{\sigma}$
with  $x_{q_1}[i_1,\ldots,i_m]$ and $x_{q_2}[j_1,\ldots,j_m]$ both in $\dom(\sigma)$.
Moreover, assume that $x_{q_1}[i_1,\ldots,i_m]\sigma\mydownarrow = x_{q_2}[j_1,
\ldots, j_m]\sigma\mydownarrow$.
In such a case, it is easy to build a trace of $\quadruple{\Ec_0}{P_n}{\Phi
  \uplus \Psi}{\emptyset}$ 
such that the pairs 
\begin{itemize}
\item 
$\langle x_{q_1}[i_1,\ldots,\allowbreak
i_m],id[i_1,\ldots,\allowbreak i_m]\rangle$, and 
\item $\langle x_{q_2}[j_1,\ldots, j_m],id[j_1,\ldots, j_m]\rangle$
\end{itemize}
 are
outputted on channel~$d$. 
Since the hole in $P_{q_1}$ (resp. $P_{q_2}$) is not in the scope of a replication, we deduce that these pairs can only be outputted once.
We have seen that such a process preserves the secrecy of $\bad$, and
thus we deduce that $(i_1, \ldots, i_p) = (j_1, \ldots, j_p)$.

\smallskip{} Now, relying on the fact that
$\quadruple{\Ec_0}{P_n}{\Phi \uplus \Psi}{\emptyset}$ preserves the
secrecy of $\bad$, and more precisely on the fact that the following
instructions are part of the process:
\[
\begin{array}{l@{\quad\quad}l}
&\mid  \In(d,x). \In(d,y). \\
&\quad \ldots\\
&\quad \myElse\, \myIf\ \proj_1(x) \neq \proj_1(y)  \wedge \proj_2(x) = \proj_2(y)\\
&\quad \myThen \,\Out(c,\bad)\\\end{array}
\]
we deduce that $(i_1, \ldots, i_p) = (j_1, \ldots, j_p)$ implies that 
$id[i_1,\ldots,\allowbreak i_m] = id[j_1,\ldots, j_m]$ and so we
deduce that  $x_{q_1}[i_1,\ldots,i_m]\sigma\mydownarrow = x_{q_2}[j_1, \ldots, j_m]\sigma\mydownarrow$.
\end{proof}

Note that the property above we established for $D^\sequ_n$ also holds
on $D^\vsequ_n$. We have also a similar result in case the composition context is of the form $C'[!\_]$ that is stated below and can be proved in a similar way.


\begin{lemma}
\label{lem:weak-agreement}
Let $\quadruple{\Ec_0}{C[P_1[0] \mid 
    P_2[0]]}{\Phi}{\emptyset}$ be a process such that $C = C'[!\_]$ for some $C'$ and $P_1/P_2/\Phi$ is a good key-exchange protocol under $\Ec_0$ and $C$. Let $n$ be an
integer and 
 $\quadruple{\Ec}{\p}{\Phi'}{\sigma}$ be a process such that
 $\fst(D^{\sequ}_n) \LRstep{\tr} \quadruple{\Ec}{\p}{\Phi'}{\sigma}$.

 Let $i_1, j_1, \ldots, i_m,j_m \in \mathbb{N}$. We have that:
\begin{itemize}
\item $x_1[i_1,\ldots,i_m]\sigma\mydownarrow = x_2[j_1, \ldots,
  \allowbreak j_m]\sigma\mydownarrow$ implies that $(i_1, \ldots,
  i_{m-1}) = (j_1, \ldots, j_{m-1})$; and
\item  for $q \in \{1,2\}$, $x_q[i_1, \ldots, i_m]\sigma\mydownarrow = x_q[j_1, \ldots,
  j_m]\sigma\mydownarrow$ implies $(i_1,\ldots, i_m) =
  (j_1,\ldots,j_m)$.
\end{itemize}
A similar property holds for $\snd(D^{\sequ}_n)$.
\end{lemma}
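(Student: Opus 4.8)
The statement to prove is Lemma~\ref{lem:weak-agreement}, which is the analogue of Lemma~\ref{lem:strong-agreement} for the case where the composition context ends with a replication, i.e. $C = C'[!\_]$. The plan is to mimic the proof of Lemma~\ref{lem:strong-agreement} step by step, but using the \emph{second} version of the good-key-exchange property (the one defined for contexts of the form $C'[!\_]$, with the extra fresh names $r_1,r_2$ in the triples $\langle x_1,id,r_1\rangle$ and $\langle x_2,id,r_2\rangle$, and with the modified 3rd line testing $\proj_1(x)=\proj_1(y)\wedge\proj_3(x)=\proj_3(y)$).

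First I would spell out, exactly as in the proof of Lemma~\ref{lem:strong-agreement}, the process $P$ whose $\bad$-secrecy is guaranteed by the hypothesis that $P_1/P_2/\Phi$ is a good key-exchange protocol under $\Ec_0$ and $C = C'[!\_]$: it is the second $P_\good$ with the holes filled by $\Out(d,\langle x_1,id,r_1\rangle).\tilde{Q_1}\{^k/_{x_1}\}$ (and symmetrically for $P_2$), using the $\tilde{\cdot}$ renaming that defines $D^\sequ$. I would note that the outputs on channel $d$ are now always of the form $\Out(d,\langle x_q[i_1,\ldots,i_m], id[i_1,\ldots,i_{m-1}], r_q[i_1,\ldots,i_m]\rangle)$ — crucially, since the $id$ is generated \emph{before} the innermost replication, its index has length $m-1$, whereas $x_q$ and $r_q$ have index length $m$, with $r_q$ distinct across all instances. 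Then, given a trace of $\fst(D^\sequ_n)$ reaching $\quadruple{\Ec}{\p}{\Phi'}{\sigma}$ with the relevant assignment variables in $\dom(\sigma)$, I build a matching trace of the $n$-th unfolding of $P$ that outputs on $d$ the two triples corresponding to the two instances under consideration; here, unlike in the non-replicated case, each such triple can be output multiple times, so one simply picks the two desired instances.

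For the first bullet ($x_1[\vec i]\sigma\mydownarrow = x_2[\vec j]\sigma\mydownarrow \Rightarrow (i_1,\ldots,i_{m-1})=(j_1,\ldots,j_{m-1})$): feed the triples for $x_1[\vec i]$ and $x_2[\vec j]$ into the first two inputs of the 2nd-line conditional $\proj_1(x)=\proj_1(y)\wedge\proj_2(x)\neq\proj_2(y)$. Equality of the first projections holds by assumption; if $id[i_1,\ldots,i_{m-1}] \neq id[j_1,\ldots,j_{m-1}]$ then the second projections differ (all $id$'s being distinct names), so $\bad$ would be emitted, contradicting $\bad$-secrecy; hence $id[i_1,\ldots,i_{m-1}] = id[j_1,\ldots,j_{m-1}]$, which by distinctness of the $id$ names forces $(i_1,\ldots,i_{m-1})=(j_1,\ldots,j_{m-1})$. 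For the second bullet ($x_q[\vec i]\sigma\mydownarrow = x_q[\vec j]\sigma\mydownarrow \Rightarrow \vec i = \vec j$, for $q$ fixed in $\{1,2\}$): feed the triples for $x_q[\vec i]$ and $x_q[\vec j]$ into the 3rd-line conditional $\proj_1(x)=\proj_1(y)\wedge\proj_3(x)=\proj_3(y)$; the first projections ($x_q$'s values) are equal by assumption, so if $r_q[\vec i] = r_q[\vec j]$ then $\bad$ is emitted, contradiction; but the $r_q$ names are pairwise distinct, so $r_q[\vec i] = r_q[\vec j]$ iff $\vec i = \vec j$, giving the claim. Finally I would remark that the argument for $\snd(D^\sequ_n)$ is identical since diff-equivalence/goodness is required on both projections, and that the property transfers to $D^\vsequ_n$ because the variable-introducing transformation only adds assignments and outputs, which do not affect the values of $x_1,x_2$ nor the reachable equalities. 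The main obstacle — and it is mostly bookkeeping rather than a genuine difficulty — is keeping the index arithmetic straight: the asymmetry between the index length of $id$ (length $m-1$, shared across an inner replication block) and that of $x_q, r_q$ (length $m$) is exactly what makes the conclusion weaker here (agreement only up to the $(m-1)$-prefix, plus injectivity per role), so one must be careful to route each pair of triples through the conditional that yields precisely the needed contradiction.
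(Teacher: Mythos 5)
Your overall strategy is the right one and is what the paper intends (the paper only remarks that the replicated case ``can be proved in a similar way'' to Lemma~\ref{lem:strong-agreement}): reduce to the $\bad$-secrecy of the second version of $P_\good$ composed disjointly with the $Q$-parts, and route the two outputs on the private channel $d$ through the conditional that yields the contradiction. Your treatment of the first bullet is correct, including the key observation that $id$ carries an index of length $m-1$ because it is generated before the innermost replication.

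However, your argument for the second bullet rests on a misreading of the definition of $P_\good$ in the replicated case. The names $r_1$ and $r_2$ are bound at the very top level, $\new\, \bad, d, r_1, r_2.\big(C'[\dots]\big)$, i.e.\ \emph{outside} $C'$ and outside every replication. They are therefore two global role tags: every instance of $P_1$ outputs the same name $r_1$ in third position, and every instance of $P_2$ the same $r_2$. They are not per-instance nonces $r_q[i_1,\ldots,i_m]$ that are ``distinct across all instances'' as you assume. Under your reading the test $\proj_3(x)=\proj_3(y)$ on the third line could never be satisfied by two distinct instances, so the freshness guarantee would be vacuous; and, as written, your deduction concludes $r_q[\vec i]\neq r_q[\vec j]$ and hence $\vec i\neq\vec j$, which is the opposite of the claimed conclusion. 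The correct argument is: both triples coming from role $q$ have third component the single name $r_q$, so $\proj_3(x)=\proj_3(y)$ holds automatically; if moreover $x_q[i_1,\ldots,i_m]\sigma\mydownarrow = x_q[j_1,\ldots,j_m]\sigma\mydownarrow$ while $(i_1,\ldots,i_m)\neq(j_1,\ldots,j_m)$, then the two outputs on $d$ are two distinct available actions (each unfolded instance outputs exactly once on the private channel $d$), and feeding them to the third line emits $\bad$, contradicting goodness; hence the indices coincide. Relatedly, your remark that ``each such triple can be output multiple times'' is inaccurate: each instance outputs once, there are merely many instances.
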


Now, regarding assingment variables, and in particular the different
instances of $x_1$ and $x_2$, it remains to show that the values
assigned to these variables will be rooted in the right signature. We
proceed in two steps. First, we discard terms rooted with 
a symbol in $\{ \pk, \vk,
\langle\rangle\}$ (Lemma~\ref{lem:rootNotpk}), and then we show that it is actually
rooted in the right signature (Lemma~\ref{lem:abstractability}).


\begin{definition}
We say that a process $P$ satisfies the \emph{abstractability property} if for all $P \LRstep{\tr} \quadruple{\Ec}{\p}{\Phi}{\sigma}$, for all assignment variable $x \in \dom(\sigma)$,  $\racine(x\sigma\mydownarrow) \not\in \{ \pk, \vk, \langle\rangle\}$.
\end{definition}

\smallskip{}

This property is important for our composition to hold.

\smallskip{}

\begin{example}
Let $P_i = [x_i := \langle k_1,k_2\rangle]$, and $Q_i = \myIf\, x_i = \langle \proj_1(x_i), \proj_2(x_i) \rangle \,\allowbreak \myThen\, \Out(c,\id_i)$. 
Let $C = \new\, k_1. \new\, k_2. \_$. We can see that in the shared case, the branch {\sc Then} of the process $Q_i$ 
will be executed whereas when considering in isolation the process $C[\new\, k. (Q_1\{x_1\mapsto k\}\mid Q_2\{x_2 \mapsto k\})]$ will not exhibit a similar behaviour.
\end{example}

Intuitively, we say that a value of an assignment variable can be
abstracted if it is not a pair, a public key or verification key. This
is due to the fact that those three primitives are not tagged and so
can be used by processes of any colour.


\begin{lemma}
\label{lem:rootNotpk}
Let $\quadruple{\Ec_0}{C[{P_1[0] \mid P_2[0]}]}{\Phi}{\emptyset}$ be a
process satisfying the abstractability property. We have that $D^{\vsequ}$ satisfies the abstractability property.
\end{lemma}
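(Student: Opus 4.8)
The statement to prove is Lemma~\ref{lem:rootNotpk}: if $\quadruple{\Ec_0}{C[P_1[0] \mid P_2[0]]}{\Phi}{\emptyset}$ satisfies the abstractability property, then so does $D^{\vsequ}$. The overall strategy is to reduce the abstractability of $D^{\vsequ}$ back to the abstractability of the original process, using the fact that the only assignments occurring in $D^{\vsequ}$ concerning original assignment variables (the various instances $x_1[\ldots]$, $x_2[\ldots]$) are exactly copies of the assignments $[x_1 := t_1]$ and $[x_2 := t_2]$ present in $P_1[0] \mid P_2[0]$, and that the transformations producing $D^{\vsequ}$ from $\triple{\Ec_0}{C[P_1[0] \mid P_2[0]]}{\Phi}$ (namely the duplication of composition-context names into the $n/n^Q$ copies to obtain $D^{\sequ}$, the unfolding, and the introduction of the auxiliary assignment variables $z_1,\ldots,z_p$) do not change which terms get assigned to the original assignment variables, up to an injective renaming of names.

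First I would recall precisely what $D^{\vsequ}$ and $D^{\sequ}$ are: $D^{\sequ} = \triple{\Ec_0}{\tilde C[P_1[\tilde Q_1] \mid P_2[\tilde Q_2]]}{\Phi \uplus \Psi}$, where $\tilde C$ duplicates each context name $n$ as $n/n^Q$ and $\tilde Q_i$ substitutes $n^Q$ for $n$. The assignment instructions touching $x_1$/$x_2$ are the ones in $P_1$/$P_2$, hence of the form $[x_1 := t_1]$/$[x_2 := t_2]$ with $t_1, t_2$ built from $\Sigma_\alpha \cup \Sigma_0$ and whatever the $P$-side generated; crucially, renaming context names $n \mapsto n^Q$ in the $Q$-parts does not affect $t_1, t_2$, since $Q_1, Q_2$ are syntactically ``below'' the assignments and the $P$-side keeps the original names. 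Then I would argue that any execution trace $D^{\vsequ} \LRstep{\tr} \quadruple{\Ec}{\p}{\Phi'}{\sigma}$ can be projected to an execution of (an unfolding of) $\triple{\Ec_0}{C[P_1[0] \mid P_2[0]]}{\Phi}$: one simply discards the steps belonging to the $Q$-parts and to the auxiliary $z_j$-assignments and output-of-frame prefixes, and undoes the $n/n^Q$ renaming (which is harmless because $n^Q$ does not occur on the $P$-side). Since unfolding and the name-duplication are handled by Lemma~\ref{lem:unfolding-diff}-style reasoning and by the observation that duplicating a name into two never identifies two previously distinct names, the value $x_q[i_1,\ldots,i_m]\sigma\mydownarrow$ in $D^{\vsequ}$ equals, up to the injective renaming, the value of the corresponding instance of $x_q$ in the projected execution of $C[P_1[0]\mid P_2[0]]$. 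A root-position normal form is invariant under an injective renaming of names (it cannot introduce or remove a $\pk$, $\vk$, or $\langle\,\rangle$ at the root, since these are not names), so $\racine(x_q[\ldots]\sigma\mydownarrow) \notin \{\pk, \vk, \langle\,\rangle\}$ follows from the abstractability hypothesis applied to the projected trace.

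The main obstacle, and the step deserving the most care, is making the ``projection of a trace'' argument rigorous in the presence of internal communications and of the interleaving between $P$-side steps and $Q$-side steps: one must check that deleting the $Q$-side steps (and the auxiliary assignments) from a valid trace of $D^{\vsequ}$ still yields a valid trace, i.e.\ that no $Q$-side action was a prerequisite for a $P$-side action that produces or modifies $x_1/x_2$. This is where the structural constraints matter: $P_1[\_]$ and $P_2[\_]$ have the hole strictly after their assignment $[x_i := t_i]$, so $t_i$ is determined entirely by the $P$-prefix and never depends on anything done inside $Q_i$ or on the auxiliary $z_j$ (which are assigned before the process even starts its $P$-part, so they commute freely). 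I would phrase this as: fix the sub-derivation of the trace up to the $\mathsc{Assgn}$ step that creates $x_q[i_1,\ldots,i_m]$; this sub-derivation involves only $P$-side actions (plus the initial $z_j$-assignments and frame-output prefixes, which do not interact with $t_q$), hence it maps to a legal partial execution of the (unfolded) original process, assigning the same term modulo renaming. Once this is set up, applying the abstractability hypothesis of the original process and the renaming-invariance of $\racine(\cdot\mydownarrow)$ closes the proof; the rest is bookkeeping about the unfolding indices, which I would state as a lemma rather than spell out.
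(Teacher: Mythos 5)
Your overall strategy is the same as the paper's: unfolding and the $B\mapsto B^{\vari}$ transformation preserve abstractability, the $n/n^{Q}$ duplication is an injective renaming under which $\racine(\cdot\mydownarrow)$ is invariant, and the whole question reduces to showing that grafting $\tilde{Q_1}/\tilde{Q_2}$ into the holes cannot change the values reaching $x_1[\ldots]/x_2[\ldots]$. The paper disposes of this last point by observing that the $Q$-parts introduce no new assignments and can be viewed as code executed by the attacker, so that every run of $D^{\sequ}$ has a corresponding run of $\quadruple{\Ec_0}{C[P_1[0]\mid P_2[0]]}{\Phi}{\emptyset}$ whose substitution agrees on the original assignment variables.

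Where your write-up goes wrong is in the justification of that last point. You argue that one can \emph{delete} the $Q$-side steps because ``the sub-derivation up to the \textsc{Assgn} step that creates $x_q[i_1,\ldots,i_m]$ involves only $P$-side actions''. That claim is false once $C$ contains replication: after unfolding, an instance $Q_1[j_1,\ldots,j_m]$ belonging to a session that has already completed its $P$-prefix can output messages before the session $[i_1,\ldots,i_m]$ performs its assignment, and the attacker can use those frame entries as sub-recipes for the inputs occurring in the prefix of $P_1[i_1,\ldots,i_m]$ \emph{before} $[x_1:=t_1]$. Since $t_1$ may contain these input variables, the assigned value can genuinely depend on $Q$-side outputs. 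Consequently, erasing the $Q$-side steps does not in general yield a legal trace of (an unfolding of) the original process — the recipes labelling the surviving \textsc{In} steps refer to frame variables that no longer exist — and the hole-after-assignment observation only rules out the within-session dependency, not the cross-session one. This is precisely the case the lemma is needed for, so the gap is not cosmetic. The paper's proof avoids the false claim by keeping the $Q$-steps and re-attributing them to the environment rather than removing them (so the frame entries persist and the trace of the original process is a different, longer-for-the-attacker trace with the same restriction of $\sigma$); if you want to make the argument rigorous you need either that simulation-style correspondence or an explicit induction showing that every $P$-side input of $D^{\sequ}$ can be matched in the original process, not a syntactic projection.
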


\begin{proof}
First of all, unfolding the process 
$\quadruple{\Ec_0}{C[{P_1[0] \mid P_2[0]}]}{\Phi}{\emptyset}$
preserves the abstractability property. 
Moreover, the transformation that transforms a process $B$ into a
process $B^\vari$ preserves the abstractability property. Thus, to
show that $D^{\vsequ}$ satisfies the abstractability property, it only
remains to show that this property is preserved by disjoint
composition assuming that the process we want to compose does not
introduce new assignments (note that this is the case of $Q_1/Q_2$).

In fact, part of the process brought by $Q_1/Q_2$ can be viewed as a
process executed by the attacker. Thus, for all 
$D^\sequ \LRstep{\tr'} \quadruple{\Ec'}{\p'}{\Phi'}{\sigma'}$, there
exists a correspondig execution $\quadruple{\Ec_0}{C[{P_1[0] \mid
    P_2[0]}]}{\Phi}{\emptyset} \LRstep{\tr''}
\quadruple{\Ec''}{\p''}{\Phi''}{\sigma''}$ 
such that $\sigma''$ and $\sigma'$ coincide on $\dom(\sigma'')$, and
in particular on the values assigned to $x_1[\ldots]$ and
$x_2[\ldots]$. This allows us to deduce that 
$\racine(x\sigma'') \not\in \{ \pk,\vk, \langle \rangle\}$, and thus
$D^\vsequ$ satisfies the abstractability property.
\end{proof}

The next lemma will allow us to conclude that we obtain traces
compatible with $(\rho_\alpha,\rho_\beta)$.

\begin{lemma}
\label{lem:abstractability}
Assume that $D^{\vsequ}_n$ does not reveal the value of its assignment variables w.r.t. $(\rho_\alpha,\rho_\beta)$ and satisfies the abstractability property. We have that for all $D^{\vsequ}_n \LRstep{\tr} \quadruple{\Ec}{\p}{\Phi}{\sigma}$, for all $\gamma \in \{\alpha,\beta\}$, for all $z \in \dom(\rho_\gamma)$, we have that either $\racinebis(z\sigma\mydownarrow) = \bot$ or $\racinebis(z\sigma\mydownarrow) \not\in \gamma \cup \{0\}$.
\end{lemma}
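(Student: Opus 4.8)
The statement combines two kinds of assignment variables, so the plan is to split the analysis of $z \in \dom(\rho_\gamma)$ according to its origin, and handle each case separately. For $\gamma = \alpha$ we have $\dom(\rho_\alpha) = \emptyset$ (recall the definition of $\rho_\alpha,\rho_\beta$ in the subsection on $\rho$), so there is nothing to prove. Hence the whole argument concentrates on $\gamma = \beta$, where $\dom(\rho_\beta)$ consists of the auxiliary variables $z_1,\ldots,z_p$ introduced by the ``sharing names via assignments'' transformation, together with the instances $x_1[i_1,\ldots,i_m]$ and $x_2[i_1,\ldots,i_m]$ coming from unfolding the key-exchange roles $P_1/P_2$.

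\textbf{First case: $z = z_i$.} These variables are assigned, by construction of $B^\vari$, to a name $r_i$ shared through the composition context, \emph{i.e.} $z_i\sigma\mydownarrow$ is always a single name. A name has $\racinebis$ equal to $\bot$, so the conclusion holds immediately in this case. This is the routine step.

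\textbf{Second case: $z = x_q[i_1,\ldots,i_m]$ for $q \in \{1,2\}$.} Here I would argue by contradiction. Suppose $\racinebis(z\sigma\mydownarrow) \neq \bot$ and $\racinebis(z\sigma\mydownarrow) \in \beta \cup \{0\}$. First use the abstractability property of $D^{\vsequ}_n$ (which holds by hypothesis and was established in Lemma~\ref{lem:rootNotpk} from the abstractability of $\quadruple{\Ec_0}{C[P_1[0] \mid P_2[0]]}{\Phi}{\emptyset}$): it guarantees $\racine(z\sigma\mydownarrow) \notin \{\pk,\vk,\langle\;\rangle\}$. Now I want to exploit that $D^\vsequ_n$ is a \emph{disjoint} process: the value of $x_q[\ldots]$ is computed entirely by the $\alpha$-coloured part $P_1/P_2$, which is built over $\Sigma_\alpha \cup \Sigma_0$ and whose inputs are filled either by the environment or by $\beta$-coloured outputs. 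The key observation is that if $\racinebis(z\sigma\mydownarrow) \in \beta \cup \{0\}$, then the head of $z\sigma\mydownarrow$ is rooted in a signature foreign to the $\alpha$-coloured computation that produced it, which forces $z\sigma\mydownarrow$ (or one of its flawed subterms / $\Sigma_0$-factors) to have been supplied from outside --- hence to be deducible from the current frame. But then, pushing this deducibility through the transformations (unfolding and adding assignments all preserve deducibility, as noted in the proof of Theorem~\ref{theo:compo-par-diff-equivalence-app}), we would contradict the hypothesis that $D^{\vsequ}_n$ does not reveal the value of its assignment variables w.r.t.\ $(\rho_\alpha,\rho_\beta)$ --- this hypothesis says precisely that $z\sigma\mydownarrow$, $\pk(z\sigma\mydownarrow)$, $\vk(z\sigma\mydownarrow)$ are all non-deducible.

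\textbf{Main obstacle.} The delicate point is making the ``foreign root forces deducibility'' step rigorous. One cannot simply say the value is an $\alpha$-term, because the attacker may have fed a $\beta$- or $\Sigma_0$-rooted term into an input of $P_1/P_2$, and the $\alpha$-context around it may vanish on normalisation (\emph{e.g.} via $\sdec$/$\senc$ cancellation), leaving a foreign-rooted normal form. The way I would handle this is to invoke Lemma~\ref{lem:app_CRicalp05} together with the factor machinery: since $P_1/P_2$ is tagged, the term assigned to $x_q$ before normalisation is of the form $\TAG{v}{i}\sigma''$ with $i \in \alpha$ and $\sigma'' \vDash \TestTag{\TAG{v}{i}}{i}$, and then Lemma~\ref{lem:deltaandnormalform} (more precisely its $\racinebis$ clause) shows that after normalisation the root tag is still $i \in \alpha$ unless the whole term collapses into one of its $\Sigma_0$-factors; such a $\Sigma_0$-factor, being a factor of a value built by $P_1/P_2$ from inputs, is obtained either from the environment or from a $\beta$-output, hence deducible. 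Either way we reach a deducible term among $\{z\sigma\mydownarrow, \pk(z\sigma\mydownarrow), \vk(z\sigma\mydownarrow)\}$ --- or rather $z\sigma\mydownarrow$ itself --- contradicting non-revelation and completing the proof.
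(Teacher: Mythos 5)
Your overall strategy matches the paper's at the top level: argue by contradiction, use the abstractability property to exclude roots in $\{\pk,\vk,\langle\;\rangle\}$ so that a $\beta\cup\{0\}$-rooted value of an $\alpha$-coloured assignment variable is a genuine $\alpha$-foreign factor of itself, trace that factor back to something the attacker supplied, and contradict non-revelation. Your handling of $\gamma=\alpha$ (empty domain) and of the auxiliary variables $z_i$ (assigned to plain names, so $\racinebis=\bot$) is correct, although the paper treats all of $\dom(\rho_\beta)$ uniformly rather than splitting cases.

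The gap is precisely in the step you flag as the main obstacle. The sentence ``such a $\Sigma_0$-factor, being a factor of a value built by $P_1/P_2$ from inputs, is obtained either from the environment or from a $\beta$-output, hence deducible'' is not a proof, and two essential ingredients are missing. First, a foreign factor of $x_q[\ldots]\sigma\mydownarrow$ may have propagated through an arbitrary chain of internal communications and earlier variable bindings before reaching the assignment, so the backward tracing must be done with the machinery for derived well-tagged processes (Lemmas~\ref{lem:FlawedColor and frame element direct element} and~\ref{lem:FlawedColor and frame element direct element 2}); your citation of Lemma~\ref{lem:deltaandnormalform} is also off, since that lemma concerns $\delta_\gamma$ rather than normalisation of $\TAG{v}{i}\sigma$ --- the statement you want is Lemma~\ref{lem: flawed color and tag term 2}. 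Second, and more seriously, once the foreign factor is located inside some $M\Phi\mydownarrow$ with $M$ an attacker recipe, you still cannot conclude it is deducible: it may occur there only under encryption, and Lemma~\ref{lem:FlawedColor and frame element direct element 2} yields deducibility only under the side condition that the factor is not the value of an \emph{earlier} assignment variable (whose value is itself secret by non-revelation). Excluding that possibility requires already knowing that earlier assignment variables satisfy the conclusion of the lemma being proved --- their values are not $\beta\cup\{0\}$-rooted, hence cannot coincide with $z\sigma\mydownarrow$. This forces the proof to be an induction on the order $\prec$ in which assignment variables are introduced, which is exactly how the paper structures it and which is absent from your argument; without it, the deducibility claim, and hence the contradiction, does not go through.
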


\begin{proof}
Since $D^{\vsequ}_n \LRstep{\tr} \quadruple{\Ec}{\p}{\Phi}{\sigma}$, we know that $\quadruple{\Ec}{\p}{\Phi}{\sigma}$ is a derived well-tagged extended process w.r.t. $\prec$ and $\vcol$, for some $\prec$ and $\vcol$. Moreover, by construction of $D^{\vsequ}_n$, we also know that $\dom(\rho_\alpha) = \emptyset$. We prove the result by induction of the $\dom(\rho_\beta)$ with the order $\prec$.

\medskip

\noindent\emph{Base case $z \prec z'$ for all assignment variables $z'$ different from $z$:} Assume that $\racinebis(z\sigma\mydownarrow) \neq \bot$ and $\racinebis(z\sigma\mydownarrow) \in \beta \cup \{0\}$. We now show that $z\sigma\mydownarrow \in \fct_\alpha(z\sigma)$. Since $\racinebis(z\sigma\mydownarrow) \in \beta \cup \{0\}$, we have that if $\racine(z\sigma\mydownarrow) \not\in \{ \vk,\pk,\langle\ \rangle\}$ then $z\sigma \in \fct_\alpha(z\sigma)$. Thus it remains to show that $\racine(z\sigma\mydownarrow) \not\in \{ \vk,\pk,\langle\ \rangle\}$. But $D^{\vsequ}_n$ satisfies the abstractability property hence we deduce that $\racine(z\sigma\mydownarrow) \not\in \{ \vk,\pk,\langle\ \rangle\}$.

Since $z\sigma\mydownarrow \in \fct_\alpha(z\sigma)$, we can apply Lemma~\ref{lem:FlawedColor and frame element direct element} and so we deduce that:
\begin{enumerate}
\item either there exists $M$ such that $\fv(M) \subseteq \dom(\Phi) \cap \{z' ~|~ z' \prec z\}$, $\fn(M) \cap \Ec = \emptyset$ and $z\sigma\mydownarrow \in \fct_\gamma(M\Phi\mydownarrow)$
\item otherwise there exists $j$ such that $z^{\beta}_j \prec z$ and $z^{\beta}_j\sigma\mydownarrow = z\sigma\mydownarrow$
\end{enumerate}
The second case is trivially impossible since $\dom(\rho_\alpha) = \emptyset$ and so $z^{\beta}_j$ does not exists. We focus on the first case: We know that $z\sigma\mydownarrow \in \fct_\gamma(M\Phi\mydownarrow)$. Since $z\sigma\mydownarrow$ is not deducible in $\new\ \Ec. \Phi$, then $z\sigma\mydownarrow \not\in \fctpair(M\Phi\mydownarrow)$. Moreover, we know that for all assignment variables $z'$ different from $z$, $z \prec z'$. Thus we can apply Lemma~\ref{lem:FlawedColor and frame element direct element 2} and obtain that there exists $M''$ such that $\fv(M) \subseteq \dom(\Phi)$, $\fn(M) \cap \Ec = \emptyset$ and $z\sigma\mydownarrow \in \fctpair(M'\Phi\mydownarrow)$. But this contradicts the fact that $z\sigma\mydownarrow$ is not deducible in $\new\ \Ec. \Phi$.

Since we always reach a contradiction, we can conclude that $\racinebis(z\sigma\mydownarrow) = \bot$ or $\racinebis(z\sigma\mydownarrow) \not\in \beta \cup \{0\}$.

\medskip

\noindent\emph{Inductive case:} Assume once again that $\racinebis(z\sigma\mydownarrow) \neq \bot$ and $\racinebis(z\sigma\mydownarrow) \in \beta \cup \{0\}$. As in the previous case, we can show that $z\sigma\mydownarrow \in \fct_\alpha(z\sigma)$ and so we can apply Lemma~\ref{lem:FlawedColor and frame element direct element} to obtain:
\begin{enumerate}
\item either there exists $M$ such that $\fv(M) \subseteq \dom(\Phi) \cap \{z' ~|~ z' \prec z\}$, $\fn(M) \cap \Ec = \emptyset$ and $z\sigma\mydownarrow \in \fct_\gamma(M\Phi\mydownarrow)$
\item otherwise there exists $j$ such that $z^{\beta}_j \prec z$ and $z^{\beta}_j\sigma\mydownarrow = z\sigma\mydownarrow$
\end{enumerate}
Once again the first case is trivially impossible since $\dom(\rho_\alpha) = \emptyset$. Thus it remain to focus on the second case. As in the previous, we can deduce that $z\sigma\mydownarrow \not\in \fctpair(M\Phi\mydownarrow)$. Moreover, by our inductive hypothesis, we know that for all assignment variable $z' \prec z$, $\racinebis(z'\sigma\mydownarrow) = \bot$ or $\racinebis(z'\sigma\mydownarrow) \not\in \beta \cup \{0\}$. Thus, we can deduce that $z'\sigma\mydownarrow \neq z\sigma\mydownarrow$. Thanks to this, we can apply Lemma~\ref{lem:FlawedColor and frame element direct element 2} and obtain that there exists $M''$ such that $\fv(M) \subseteq \dom(\Phi)$, $\fn(M) \cap \Ec = \emptyset$ and $z\sigma\mydownarrow \in \fctpair(M'\Phi\mydownarrow)$. But this contradicts the fact that $z\sigma\mydownarrow$ is not deducible in $\new\ \Ec. \Phi$.

Since we always reach a contradiction, we can conclude that $\racinebis(z\sigma\mydownarrow) = \bot$ or $\racinebis(z\sigma\mydownarrow) \not\in \beta \cup \{0\}$.
\end{proof}


\medskip We now establish that when the processes are a good  key exchanged protocol, all possible executions are actually compatible
w.r.t. $(\rho_\alpha,\rho_\beta)$.

\begin{lemma}
\label{lem:compatibility agree}
Let $(\rho_\alpha,\rho_\beta)$ be the two abstraction functions as
defined in Section~\ref{subsec:rho}.
 If $\quadruple{\Ec_0}{C[{P_1[0] \mid P_2[0]}]}{\Phi}{\emptyset}$ satisfies the abstractability property and $P_1/P_2/\Phi$ is a good key-exchanged protocol under $\Ec_0$ and $C$ then for any $P$ such that:
\begin{itemize}
\item $\fst(D^{\vsequ}_n) \LRstep{\tr} P$ (resp. $\snd(D^{\vsequ}_n)
  \LRstep{\tr} P$), we have that $P$ is compatible w.r.t. $(\rho_\alpha,\rho_\beta)$.
\item $\fst(S^{\vari}_n) \LRstep{\tr} P$ (resp. $\snd(S^{\vari}_n)
  \LRstep{\tr} P$), we have that $P$ is compatible w.r.t. $(\rho_\alpha,\rho_\beta)$.
\end{itemize}
\end{lemma}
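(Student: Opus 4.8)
The plan is to prove the two statements (for $\fst$ and $\snd$, and for both $D^{\vsequ}_n$ and $S^{\vari}_n$) uniformly, since all four cases reduce to the same two-part verification: recall that compatibility of $(\rho_\alpha,\rho_\beta)$ with an extended process $\quadruple{\Ec}{\p}{\Phi}{\sigma}$ requires (1) for all $x,y \in \dom(\sigma) \cap \dom(\rho_\gamma)$, $x\sigma =_\E y\sigma$ iff $x\rho_\gamma = y\rho_\gamma$, and (2) for all $z \in \dom(\rho_\gamma)$, either $\racinebis(z\sigma\mydownarrow) = \bot$ or $\racinebis(z\sigma\mydownarrow) \notin \gamma \cup \{0\}$. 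Since $\dom(\rho_\alpha) = \emptyset$ by construction (see Section~\ref{subsec:rho}), only $\gamma = \beta$ needs to be treated, and $\dom(\rho_\beta)$ consists of the auxiliary variables $z_1,\ldots,z_p$ (abstracting names shared via the composition context) together with the instances $x_1[i_1,\ldots,i_m]$ and $x_2[i_1,\ldots,i_m]$ of the key-storage variables.

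First I would dispose of the auxiliary variables $z_1,\ldots,z_p$: these are introduced by our transformation and are assigned fresh distinct names $r_1,\ldots,r_p$ at the very front of the process, so condition~(1) holds because distinct $z_i$'s get distinct values and $\rho_\beta(z_i) = r_i$ are pairwise distinct, and condition~(2) holds because each $z_i\sigma\mydownarrow$ is a name, hence $\racinebis(z_i\sigma\mydownarrow) = \bot$. This is the routine part. The substantive part concerns the instances of $x_1$ and $x_2$. For condition~(1), I would invoke Lemma~\ref{lem:strong-agreement} (or Lemma~\ref{lem:weak-agreement} when $C$ is of the form $C'[!\_]$): it states precisely that for the unfolded disjoint case, $x_{q_1}[i_1,\ldots,i_m]\sigma\mydownarrow = x_{q_2}[j_1,\ldots,j_m]\sigma\mydownarrow$ iff $i_p = j_p$ for all $p$, which is exactly the condition $x_{q_1}[\ldots]\rho_\beta = x_{q_2}[\ldots]\rho_\beta$ since $\rho_\beta$ sends both $x_1[i_1,\ldots,i_m]$ and $x_2[i_1,\ldots,i_m]$ to the same fresh name $k[i_1,\ldots,i_m]$, and sends distinct index tuples to distinct names. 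For the shared case $S^{\vari}_n$ the same agreement property transfers, since a trace of $\fst(S^{\vari}_n)$ can be mapped to a trace of $\fst(D^{\vsequ}_n)$ that assigns the same values to the $x_q[\ldots]$ variables (the part of the process brought by $Q_1/Q_2$ acts like attacker code for this purpose). For condition~(2), I would apply Lemma~\ref{lem:abstractability}: its hypotheses are that $D^{\vsequ}_n$ does not reveal the value of its assignments w.r.t.\ $(\rho_\alpha,\rho_\beta)$ — which follows from Lemma~\ref{lem:assignment variable non deduce} using that $P_1/P_2/\Phi$ is a good key-exchange protocol and $\triple{\Ec_0}{C[Q]}{\Psi}$ does not reveal $k,\pk(k),\vk(k)$ — and that $D^{\vsequ}_n$ satisfies the abstractability property, which follows from Lemma~\ref{lem:rootNotpk} and the hypothesis that $\quadruple{\Ec_0}{C[P_1[0]\mid P_2[0]]}{\Phi}{\emptyset}$ satisfies the abstractability property; the conclusion of Lemma~\ref{lem:abstractability} is exactly condition~(2). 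For $S^{\vari}_n$, one again transfers the abstractability property and the non-deducibility of assignment values from $D^{\vsequ}_n$ via the trace-mapping, noting that the values assigned to $x_1[\ldots]/x_2[\ldots]$ coincide.

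The main obstacle I anticipate is handling the shared-case processes $S^{\vari}_n$ without circularity: the natural tool here is Theorem~\ref{theo:main-main}, but that theorem itself \emph{requires} compatibility as a hypothesis, so I cannot simply invoke it. The way around this is to observe that conditions~(1) and~(2) for $S^{\vari}_n$ are purely about the values assigned to $x_1[\ldots]/x_2[\ldots]$ and to $z_1,\ldots,z_p$, and these values are determined entirely by the $P_1/P_2$ part of the process (plus the front-loaded assignments), which is common to $S^{\vari}_n$ and $D^{\vsequ}_n$; the $Q_1/Q_2$ part only reads the assignment variables and never writes new assignments (this is exactly why we normalised away local assignments earlier in the section). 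So a given trace of $\fst(S^{\vari}_n)$ projects to a trace of $\fst(D^{\vsequ}_n)$ that produces the same $\sigma$ restricted to $\dom(\rho_\beta)$, and the compatibility statement for $S^{\vari}_n$ follows from the already-established one for $D^{\vsequ}_n$. I would state and use this projection fact carefully, as it is where the argument is most delicate; everything else is an assembly of the cited lemmas.
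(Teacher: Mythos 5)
Your treatment of the disjoint case $D^{\vsequ}_n$ matches the paper: condition (1) via Lemma~\ref{lem:strong-agreement} and condition (2) via Lemma~\ref{lem:abstractability} (whose hypotheses you correctly trace back to Lemmas~\ref{lem:assignment variable non deduce} and~\ref{lem:rootNotpk}). The gap is in the shared case $S^{\vari}_n$. You correctly identify the circularity danger, but the ``projection fact'' you substitute --- that a trace of $\fst(S^{\vari}_n)$ projects to a trace of $\fst(D^{\vsequ}_n)$ producing the same $\sigma$ on $\dom(\rho_\beta)$ because the assigned values are ``determined entirely by the $P_1/P_2$ part, which is common to both'' --- is not sound as stated, and it is essentially the conclusion of Proposition~\ref{pro:shared-to-disjoint} restated informally. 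The $P_1/P_2$ code is syntactically common, but its inputs are arbitrary recipes over the current frame, and in the shared case that frame contains outputs of $Q_1/Q_2$ built with the genuinely shared keys; in $D^{\vsequ}_n$ those outputs are different terms (renamed context names, abstracted keys). So the values stored in $x_1[\ldots]/x_2[\ldots]$ are not literally preserved --- they are related only through the $\delta$ transformation, and establishing that relation \emph{is} the shared-to-disjoint mapping. The non-circular way to get it, which is what the paper does, is an induction on the length of the trace: the prefix is compatible by the induction hypothesis, so Proposition~\ref{pro:shared-to-disjoint} applies to the prefix and yields a $\delta$-related disjoint execution; then one checks the single next step, which can only break compatibility if it is an {\sc Assgn}, and Lemma~\ref{lem:Testandequality} transfers the offending equality $t\sigma\mydownarrow = y\sigma\mydownarrow$ to $\delta_\gamma(t)\sigma'\mydownarrow = y\sigma'\mydownarrow$ in the disjoint side, contradicting the already-established compatibility of $D^{\vsequ}_n$. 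Your proof needs this induction spelled out; the appeal to ``$Q_1/Q_2$ only reads'' does not replace it.

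A secondary error: your parenthetical ``(or Lemma~\ref{lem:weak-agreement} when $C$ is of the form $C'[!\_]$)'' suggests the lemma also covers replicated contexts. It does not: Lemma~\ref{lem:weak-agreement} only forces agreement on the first $m-1$ indices, so $x_1[i_1,\ldots,i_m]\sigma = x_2[i_1,\ldots,i_{m-1},i'_m]\sigma$ with $i_m \neq i'_m$ is possible, and then $x_1[\ldots]\rho_\beta = k[i_1,\ldots,i_m] \neq k[i_1,\ldots,i'_m] = x_2[\ldots]\rho_\beta$, violating condition (1). That is precisely why the paper excludes $C'[!\_]$ from this lemma and handles it separately via the index-permutation argument of Lemma~\ref{lem:existence of compatible}.
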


\begin{proof}
Let $P$ be a process such that 
 $\fst(D^{\vsequ}_n) \LRstep{\tr}
 \quadruple{\Ec}{\p}{\Phi}{\allowbreak\sigma}$. Let $x,y \in
 \dom(\sigma) \cap \dom(\rho_\beta)$ and assume that $x\sigma =_\E
 y\sigma$. 
Let us denote $x = x_i[i_1,\ldots, i_m]$ and $y = x_j[j_1,\ldots, j_m]$ where $j_k,i_k \in \{1,\ldots, n\}$, $k \in \{1, \ldots, m\}$ and $i,j \in \{1,2\}$.

By hypothesis, $P_1/P_2/\Phi$ is a good key-exchanged protocol under $\Ec_0$ and $C$ . Hence thanks to Lemma~\ref{lem:strong-agreement}, $x\sigma =_\E y\sigma$ implies that $i_k = j_k$ for all $k \in \{1\ldots m\}$. On the other hand, Lemma~\ref{lem:strong-agreement} also indicates that $x_1[i_1,\ldots, i_m]\sigma = x_2[i_1,\ldots, i_m]\sigma$, for all $i_1,\ldots,i_m$.

Since by definition of $\rho_\beta$, $x_1[i_1,\ldots, i_m]\rho_\beta = x_2[i_1,\ldots, i_m]\rho_\beta \allowbreak = k[i_1,\ldots, i_m]$, we can deduce that $x\sigma = y\sigma$ if and only if $x\rho_\beta = y\rho_\beta$. At last, relying on Lemma~\ref{lem:abstractability}, we can conclude that $\quadruple{\Ec}{\p}{\Phi}{\sigma}$ is compatible with $(\rho_\alpha,\rho_\beta)$.

We now prove the property for $S^\vari_n$: Let $\fst(S^\vari_n) \LRstep{\tr} \quadruple{\Ec}{\p}{\Phi}{\sigma}$. We prove the result by induction on the size of $\tr$. Consider a transition $\quadruple{\Ec}{\p}{\Phi}{\sigma} \lrstep{\ell} A$. By inductive hypothesis, we know that $\quadruple{\Ec}{\p}{\Phi}{\sigma}$ is compatible with $(\rho_\alpha,\rho_\beta)$. But, the only transition that could render $A$ not compatible is the internal transition {(\sc Assgn)}. Hence assume that $\p = \{ [ x := t ]^i. P \} \uplus \q$ where $i \in \gamma$ and $\ell = \tau$.

Since $\quadruple{\Ec}{\p}{\Phi}{\sigma}$ is compatible, then by
Theorem~\ref{theo:main-main} and in particular 
Proposition~\ref{pro:shared-to-disjoint}, we deduce that $\fst(D^{\vsequ}_n) \LRstep{\tr} \quadruple{\Ec'}{\p'}{\Phi'}{\sigma'}$ where $\delta(\sigma\mydownarrow) = \sigma'\mydownarrow$ and $\delta(\p) = \p'$. It implies that $\p' = \{ [x := \delta_\gamma(t)]^i. \delta(P)\} \uplus \delta(\q)$.
Thus, by Lemma~\ref{lem:Testandequality}, we have that $\delta_\gamma(t\sigma\mydownarrow) = \delta_\gamma(t)\sigma'\mydownarrow$. 

On the other hand, if $A$ is not compatible, it means that there exists $y \in \dom(\sigma)$ such that $t\sigma\mydownarrow = y\sigma\mydownarrow$ is not equivalent to $x\rho_\beta = y\rho_\beta$. But $\delta(\sigma\mydownarrow) = \sigma'\mydownarrow$ and $\delta_\gamma(t\sigma\mydownarrow) = \delta_\gamma(t)\sigma'\mydownarrow$.  Hence $t\sigma\mydownarrow = y\sigma\mydownarrow$ is equivalent to $\delta_\gamma(t)\sigma' = y\sigma'$, and so we can deduce that $\delta_\gamma(t)\sigma' = y\sigma'$ is not equivalent to $x\rho_\beta = y\rho_\beta$. 
However, $\quadruple{\Ec'}{\p'}{\Phi'}{\sigma'}$ can also apply the internal transition {(\sc Assgn)} on $[ x := \delta(t) ]^i$ and so we obtain  $\quadruple{\Ec'}{\p'}{\Phi'}{\sigma'} \lrstep{\tau} A'$ with $A'$ not compatible with $(\rho_\alpha,\rho_\beta)$. This is in contradiction with our result on $D^{\vsequ}_n$.
\end{proof}

When the composition context is of the form $C'[!\_]$, the previous lemma does not hold. However, we will show that we can modify any trace to become a compatible trace by applying some permutation on the indices of the names. 
Intuitively, when considering a trace of $S^\vari_n$, if
$x_1[i_1,\ldots, i_m]$ is equal to $x_2[i_1,\ldots, i_{m-1}, i'_m]$
after instantiation with $i_m \neq i'_m$, we want to permute all names
of the form $t[i_1,\ldots, i_{m-1}, i'_m]$ by $t[i_1,\ldots, i_{m-1},
i_m]$. Such permutation is possible since we only consider composition
context of the form $C'[!\_]$. We
will call this an \emph{index permutation}. 
To ensure that such a permutation is always possible when needed, we
simply ensure that we have enough processes that have not started
their execution by requiring that $2n' \le n$ (\emph{i.e.} the length
$n'$ of the derivation under study is two times smaller than the
number of the unfolding we consider).

\begin{lemma}
\label{lem:existence of compatible}
Let $(\rho_\alpha,\rho_\beta)$ be the two abstraction functions of
$S^\vari_n$. For all $S^\vari_n \lrstep{\ell_1} A_1 \lrstep{\ell_2} \ldots
\lrstep{\ell_{n'}} A_{n'}$ with ${2n'} < n$, there exists an index
permutation 
such that $S^\vari_n
\lrstep{\ell_1} A'_1 \lrstep{\ell_2} \ldots \lrstep{\ell_{n'}}
A'_{n'}$ with $A'_k$ being the application of the index permutation on
$A_k$ 
for all $k = 1\ldots n'$, and $A'_{n'}$ is compatible with $(\rho_\alpha,\rho_\beta)$.
\end{lemma}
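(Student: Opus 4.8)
\textbf{Proof plan for Lemma~\ref{lem:existence of compatible}.}
The plan is to prove this by induction on the length $n'$ of the derivation, at each step detecting whether the newly performed action breaks compatibility and, if so, repairing the whole prefix by an index permutation. First I would set up the induction hypothesis: for every derivation $S^\vari_n \lrstep{\ell_1} A_1 \lrstep{} \cdots \lrstep{\ell_{n'-1}} A_{n'-1}$ with $2(n'-1) < n$ there is an index permutation $\pi$ (acting on the last replication index, i.e.\ sending names $t[i_1,\dots,i_{m-1},j]$ to $t[i_1,\dots,i_{m-1},\pi(j)]$) whose application to the prefix yields a valid derivation with the same labels whose last process $A'_{n'-1}$ is compatible with $(\rho_\alpha,\rho_\beta)$. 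The base case $n' = 0$ is trivial since $S^\vari_n = \snd(\dots)$ trivially satisfies compatibility (no assignment variable is yet instantiated). For the inductive step, apply the hypothesis to the prefix of length $n'-1$, obtaining a compatible $A'_{n'-1}$, then carry the action $\ell_{n'}$ over to $A'_{n'-1}$ (up to the permutation already applied). As in the proof of Lemma~\ref{lem:compatibility agree}, the only rule that can destroy compatibility is {\sc Assgn} applied to an instruction $[x_q[i_1,\dots,i_m] := t]^i$; so I would focus on that case and handle all other rules by noting they preserve compatibility directly.

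So suppose the action $\ell_{n'} = \tau$ is an assignment $[x_q[\vec i] := t]$, $q \in \{1,2\}$, $\vec i = (i_1,\dots,i_m)$, and that after performing it some clash occurs: there is $y \in \dom(\sigma)$ of the form $x_{q'}[\vec j]$ with $t\sigma\mydownarrow = y\sigma\mydownarrow$ but $x_q[\vec i]\rho_\beta \neq y\rho_\beta$ (or the converse inequality). By Lemma~\ref{lem:weak-agreement} applied on the disjoint side, $t\sigma\mydownarrow = y\sigma\mydownarrow$ forces $(i_1,\dots,i_{m-1}) = (j_1,\dots,j_{m-1})$ and, when $q = q'$, also $i_m = j_m$; the only remaining possibility is $q \neq q'$ and $i_m \neq j_m$. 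To transfer this from the shared side $S^\vari_n$ to the disjoint side where Lemma~\ref{lem:weak-agreement} lives, I would first argue compatibility of the prefix lets me invoke Theorem~\ref{theo:main-main} / Proposition~\ref{pro:shared-to-disjoint} so that the corresponding disjoint execution of $\fst(D^\vsequ_n)$ exists with $\delta(\sigma\mydownarrow) = \sigma'\mydownarrow$, hence the equalities/inequalities among assignment values are preserved under $\delta$ (using Lemma~\ref{lem:transfoV} and Lemma~\ref{lem:Testandequality} as in Lemma~\ref{lem:compatibility agree}). Then I would define the index permutation $\pi$ to be the transposition swapping $i_m$ and $j_m$ on the last replication level (leaving all other levels fixed), applied uniformly to every name, variable and process indexed by those values. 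Applying $\pi$ to the entire prefix gives another valid derivation because $\pi$ is just an injective renaming of fresh bound names/copies of replicated subprocesses, and the labels $\ell_1,\dots,\ell_{n'}$ are unchanged (recipes only mention frame variables, which are renamed consistently). The condition $2n' < n$ guarantees there are enough un-started copies so that the index $j_m$ used by $x_{q'}$ is a ``free'' slot and the swap does not collide with an already-used index of the other branch --- I would spell this counting argument out carefully, since it is where the bound $2n' < n$ is consumed.

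After the swap, $x_q[\vec i]$ and $x_{q'}[\vec j]$ now have last indices made equal, so $x_q[\vec i]\rho_\beta = x_{q'}[\vec j]\rho_\beta = k[\vec i]$ by the definition of $\rho_\beta$ in Section~\ref{subsec:rho}, matching the equality of the assigned values; conversely, any pair whose values are \emph{unequal} still has, by Lemma~\ref{lem:weak-agreement}, distinct $\rho_\beta$-images, and the swap does not create spurious equalities among $\rho_\beta$-images since it is a bijection on indices. Thus condition~1 of compatibility holds for $A'_{n'}$, and condition~2 (the root of each $z\sigma\mydownarrow$ not lying in $\gamma\cup\{0\}$) follows from Lemma~\ref{lem:abstractability}, whose hypotheses --- abstractability and non-revelation of assignment values --- are discharged exactly as in Lemma~\ref{lem:rootNotpk} and Lemma~\ref{lem:assignment variable non deduce}. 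I expect the main obstacle to be the bookkeeping that an index permutation applied to a \emph{prefix} of the derivation remains a legitimate derivation with identical observable labels, together with the verification that $2n' < n$ really does provide a fresh slot for every swap that may be required along the way (one may need to chain several transpositions if several clashes accumulate); handling this cleanly probably calls for strengthening the induction hypothesis to track which last-level indices are ``active'' after $k$ steps and to produce $\pi$ as a composition of disjoint transpositions rather than a single one.
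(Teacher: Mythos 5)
Your proposal is correct and follows essentially the same route as the paper's proof: induction on $n'$, observing that only {\sc Assgn} can break compatibility, using Proposition~\ref{pro:shared-to-disjoint} together with the agreement lemmas to pin the clash down to $x_1[i_1,\ldots,i_{m-1},i_m]$ versus $x_2[i_1,\ldots,i_{m-1},i'_m]$, and repairing it by transposing the last-level indices across the whole prefix. If anything, you are more explicit than the paper about the two points it glosses over (that the permuted prefix is still a valid derivation with the same labels, and where the bound $2n'<n$ is actually consumed), which is a welcome refinement rather than a divergence.
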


\begin{proof}
We prove the result by induction on $n'$. The initial step $n' = 0$
being trivial, we focus on the inductive step $n' > 0$. By hypothesis,
we know that there exists an index permutation 
such that $S^\vari_n \lrstep{\ell_1} A'_1 \lrstep{\ell_2} \ldots
\lrstep{\ell_{n'}} A'_{n'-1}$ where $A'_k$ being the application of
the index permutation on $A_k$ 
for all $k = 1\ldots n'-1$, and $A'_{n'-1}$ is compatible with
$(\rho_\alpha,\rho_\beta)$. However, we know that $A_{n'-1}
\lrstep{\ell_{n'}} A_{n'}$. Since $A'_{n'-1}$ is obtained from
$A_{n'-1}$ by an index permutation, 
then $A'_{n'-1} \lrstep{\ell_{n'}} A'_{n'}$ where $A'_{n'}$ is the application of the index permutation on $A_{n'}$. 

Assume first that the transition $A_{n'-1} \lrstep{\ell_{n'}} A_{n'}$ is different from the internal transition {(\sc Assgn)}, then the compatibility of $A'_{n'-1}$ implies the compatibility of $A'_{n'}$. Hence the result holds.

Assume now that the transition $A_{n'-1} \lrstep{\ell_{n'}} A_{n'}$ is the internal transition {(\sc Assgn)}. Consider that $A'_{n'-1} = \quadruple{\Ec}{\p}{\Phi}{\sigma}$ with $\p = \{ [ x := t ]^i. P \} \uplus \q$. Since $A'_{n'-1}$ is compatible with $(\rho_\alpha,\rho_\beta)$, we can apply Proposition~\ref{pro:shared-to-disjoint}. Using similar reasoning as in proof of Lemma~\ref{lem:compatibility agree}, we obtain that $t\sigma = y\sigma$ for some assignment variable $y$ implies w.l.o.g. that $x = x_1[i_1,\ldots, i_{m-1},i_m]$ and $y = x_2[i_1,\ldots, i_{m-1},i'_m]$. Thus, by applying the index permutation between $i_m$ and $i'_m$ on each $A'_k$, we obtain that $S^v_n \lrstep{\ell_1} A''_1 \lrstep{\ell_2} \ldots \lrstep{\ell_{n'}} A''_{n'}$ with $A''_{n'}$ compatible with $(\rho_\alpha,\rho_\beta)$, and $A''_k$ being the application of the index permutation on $A'_k$, for all $k \in \{1, \ldots, m\}$.
\end{proof}


\subsection{Composing diff-equivalence}

We are now able to prove our composition results.

\smallskip{}

\theoequivseq*

\begin{proof}
Let  $S = \quadruple{\Ec_0}{C[P_1[Q_1] \mid P_2[Q_2]]}{\Phi
  \uplus \Psi}{\emptyset}$.  Thanks to Lemma~\ref{lem:unfolding-diff},
we know that $S$ is in diff-equivalence  if, and only if, $S_n$ is in
diff-equivalence for all $n \in \mathbb{N}$.

By hypothesis, we know that:
\begin{itemize}
\item  $\quadruple{\Ec_0}{C[P_1[0] \mid
      P_2[0]]}{\Phi}{\emptyset}$, and 
\item $\quadruple{\Ec_0}{C[\new\ k. (Q_1\{^k/_{x_1}\} \mid
      Q_2\{^k/_{x_2}\})]}{\Psi}{\emptyset}$
\end{itemize} are both in diff-equivalence and $P_1,P_2,Q_1,Q_2$ are tagged.
Hence, since diff-equivalence is preserved by disjoint parallel
composition, we deduce that $D^{\para}$ is in diff-equivalence, and
thus, thanks to Lemma~\ref{lem:unfolding-diff}, we obtain that
$D^{\para}_n$ is in diff-equivalence for all $n \in \mathbb{N}$. 
Applying Lemma~\ref{lem:par to seq}, we deduce that $D^{\sequ}_n$ is
also in diff-equivalence. Note that 
diff-equivalence still holds on the biprocess
$D^\vsequ_n$ obtained from $D^\sequ_n$ by
adding some assignment variables to ``explicit the sharing''.

Given $n \in \mathbb{N}$, 
in order to conclude, we have to show that
$S^\vari_n$ obtained from $S_n$ by adding some assignments variables
to explicit the sharing satisfies diff-equivalence.
We form two new biprocesses $SD_L$ and $SD_R$ as follows:
\begin{itemize}
\item $\fst(SD_L) = \fst(S^\vari_n)$ and $\snd(SD_L) =
  \fst(D^\vsequ_n)$;
\item  $\fst(SD_R) = \snd(S^\vari_n)$ and $\snd(SD_R) =
  \snd(D^\vsequ_n)$;
\end{itemize}
We will apply Theorem~\ref{theo:main-main} on biprocesses $SD_L$ and
$SD_R$ to establish the strong relationship between the two components
of each biprocess,
and together with the fact
$D^\vsequ_n$ satisfies diff-equivalence, this will allow us to
conclude that $S^\vari_n$ satisfies diff-equivalence too.

\smallskip{}

Considering the two abstraction functions $(\rho_\alpha,\rho_\beta)$
as defined in Section~\ref{subsec:rho}, in order to apply
Theorem~\ref{theo:main-main} on $SD_L$ (resp. $SD_R$), it remains to show
that $\fst(D^{\vsequ}_n)$ and $\snd(D^\vsequ_n)$ do not reveal the
value of their assignment variables w.r.t. $(\rho_\alpha,\rho_\beta)$.
This is actually achieved by application of Lemma~\ref{lem:assignment
  variable non deduce} with the facts that
\begin{itemize}
\item $\quadruple{\Ec_0}{C[\new\ k. (Q_1\{^k/_{x_1}\} \mid Q_2\{^k/_{x_2}\})]}{\Psi}{\emptyset}$ and $\quadruple{\Ec_0}{C[P_1[0] \mid P_2[0]]}{\Phi}{\emptyset}$ do not reveal key in $\{ n, \pk(n), \vk(n) \mid n \in\fn(P_1,P_2) \cap \fn(Q_1,Q_2) \cap \bn(C)\}$, and 
\item $\quadruple{\Ec_0}{C[\new\ k. (Q_1\{^k/_{x_1}\} \mid Q_2\{^k/_{x_2}\})]}{\Psi}{\emptyset}$ do not reveal $k,\pk(k), \vk(k)$, and
\item $P_1/P_2/\Phi$ is a good key-exchange protocol under $\Ec_0$ and $C$, that implies in particular that $\triple{\Ec_0}{P_{\good}}{\Phi}$ does not reveal $\bad$ where $P_\good$ is defined as follows:
\[
\begin{array}{l}
P_\good = \new\, \bad, d. \big(\\
\;\;\;C[\new\, id. ( P_1[\Out(d,\langle x_1, id\rangle)] \mid P_2[\Out(d,\langle x_2,id\rangle)])] \\
\;\;\;\mid \In(d,x). \In(c,z). \\
\;\;\;\quad\myIf\, z \in \{\proj_1(x), 
\pk(\proj_1(x)), \vk(\proj_1(x))\}\\
\;\;\;\quad\myThen\, \Out(c,\bad)\big)\\
\end{array}
\]
\end{itemize}

\smallskip{}

Now, let $B_S$ be a biprocess such that  
\[S^\vari_n \LRstep{\tr}_\bi
B_S \stackrel{\defi}{=} \quadruple{\Ec_S}{\p_S}{\Phi_S}{\sigma_S}
\]
for
some $\tr$. By definition of diff-equivalene, we have to show that:
\begin{enumerate}
\item $\new\ \Ec_S. \fst(\Phi_S) \statequiv \new\
  \Ec_S. \snd(\Phi_S)$;
\item if $\fst(B_S) \lrstep{\ell} A_L$ then there exists $B'$ such
  that ${B_S \lrstep{\ell}_\bi B'}$ and $\fst(B') = A_L$ (and similarly
  for $\snd$).
\end{enumerate}

Let us now focus on the case where the composition context is not of the form $C[!\_]$.

We have $\fst(S^\vari_n) \LRstep{\tr} \fst(B_S)$ as well as $\snd(S^\vari_n)
\LRstep{\tr} \snd(B_S)$.
By Lemma~\ref{lem:compatibility agree}, we obtain that $\fst(B_S)$
as well as $\snd(B_S)$ is compatible with $(\rho_\alpha,\rho_\beta)$. 
Hence, relying on Theorem~\ref{theo:main-main} (first item), we deduce that
there exist biprocesses $SD'_L$  and $SD'_R$ such that:
\begin{itemize}
\item $SD_L \LRstep{\tr}_\bi SD'_L$, $\fst(SD'_L) = \fst(B_S)$, and
  static equivalence holds between the two frames issued from the
  biprocess $SD'_L$;
\item $SD_R \LRstep{\tr}_\bi SD'_R$, $\fst(SD'_R) = \snd(B_S)$, and
  static equivalence holds between the two frames issued from the
  biprocess $SD'_R$.
\end{itemize}

Since, we know that $D^\vsequ_n$ satisfies diff-equivalence, we have
that $D^\vsequ_n \LRstep{\tr}_\bi D'^\vsequ_n$ with $\fst(D'^\vsequ_n)
= \snd(SD'_L)$ and $\snd(D'^\vsequ_n) = \snd(SD'_R)$. Then,  by
transitivity of static equivalence, we deduce that 
\[
\new\ \Ec_S. \fst(\Phi_S) \statequiv \new\
  \Ec_S. \snd(\Phi_S).
\]

Now, assume that $\fst(B_S) \lrstep{\ell} A_L$. In such a case, we
have that $\fst(S_n) \LRstep{\tr} \fst(B_S) \lrstep{\ell} A_L$. 
By Lemma~\ref{lem:compatibility agree}, we obtain that $A_L$
is compatible with $(\rho_\alpha,\rho_\beta)$, and relying on
Theorem~\ref{theo:main-main} (first item), we deduce that there exists a biprocess
$SD''_L$ such that:
 $SD_L \LRstep{\tr}_\bi \,\_ \lrstep{\ell}_\bi SD''_L$ with
 $\fst(SD''_L) = A_L$.
Since $D^\vsequ_n$ satisfies diff-equivalence, we have that
$D^\vsequ_n \LRstep{\tr}_\bi \;\lrstep{\ell}_\bi D''^\vsequ_n$ for
some biprocess $D''^\vsequ_n$ with $\fst(D''^\vsequ_n) =
\snd(SD''_L)$.
Now, applying Theorem~\ref{theo:main-main} (second item)
 on biprocess $SD_R$, we deduce that $SD_R  \LRstep{\tr}_\bi \, \_
 \lrstep{\ell}_\bi SD''_R$ with $\snd(SD''_R) = \snd(D''^\vsequ_n)$.
This allows us to ensure the existence of the biprocess~$B'$ required
to show diff-equivalence of $S^\vari_n$. We will have $\fst(B') =
\fst(SD''_L) = A_L$ and $\snd(B') = \fst(SD''_R)$.

\medskip

In the case where the composition context is of the form $C'[!\_]$, all the traces issued from $S^\vari_n$ are not compatible anymore w.r.t. the
abstraction functions $\rho_\alpha$ and $\rho_\beta$. 
Nevertheless, thanks to Lemma~\ref{lem:existence of compatible}, we
can always find a similar trace that is compatible, then using
Theorem~\ref{theo:main-main}, we will ensure that these traces also exist
in the disjoint case, and we also ensure their compatibility (see Proposition~\ref{pro:shared-to-disjoint}).

Then, relying on the diff-equvialence of the biprocess $\triple{\Ec_0}{\new\, d. C[P^+]}{\Phi}$, 
we deduce that for any trace $D^{\vsequ}_n \LRstep{\tr}_\bi D'$,
$\fst(D')$ is compatible w.r.t. 
$(\rho_\alpha,\rho_\beta)$ if and only if  $\snd(D')$ is compatible
w.r.t. $(\rho_\alpha,\rho_\beta)$. This allows us to ensure that
$D^\sequ_n$ is also in diff-equivalence when considering compatible
traces only. Thanks to this, we are able to conclude as in we did in the case where the composition context were not of the form $C'[!\_]$.
\end{proof}

\subsection{Composing reachability}

\theoreachseq*

\begin{proof}
Let  $S = \quadruple{\Ec_0}{C[P_1[Q_1] \mid P_2[Q_2]]}{\Phi
  \uplus \Psi}{\emptyset}$.  
By hypothesis, we know that:
\begin{itemize}
\item $\quadruple{\Ec_0}{C[P_1[0] \mid
      P_2[0]]}{\Phi}{\emptyset}$, and 
\item $\quadruple{\Ec_0}{C[\new\ k. (Q_1\{^k/_{x_1}\} \mid
      Q_2\{^k/_{x_2}\})]}{\Psi}{\emptyset}$
\end{itemize}
 does not reveal $s$. Since, secrecy is preserved by disjoint
 composition, and the transformations introduced at the beginning of
 the section (\emph{e.g.} unfolding, adding assignment variables,
 ...), we easily deduce that $D^\vsequ_n$ do not reveal $s$.

We show the result by contradiction. Assume that $S^\vari_n$
reveals the secrecy $s$. We consider a trace witnessing this fact, \emph{i.e.}
a process $S'^\vari_n$ such that 
\[
S^\vari_n \LRstep{\tr}
S'^\vari_n  \stackrel{\defi}{=}
\quadruple{\Ec_S}{\p_S}{\Phi_S}{\sigma_S}
\]
and for which $\new\ \Ec_S. \Phi_S \vdash s$.

We form a
biprocess $SD$ by grouping together $S^\vari_n$ and $D^\vsequ_n$ in
order to apply Theorem~\ref{theo:main-main}.

In order to apply Theorem~\ref{theo:main-main}, we first must prove that
$D^{\vsequ}_n$ does not reveal the value of its assignment variables
w.r.t. $(\rho_\alpha,\rho_\beta)$ as defined in
Section~\ref{subsec:rho}. This is achieved by application of
Lemma~\ref{lem:assignment variable non deduce} with the facts that
\begin{itemize}
\item $\quadruple{\Ec_0}{C[\new\ k. (Q_1\{^k/_{x_1}\} \mid Q_2\{^k/_{x_2}\})]}{\Psi}{\emptyset}$ and $\quadruple{\Ec_0}{C[P_1[0] \mid P_2[0]]}{\Phi}{\emptyset}$ do not reveal key in $\{ n, \pk(n), \vk(n) \mid n \in\fn(P_1,P_2) \cap \fn(Q_1,Q_2) \cap \bn(C)\}$, and 
\item $\quadruple{\Ec_0}{C[\new\ k. (Q_1\{^k/_{x_1}\} \mid Q_2\{^k/_{x_2}\})]}{\Psi}{\emptyset}$ do not reveal $k,\pk(k), \vk(k)$, and
\item $P_1/P_2/\Phi$ is a good key-exchange protocol under $\Ec_0$ and $C$, that implies in particular that $\triple{\Ec_0}{P_{\good}}{\Phi}$ does not reveal $\bad$ where $P_\good$ is defined as follows:
\[
\begin{array}{l}
P_\good = \new\, \bad, d. \big(\\
\;\;\;C[\new\, id. ( P_1[\Out(d,\langle x_1, id\rangle)] \mid P_2[\Out(d,\langle x_2,id\rangle)])] \\
\;\;\;\mid \In(d,x). \In(c,z). \\
\;\;\;\quad\myIf\, z \in \{\proj_1(x), 
\pk(\proj_1(x)), \vk(\proj_1(x))\}\\
\;\;\;\quad\myThen\, \Out(c,\bad)\big)\\
\end{array}
\]
\end{itemize}

As done previously, relying on Lemma~\ref{lem:compatibility
    agree} (or Lemma~\ref{lem:existence of
    compatible} in case $C$ is of the form $C'[!\_]$), we may assume that the
  trace under study is compatible.  Applying
Theorem~\ref{theo:main-main}, we deduce that there exists a biprocess
$SD'$ such that $SD \LRstep{\tr}_\bi SD'$ with $\fst(SD') =
S'^\vari_n$, and static equivalence holds between the two frames
issued from the biprocess $SD'$. Moreover, if we denote by $\Phi_S$ and $\Phi_D$ the respective frame of $\fst(SD')$ and $\snd(SD')$, we ensure that $\delta(\Phi_S\mydownarrow) = \Phi_D\mydownarrow$ (see Proposition~\ref{pro:shared-to-disjoint}).

Therefore, since $D^{\vsequ}_n$ does not reveal the secret $s$, and we already proved that $D^{\vsequ}_n$ does not reveal his assignment variables, then by Lemma~\ref{lem:samerecipesymmetric}, we can deduce that $S^\vari_n$ does not reveal $s$, and so $S$ does not reveal~$s$ either.
\end{proof}

\end{document}